\documentclass[12pt]{iopart}

%\font\nullfont=cmr10 % for debuggin only!!!

\usepackage{hyperref}
\urlstyle{same}

\newcommand{\titletext}{Equilibration, thermalisation, and the emergence of statistical mechanics in closed quantum systems}

\makeatletter 
\hypersetup{pdftitle={\titletext},
	  pdfauthor={Christian Gogolin, Jens Eisert},
	  pdfkeywords={Review, equilibration, thermalisation, statistical mechanics, closed systems, many-body localisation, typicality}
	 }
\makeatother

\usepackage{amsopn}

\usepackage[square,sort&compress,comma,numbers]{natbib}

\expandafter\let\csname equation*\endcsname\relax 
\expandafter\let\csname endequation*\endcsname\relax 
\usepackage{amsmath,amsthm,amssymb,mathrsfs,wasysym,dsfont,bbm}
%\usepackage{rotating}
%\usepackage{braket}

%Setup fonts and headings
\usepackage[automark]{scrpage2}
% \setkomafont{pagehead}{\normalcolor\sffamily\small}
% \setkomafont{pagefoot}{\normalcolor\sffamily\small}
% \setkomafont{pageheadfoot}{\normalcolor\sffamily\small}
% \setkomafont{pagenumber}{\normalcolor\sffamily\small}
% \setkomafont{captionlabel}{\bfseries}
% \setkomafont{caption}{\small}
\pagestyle{useheadings}
\usepackage{filecontents}
\usepackage[T1]{fontenc}
\usepackage{csquotes}
\usepackage[german,ngerman,english]{babel}
%\usepackage[final]{pdfpages}
%\usepackage[pdftex,usenames]{color}
%\usepackage{graphicx}
%\usepackage{subfig} 
%\usepackage{caption} 
%\usepackage{subcaption} 
%\urlstyle{same}
%\usepackage[strings]{underscore}
%\usepackage{amsmath}
%\usepackage{eurosym}
\usepackage{centernot}
\usepackage{mathtools}
\usepackage{ifthen}
% \usepackage{enumitem}
% \setlist{nolistsep,itemsep=\parskip,labelindent=0pt,leftmargin=\parindent,font=\normalfont}
% \newlist{notationoverview}{description}{10}
% \setlist[notationoverview]{nolistsep,itemsep=0pt,labelindent=0pt,leftmargin=0.5\textwidth,font=\normalfont,style=sam
%  eline}
%\setlist[description]{font=\normalfont\itshape}
\usepackage{tikz}
\usetikzlibrary{calc}
\usetikzlibrary{decorations.pathreplacing}
\usetikzlibrary{decorations}
\usetikzlibrary{positioning}
%\usetikzlibrary{shapes} 
%\usepackage{pgf}
\usepackage{pgfplots}
\usepackage{xxcolor}
\definecolor{structure}{rgb}{0.23,0.4,0.7}
\usepackage[normalem]{ulem}

%Load hyperref last...
%\usepackage[pdftex,bookmarks,colorlinks=true,pdfusetitle]{hyperref}
%\usepackage[pdftex,bookmarks,colorlinks=false,pdfborder={0 0 0},pdfusetitle]{hyperref}

% \pgfrealjobname{dissertation}
% \newcommand{\inputTikZ}[1]{%
%   \beginpgfgraphicnamed{tikz/#1-external}%
%   \input{tikz/#1.tikz}%
%   \endpgfgraphicnamed%
% }

% Theorems
% \newtheoremstyle{mystyle}% name
% {\parskip}% space above
% {0pt}% space below
% {\itshape}% body font
% {}% indent amount
% {\bfseries}% theorem head font
% {\normalfont{.}}% punctuation after theorem head
% {.5em}% space after theorem head
% {}% theorem head spec
% \theoremstyle{mystyle}

% \numberwithin{equation}{section}
% \numberwithin{figure}{section}

% \numberwithin{lemma}{section}
\newtheorem{theorem}{Theorem}
% \numberwithin{theorem}{section}
\newtheorem{corollary}{Corollary}
% \numberwithin{corollary}{section}
\newtheorem{definition}{Definition}
% \numberwithin{definition}{section}
\newtheorem{conjecture}{Conjecture}
% \numberwithin{conjecture}{section}
\newtheorem{observation}{Observation}
% \numberwithin{observation}{section}

% Compatibility command with LaTeX Beamer
\newsavebox{\blocksavebox}
\newenvironment{block}[1]{%
\begin{lrbox}{\blocksavebox}%
\begin{minipage}%
%{\dimexpr\columnwidth-2\fboxsep\relax}%
{\linewidth}%
}{%
\end{minipage}\end{lrbox}%
% \begin{tikzpicture}
%   \node (blocknode) {\usebox{\blocksavebox}};
%   \draw[rounded corners=0.2cm] (blocknode.north east) rectangle (blocknode.south west);
% \end{tikzpicture}%
\usebox{\blocksavebox}
}

\definecolor{niceblue}{rgb}{0.33,0.5,0.8}

% Temporary
%\newcommand{\refsub}[1]{MISSING REFERENCE}
\newcommand{\refsub}[2]{\hyperref[#1]{\ref*{#1}#2}}
\newcommand{\refitem}[1]{\mbox{(\ref{#1})}}

% Listings
%\setkomafont{descriptionlabel}{\it}

% Captions
%\captionsetup{labelfont=bf,labelsep=period,textfont=small} %is done by koma script
%\captionsetup[subfigure]{font={small},labelfont={small}}

% Hacks and fixes
%\mathcode `f="8000 %produces an error for each f in math mode useful for changes in notation.
%\DeclareMathSymbol{f}{\mathalpha}{operators}{`Y} %replaces all f by Y useful for changes in notation.

\newcommand{\+}{\mkern2mu}
\newcommand{\coloneqq}{\mathrel{\vcentcolon\mkern-1.2mu=}} %fix a bug in the old version of amsmath
\newcommand{\eqqcolon}{\mathrel{=\mkern-1.2mu\vcentcolon}}

% Shortcuts

\newcommand{\union}{\cup}
\newcommand{\intersection}{\cap}
\newcommand{\dunion}{\mathbin{\dot{\cup}}}

\newcommand{\texteqref}[1]{Eq.~\eqref{#1}}
\newcommand{\argdot}{{\,\cdot\,}}
\newcommand{\oftype}{\colon}
\newcommand{\suchthat}{\colon}
\newcommand{\itholds}{\colon\mathchoice{\quad}{}{}{}}
\renewcommand{\max}{\mathchoice{\operatorname*{max}}{\operatorname*{max}}{\mathrm{max}}{\mathrm{max}}} %It allows you to use \max both as a math operator (like in \max_{x\in[0,1]} x^2 = 1) and in sub- and superscripts (like in f_\max) and it will in both cases be typeset in the correct font and with the correct spacing.
\renewcommand{\min}{\mathchoice{\operatorname*{min}}{\operatorname*{min}}{\mathrm{min}}{\mathrm{min}}}
\newcommand{\ltwo}{\ell^2}
\newcommand{\levelspaceing}{\epsilon}
\renewcommand{\H}{H}
\newcommand{\Basis}{\mathds{B}}
\newcommand{\CH}{\mathscr{H}}
\newcommand{\rhog}{g}
\newcommand{\rhomc}{{\sqcap}}
\newcommand{\muhaar}{\mu_{\mathrm{Haar}}}

\newcommand{\muC}{\mu_C}
\newcommand{\animalc}{\alpha}
\newcommand{\Vset}{\mathcal{V}}
\newcommand{\Eset}{\mathcal{E}}
\newcommand{\bra}[1]{\langle #1|}
\newcommand{\ket}[1]{|#1\rangle}
\newcommand{\braket}[2]{\langle #1 | #2 \rangle}
\newcommand{\ketbra}[2]{\ket{#1}\!\bra{#2}}
\newcommand{\ex}[2]{\langle #1 \rangle_{#2}}

\newcommand{\norm}[2][]{
  \ifthenelse{\equal{#1}{}}
    {\left\| {#2} \right\|}
    {\ifthenelse{\equal{#1}{uinv}}
      {\left\vert\kern-0.25ex\left\vert\kern-0.25ex\left\vert {#2} \right\vert\kern-0.25ex\right\vert\kern-0.25ex\right\vert}
      {\left\| {#2} \right\|_{#1}}
    }
}
\newcommand{\notimplies}{\centernot\implies}

\newcommand{\taverage}[2][]{
  \ifthenelse{\equal{#1}{}}
  {\overline{#2}}
  {\overline{#2}^{#1}}
}

\newcommand{\deff}{d^{\mathrm{eff}}}
\newcommand{\tracedistance}[3][]{
  \ifthenelse{\equal{#2}{}}
  {\ifthenelse{\equal{#3}{}}
    {\mathcal{D}_{#1}}{}
  }{
    \ifthenelse{\equal{#1}{}}
    {\mathchoice{\operatorname{\mathcal{D}}\left(#2,#3\right)}{\operatorname{\mathcal{D}}(#2,#3)}{\operatorname{\mathcal{D}}(#2,#3)}{\operatorname{\mathcal{D}}(#2,#3)}}
    {\mathchoice{\operatorname{\mathcal{D}}_{#1}\left(#2,#3\right)}{\operatorname{\mathcal{D}}_{#1}(#2,#3)}{\operatorname{\mathcal{D}}_{#1}(#2,#3)}{\operatorname{\mathcal{D}}_{#1}(#2,#3)}}
  }
}
\newcommand{\fidelity}[3][]{
  \ifthenelse{\equal{#2}{}}
  {\ifthenelse{\equal{#3}{}}
    {\mathcal{F}_{#1}}{}
  }{
    \ifthenelse{\equal{#1}{}}
    {\mathchoice{\operatorname{\mathcal{F}}\left(#2,#3\right)}{\operatorname{\mathcal{F}}(#2,#3)}{\operatorname{\mathcal{F}}(#2,#3)}{\operatorname{\mathcal{F}}(#2,#3)}}
    {\mathchoice{\operatorname{\mathcal{F}}_{#1}\left(#2,#3\right)}{\operatorname{\mathcal{F}}_{#1}(#2,#3)}{\operatorname{\mathcal{F}}_{#1}(#2,#3)}{\operatorname{\mathcal{F}}_{#1}(#2,#3)}}
  }
}

\DeclareMathOperator{\landauO}{O}
\DeclareMathOperator{\landauOmega}{\Omega}
\DeclareMathOperator{\landauTheta}{\Theta}
\DeclareMathOperator*{\probability}{\mathbb{P}}
\DeclareMathOperator*{\expectation}{\mathbb{E}}
\DeclareMathOperator*{\maxprime}{\mathrm{max}^\prime}
\newcommand{\Sr}[3][]{
  \ifthenelse{\equal{#1}{}}
    {\operatorname{\mathnormal{S}}(#2\|#3)}
    {\operatorname{\mathnormal{S}}_{#1}(#2\|#3)}
}
\newcommand{\dd}[1]{\mathop{\mathrm{d}#1}}
\newcommand{\del}{\mathop{}\!\partial}
\newcommand{\ddel}{\mathop{}\!d}
\newcommand{\ad}{^\dagger}
\newcommand{\compl}[1]{{{#1}^c}}
\newcommand{\trunc}[2]{{#1}_{\upharpoonright \mathnormal{#2}}}

\newcommand{\iu}{\mathrm{i}}
\DeclareMathOperator{\1}{\mathds{1}}
\DeclareMathOperator{\id}{\mathrm{id}}
\newcommand{\POVMs}{\mathcal{M}}
\DeclareMathOperator{\Bop}{\mathcal{B}}
\DeclareMathOperator{\Tcl}{\mathcal{T}}
\DeclareMathOperator{\Obs}{\mathcal{O}}
\DeclareMathOperator{\Qst}{\mathcal{S}}
\DeclareMathOperator{\Cst}{\mathscr{S}}
\DeclareMathOperator{\Qch}{\mathcal{T^+}}
\DeclareMathOperator{\Chann}{\mathnormal{C}}
\DeclareMathOperator{\Svn}{\mathnormal{S}}

\DeclareMathOperator{\supp}{supp}
\DeclareMathOperator{\cov}{cov}
\DeclareMathOperator{\dist}{d}
\DeclareMathOperator{\rank}{rank}

\DeclareMathOperator{\spec}{spec}

% mathcal
\newcommand{\mc}[1]{\mathcal{#1}}

\newcommand{\mcH}{\mc{H}}
\newcommand{\mcS}{\mc{S}}

\newcommand{\mcG}{\mc{G}}
\newcommand{\mcC}{\mc{C}}
\makeatletter
\newenvironment{sesac}{%
  \let\@ifnextchar\new@ifnextchar
  \left.%
  %
  % One might prefer alignments other than left, 
  % depending on the use to which it is put:
  \array{@{}l@{\quad}l@{}}%
}{\endarray\right\}}
\makeatother

% mathbb
\newcommand{\mb}[1]{\mathbb{#1}}
\newcommand{\N}{\mb{N}}
\newcommand{\Z}{\mb{Z}}
\newcommand{\R}{\mb{R}}
\newcommand{\C}{\mb{C}} %\C seems to be defined by the russian babel package
%\DeclareMathOperator*{\E}{\mathbb{E}}

% colors
% \definecolor{christian}{rgb}{0.1,0.1,0.5}
% \newcommand{\cg}[1]{{\color{christian} #1}}
% \definecolor{jens}{rgb}{0.1,0.5,0.1}
% \newcommand{\je}[1]{{\color{jens} #1}}

\begin{document}

\review{\titletext}
\author{Christian Gogolin$^{1,2,3}$ and Jens Eisert$^1$}
\address{$^1$Dahlem Center for Complex Quantum Systems, Freie Universitat Berlin, 14195 Berlin, Germany\\
$^2$ICFO-Institut de Ciencies Fotoniques, Mediterranean Technology Park, 08860 Castelldefels (Barcelona), Spain\\
$^3$Max-Planck-Institut f{\"u}r Quantenoptik, Hans-Kopfermann-Stra{\ss}e 1, 85748 Garching, Germany} 
\ead{publications@cgogolin.de}
\ead{jense@physik.fu-berlin.de}

\begin{abstract}
  We review selected advances in the theoretical understanding of complex quantum many-body systems with regard to emergent notions of quantum statistical mechanics.
  We cover topics such as equilibration and thermalisation in pure state statistical mechanics, the eigenstate thermalisation hypothesis, the equivalence of ensembles, non-equilibration dynamics following global and local quenches as well as ramps.
  We also address initial state independence, absence of thermalisation, and many-body localisation.
  We elucidate the role played by key concepts for these phenomena, such as Lieb-Robinson bounds, entanglement growth, typicality arguments, quantum maximum entropy principles and the generalised Gibbs ensembles, and quantum (non-)integrability.
  We put emphasis on rigorous approaches and present the most important results in a unified language.
\end{abstract}

\maketitle

%%%%Table of contents%%%%%%%%%%%%%%%%%%%%%%%%%%%%%%%%%%
\tableofcontents \label{toc}

\section{Introduction}
\label{sec:introduction}

At the time when quantum theory was developed in a creative rush in the last years of the twenties of the previous century,
classical statistical physics was already a mature field of research.
The landmark book ``Elementary principles in statistical mechanics'' \cite{Gibbs1902} 
 authored by Gibbs had already been published in 1902, which is seen by many as the 
 the birth of modern statistical mechanics \cite{UffinkFinal}.
So, as soon as the mathematical framework of the ``neue Mechanik'', the new mechanics, as von Neumann called it in 1929, was established, significant efforts were made by him to prove ergodicity and a tendency to evolve into states that maximise entropy, which became known as the \emph{H-theorem}, in this setting \cite{vonneumann1929}.
The field of quantum statistical mechanics soon emerged and can by now be considered an important pillar of theoretical physics \cite{Landau1980}.

Still, some foundational yet fundamental questions remain open, much related to the questions raised by von Neumann.
While maximum entropy principles provide a starting point for the understanding of the ensembles of quantum statistical mechanics, it seems much less clear how quantum states taking extremal values for the entropy are being achieved via microscopic dynamics.
After all, at the fundamental level, quantum many-body systems follow the Schr{\"o}dinger equation, giving rise to unitary dynamics.
It is far from obvious, therefore, in what precise way interacting quantum many-body systems can equilibrate.
The microscopic description of quantum mechanical systems following the dynamical equations of motion is still in some tension with the picture arising from the ensemble description of quantum statistical mechanics.
That is to say, the questions of equilibration and thermalisation in what we will call \emph{pure-state quantum statistical mechanics} remained largely unresolved until very recently.

These foundational questions came back with a vengeance not too long ago.
Old puzzles and new questions of quantum many-body systems out of equilibrium re-entered the centre of attention and are again much in the focus of present-day research in quantum many-body theory.
This remarkable renaissance is primarily due to three concomitant factors in physics research.

The first and arguably most important source of inspiration has been an \emph{experimental revolution}.
Fueled by enormous improvements in experimental techniques it became feasible to control quantum systems with many degrees of freedom.
An entirely new arena for the study of physics of interacting quantum many-body systems emerged.
This is particularly true for the development of techniques to cool and trap
\emph{ultra-cold atoms} and to subject them to \emph{optical lattices} \cite{Greiner2002a,Greiner2002,Tuchman2006,Aidelsburger2011,Bloch2005,Langen2014a}
or suitable confinements, giving rise to \emph{low-dimensional continuous systems} \cite{Sadler2006,Regal2006,Kinoshita2006,Hofferberth2007,Weller2008,Strohmaier2007}.
Similarly, systems of \emph{trapped ions} \cite{Porras2004,Friedenauer2008}, 
as well as hybrid systems \cite{Bissbort2013}, allow to precisely study the physics of interacting systems in the laboratory
\cite{Haffner2005,Jurcevic2014,Lanyon_etal11,Schindler2012,Blatt2012,Britton2012}.
In such highly controlled settings, equilibration and thermalisation dynamics has been studied \cite{1101.2659v1,1111.0776v1,Langen2013,1112.0013v1,Ronzheimer2013}.
Especially setups with optical lattices allow for the realisation of condensed-matter-like interacting many-body systems in the laboratory, but with fine grained control over the model parameters and geometries.
Questions concerning the out of equilibrium
dynamics of such systems were suddenly not only important out of academic curiosity, but became pragmatically motivated questions important to understand experimentally realisable physical situations.

The second major development is the broad availability of new machines: \emph{supercomputers}.
With the vastly increased computing power and massive parallelisation as well as novel \emph{numerical techniques} such as tensor network methods \cite{OrusReview} and the 
density-matrix renormalisation group method \cite{Schollwock201196}, it has become possible to simulate the dynamics of large quantum systems for relatively long times.
Methods for exact diagonalisation have been brought to new levels \cite{Rigol08,Rigol09,Luitz2014}, complemented by quantum Monte Carlo techniques \cite{TimeMonteCarlo,Foulkes2001}, and applications of dynamical mean field theory \cite{Eckstein_DMFT}  and density functional theory \cite{Engel2011,Kohn1999}.
There is an enormous body of numerical works on questions of equilibration and thermalisation in closed quantum systems and the dynamics of quantum phase transitions \cite{Braun2014a,Rigol07,Lesanovsky10,1112.3424v1.pd,Beugeling2013,Yukalov2011,Ikeda2013a,Fine2013,Jensen1985}, often with a focus on so-called \emph{quenches}, i.e., rapid changes of the Hamiltonian \cite{Moeckel2008,Kollath07,Rigol08,Rigol09,1011.0781v1,Venuti09,1108.2703v1,Rigol11,Torres-Herrera2013}.
This body of numerical work is complemented by partly or entirely analytical studies that capture these and related phenomena in concrete systems or classes of models (often integrable ones) \cite{Sengupta2004,Flesch08,Ates2011,Fagotti2012,Calabrese2007,Cazalilla2006,Fioretto2010,1104.0154v1,Eckstein2008,Campos10}.
We will discuss these works in more detail later.

Last but not least, our understanding of quantum mechanics has improved significantly since the time of von Neumann.
The availability of new \emph{mathematical methods} ---  in part motivated by research in quantum information theory --- is the third driving force.
These techniques have made some of the old questions become more tractable than before, while at the same time
new paradigms of approaching the key questions have emerged.
This lead to works inspired by notions of \emph{typicality} and \emph{random states} \cite{slloydthesis,Popescu05,Gemmer09,Garnerone2013a,Garnerone2013}.
Also, notions of quantum information propagation, such as \emph{Lieb-Robinson bounds} \cite{Hastings2006,Nachtergaele_Locality,Bravyi06-1,Kliesch2013}, and research on \emph{entanglement in many-body systems} \cite{Calabrese2007a,DeChiara2005,Eisert06,Kollath08,VanAcoleyen2013,Jurcevic2014} can be classified as
contributing to this development.

All in all, this is already too large a topic to cover in full depth in a single review of reasonable length.
Hence, in this article we address and cover only a subset of these developments and questions.
We will mostly concentrate on the theoretical and analytical insights, however always making an effort to put them into the context of evidence collected through numerical simulations and important experimental developments.

In physics, one can often say a ``lot about little'', or ``little about a lot''.
In this review, we take the latter approach, by sticking to general and conceptual
statements on interacting many-body systems in a quantum-information inspired rigorous language, so where only relatively ``little'' can be said.
These statements, however, apply to
``a lot'', that is, to an immense variety of models.

At the heart of the approach advocated here lies the attempt to use only standard quantum mechanics and no additional postulates to explain the emergence of thermodynamic behaviour, and to do this in a mathematically rigorous and general way.
It is an invitation to elaborate how much of statistical mechanics and thermodynamics can be \emph{derived} from quantum mechanics.
The term ``\emph{derived}'' here means to justify the well established methods and postulates of equilibrium and non-equilibrium statistical 
mechanics by means of the microscopic picture provided by quantum mechanics.
Following Refs.~\cite{slloydthesis,lloyed13}, we shall call this approach \emph{pure state 
quantum statistical mechanics}.

The level of detail and rigour that we are aiming at in this work necessarily also mean that an awkwardly large number of interesting questions and research results 
will have to be left unmentioned.
In this sense, this article is not meant to be a comprehensive review.
\begin{enumerate}
\item We have authored together with Mathis Friesdorf a complementing accompanying review \cite{Eisert2014} in Nature Physics 
that takes a much more physical perspective, where local interacting many-body systems out of equilibrium 
are in the focus of attention and experimental developments are more comprehensively discussed.
\end{enumerate}
A lot what is left out here is covered there.
In this article, in contrast, we advocate a more mathematical mindset and language, and at the same time have a more limited scope, but the covered topics are discussed in more depth.
To complete the picture of the subject, and in addition to Ref.~\cite{Eisert2014}, we recommend a number
of further review articles and books that cover what we do not have the space to cover here:

\begin{enumerate}
\setcounter{enumi}{1}
\item  The book by Gemmer, Michel, and Mahler \cite{Gemmer09} entitled ``Quantum thermodynamics''
advertises an approach towards the foundations of thermodynamics that is in spirit close to the approach of this work.
The focus is, however, more on notions of \emph{typicality}, which we will discuss in Section~\ref{sec:typicality}, but which is not a central topic of the present review.
Moreover, the first edition of the book is from 2004, and even though it has been extended in the second edition from 2009, much of the newer material that takes the centre stage in this work is not covered.

\item The editorial of a New Journal of Physics focus issue on the ``Dynamics and thermalisation in isolated quantum many-body systems'' by Cazalilla and Rigol \cite{Cazalilla10}
not only explains the significance of the individual articles published in the focus issue to the more general endeavour of developing a better understanding of the coherent dynamics of quantum many-body systems.
On top of that it gives an overview of many of the currently pursued research directions and many additional references.
This renders this editorial a good entry point into the more recent literature on the subject and makes it an excellent read.
At the same time, it provides only very little background information, almost no historical context, and assumes that the reader is already familiar with the jargon of the field.

\item A colloquium in Reviews in Modern Physics by Polkovnikov, Sengupta, Silva, and Vengalattore \cite{Polkovnikov11} is entitled ``Non-equilibrium dynamics of closed interacting quantum systems''.
This work gives an excellent overview of recent theoretical and experimental insights concerning such systems, but focuses mainly on the dynamics following so-called \emph{quenches}, i.e., rapid changes in the Hamiltonian of a system and the \emph{eigenstate thermalisation hypothesis} (ETH).
We will discuss the ETH in Section~\ref{sec:thermalisationunderassumptionsontheeigenstates}, but the scope of the present work is considerably broader and we will also take a slightly different, quantum information theory inspired, point of view and put the focus more on analytical results.

\item  A review entitled ``Equilibration and thermalisation in finite quantum systems'' by Yukalov \cite{Yukalov2011}
contains a review of the history of both the experimental realisation of coherently evolving, well controlled quantum systems and the observation and numerical investigation of 
equilibration and thermalisation in such systems.
In addition it contains results on equilibration in closed systems with a continuous density of states and in systems undergoing so-called \emph{non-destructive measurements}.

\item The review Ref.~\cite{Seifert2012} on the thermodynamics of stochastic processes.
It covers important topics such as fluctuation(-dissipation) theorems, entropy production, and (autonomous) thermal machines, which have been extensively studied in recent years and which are related to but not elaborated on in this work.

\item \label{item:lastrecomandedworkitem} Finally, the review ``The role of quantum information in thermodynamics'' \cite{QuantumThermoReview2015} overviews recent developments in the interplay between the fields of quantum information and thermodynamics.
  It focuses on foundations of statistical mechanics and on resource-theoretic aspects of thermodynamics.
  More explicitly, it covers equilibration and thermalisation, state transformation under different constraints and resources, work extraction, the work cost of information-processing tasks, inconvertibility of energy and correlations, and fluctuation relations.
\end{enumerate}

These articles and books together, in conjunction with the present review, rather accurately cover the state of affairs.
It is the purpose of this article to fill the gap left by the above mentioned works.

\section{Preliminaries and notation}
\label{sec:preliminaries}
In order to facilitate the discussion in later chapters, we carefully introduce the notation and introduce a number of fundamental concepts in this section.
The presentation is limited to the minimum necessary to make the following statements well-defined.
An effort has been made to make this introduction self-contained.
However, a basic knowledge of quantum theory, analysis, linear algebra, group theory and related subjects is assumed.

To begin with, we fix some general notation.
Given a positive integer $n \in \Z^+$ we use the short hand notation $[n] \coloneqq (1,\dots,n)$ for the (ordered) \emph{range} of numbers from $1$ to $n$ and set $[\infty] \coloneqq \Z^+$.
Given a \emph{set} $X$ we denote its \emph{cardinality} by $|X|$.
If $X$ has a \emph{universal superset} $\Vset \supset X$, we write $\compl X \coloneqq \Vset \setminus X$ for its complement.
Given two sets $X,Y$ we write $X \union Y$ and $X \intersection Y$ for their \emph{union} and \emph{intersection}.
To stress that a set $\Vset$ is the union of two \emph{disjoint} sets $X,Y$, i.e., $X \intersection Y = \emptyset$ we write 
$\Vset = X \dunion Y$.
Given a set $X$ of sets we write $\union X \coloneqq \bigcup_{x \in X} x$ for the union of the sets in $X$.
For \emph{sequences} $S$, $|S|$ denotes the length of the sequence.
When we define sets or sequences in terms of their elements we use curly $\{\argdot\}$ or round $(\argdot)$ brackets respectively.

We use the \emph{(Bachmann-)Landau} symbols $\landauO$, $\landauOmega$ and $\landauTheta$ to denote asymptotic growth rates of real functions $f,g\oftype\R\to\R$.
In particular
\begin{align}
  f(x) \in \landauO(g(x)) &\iff \limsup_{x\to\infty}|f(x)/g(x)| < \infty ,\\
\intertext{and for $\landauOmega$ we adopt the convention from complexity theory that}
  f(x) \in \landauOmega(g(x)) &\iff g(x) \in \landauO(f(x)) 
\end{align}
and write $f(x) \in \landauTheta(g(x))$ if both $f(x) \in \landauO(g(x))$ and $f(x) \in \landauOmega(g(x))$.
To simplify the notation we work with \emph{natural}, or \emph{Planck units} such that in particular the Planck constant $\hbar$ and the Boltzmann constant $k_B$ are equal to $1$.

Let $\mcH$ be a \emph{separable Hilbert space} over $\C$ with inner product $\braket\varphi\psi$ for $\ket\varphi,\ket\psi\in\mcH$.
We denote by $\Bop(\mcH)$ be the space of \emph{bounded operators} and by $\Tcl(\mcH)$ the space of \emph{trace class operators} on the Hilbert space $\mcH$, i.e., those $A \in \Bop(\mcH)$ whose trace $\Tr A$ is finite.
The trace class operators $\rho \in \Tcl(\mcH)$, whose associated linear functional $\Tr(\rho\,\argdot)$ is non-negative, i.e., $\forall A \geq 0\itholds \Tr(\rho\,A) \geq 0$ and which have unit trace $\Tr \rho = 1$, form the convex set $\Qst(\mcH)$ of \emph{(quantum) states} or \emph{density operators}. An operator $A\in \Bop(\mcH)$ is \emph{self-adjoint} if $A = A\ad$.
An operator $\Pi \in \Bop(\mcH)$ is a \emph{projector} if $\Pi\,\Pi = \Pi$.
The \emph{rank} of an operator $A \in \Bop(\mcH)$, denoted by $\rank(A)$, is the dimension of its image.
An operator $U \in \Bop(\mcH)$ is called \emph{unitary} if $U\ad\,U = U\,U\ad = \1$.
It turns out that in the finite dimensional setting considered here $\Qst(\mcH) \subset \Obs(\mcH)$ is the convex set of self-adjoint, non-negative operators with unit trace.
The extreme point of that set are rank one projectors and are called \emph{pure states}.
The elements of the subspace $\Obs(\mcH) \subset \Bop(\mcH)$ of self-adjoint operators are called \emph{observables}.

Given a bounded operator $A \in \Bop(\mcH)$ and a state $\rho \in \Qst(\mcH)$, we will write the \emph{expectation value} of $A$ in state $\rho$ is as
\begin{equation} \label{eq:expectationvalue}
 \ex A \rho \coloneqq \Tr(A\,\rho).
\end{equation}

The most general \emph{measurements} possible in quantum mechanics are so-called \emph{positive operator valued measurements} (POVMs) \cite{nielsenchuang}.
A POVM with $K$ measurement outcomes is a sequence $M = (M_k)_{k=1}^K$ of operators $M_k \in \Bop(\mcH)$, called \emph{POVM elements}, with the property that
\begin{equation}
  \sum_{k=1}^K M_k = \1 .
\end{equation}
Upon measuring a system in state $\rho \in \Qst(\mcH)$ with the POVM $M$, outcome number $k$ is obtained with probability $\Tr(M_k\,\rho)$.
When we say that an observable $A \in \Obs(\mcH)$, with \emph{spectral decomposition} $A = \sum_{k=1}^{d_A} a_k\,\Pi_k$, is \emph{measured}, we mean that the POVM $M = (\Pi_k)_{k=1}^{d_A}$ is measured and the measurement device outputs the value $a_k$ when outcome $k$ is obtained.
The average value output by the device in measurements of identically prepared systems is then indeed given by \texteqref{eq:expectationvalue}.
A measurement of a POVM where all the POVM elements are projectors is called a \emph{projective measurement}.
The \emph{measurement statistic} of a POVM in a state $\rho$ is the vector of probabilities $\Tr(M_k\,\rho)$.

The most general \emph{(quantum) operations} in quantum mechanics are captured by so-called \emph{completely positive trace preserving maps}, also-called \emph{quantum channels} \cite{nielsenchuang}.
We call maps $\Bop(\mcH)\to\Bop(\mcH)$ \emph{superoperators}.
We denote the identity superoperator by $\id\oftype\Bop(\mcH)\to\Bop(\mcH)$.
A linear map $\Chann\oftype\Obs(\mcH)\to\Obs(\mcH)$ is then called \emph{completely positive trace preserving} if for all separable Hilbert spaces $\mcH'$ it holds that 
\begin{equation} \label{eq:conditionforcompletepositivity}
  \forall \rho \in \Qst(\mcH \otimes \mcH')\itholds (\Chann \otimes \id)\,\rho \in \Qst(\mcH \otimes \mcH') .
\end{equation}
In the finite dimensional setting considered here, it turns out that fixing $\mcH' = \mcH$ in \texteqref{eq:conditionforcompletepositivity} already gives a necessary and sufficient condition for a map $\Chann\oftype\Obs(\mcH)\to\Obs(\mcH)$ to be completely positive trace preserving \cite{nielsenchuang}.
We denote the set of all completely positive trace preserving maps on $\Qst(\mcH)$ by $\Qch(\mcH)$.

Throughout most of this review we will work in the framework of finite dimensional quantum mechanics.
That is, if not explicitly stated otherwise, we consider systems that are described by a Hilbert space $\mcH$ over $\C$ whose dimension $d\coloneqq\dim(\mcH)$ is finite,
bosonic systems constituting an important exception.

For every $1\leq p<\infty$ the \emph{Schatten $p$-norm} of an operator $A \in \Bop(\mcH)$ is defined as \cite{bhatia}
\begin{equation} \label{eq:schattenpnorms}
  \norm[p]A \coloneqq \left[\sum_{j=1}^d (s_j(A))^p \right]^{1/p} ,
\end{equation} 
where $(s_j(A))_{j=1}^d$ is the ordered, i.e., $s_1(A) \geq \dots \geq s_d(A)$, sequence of non-negative, real \emph{singular values} of $A$.
We refer to the Schatten $\infty$-norm as the \emph{operator norm} and call the Schatten $1$-norm \emph{trace norm}
The Schatten $p$-norms are ordered in the sense that \cite{bhatia}
\begin{equation}
  \forall A\in\Bop(\mcH)\colon \norm[p]A \leq \norm[p']A \iff p\geq p' 
\end{equation}
and in the converse direction the following inequalities hold \cite{bhatia}
\begin{equation}
  \norm[1]\argdot \leq \sqrt{d} \norm[2]\argdot \leq d \norm[\infty]\argdot .
\end{equation}

For quantum states a natural and frequently used distance measure is the \emph{trace distance} \cite{nielsenchuang}
\begin{equation}
  \forall \rho,\sigma\in\Qst(\mcH)\itholds \tracedistance\rho\sigma \coloneqq \frac{1}{2}\,\norm[1]{\rho-\sigma} .
\end{equation}
It is, up to the factor of $1/2$, the metric induced by the trace norm $\norm[1]\argdot$.
Its relevance stems from the fact that it is equal to the maximal difference between the expectation values of all normalised observables in the states $\rho$ and $\sigma$, i.e., \cite{nielsenchuang}
\begin{equation} \label{eq:tracedistanceasmaxoverobservables}
  \tracedistance\rho\sigma = \max_{A\in\Obs(\mcH)\colon 0\leq A\leq\1} \Tr(A\,\rho) - \Tr(A\,\sigma) .
\end{equation}
The trace distance is non-increasing under completely positive trace preserving maps $\Chann \in \Qch(\mcH)$, i.e., $\tracedistance{\Chann(\rho)}{\Chann(\sigma)} \leq \tracedistance\rho\sigma$ and invariant under unitary operations, i.e., $\tracedistance{U\,\rho\,U\ad)}{U\,\sigma\,U\ad} = \tracedistance\rho\sigma$.
Moreover, if one is given an unknown quantum system and is promised that with probability $1/2$ it is either in state $\rho$ or state $\sigma$, then the maximal achievable probability $p_{\max}$ for correctly identifying the state after a single measurement of the optimal observable from \texteqref{eq:tracedistanceasmaxoverobservables} is given by \cite{1012.4622v1,Aubrun2013}
\begin{equation} \label{eq:maxsucessprobabilityfordistinguishinginsingelshotmeasurement}
  p_{\max} = \frac{1+\tracedistance\rho\sigma}{2} .
\end{equation}
Inspired by this, one can define the \emph{distinguishability} of two quantum states under a restricted set $\POVMs$ of POVMs.
The optimal success probability for single shot state discrimination is then again given by an expression of the form \eqref{eq:maxsucessprobabilityfordistinguishinginsingelshotmeasurement}, but with $\tracedistance\rho\sigma$ replaced by \cite{1110.5759v1}
\begin{equation} \label{eq:distinguishabilityunderrestrictedsetsofpovms}
  \tracedistance[\POVMs]\rho\sigma \coloneqq \sup_{M \in \POVMs} \frac{1}{2}\,\sum_{k=1}^{|M|} |\Tr(M_k\,\rho) - \Tr(M_k\,\sigma)| ,
\end{equation}
and it holds that
\begin{equation} \label{eq:tracedistanceboundsrestrictedtracedistance}
 \tracedistance[\POVMs]\rho\sigma \leq \tracedistance\rho\sigma .
\end{equation}
with equality for all $\rho,\sigma \in \Qst(\mcH)$ if and only if $\POVMs$ is a dense subset of the set of all POVMs \cite{1110.5759v1}.
It is worth noting that $\tracedistance[\POVMs]\argdot\argdot$ is a pseudometric on $\Qst(\mcH)$, i.e., it is a symmetric, positive semidefinite bilinear form, but $\tracedistance[\POVMs]\rho\sigma = 0 \notimplies \rho = \sigma$.
For further properties of the distinguishability $\tracedistance[\POVMs]{}{}$ see for example Ref.~\cite{Aubrun2013}.

Another frequently employed distance measure is the \emph{fidelity}, defined for any two quantum states $\rho,\sigma \in \Qst(\mcH)$ as\footnote{Some authors define the fidelity as the square root of the $\fidelity{}{}$ used here.}
\begin{equation}
  \fidelity\rho\sigma \coloneqq \Tr\left(\left({\rho^{1/2}\,\sigma\,\rho^{1/2}}\right)^{1/2}\right)^2 .
\end{equation}
Similar to the trace distance, the fidelity is symmetric, i.e., $\fidelity\rho\sigma = \fidelity\sigma\rho$, non-decreasing under completely positive maps, and invariant under unitary operations.
The fidelity is not a metric, but it is related to the trace distance via $1-{\fidelity\rho\sigma}^{1/2} \leq \tracedistance\rho\sigma \leq (1-\fidelity\rho\sigma)^{1/2}$.
For pure states $\psi = \ketbra\psi\psi$ and $\varphi = \ketbra\varphi\varphi$ it reduces to the square of their overlap $\fidelity\psi\varphi = |\braket\psi\varphi|^2$.

The (time independent) \emph{Hamiltonian} $\H \in \Obs(\mcH)$ of a finite dimensional quantum system has the \emph{spectral decomposition}
\begin{equation} \label{eq:hamiltonianspectraldecomposition}
  \H = \sum_{k=1}^{d'} E_k\,\Pi_k
\end{equation}
where the $\Pi_k \in \Obs(\mcH)$ are its orthogonal (and mutually orthogonal) \emph{spectral projectors} and $d' \coloneqq |\spec(\H)| \leq d = \dim(\mcH)$ is the number of distinct, ordered \emph{(energy) eigenvalues} $E_k \in \R$ of $\H$, i.e., $k<l \implies E_k < E_l$.
The subspaces on which the $\Pi_k$ project are called \emph{(energy) eigenspaces} or \emph{energy levels}.
If $\H$ is non-degenerate it holds that $\Pi_k = \ketbra{E_k}{E_k}$ with $(\ket{E_k})_{k=1}^d$ a sequence of orthonormal \emph{energy eigenstates} of $\H$ and $d \coloneqq \dim(\mcH)$ the dimension of $\mcH$.

The Hamiltonian $\H$ governs the \emph{time evolution} $\rho \colon \R \to \Qst(\mcH)$ of the state of a quantum system via the \emph{(Schr{\"o}dinger-)von-Neumann-equation}, which in the Schr{\"o}dinger picture reads
\begin{equation}
  \frac{\del}{\del t} \rho(t) = - \iu [\H,\rho(t)] .
\end{equation}
Its formal solution can be given in terms of the \emph{time evolution operator}, which in the case of time independent Hamiltonian dynamics is given by the operator exponential
\begin{equation} \label{eq:timeevolution}
  \forall t\in\R\itholds U(t) \coloneqq \e^{-\iu\,\H\,t} \in \Bop(\mcH) .
\end{equation}
The time evolved quantum state at time $t$ is then
\begin{equation}
  \rho(t) \coloneqq U^\dagger(t)\,\rho(0)\,U(t) ,
\end{equation}
with $\rho(0)$ the \emph{initial state} at time $t=0$.

The temporal evolution of the expectation value of an observable $A \in \Obs(\mcH)$ then solves
\begin{equation} \label{eq:timeevolutionofexpectationvalueinschroedingerpicture}
  \ex{A}{\rho(t)} = \Tr\big(A\,U^\dagger(t)\,\rho(0)\,U(t)\big) = \Tr\big(U(t)\,A\,U^\dagger(t)\,\rho(0)\big) .
\end{equation}
One can thus equally well-define the time evolution of an observable $A\oftype \R \to \Obs(\mcH)$, with the initial value $A(0)$ given by the operator $A$ from \texteqref{eq:timeevolutionofexpectationvalueinschroedingerpicture}, by setting $A(t) \coloneqq U(t)\,A(0)\,U^\dagger(t)$, and consider a fixed quantum state $\rho \in \Qst(\mcH)$, equal to the initial state $\rho(0)$ in \texteqref{eq:timeevolutionofexpectationvalueinschroedingerpicture}.
Then $\ex{A(t)}{\rho}$ is equal to $\ex{A}{\rho(t)}$ from \texteqref{eq:timeevolutionofexpectationvalueinschroedingerpicture} for all $t \in \R$.
The time evolution $A\oftype \R \to \Obs(\mcH)$ of an observable in the Heisenberg picture solves the differential equation
\begin{equation}\label{eq:schroedingervonneumannequationinheisenbergpicture}
  \frac{\del}{\del t} A(t) = \iu [\H,A(t)] .
\end{equation}

We call all observables $A \in \Obs(\mcH)$ that commute with the Hamiltonian, i.e., for which $[\H,A] \coloneqq A\,B - B\,A = 0$, \emph{conserved quantities}.
It follows directly from \texteqref{eq:schroedingervonneumannequationinheisenbergpicture} that the expectation value of all conserved quantities is independent of time, irrespective of the initial state, which justifies the name.
If the Hamiltonian $\H$ is non-degenerate, then exactly the observables that are diagonal in the same basis as $\H$ are conserved quantities.
In the presence of degeneracies exactly the observables $A \in \Obs(\mcH)$ for which some basis exists in which both $A$ and $\H$ are diagonal are conserved quantities.

Given a function $f$ depending on time, we define its \emph{finite time average}
\begin{equation} \label{eq:generalfinitetimeaverage}
  \taverage[T]{f} \coloneqq \frac{1}{T}\,\int_0^T f(t) ,
\end{equation}
and its \emph{(infinite) time average}
\begin{equation} \label{eq:generalinfinitetimeaverage}
  \taverage{f} \coloneqq \lim_{T\to\infty} \taverage[T]{f} ,
\end{equation}
whenever the limit exists.
In all cases we will be interested in, the existence of the limit in \texteqref{eq:generalinfinitetimeaverage} is guaranteed by the theory of \emph{(Besicovitch) almost-periodic} functions \cite{Besicovitch1926}.

In particular we will encounter the \emph{time averaged state} $\omega \coloneqq \taverage{\rho}$, which is, in the finite dimensional case considered here, equal to the initial state $\rho(0)$ \emph{dephased} with respect to the Hamiltonian $\H$, i.e., $\omega = \$_\H(\rho(0))$, with the \emph{de-phasing map}
acting as
\begin{equation}
  \rho\mapsto  \$_\H(\rho) \coloneqq \sum_{k=1}^{d'} \Pi_k\, \rho\, \Pi_k
\end{equation}
and $(\Pi_k)_{k=1}^{d'}$ the sequence of orthogonal spectral projectors of $\H$.

We will encounter systems consisting of smaller subsystems.
Often their Hamiltonian can be written as a sum of Hamiltonians that each act non-trivially only on certain subsets of the whole system.
We will refer to such systems as \emph{composite (quantum) systems} or as \emph{locally interacting (quantum) systems}, depending on whether we want to stress that they consist of multiple parts or that the interaction between the parts has a special structure.
The notion of locally interacting quantum systems can be formalised by means of an \emph{interaction (hyper)graph} $\mcG \coloneqq (\Vset, \Eset)$, which is a pair of a \emph{vertex set} $\Vset$ and an \emph{edge set} $\Eset$.

The \emph{vertex set} $\Vset$ is the set of indices labeling the sites of the system and we will work under the assumption that $|\Vset| < \infty$.
The Hilbert space $\mcH$ of such a system is either, in the case of spin systems, the \emph{tensor product} $\bigotimes_{x\in\Vset} \mcH_{\{x\}}$ of the Hilbert spaces $\mcH_{\{x\}}$ of the individual sites $x \in \Vset$, or, in the case of fermionic or bosonic systems, the \emph{Fock space}, or a subspace of the latter.

We will encounter bosons, which usually need to be described using infinite dimensional Hilbert spaces, only in Section~\ref{sec:equlibrationinthestrongsense}, hence we want to avoid the technicalities of a proper treatment of infinite dimensional Hilbert spaces and unbounded operators in the framework of functional analysis.
We will thus only introduce the minimal notation necessary to formulate the statements we will discuss in Section~\ref{sec:equlibrationinthestrongsense}.

The sites $x \in \Vset$ of fermionic and bosonic composite systems are often called \emph{modes}.
In the case of fermions each mode is equipped with the Hilbert space $\mcH^f_{\{x\}} = \C^2$ with orthonormal basis $((\ket n_f)_{n=0}^1$, and in the case of bosons with the Hilbert space $\mcH^b_{\{x\}} = \ltwo$ of square summable sequences with orthonormal basis $(\ket n_b)_{n=0}^\infty$.
For composite systems with exactly $N$ fermions or bosons in $M$ modes, i.e., $\Vset = [M]$, the Hilbert space is given by a so-called \emph{Fock layer}.
The Fock layer to particle number $N$ is the complex span of the orthonormal \emph{Fock (basis) states} $\ket{n_1,\dots,n_M}_f$ or $\ket{n_1,\dots,n_M}_b$ respectively, where for each $x \in \Vset$, $n_x$ is the number of particles in mode $x$ and thus $\sum_{x \in \Vset} n_x = N$ with $n_x \in \{0,1\}$ in the case of fermions, and $n_x \in [N]$ in the case of bosons.

The full \emph{Fock space} of a system of fermions or bosons is the Hilbert space completion of the direct sum of the Fock layers for each possible total particle number.
For fermions it holds that $N \leq M$ due to the \emph{Pauli exclusion principle}, and the resulting Hilbert space is hence finite dimensional.
In the case of bosons $N$ is independent of $M$ and the Fock space is thus infinite dimensional already for a finite number of modes.

We define the fermionic and bosonic \emph{annihilation operators} $f_x$ and $b_x$ on site $x$ and the corresponding \emph{creation operators} $f\ad_x$ and $b\ad_x$ (collectively often referred to as simply the \emph{fermionic/bosonic operators}) via their action on the Fock basis states given by
\begin{align}
  f_x \ket{n_1,\dots,n_M}_f &=  n_x (-1)^{\sum_{y=1}^{x-1} n_y} \ket{\dots,n_{x_1},n_x - 1,n_{x+1},\dots}_f ,\label{eq:fermionicannihilationoperator}\\
  f\ad_x \ket{n_1,\dots,n_M}_f &=  (1-n_x) (-1)^{\sum_{y=1}^{x-1} n_y} \ket{\dots,n_{x_1},n_x + 1,n_{x+1},\dots}_f \\
  \intertext{and}
  b_x \ket{n_1,\dots,n_M}_b &=  \sqrt{n_x} \ket{n_1,\dots,n_{x_1},n_x - 1,n_{x+1},\dots,n_M}_b ,\label{eq:bosonicannihilationoperator}\\
  b\ad_x \ket{n_1,\dots,n_M}_b &=  \sqrt{n_x+1} \ket{n_1,\dots,n_{x_1},n_x + 1,n_{x+1},\dots,n_M}_b .
\end{align}
They satisfy the \emph{(anti) commutation relations}
\begin{align}
  \{f_x,f_y\} = \{f\ad_x,f\ad_y\} &= 0 , & \{f_x,f\ad_y\} &= \delta_{x,y} ,\\
  [b_x,b_y] = [b\ad_x,b\ad_y] &= 0 , & [b_x,b\ad_y] &= \delta_{x,y} , \label{eq:bosonicommutationrelations}
\end{align}
where for any two operators $A,B \in \Bop(\mcH)$ $[A,B] \coloneqq A\,B - B\,A$ is the \emph{commutator} and $\{A,B\} \coloneqq A\,B + B\,A$ the \emph{anti-commutator}.
We say that $A,B$ \emph{commute} or \emph{anti-commute} if $[A,B] = 0$ or $\{A,B\}=0$ respectively.

Any operator that commutes with the \emph{total particle number operator} $\sum_{x\in\Vset} f\ad_x\,f_x$ or $\sum_{x\in\Vset}  b\ad_x\,b_x$ respectively is called \emph{particle number preserving}.
In systems with particle number preserving Hamiltonians a constraint on the particle number can be used to make the description of bosonic systems with finite dimensional Hilbert spaces possible.
The Hilbert space is then a finite direct sum of Fock layers.
We say that a state has a \emph{finite particle number} if it is completely contained in such a finite direct sum of Fock layers.

In systems of fermions, all operators can be written as polynomials of the fermionic operators.
A polynomial of fermionic operators is called \emph{even/odd} if it can be written as a linear combination of monomials that are each a product of an even/odd number of creation and annihilation operators.
According to the \emph{fermion number parity superselection rule} \cite{Banuls2009}, only observables that are even polynomials in the fermionic operators can occur in nature.
The same holds for the Hamiltonians and density matrices of such systems.
Consequently, whenever we make statements about systems of fermions we assume that all observables, states and the Hamiltonian are even.

We refer to subsets of the vertex set $\Vset$ as \emph{subsystems}.
Generalising the notation introduced for the Hilbert spaces of the individual sites we denote the Hilbert spaces associated with a subsystem $X \subseteq \Vset$ by $\mcH_X$ and its dimension by $d_X\coloneqq\dim(\mcH_X)$.
In the case of composite systems of fermions or bosons it is understood that if an upper bound on the total number of particles has been imposed, then $\mcH_X$ is taken to be the direct sum of Fock layers corresponding to the sites in $X$ up to the total number of particles.
The \emph{size} of a (sub)system $X \subseteq \Vset$ is given by the number of sites or modes $|X|$, not the dimension of the corresponding Hilbert space.

For spin systems we define the \emph{support} $\supp(A)$ of an operator $A \in \Bop(\mcH)$ as the smallest subset of $\Vset$ such that $A$ acts like the identity outside of $X$.
For systems of fermions or bosons we define the support of an operator via its representation as a polynomial in the respective creation and annihilation operators.
The \emph{support} is then the set of all site indices $x\in\Vset$ for which the polynomial contains a fermionic or bosonic operator acting on site $x$, e.g., $b_x\ad$ or $f_x$.
The support of a POVMs is simply the union of the supports of its POVM elements.
Similarly, we define the \emph{support} $\supp(\Chann)$ of a superoperator $\Chann\oftype\Bop(\mcH)\to\Bop(\mcH)$ as the smallest subset of $\Vset$ such that
\begin{equation}
  \forall A \in \Bop(\mcH) \itholds \supp(A) \subseteq \compl{\supp(\Chann)} \implies \Chann(A) = A .
\end{equation}
We say that an observable, POVM, or superoperator is \emph{local} if the size of its support is small compared to and/or independent of the system size.

In order to fully exploit the notion of a subsystem we need to understand how the description of a joint system fits together with the description of a subsystem as an isolated system, i.e., how systems can be combined and decomposed.
For every subsystem $X \subseteq \Vset$ there is a \emph{canonical embedding} of $\Bop(\mcH_X)$ into $\Bop(\mcH)$ that bijectively maps $\Bop(\mcH_X)$ onto the subalgebra of bounded linear operators $A \in \Bop(\mcH)$ with $\supp(A) \subseteq X$, and similarly for all operators that are polynomials of bosonic operators.
In the case of spin systems the embedding is simply the natural embedding $A \in \Bop(\mcH_X) \mapsto A \otimes \1_{\compl X} \in \Bop(\mcH)$, where $\1_{\compl X}$ denotes the identity operator on $\mcH_{\compl X}$.
In systems of fermions or bosons we associate to each operator on $\mcH_X$ the operator on $\mcH$ that has the same representation as a polynomial in the fermionic/bosonic operators, but, of course, in terms of the fermionic/bosonic operators of the full system with Fock space $\mcH$ rather than the fermionic/bosonic operators that act on $\mcH_X$.
For systems of fermions, because of the phase in \texteqref{eq:fermionicannihilationoperator} that depends non-locally on the state, this embedding depends on the exact position the sites in $X$ have in the vertex set $\Vset$.
The vertex set should hence rather be called \emph{vertex sequence}, but for even operators the phases cancel out, which is why we ignore this subtlety.

Conversely, for any $A \in \Bop(\mcH)$ and any subsystem $X \subseteq \Vset\colon X \supseteq \supp(A)$ that contains $\supp(A)$ we define the \emph{truncation} $\trunc A X \in \Bop(\mcH_X)$ of $A$ as the operator that acts on the sites/modes in the subsystem $X$ ``in the same way'' as $A$, in the sense that a truncation followed by a canonical embedding gives back the original operator.
In particular, for spin systems any $A \in \Bop(\mcH)$ is of the form $A = \trunc A {\supp(A)} \otimes \1_{\compl{\supp(A)}}$.
For general systems, the identity operator $\1$ of course satisfies $\1_X = \trunc \1 X$ for any $X \subset \Vset$.

We now turn to the edge set.
The \emph{edge set} $\Eset$ is the set of all subsystems $X \subset \Vset$ for which a non-trivial Hamiltonian term $\H_X$ with $\supp(\H_X) = X$ exists that couples the sites in $X$.
The Hamiltonian of a locally interacting quantum system with edge set $\Eset$ ---  often just called a \emph{local Hamiltonian} ---  is of the form
\begin{equation} \label{eq:localhamiltonian}
  \H = \sum_{X \in \, \Eset} \H_X ,
\end{equation}
with $\supp(\H_X) = X$ for all $X \in \Eset$.
Most Hamiltonians in the condensed-matter context or of cold atoms in optical lattices can be very well approximated by such locally interacting Hamiltonians.
Generalising this notation to subsystems $X \subset \Vset$ that are not in $\Eset$ we define for any subsystem $X \subset \Vset$ the \emph{restricted Hamiltonian}
\begin{equation} \label{eq:restrictedhamiltonian}
  \H_X \coloneqq \sum_{Y \in \Eset\colon Y\subseteq X} \H_Y  \in \Obs(\mcH) ,
\end{equation}
which obviously fulfils $\supp(\H_X) \subseteq X$.
Note that we adopt the convention that $\H_X$ is an element of $\Obs(\mcH)$ and not of $\Obs(\mcH_X)$.

We will also need the \emph{graph distance}.
In order to define it, we first need to give a precise meaning to a couple of intuitive terms:
We say that two subsystems $X,Y \subset \Vset$ \emph{overlap} if $X \intersection Y \neq \emptyset$, 
a set $X \subset \Vset$ and a set $F\subset \Eset$ \emph{overlap} if $F$ contains an edge that overlaps with $X$, and two sets $F,F'\subset \Eset$ \emph{overlap} if $F$ overlaps with any of the edges in $F'$.
A subset $F \subset \Eset$ of the edge set \emph{connects} $X$ and $Y$ if $F$ contains all elements of some sequence of pairwise overlapping edges such that the first overlaps with $X$ and the last overlaps with $Y$ and similarly for sites 
$x,y \in \Vset$.

The \emph{(graph) distance} $\dist(X,Y)$ of two subsets $X,Y\subset \Vset$ with respect to the (hyper)graph $(\Vset,\Eset)$ is zero if $X$ and $Y$ overlap and otherwise equal to the size of the smallest subset of $\Eset$ that connects $X$ and $Y$.
The \emph{diameter} of a set $F \subset \Eset$ is the largest graph distance between any two sets $X,Y \in F$.
We extend the definition of the graph distance to operators $A,B \in \Bop(\mcH)$ 
and set $\dist(A,B) \coloneqq \dist(\supp(A), \supp(B))$.

We will also make use of the notion of \emph{reduced states}, or \emph{marginals}.
Given a quantum state $\rho \in \Qst(\mcH)$ of a composite system with subsystem $X \subset \Vset$ we write $\rho^X$ for the \emph{reduced state} on $X$, which is defined as the unique quantum state $\rho^X \in \Qst(\mcH_X)$ with the property that for any observable $A \in \Obs(\mcH)$ with $\supp(A) \subseteq X$
\begin{equation} \label{eq:reducedstate}
  \Tr(\trunc A X \,\rho^X) = \Tr(A\,\rho) .
\end{equation}
Defining the reduced state in systems of fermions in this way is important to avoid ambiguities \cite{Friis2013}.
We will denote the linear map $\rho \mapsto \rho^X$ by $\Tr_{\compl{X}}$.
As $\Tr_{\compl{X}}$ is linear we can naturally extend its domain to all of $\Bop(\mcH)$ so that
\begin{equation} \label{eq:partialtrace}
  \Tr_{\compl{X}} \colon \Bop(\mcH) \to \Bop(\mcH_X) .
\end{equation}
In the case of spin systems $\Tr_{\compl{X}}$ is indeed the \emph{partial trace} over $\compl X = \Vset\setminus X$ as defined for example in Ref.~\cite{nielsenchuang}.
For time evolutions $\rho \colon \R \to \Qst(\mcH)$ we use the natural generalisation of the superscript notation, i.e., $\rho^X = \Tr_{\compl{X}} \circ \mathop\rho \colon \R \to \Qst(\mcH_X)$.

Correlations play a central role in the description of composite systems and hence in condensed matter physics and statistical mechanics.
It is beyond the scope of this work to give a comprehensive overview of the different types and measures of correlations (see for example Refs.~\cite{Kastoryano2011,Kastoryano2013,nielsenchuang,Plenio07}).
One important measure of correlation is the \emph{covariance}, which for a quantum state $\rho \in \Qst(\mcH)$ and two operators $A,B \in \Bop(\mcH)$ is defined to be
\begin{equation} \label{eq:covariance}
 \cov_\rho(A,B) \coloneqq \Tr(\rho\,A\,B) - \Tr(\rho\, A) \Tr(\rho \, B) .
\end{equation}
It satisfies
\begin{equation}
 |\cov_\rho(A,B)| \leq \left({\ex{A^2}\rho \, \ex{B^2}\rho} \right)^{1/2}
\end{equation}
and hence one often defines the \emph{correlation coefficient} as $\cov_\rho(A,B) / ({\ex{A^2}\rho \, \ex{B^2}\rho})^{1/2}$.
We will encounter a slightly generalised version of the covariance in Section~\ref{sec:propertiesofthermalstatesofcompositesystems}.

The covariance is most interesting as a correlation measure if $A$ and $B$ act on disjoint subsystems, i.e., $\supp(A) \intersection \supp(B) = \emptyset$.
If for a given state $\rho \in \Qst(\mcH)$ of a bipartite system with $\Vset = X \dunion Y$ and any two observables $A,B \in \Obs(\mcH)$ with $\supp (A) \subseteq X$ and $\supp (B)  \subseteq Y$ it holds that $\cov_\rho(A,B) = 0$, then we say that $\rho$ is \emph{uncorrelated} with respect to the bipartition $\Vset = X \dunion Y$.

Uncorrelated states of spin systems are \emph{product states}.
Consider a bipartite spin system with Hilbert space $\H$ and vertex set $\Vset = X \dunion Y$.
A quantum state $\rho \in \Qst(\mcH)$ is said to be \emph{product} with respect to this bipartition if $\rho = \rho^X \otimes \rho^Y$.
We call a basis that consists entirely of product states a \emph{product basis}.

Still in the setting of a bipartite spin system with Hilbert space $\H$ and vertex set $\Vset = X \dunion Y$, all quantum states of the form
\begin{equation}\label{Separable}
  \rho = \sum_j p_j\, \rho_j^X \otimes \rho_j^Y
\end{equation}
with $(p_j)_j$ a \emph{probability vector}, i.e., $\sum_j p_j = 1$ and $p_j \geq 0$ for all $j$, and $\rho_j^X \in \Qst(\mcH_X)$ and $\rho_j^Y \in \Qst(\mcH_Y)$ for all $j$, are called \emph{separable} with respect to the bipartition $\Vset = X \dunion Y$.
All states that can be prepared with \emph{local operations and classical communication} (LOCC) are called \emph{separable}, a notion that also holds
true for bosonic or fermionic systems.
Such states are correlated in general, but a classical mechanism can be held responsible for the correlations present.
All states that are not separable are called \emph{entangled}.

The \emph{Gibbs state} or \emph{thermal state} of a system with Hilbert space $\mcH$ and Hamiltonian $\H \in \Obs(\mcH)$ at inverse temperature $\beta \in \R$ is defined as
\begin{equation} \label{eq:defthermalstate}
  \rhog[\H](\beta) \coloneqq \frac{\e^{-\beta\,\H}}{Z[\H](\beta)} \in \Qst(\mcH) ,
\end{equation}
where $Z[\H]$ is the \emph{(canonical) partition function} defined as
\begin{equation}
  Z[\H](\beta) \coloneqq \Tr(\e^{-\beta\,\H}) .
\end{equation}
The Gibbs state has the important property that it is the unique quantum state that maximises the von Neumann entropy 
\begin{equation} \label{eq:vonneumannentropy}
  \Svn(\rho) \coloneqq - \Tr(\rho \log_2 \rho ) .
\end{equation}
given the expectation value of the Hamiltonian \cite{thirringquantu}.
This is a direct consequence of Schur's lemma \cite{bhatia} and the fact that the same statement holds in classical statistical mechanics, as can be seen from a straight forward application of the Lagrange multiplier technique.
In fact, the inverse temperature $\beta$ is nothing but the Lagrange parameter associated with the energy expectation value.

For locally interacting quantum systems with a Hamiltonian $\H \in \Obs(\mcH)$ of the form given in \eqref{eq:localhamiltonian} 
we adopt the convention that for any subsystem $X \subset \Vset$
\begin{equation}
  \rhog^X[\H](\beta) = \Tr_{\compl{X}}(\rhog[\H](\beta)) \in \Qst(\mcH_X)
\end{equation}
denotes the reduction of the Gibbs state of the full system to the subsystem $X$ (compare \texteqref{eq:reducedstate}).% , while we write
% \begin{equation}
%   \rhog_X[\H](\beta) \coloneqq \Tr_{\compl{X}}(\rhog[\H_X](\beta)) = \rhog[\trunc{\H_X} X](\beta) \in \Qst(\mcH_X) 
% \end{equation}
% for the reduced state on $X$ of the Gibbs state of the restricted Hamiltonian $\H_X$, or equivalently the Gibbs state of $\trunc{\H_X} X$  (compare \texteqref{eq:restrictedhamiltonian}).

The micro-canonical ensemble in quantum statistical mechanics takes the form of the \emph{micro-canonical state}.
Usually one defines the micro-canonical ensemble and state with respect to an energy interval $[E,E+\Delta]$.
Here we make the slightly more general definition that will be useful later:
The micro-canonical state to any subset $R \subseteq \R$ of the real numbers of a system with Hilbert space $\mcH$ and Hamiltonian $\H \in \Obs(\mcH)$ with spectral decomposition $\H = \sum_{k=1}^{d'} E_k \Pi_k$ is defined as
\begin{equation}
  \rhomc[\H](R) \coloneqq \frac{\sum_{k:E_k \in R} \Pi_k}{Z_{mc}[\H](R)} \in \Qst(\mcH) ,
\end{equation}
where $Z_{mc}[\H]$ is the \emph{micro-canonical partition function} defined as
\begin{equation}
  Z_{mc}[\H](R) \coloneqq \Tr(\sum_{k:E_k \in R} \Pi_k) .
\end{equation}

\section{Equilibration}
\label{sec:equilibration}
The dynamics of finite dimensional quantum system, as described in the previous section, is recurrent \cite{JChemPhys843,Schulman1978,PhysRevLett.49,PhysRev.107.33,Wallace2013} and time reversal invariant.
Hence, genuine equilibration in the sense of Boltzmann's \emph{H-Theorem} \cite{RevModPhys.27.289} that implies that entropy can only grow over time (see also Section~\ref{sec:boltzmannshtheorem}) is impossible.
This apparent contradiction between the microscopic theory of quantum mechanics and the thermodynamic behaviour observed in nature is one of the main issues that any derivation of statistical mechanics and thermodynamics from quantum theory needs to solve.

We will see in this section that the unitary time evolution of pure states of such systems does imply in a surprisingly general and natural way that certain time dependent properties of quantum systems do dynamically equilibrate and that hence this apparent contradiction can be resolved to a large extend.

We will concentrate on two notions of equilibration: \emph{equilibration on average} and \emph{equilibration during intervals}.
After an introduction of these two notions in Section~\ref{sec:notionsofequilibration} we will discuss them in detail in Sections~\ref{sec:equlibrationintheweaksense} and \ref{sec:equlibrationinthestrongsense}.
In particular we will give conditions under which equilibration in the respective sense can be ensured.
In Section~\ref{sec:othernotionsofequilibration} we touch upon other notions of equilibration that have been investigated in the literature.
Then we discuss Lieb-Robinson bounds, which limit the signal propagation in locally interacting quantum lattice systems, in Section~\ref{sec:liebrobinson} before we go on to survey results on the times scale on which equilibration happens in Section~\ref{sec:timescales}.
We end this section with a brief description of \emph{fidelity decay} in Section~\ref{sec:fidelitydecay}.
In the next section, Section~\ref{sec:numericalandanalyticalinvesitgationsofequilibration}, we then put the discussed rigorous results into the perspective of the picture emerging from numerical simulations and the insights gained from analytic investigations of more specific models.

\subsection{Notions of equilibration}
\label{sec:notionsofequilibration}
In this section we define and compare two notions of equilibration compatible with the recurrent and time reversal invariant nature of unitary quantum dynamics in finite dimensional systems.
These notions will capture the intuition that equilibration means that a quantity, after having been initialised at a non-equilibrium value, evolves towards some value and then stays close to it for an extended amount of time.
At the same time, what we will call \emph{equilibration} is less than what one usually associates with the evolution towards \emph{thermal equilibrium}.
We will define a quantum version of the latter, call it \emph{thermalisation}, and discuss it in detail in Section~\ref{sec:thermalisation}.

To keep the definition of equilibration as general as possible we will refer abstractly to \emph{time dependent properties} of quantum systems, by which we mean functions $f\oftype \R\to M$ that map time to some metric space $M$, for example $\R$ or $\mcS(\mcH)$.
The metric will allow us to quantify how close the value of such functions is for different times and in particular how close it is to the time average and ``equilibrium values'' of the function.

Properties that we will be interested in include for example the time evolution of expectation values of individual observables.
We will also encounter \emph{subsystem equilibration}.
In this case the property is the time evolution of the state of the subsystem and the metric the trace distance.
It will also be convenient to speak more generally of the \emph{apparent equilibration of the whole system} with the metric then being the distinguishability under a restricted set of POVMs.

We will discuss the following two notions of equilibration in more detail:
\begin{description}%[font=\normalfont\itshape]
\item[Equilibration on average:]
  We say that a time dependent property \emph{equilibrates on average} if its value is for \emph{most times during the evolution} close to some \emph{equilibrium value}.
\item[Equilibration during intervals:]
We say that a time dependent property \emph{equilibrates during a (time) interval} if its value is close to some \emph{equilibrium value} for \emph{all times in that interval}.
\end{description}

The use of the notion of equilibration on average in the quantum setting goes back to at least the work of von Neumann \cite{vonneumann1929} and has recently been developed further, in particular in Refs.~\cite{tasaki98,Reimann08,Linden09,1110.5759v1,1012.4622v1,Reimann2012,Reimann12,Huebener2015}.
We will see that equilibration on average, especially for expectation values of observables as well as for reduced states of small subsystems of large quantum systems, is provably a
very generic feature.
In contrast, equilibration during intervals is a property that is expected to be generically the case for locally interacting many-body systems, and there is compelling numerical
evidence for such a behaviour.
To date, however,
it has rigorously been proven only for specific models \cite{cramer10_1,PhysRevLett.10-5}.

Equilibration on average implies that the equilibrating property spends most of the time during the evolution close to its time average.
This allows for a reasonable definition of an \emph{equilibrium state}, which is then the time averaged or de-phased state.
As we will see later in Section~\ref{sec:thermalisation}, this makes it possible to tackle the question of \emph{thermalisation} in unitarily evolving quantum systems.

On the down side, a proof of equilibration on average alone does not immediately imply much about the time scale on which the equilibrium value is reached after a system is started in an out of equilibrium situation.
We will see that even though it is possible to bound these time scales, the bounds obtainable in the general settings considered here are only of very limited physical relevance (see Section~\ref{sec:timescales}).

As we will see in the following, the statements on equilibration during intervals are much more powerful in this respect.
They imply bounds on the time it takes to equilibrate that scale reasonably with the size of the system and hence prove equilibration on experimentally relevant time scales.
On the other hand, in the few settings in which equilibration during intervals of reduced states of subsystems has been proven, it is known that the equilibrium states are not close to thermal states of suitably restricted Hamiltonians.
In particulate, no proof of \emph{thermalisation} (in the sense of the word we will defined later in Section~\ref{sec:whatisthermalisation}) based on a result on equilibration during intervals is known to date.
We discuss both notions of equilibration in detail in the following two sections.

\subsection{Equilibration on average}
\label{sec:equlibrationintheweaksense}
In this section we discuss equilibration on average (see figure~\ref{fig:equilibrationonaverage} for a graphical illustration).
The outline is as follows:
After giving some historic perspective we will go through the main ingredients that feature in the known results on equilibration on average and discuss their role in the arguments and to what extent they are physically reasonable and mathematically necessary.
After this preparation we will state, prove and interpret the arguably strongest result on equilibration on average known to date.

Already the founding fathers of quantum mechanics realised that the unitary evolution of large, closed quantum systems, together with the immensely high dimension of their Hilbert space and quantum mechanical uncertainty, could possibly explain the phenomenon of equilibration.
Most notable is an article of von Neumann \cite{vonneumann1929} from 1929, which already contains a lot of the ideas and even variants of some of the results that can be found in the modern literature on the subject.
The renewed interest in the topic of equilibration was to a large extent a consequence of the two independent theoretical works Refs.~\cite{Reimann08,Linden09}.
The approach outlined there was then more recently refined and the results gradually strengthened.
Important contributions are in particular Refs.~\cite{Reimann12,1110.5759v1,1012.4622v1}.
Also very noteworthy is the often overlooked earlier work Ref.~\cite{tasaki98}.

The first fact that plays a prominent role in the proofs of equilibration on average is the immensely high dimension of the Hilbert space of most many-body systems.
The dimension of the Hilbert space of composite systems grows exponentially with the number of constituents.
What actually matters, of course, is the \emph{number of significantly occupied energy levels}, rather than the number of levels that are in principle available but not populated.
For each $k \in [d']$ we define the occupation $p_k \coloneqq \Tr(\Pi_k\,\rho(0))$ of the $k$-th energy level, where $d' \coloneqq |\spec(\H)| \leq d = \dim(\mcH)$ is the number of distinct such levels.
Refs.~\cite{tasaki98,Reimann08} use $\max_k p_k$, the occupation of the most occupied level, to quantify the number of significantly occupied energy levels.
Ref.~\cite{Linden09} uses a quantity called \emph{effective dimension}, denoted by $\deff(\omega)$, which in our notation can be defined as
\begin{equation} \label{eq:effectivedimension}
  \deff(\omega) \coloneqq \frac{1}{\sum_{k=1}^{d'} p_k^2} \geq \frac{1}{\max_k p_k} .
\end{equation}
If the initial state is taken to be an energy eigenstate, the resulting effective dimension is one, while that resulting from a uniform coherent superposition of $\tilde{d}$ energy eigenstates to different energies is $\tilde{d}$.
This justifies the interpretation of $\deff(\omega)$ as a measure of the number of significantly occupied states.
It is also reciprocal to a quantity that is known mostly in the condensed matter literature as \emph{inverse participation ratio} \cite{Neuenhahn10} and related to the time average of the \emph{Loschmidt echo} \cite{Levstein98,Campos10}.
While using the effective dimension instead of the occupation of the most occupied level can lead to tighter bounds it has the disadvantage that it cannot be efficiently computed given a state and the Hamiltonian.

There are a number of different ways to argue why it is acceptable to restrict oneself to initial states that populate a large number of energy levels when trying to prove the emergence of thermodynamic behaviour from the unitary dynamics of closed systems.
First, one can argue that initial states that only occupy a small subspace of the Hilbert space of a large system behave essentially like small quantum systems and such systems are anyway not expected to behave thermodynamically, but rather show genuine quantum behaviour.
Second, one can invoke the inevitable limits to the resolution and precision of experimental equipment to conclude that preparing states that overlap only with a handful of the roughly $2^{10^{23}}$ energy levels of a macroscopic system is impossible, even if we had apparatuses that were many orders of magnitude more precise than the equipment available today \cite{Reimann08,Reimann12}.
Finally, one can also take a more mathematical point of view and use results based on a phenomenon called \emph{measure concentration} \cite{ledoux01,CHATTERJEE07} that guarantees that uniformly random pure states drawn from sufficiently large subspaces of a Hilbert space have, with extremely high probability, an effective dimension with respect to any fixed, sufficiently non-degenerate Hamiltonian that is comparable to the dimension of that subspace \cite{Linden09,Popescu06,Popescu05,Gogolin10-masterthesis} (more on such typicality arguments in Section~\ref{sec:typicality}).
If one is willing to assume that such states are physically natural initial states, this can justify the assumption of a large effective dimension.
We will come back to this in Section~\ref{sec:typicality} where we discuss \emph{typicality}.
For an earlier work that directly mingles typicality arguments and de-phasing to derive an equilibration result see also Ref.~\cite{Bocchieri1959}.

As we will see below, it is actually sufficient for equilibration that $\maxprime_k p_k$, the second largest of the energy level occupations, is small.
Note that in the physically relevant situation of a system that is cooled close to its ground state $\maxprime_k p_k$ can be orders of magnitude smaller than $\max_k p_k$ or $1/\deff(\omega)$.
Although the proof of this extension of previous results is not trivial \cite{Reimann12}, the physical intuition behind it is clear:
The expectation values of all observables of a system that is initialised in an energy eigenstate are already in equilibrium.
What can prevent equilibration on average are not macroscopic populations of one energy level, but rather initial states that are coherent superpositions of a small number of energy eigenstates.
Such states can show a behaviour reminiscent of Rabi-Oscillations and not exhibit equilibration.

\begin{figure}[bt]
  \centering
  \begin{tikzpicture}[scale=1.2]
    \draw[thick,->] (-1,0) -- (6,0) node[at end,below] {$E$} node[at start,left] {$\spec(\H):$};
    % \foreach \i/\x in {1/0.1,2/0.3,3/1,4/1.5,5/2.5,6/3.1,7/3.2,8/4.5,9/4.8} {\node (n\i) at (\x,-0.1) {}; \draw[thick] (\x,-0.1) -- (\x,0.1);};
    \foreach \i/\x in {1/0.1,2/0.3,3/1,4/1.5,5/2.5,6/3.1,7/3.2,8/4.5,9/4.8} {\draw[thick] (\x,-0.1) -- (\x,0.1) node[at start,scale=0.01] (n\i) {};};
    \draw[decorate,decoration={brace}] (n4) -- (n3) node[midway,below=0.5] (gap1) {} node[at start,above=0.3] (el) {$E_l$} node[at end,above=0.3] (ek) {$E_k$};
    \draw[decorate,decoration={brace}] (n8) -- (n7) node[midway,below=0.5] (gap2) {} node[at start,above=0.3] (en) {$E_n$} node[at end,above=0.3] (em) {$E_m$};
    \node[anchor=north] (neq) at ($ (gap1) !.5! (gap2) $) {$\neq$};
    \draw[->] (neq) -- (gap1);
    \draw[->] (neq) -- (gap2);
  \end{tikzpicture}          
  \caption{(Reproduction from Ref.~\cite{Gogolin2014}) Illustration of the non-degenerate energy gaps condition.
No gap between two energy levels may occur more than once in the spectrum, but the individual levels may well be degenerate.}
  \label{fig:nondegenerateenergygaps}
\end{figure}
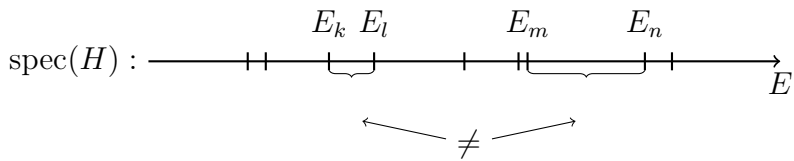

The second main ingredient to the proofs of Refs.~\cite{vonneumann1929,tasaki98,Reimann08,Linden09} is the condition of \emph{non-degenerate energy gaps} originally called the \emph{non-resonance} condition.
We say that a Hamiltonian $\H$ has \emph{non-degenerate energy gaps}, if for every $k,l,m,n \in [d']$
\begin{equation}
  E_k - E_l = E_m - E_n \implies (k = l \land m = n) \lor (k=m \land l=n) ,
\end{equation}
i.e., if every energy gap $E_k - E_l$ appears exactly once in the spectrum of $\H$ (see figure~\ref{fig:nondegenerateenergygaps}).
The original condition used in Refs.~\cite{vonneumann1929,tasaki98,Reimann08,Linden09} is stronger and excludes in addition all Hamiltonians with degeneracies, i.e., requires that $d' = d$.
Although the non-degenerate energy gaps condition appears to be pretty technical at first sight, the motivation for imposing it can be made apparent by the following consideration:
The main concern of Ref.~\cite{Linden09} is the equilibration on average of the reduced state $\rho^S(t)$ of a small subsystem $S$ of a bipartite system with $\Vset = S \dunion B$.
If the Hamiltonian of the composite system is of the form
\begin{equation} \label{eq:hamiltonianwithoutcouplingtodemonstratenondegenerategaps}
  \H = \H_S + H_B ,
\end{equation}
i.e., $S$ and $B$ are not coupled (remember the definition of the restricted Hamiltonian in \texteqref{eq:restrictedhamiltonian}), then $\rho^S(t)$ will simply evolve unitarily and equilibration of $\rho^S(t)$ is clearly impossible.
Hence, one needs a condition that excludes such non-interacting Hamiltonians.
Imposing the condition of non-degenerate energy gaps is a mathematically elegant, simple, and natural way to do this.
It is easy to see that Hamiltonians of the form given in \texteqref{eq:hamiltonianwithoutcouplingtodemonstratenondegenerategaps} have many degenerate gaps, as their eigenvalues are simply sums of the eigenvalues of $\H_S$ and $\H_B$.

In the more recent literature, the condition of non-degenerate energy gaps has been gradually weakened.
Ref.~\cite{1110.5759v1} defines the maximal number of energy gaps in any energy interval of width $\epsilon$,
\begin{equation}\label{eq:numeberofalmostdegenerategaps}
  N(\epsilon) \coloneqq \sup_{E \in \R} |\{(k,l) \in [d']^2\suchthat k\neq l \land E_k - E_l \in [E,E+\epsilon] \}| .
\end{equation}
Note that $N(0)$ is the number of degenerate energy gaps and a Hamiltonian $\H$ satisfies the \emph{non-degenerate energy gaps} condition if and only if $N(0) = 1$.
The above definition allows to prove an equilibration theorem that still works if a system has a small number of degenerate energy gaps.
Moreover, it has the advantage that it allows to make statements about the equilibration time.
As we will see in the next theorem, equilibration on average can be guaranteed to happen on a time scale $T$ that is large enough such that $T\,\epsilon \gg 1$ where $\epsilon$ must be chosen small enough such that $N(\epsilon)$ is small compared to the number of significantly populated energy levels.

The arguably strongest and most general result concerning equilibration on average in quantum systems can be obtained by combining the two recent works Refs.~\cite{1110.5759v1,Reimann12}.
In fact, we will see that it even goes slightly beyond a mere proof of equilibration on average, as it does have non-trivial implications for the time scales on which equilibration happens.

\begin{theorem}[Equilibration on average] \label{thm:equilibrationonaverage}
  Given a system with Hilbert space $\mcH$ and Hamiltonian $\H \in \Obs(\mcH)$ with spectral decomposition $\H = \sum_{k=1}^{d'} E_k\,\Pi_k$.
  For $\rho(0) \in \Qst(\mcH)$ the initial state of the system, let $\omega = \$_H(\rho(0))$ be the de-phased state and define the energy level occupations $p_k \coloneqq \Tr(\Pi_k\,\rho(0))$.
  Then, for every $\epsilon,T>0$ it holds that (i) for any operator $A \in \Bop(\mcH)$
  \begin{equation} \label{eq:equilibrationonaverageforexpectationvalues}
    \taverage[T]{( \ex A {\rho(t)} - \ex A \omega )^2} \leq \norm[\infty]{A}^2\,N(\epsilon)\,f(\epsilon\,T)\,g((p_k)_{k=1}^{d'})  ,
  \end{equation}
  and (ii) for every set $\POVMs$ of POVMs
  \begin{equation} \label{eq:equilibrationonaverageforrestricedpovms}
    \taverage[T]{\tracedistance[\POVMs]{\rho(t)}{\omega}} \leq h(\POVMs)\,
    \left({N(\epsilon)\,f(\epsilon\,T)\,g((p_k)_{k=1}^{d'}) } \right)^{1/2},
  \end{equation}
  where $N(\epsilon)$ is defined in \texteqref{eq:numeberofalmostdegenerategaps}, $f(\epsilon\,T) \coloneqq 1+8 \log_2(d')/(\epsilon\,T)$, 
  \begin{align}
    && g((p_k)_{k=1}^{d'}) &\coloneqq \min(\sum_{k=1}^{d'} p_k^2, 3  \maxprime_k p_k ) ,\label{eq:equilibrationonaveraggeneralisedeffectivedimenson}\\
    \text{and}&& h(\POVMs) &\coloneqq \min(|{\union \POVMs}|/4, \dim(\mcH_{\supp(\POVMs)})/2 ) \label{eq:equilibrationonaveragnumberofmeasurements},
  \end{align}
  with $\maxprime_k p_k$ the second largest element in $(p_k)_{k=1}^{d'}$, $\union \POVMs$ the set of all distinct POVM elements in $\POVMs$, and $\supp(\POVMs) \coloneqq \bigcup_{M \in \union \POVMs} \supp(M)$.
\end{theorem}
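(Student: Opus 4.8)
The plan is to combine the arguments of Refs.~\cite{1110.5759v1} and \cite{Reimann12}; I only indicate the main steps. It suffices to prove (i), since (ii) follows from it together with the Cauchy--Schwarz inequality and elementary properties of POVMs, as explained at the end. For (i), the first move is to reduce to a pure initial state: purify $\rho(0)$ on an enlarged Hilbert space $\mcH\otimes\mcH'$ and replace $A,\H$ by $A\otimes\1,\H\otimes\1$. This is harmless because $\norm[\infty]{A\otimes\1}=\norm[\infty]{A}$, the set of distinct eigenvalues (hence $N(\epsilon)$ and $d'$) and the occupations $p_k$ are preserved, and $\ex{A}{\rho(t)}$, $\ex{A}{\omega}$ are unchanged. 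Writing then $\rho(0)=\ketbra{\psi}{\psi}$ and $\ket{\psi_k}\coloneqq\Pi_k\ket{\psi}$, one has $\braket{\psi_k}{\psi_l}=\delta_{kl}\,p_k$, $\omega=\sum_k\ketbra{\psi_k}{\psi_k}$, and, with $v_{kl}\coloneqq\bra{\psi_l}A\ket{\psi_k}$,
\[ \ex{A}{\rho(t)}-\ex{A}{\omega}=\sum_{k\neq l}\e^{-\iu(E_k-E_l)\,t}\,v_{kl}. \]

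The core of the proof then splits into a static and a dynamical estimate. The static one is the state-dependent but time-independent bound $\sum_{k\neq l}|v_{kl}|^2\le\norm[\infty]{A}^2\,g\bigl((p_k)_{k=1}^{d'}\bigr)$. For the $\sum_k p_k^2$ branch of $g$ I would write $\sum_{k,l}|v_{kl}|^2=\Tr(\omega\,A\ad\,\omega\,A)$ and estimate it with $p_k p_l\le(p_k^2+p_l^2)/2$ together with the fact that the diagonal entries of $A\ad A$ and $A\,A\ad$ are at most $\norm[\infty]{A}^2$, evaluated in the eigenbasis of $\omega$; for the $3\,\maxprime_k p_k$ branch I would split the index range at the most populated level $k=1$ and bound the three remaining blocks ($k=1\neq l$; $l=1\neq k$; $k,l\neq 1$) separately, each by $\norm[\infty]{A}^2\,\maxprime_k p_k$, using the same diagonal bound and $\sum_k p_k\le1$. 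The dynamical estimate is the finite-time dephasing lemma of Ref.~\cite{1110.5759v1}: squaring the displayed expression and applying $\taverage[T]{\argdot}$ produces a double sum over pairs of gaps weighted by $\taverage[T]{\e^{-\iu(E_k-E_l-E_m+E_n)t}}$, whose modulus is at most $\min\!\bigl(1,2/(|E_k-E_l-E_m+E_n|\,T)\bigr)$; grouping the gaps into bins of width $\epsilon$ (with at most $N(\epsilon)$ of them in any bin), applying Cauchy--Schwarz within and across bins and summing the resulting harmonic-type series, truncated because there are at most $(d')^2$ distinct gaps, gives
\[ \taverage[T]{\Bigl|\sum_{k\neq l}\e^{-\iu(E_k-E_l)t}\,v_{kl}\Bigr|^2}\le N(\epsilon)\,f(\epsilon\,T)\,\sum_{k\neq l}|v_{kl}|^2 . \]
Combining the two estimates yields \eqref{eq:equilibrationonaverageforexpectationvalues}; in the non-degenerate-gaps limit ($N(\epsilon)\to1$, $T\to\infty$) only the static bound survives and one recovers the classical estimate $\taverage{(\ex{A}{\rho(t)}-\ex{A}{\omega})^2}\le\norm[\infty]{A}^2\,g$.

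For (ii), for any single POVM $M\in\POVMs$ one has $\tracedistance[\{M\}]{\rho(t)}{\omega}=\tfrac12\sum_k|\ex{M_k}{\rho(t)}-\ex{M_k}{\omega}|$; taking the supremum over $\POVMs$, then $\taverage[T]{\argdot}$, then $\taverage[T]{|x|}\le(\taverage[T]{x^2})^{1/2}$ with (i) applied to each effect $M_k$ (which obeys $\norm[\infty]{M_k}\le1$), and using that effects occur in complementary pairs $M_k,\1-M_k$ with equal fluctuation, produces the bound with $h(\POVMs)=|{\union\POVMs}|/4$. The competing bound $h(\POVMs)=\dim(\mcH_{\supp(\POVMs)})/2$ comes from $\tracedistance[\POVMs]{\rho(t)}{\omega}\le\tracedistance{\rho^{X}(t)}{\omega^{X}}\le\tfrac12\sqrt{d_X}\,\norm[2]{\rho^X(t)-\omega^X}$ for $X\coloneqq\supp(\POVMs)$, expanding $\norm[2]{\rho^X(t)-\omega^X}^2$ in an orthonormal operator basis of $\Obs(\mcH_X)$ and applying (i) term by term; one keeps the smaller of the two bounds. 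The genuinely non-routine ingredients are the finite-time dephasing lemma --- specifically extracting the factor $f(\epsilon T)=1+8\log_2(d')/(\epsilon T)$ rather than a crude per-gap $1+2/(\epsilon T)$, which requires the binning-and-summation argument of Ref.~\cite{1110.5759v1} --- and the replacement of $\max_k p_k$ (equivalently $1/\deff(\omega)$) by the second-largest occupation $3\,\maxprime_k p_k$ in $g$, which is precisely the refinement of Ref.~\cite{Reimann12}.
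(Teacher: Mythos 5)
Your proposal is essentially a self-contained reconstruction of the mathematics underlying Theorem~\ref{thm:equilibrationonaverage}, whereas the paper's own proof is almost entirely a proof-by-citation: it states which theorems and equations from Refs.~\cite{1110.5759v1,Reimann12} the two branches of $g$ follow from, plus a short remark on how to strengthen Eq.~(B.1) of Ref.~\cite{1110.5759v1} so that $\sum_{M\in\POVMs}|M|$ can be replaced by $|\union\POVMs|$. You instead unpack the argument itself --- purification, the decomposition $\ex A{\rho(t)}-\ex A\omega=\sum_{k\neq l}\e^{-\iu(E_k-E_l)t}v_{kl}$, the static bound $\sum_{k\neq l}|v_{kl}|^2\leq\norm[\infty]{A}^2\,g$, and the binned dephasing estimate --- so the route differs in presentation but not in mathematical content. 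Part~(i) checks out: for the $\sum_k p_k^2$ branch, $p_kp_l\leq(p_k^2+p_l^2)/2$ together with $\sum_l|\bra{\hat\psi_l}A\ket{\hat\psi_k}|^2\leq\norm[\infty]{A}^2$ for the normalised $\ket{\hat\psi_k}\coloneqq\ket{\psi_k}/\sqrt{p_k}$ gives the claim, and your three-block split gives the $3\maxprime_kp_k$ branch; the dephasing lemma with factor $f(\epsilon T)$ is exactly the Short--Farrelly machinery.

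There are, however, two constant-level slips in the reduction of (ii) to (i). First, the ``complementary pair'' mechanism for the factor $1/4$ in $h(\POVMs)=|\union\POVMs|/4$ only works for two-outcome POVMs, where $\delta_1=-\delta_2$; for a POVM with three or more outcomes the effects do not pair up. The trick that works uniformly is a traceless shift: since $\Tr\rho(t)=\Tr\omega$, one has $\Tr(M_a(\rho(t)-\omega))=\Tr\bigl((M_a-\tfrac12\1)(\rho(t)-\omega)\bigr)$ and $\norm[\infty]{M_a-\tfrac12\1}\leq\tfrac12$, so applying~(i) to the shifted effect gives $\taverage[T]{|\delta_a|}\leq\tfrac12\sqrt{N(\epsilon)f(\epsilon T)\,g}$, and together with the $\tfrac12$ in the definition of $\tracedistance[\POVMs]{}{}$ this yields the $1/4$. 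Second, expanding $\norm[2]{\rho^X(t)-\omega^X}^2$ in an \emph{arbitrary} orthonormal operator basis and applying~(i) term by term bounds $\sum_i\norm[\infty]{B_i}^2$ only by $d_X^2$, hence $h=d_X^{3/2}/2$, which is off by a factor $\sqrt{d_X}$. To recover $h=d_X/2$ one should expand in a unitary operator basis (e.g.\ the generalised Pauli / Weyl--Heisenberg basis of $\Bop(\mcH_X)$, allowing non-Hermitian $B_i$, which is fine since~(i) holds for all $A\in\Bop(\mcH)$), for which $\norm[\infty]{B_i}=d_X^{-1/2}$ and so $\sum_i\norm[\infty]{B_i}^2=d_X$; alternatively apply the dephasing lemma directly to $\norm[2]{\rho^X(t)-\omega^X}^2$ as Ref.~\cite{1110.5759v1} does rather than invoking~(i) termwise.
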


\begin{proof}
  \texteqref{eq:equilibrationonaverageforexpectationvalues} for $g((p_k)_{k=1}^{d'})$ equal to the first argument of the $\min$ in \texteqref{eq:equilibrationonaveraggeneralisedeffectivedimenson} is Theorem~1 in Ref.~\cite{1110.5759v1}.
  The same statement, but with $g((p_k)_{k=1}^{d'})$ equal to the second argument in the $\min$, follows from Eqs.~(44), (50), (61), and (63) in Ref.~\cite{Reimann12}.
  With $|{\union U}|$ in \texteqref{eq:equilibrationonaveragnumberofmeasurements} replaced by the total number of all measurement outcomes, i.e., $\sum_{M \in \POVMs} |M|$, \texteqref{eq:equilibrationonaverageforrestricedpovms}, for $g((p_k)_{k=1}^{d'})$ equal to the first argument of the $\min$ in \texteqref{eq:equilibrationonaveraggeneralisedeffectivedimenson}, is implied by Theorems~2 and 3 from Ref.~\cite{1110.5759v1}.
  A careful inspection of Eq.~(B.1) in Ref.~\cite{1110.5759v1}, however, reveals that the slightly stronger result holds.
  In particular, one can first use the bound 
  \begin{equation}
  \max_{M(t) \in \mathcal{M}} D_{M(t)}(\rho(t),\omega) \leq \sum_{M_a \in \union \mathcal{M}} |\mathop{tr}(M_a\,\rho(t)) - \mathop{tr}(M_a\,\omega)| 
 \end{equation}
 for the argument of the time average in the right hand side of the first line of Eq.~(B.1) and then use the triangle inequality to pull the time average into the sum.
  For $g((p_k)_{k=1}^{d'})$ equal to the second argument the result follows using \texteqref{eq:equilibrationonaverageforexpectationvalues} instead of Theorem~1 from Ref.~\cite{1110.5759v1} in the proofs of Theorems~2 and 3 from Ref.~\cite{1110.5759v1}.
\end{proof}

\begin{figure}[bt]
  \centering
  \begin{tikzpicture}[scale=0.7]
    \draw[-,thick] (-0.5,0) -- (7,0);
    \draw[-,dotted,thick] (7,0) -- (8,0) ;
    \draw[-,dotted,thick] (9,0) -- (10,0) ;
    \draw[->,thick] (10,0) -- (14,0) node[near end,below] {$t$} ;
    \draw[->,thick] (-0.5,0) -- (-0.5,5) node[midway,above,rotate=90] {$( \ex A {\rho(t)} - \ex A \omega )^2$} ;
    \draw[color=structure] plot file {distanceevolution.dat};
    \draw[color=structure] plot file {distanceevolution2.dat};
    \draw[-,dotted,thick] (0,0) -- (0,5) node[at start, below] {$0$};
  \end{tikzpicture}
  \caption{(Reproduced from Ref.~\cite{Gogolin2014}) Equilibration on average is compatible with the time reversal invariant and recurrent nature of the time evolution of finite dimensional quantum systems.
The figure shows a prototypical example of equilibration on average.
Started in a non-equilibrium initial condition at time $0$ the expectation value of some observable $A$ quickly relaxes towards the equilibrium value $\ex A \omega$ and then fluctuates around it, with far excursions from equilibrium being rare.
After very long times the system returns (close to) its initial state and so does the expectation value of the observable.
A similar behaviour is observed when the initial state is evolved backwards in time.}
  \label{fig:equilibrationonaverage}
\end{figure}
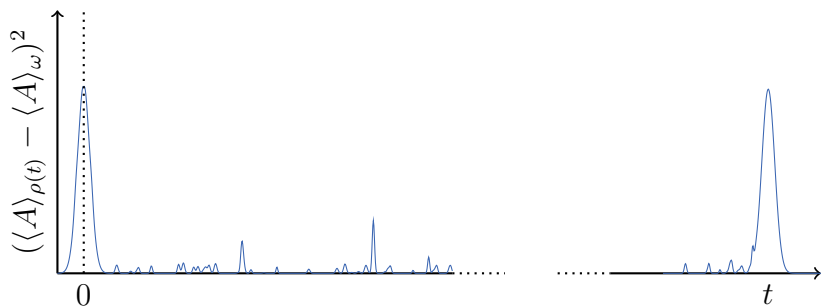

\begin{filecontents}{distanceevolution.dat}
-0.5	0.001
-0.49	0.001
-0.48	0.002
-0.47	0.002
-0.46	0.003
-0.45	0.004
-0.44	0.006
-0.43	0.007
-0.42	0.01
-0.41	0.013
-0.4	0.017
-0.39	0.022
-0.38	0.028
-0.37	0.036
-0.36	0.047
-0.35	0.059
-0.34	0.074
-0.33	0.093
-0.32	0.115
-0.31	0.142
-0.3	0.174
-0.29	0.212
-0.28	0.257
-0.27	0.308
-0.26	0.368
-0.25	0.436
-0.24	0.514
-0.23	0.602
-0.22	0.702
-0.21	0.816
-0.2	0.946
-0.19	1.095
-0.18	1.259
-0.17	1.434
-0.16	1.61
-0.15	1.779
-0.14	1.936
-0.13	2.084
-0.12	2.229
-0.11	2.377
-0.1	2.529
-0.09	2.685
-0.08	2.842
-0.07	2.996
-0.06	3.143
-0.05	3.278
-0.04	3.393
-0.03	3.479
-0.02	3.532
-0.01	3.551
  0.	3.543
0.01	3.551
0.02	3.532
0.03	3.479
0.04	3.393
0.05	3.278
0.06	3.143
0.07	2.996
0.08	2.842
0.09	2.685
0.1	2.529
0.11	2.377
0.12	2.229
0.13	2.084
0.14	1.936
0.15	1.779
0.16	1.61
0.17	1.434
0.18	1.259
0.19	1.095
0.2	0.946
0.21	0.816
0.22	0.702
0.23	0.602
0.24	0.514
0.25	0.436
0.26	0.368
0.27	0.308
0.28	0.257
0.29	0.212
0.3	0.174
0.31	0.142
0.32	0.115
0.33	0.093
0.34	0.074
0.35	0.059
0.36	0.047
0.37	0.036
0.38	0.028
0.39	0.022
0.4	0.017
0.41	0.013
0.42	0.01
0.43	0.007
0.44	0.006
0.45	0.004
0.46	0.003
0.47	0.002
0.48	0.002
0.49	0.001
0.5	0.001
0.51	0.001
0.52	0.
0.53	0.001
0.54	0.001
0.55	0.003
0.56	0.008
0.57	0.018
0.58	0.037
0.59	0.064
0.6	0.099
0.61	0.131
0.62	0.153
0.63	0.155
0.64	0.137
0.65	0.105
0.66	0.071
0.67	0.041
0.68	0.021
0.69	0.009
0.7	0.004
0.71	0.001
0.72	0.
0.73	0.
0.74	0.
0.75	0.
0.76	0.
0.77	0.
0.78	0.
0.79	0.
0.8	0.
0.81	0.
0.82	0.
0.83	0.001
0.84	0.003
0.85	0.008
0.86	0.016
0.87	0.026
0.88	0.036
0.89	0.04
0.9	0.038
0.91	0.03
0.92	0.02
0.93	0.011
0.94	0.006
0.95	0.004
0.96	0.006
0.97	0.012
0.98	0.022
0.99	0.035
1.	0.051
1.01	0.069
1.02	0.088
1.03	0.105
1.04	0.111
1.05	0.098
1.06	0.07
1.07	0.039
1.08	0.017
1.09	0.006
1.1	0.001
1.11	0.
1.12	0.
1.13	0.
1.14	0.
1.15	0.
1.16	0.
1.17	0.
1.18	0.
1.19	0.
1.2	0.
1.21	0.
1.22	0.001
1.23	0.005
1.24	0.015
1.25	0.038
1.26	0.073
1.27	0.113
1.28	0.138
1.29	0.134
1.3	0.104
1.31	0.064
1.32	0.031
1.33	0.012
1.34	0.004
1.35	0.001
1.36	0.
1.37	0.
1.38	0.
1.39	0.
1.4	0.
1.41	0.
1.42	0.
1.43	0.
1.44	0.
1.45	0.
1.46	0.
1.47	0.
1.48	0.
1.49	0.
1.5	0.
1.51	0.
1.52	0.
1.53	0.
1.54	0.
1.55	0.
1.56	0.
1.57	0.
1.58	0.
1.59	0.
1.6	0.
1.61	0.
1.62	0.
1.63	0.
1.64	0.
1.65	0.
1.66	0.
1.67	0.
1.68	0.
1.69	0.
1.7	0.
1.71	0.
1.72	0.
1.73	0.001
1.74	0.004
1.75	0.012
1.76	0.03
1.77	0.063
1.78	0.107
1.79	0.149
1.8	0.17
1.81	0.16
1.82	0.126
1.83	0.087
1.84	0.064
1.85	0.067
1.86	0.095
1.87	0.137
1.88	0.176
1.89	0.196
1.9	0.187
1.91	0.153
1.92	0.107
1.93	0.064
1.94	0.033
1.95	0.015
1.96	0.005
1.97	0.002
1.98	0.
1.99	0.
2.	0.
2.01	0.
2.02	0.001
2.03	0.004
2.04	0.012
2.05	0.028
2.06	0.053
2.07	0.085
2.08	0.11
2.09	0.118
2.1	0.105
2.11	0.081
2.12	0.061
2.13	0.057
2.14	0.072
2.15	0.098
2.16	0.121
2.17	0.129
2.18	0.116
2.19	0.089
2.2	0.058
2.21	0.034
2.22	0.021
2.23	0.019
2.24	0.026
2.25	0.039
2.26	0.055
2.27	0.071
2.28	0.085
2.29	0.1
2.3	0.113
2.31	0.122
2.32	0.121
2.33	0.111
2.34	0.104
2.35	0.11
2.36	0.129
2.37	0.15
2.38	0.155
2.39	0.138
2.4	0.104
2.41	0.067
2.42	0.037
2.43	0.021
2.44	0.017
2.45	0.028
2.46	0.051
2.47	0.086
2.48	0.128
2.49	0.164
2.5	0.184
2.51	0.178
2.52	0.149
2.53	0.109
2.54	0.069
2.55	0.038
2.56	0.018
2.57	0.007
2.58	0.003
2.59	0.001
2.6	0.
2.61	0.
2.62	0.
2.63	0.
2.64	0.
2.65	0.
2.66	0.
2.67	0.
2.68	0.
2.69	0.
2.7	0.
2.71	0.
2.72	0.
2.73	0.
2.74	0.
2.75	0.
2.76	0.
2.77	0.
2.78	0.
2.79	0.
2.8	0.
2.81	0.
2.82	0.
2.83	0.
2.84	0.
2.85	0.
2.86	0.
2.87	0.
2.88	0.001
2.89	0.002
2.9	0.004
2.91	0.008
2.92	0.012
2.93	0.016
2.94	0.025
2.95	0.049
2.96	0.107
2.97	0.211
2.98	0.355
2.99	0.501
3.	0.597
3.01	0.61
3.02	0.547
3.03	0.438
3.04	0.319
3.05	0.213
3.06	0.129
3.07	0.071
3.08	0.034
3.09	0.014
3.1	0.005
3.11	0.002
3.12	0.002
3.13	0.005
3.14	0.014
3.15	0.03
3.16	0.051
3.17	0.067
3.18	0.067
3.19	0.051
3.2	0.03
3.21	0.014
3.22	0.005
3.23	0.001
3.24	0.
3.25	0.
3.26	0.
3.27	0.
3.28	0.
3.29	0.
3.3	0.
3.31	0.
3.32	0.
3.33	0.
3.34	0.
3.35	0.
3.36	0.
3.37	0.001
3.38	0.002
3.39	0.003
3.4	0.005
3.41	0.008
3.42	0.009
3.43	0.008
3.44	0.007
3.45	0.004
3.46	0.002
3.47	0.001
3.48	0.
3.49	0.
3.5	0.
3.51	0.
3.52	0.
3.53	0.
3.54	0.
3.55	0.
3.56	0.
3.57	0.
3.58	0.
3.59	0.
3.6	0.
3.61	0.
3.62	0.002
3.63	0.009
3.64	0.029
3.65	0.066
3.66	0.104
3.67	0.116
3.68	0.091
3.69	0.051
3.7	0.02
3.71	0.005
3.72	0.001
3.73	0.
3.74	0.
3.75	0.
3.76	0.
3.77	0.
3.78	0.
3.79	0.
3.8	0.
3.81	0.
3.82	0.
3.83	0.
3.84	0.
3.85	0.
3.86	0.
3.87	0.
3.88	0.
3.89	0.
3.9	0.
3.91	0.
3.92	0.
3.93	0.
3.94	0.
3.95	0.
3.96	0.
3.97	0.
3.98	0.
3.99	0.
4.	0.
4.01	0.
4.02	0.
4.03	0.
4.04	0.
4.05	0.
4.06	0.
4.07	0.
4.08	0.
4.09	0.
4.1	0.
4.11	0.
4.12	0.
4.13	0.
4.14	0.
4.15	0.
4.16	0.
4.17	0.
4.18	0.
4.19	0.
4.2	0.001
4.21	0.002
4.22	0.006
4.23	0.015
4.24	0.029
4.25	0.047
4.26	0.066
4.27	0.078
4.28	0.076
4.29	0.063
4.3	0.044
4.31	0.026
4.32	0.013
4.33	0.005
4.34	0.002
4.35	0.001
4.36	0.
4.37	0.
4.38	0.
4.39	0.
4.4	0.
4.41	0.
4.42	0.
4.43	0.
4.44	0.
4.45	0.
4.46	0.
4.47	0.
4.48	0.
4.49	0.
4.5	0.
4.51	0.
4.52	0.
4.53	0.
4.54	0.
4.55	0.
4.56	0.
4.57	0.
4.58	0.
4.59	0.
4.6	0.
4.61	0.
4.62	0.
4.63	0.
4.64	0.
4.65	0.
4.66	0.
4.67	0.
4.68	0.
4.69	0.
4.7	0.
4.71	0.
4.72	0.
4.73	0.001
4.74	0.002
4.75	0.006
4.76	0.014
4.77	0.028
4.78	0.049
4.79	0.071
4.8	0.088
4.81	0.094
4.82	0.086
4.83	0.067
4.84	0.044
4.85	0.025
4.86	0.012
4.87	0.005
4.88	0.003
4.89	0.005
4.9	0.011
4.91	0.025
4.92	0.05
4.93	0.085
4.94	0.125
4.95	0.159
4.96	0.176
4.97	0.168
4.98	0.139
4.99	0.099
5.	0.061
5.01	0.033
5.02	0.015
5.03	0.007
5.04	0.004
5.05	0.004
5.06	0.007
5.07	0.012
5.08	0.016
5.09	0.017
5.1	0.016
5.11	0.013
5.12	0.008
5.13	0.005
5.14	0.002
5.15	0.002
5.16	0.003
5.17	0.006
5.18	0.011
5.19	0.018
5.2	0.025
5.21	0.032
5.22	0.034
5.23	0.031
5.24	0.025
5.25	0.017
5.26	0.01
5.27	0.006
5.28	0.004
5.29	0.006
5.3	0.013
5.31	0.027
5.32	0.051
5.33	0.082
5.34	0.116
5.35	0.142
5.36	0.151
5.37	0.141
5.38	0.115
5.39	0.083
5.4	0.053
5.41	0.03
5.42	0.017
5.43	0.014
5.44	0.03
5.45	0.083
5.46	0.202
5.47	0.407
5.48	0.67
5.49	0.905
5.5	1.
5.51	0.905
5.52	0.67
5.53	0.407
5.54	0.202
5.55	0.082
5.56	0.027
5.57	0.007
5.58	0.002
5.59	0.
5.6	0.
5.61	0.
5.62	0.
5.63	0.
5.64	0.
5.65	0.
5.66	0.
5.67	0.001
5.68	0.004
5.69	0.011
5.7	0.02
5.71	0.028
5.72	0.029
5.73	0.03
5.74	0.036
5.75	0.052
5.76	0.071
5.77	0.088
5.78	0.099
5.79	0.107
5.8	0.12
5.81	0.136
5.82	0.143
5.83	0.131
5.84	0.099
5.85	0.06
5.86	0.029
5.87	0.011
5.88	0.004
5.89	0.001
5.9	0.
5.91	0.
5.92	0.
5.93	0.
5.94	0.
5.95	0.
5.96	0.
5.97	0.
5.98	0.
5.99	0.
6.	0.
6.01	0.
6.02	0.
6.03	0.
6.04	0.
6.05	0.
6.06	0.
6.07	0.
6.08	0.
6.09	0.
6.1	0.
6.11	0.
6.12	0.
6.13	0.
6.14	0.
6.15	0.
6.16	0.
6.17	0.
6.18	0.
6.19	0.
6.2	0.
6.21	0.002
6.22	0.007
6.23	0.021
6.24	0.04
6.25	0.055
6.26	0.052
6.27	0.034
6.28	0.016
6.29	0.005
6.3	0.001
6.31	0.
6.32	0.
6.33	0.
6.34	0.
6.35	0.
6.36	0.
6.37	0.
6.38	0.
6.39	0.
6.4	0.
6.41	0.
6.42	0.
6.43	0.
6.44	0.
6.45	0.
6.46	0.
6.47	0.
6.48	0.001
6.49	0.003
6.5	0.013
6.51	0.038
6.52	0.092
6.53	0.177
6.54	0.265
6.55	0.308
6.56	0.276
6.57	0.19
6.58	0.1
6.59	0.042
6.6	0.019
6.61	0.021
6.62	0.035
6.63	0.049
6.64	0.057
6.65	0.066
6.66	0.08
6.67	0.099
6.68	0.121
6.69	0.142
6.7	0.152
6.71	0.139
6.72	0.102
6.73	0.059
6.74	0.026
6.75	0.009
6.76	0.002
6.77	0.
6.78	0.
6.79	0.
6.8	0.
6.81	0.
6.82	0.
6.83	0.
6.84	0.
6.85	0.
6.86	0.
6.87	0.001
6.88	0.002
6.89	0.006
6.9	0.014
6.91	0.03
6.92	0.056
6.93	0.089
6.94	0.121
6.95	0.144
6.96	0.152
6.97	0.144
6.98	0.119
6.99	0.081
7.	0.045
\end{filecontents}

\begin{filecontents}{distanceevolution2.dat}
14.	0.
13.99	0.
13.98	0.
13.97	0.
13.96	0.
13.95	0.
13.94	0.
13.93	0.
13.92	0.
13.91	0.
13.9	0.
13.89	0.
13.88	0.
13.87	0.
13.86	0.
13.85	0.
13.84	0.
13.83	0.
13.82	0.
13.81	0.
13.8	0.
13.79	0.
13.78	0.
13.77	0.
13.76	0.
13.75	0.
13.74	0.
13.73	0.
13.72	0.
13.71	0.
13.7	0.
13.69	0.
13.68	0.
13.67	0.
13.66	0.
13.65	0.
13.64	0.
13.63	0.
13.62	0.
13.61	0.
13.6	0.
13.59	0.
13.58	0.
13.57	0.
13.56	0.
13.55	0.
13.54	0.
13.53	0.
13.52	0.
13.51	0.001
13.5	0.001
13.49	0.001
13.48	0.002
13.47	0.002
13.46	0.003
13.45	0.004
13.44	0.006
13.43	0.007
13.42	0.01
13.41	0.013
13.4	0.017
13.39	0.022
13.38	0.028
13.37	0.036
13.36	0.047
13.35	0.059
13.34	0.074
13.33	0.093
13.32	0.115
13.31	0.142
13.3	0.174
13.29	0.212
13.28	0.257
13.27	0.308
13.26	0.368
13.25	0.436
13.24	0.513
13.23	0.6
13.22	0.697
13.21	0.805
13.2	0.923
13.19	1.051
13.18	1.189
13.17	1.336
13.16	1.491
13.15	1.653
13.14	1.821
13.13	1.993
13.12	2.166
13.11	2.338
13.1	2.508
13.09	2.672
13.08	2.828
13.07	2.973
13.06	3.104
13.05	3.22
13.04	3.318
13.03	3.397
13.02	3.454
13.01	3.488
13.	3.5
12.99	3.488
12.98	3.454
12.97	3.397
12.96	3.318
12.95	3.22
12.94	3.104
12.93	2.973
12.92	2.828
12.91	2.672
12.9	2.508
12.89	2.338
12.88	2.166
12.87	1.993
12.86	1.821
12.85	1.654
12.84	1.492
12.83	1.338
12.82	1.194
12.81	1.061
12.8	0.941
12.79	0.834
12.78	0.736
12.77	0.646
12.76	0.561
12.75	0.485
12.74	0.432
12.73	0.426
12.72	0.471
12.71	0.522
12.7	0.506
12.69	0.405
12.68	0.28
12.67	0.195
12.66	0.155
12.65	0.134
12.64	0.109
12.63	0.08
12.62	0.052
12.61	0.032
12.6	0.021
12.59	0.014
12.58	0.011
12.57	0.009
12.56	0.011
12.55	0.018
12.54	0.034
12.53	0.058
12.52	0.09
12.51	0.12
12.5	0.139
12.49	0.14
12.48	0.124
12.47	0.102
12.46	0.083
12.45	0.072
12.44	0.066
12.43	0.061
12.42	0.052
12.41	0.039
12.4	0.026
12.39	0.015
12.38	0.008
12.37	0.007
12.36	0.01
12.35	0.021
12.34	0.046
12.33	0.092
12.32	0.159
12.31	0.221
12.3	0.25
12.29	0.245
12.28	0.224
12.27	0.194
12.26	0.154
12.25	0.11
12.24	0.074
12.23	0.05
12.22	0.036
12.21	0.026
12.2	0.017
12.19	0.01
12.18	0.005
12.17	0.002
12.16	0.001
12.15	0.
12.14	0.
12.13	0.001
12.12	0.005
12.11	0.016
12.1	0.039
12.09	0.063
12.08	0.069
12.07	0.051
12.06	0.026
12.05	0.009
12.04	0.002
12.03	0.
12.02	0.
12.01	0.
12.	0.
11.99	0.
11.98	0.
11.97	0.
11.96	0.
11.95	0.
11.94	0.
11.93	0.002
11.92	0.009
11.91	0.028
11.9	0.067
11.89	0.124
11.88	0.173
11.87	0.186
11.86	0.158
11.85	0.113
11.84	0.076
11.83	0.048
11.82	0.027
11.81	0.011
11.8	0.003
11.79	0.001
11.78	0.
11.77	0.
11.76	0.
11.75	0.
11.74	0.
11.73	0.
11.72	0.
11.71	0.
11.7	0.
11.69	0.
11.68	0.
11.67	0.
11.66	0.
11.65	0.
11.64	0.
11.63	0.
11.62	0.
11.61	0.
11.6	0.
11.59	0.
11.58	0.
11.57	0.
11.56	0.
11.55	0.
11.54	0.
11.53	0.
11.52	0.
11.51	0.
11.5	0.001
11.49	0.003
11.48	0.009
11.47	0.026
11.46	0.059
11.45	0.107
11.44	0.156
11.43	0.183
11.42	0.173
11.41	0.132
11.4	0.081
11.39	0.04
11.38	0.016
11.37	0.005
11.36	0.001
11.35	0.
11.34	0.
11.33	0.
11.32	0.
11.31	0.
11.3	0.
11.29	0.
11.28	0.
11.27	0.
11.26	0.
11.25	0.
11.24	0.
11.23	0.
11.22	0.
11.21	0.
11.2	0.
11.19	0.
11.18	0.
11.17	0.
11.16	0.
11.15	0.
11.14	0.
11.13	0.
11.12	0.
11.11	0.
11.1	0.
11.09	0.
11.08	0.
11.07	0.
11.06	0.
11.05	0.
11.04	0.
11.03	0.
11.02	0.
11.01	0.
11.	0.001
\end{filecontents}

What is the physical meaning of the theorem?
The quantity $g((p_k)_{k=1}^{d'})$ is small, except if the initial state assigns large populations to few (but more than one) energy levels.
For initial states with a reasonable energy uncertainty and large enough systems it can be expected to be of the order of $\landauO(1/d')$, i.e., reciprocal to the total number of distinct energy levels.
The quantity $h(\POVMs)$ on the other hand can be thought of as a measure of the experimental capabilities in distinguishing quantum states and can reasonably be assumed to be much smaller than $d'$.
In particular, when all measurements in $\POVMs$ have a support contained inside of a small subsystem $S \subset \Vset$ it is bounded by $d_S/2$.
Because of the conditions for equality in \texteqref{eq:tracedistanceboundsrestrictedtracedistance}, the theorem then also implies an upper bound on $\taverage[T]{\tracedistance{\rho^S(t)}{\omega^S}}$ and hence proves \emph{subsystem equilibration on average}.

For fixed $\H$ and $\epsilon>0$ we have $\lim_{T\to\infty} f(\epsilon\,T) = 1$, hence the theorem proves, for a wide class of reasonable initial states, equilibration on average of all sufficiently small subsystems and \emph{apparent equilibration on average} of the state of the full system under realistic restrictions on the number of different measurements that can be performed.
In this sense it improves and generalises the results of Refs.~\cite{Reimann08,Linden09}.

On what time scales is equilibrium reached?
The product $N(\epsilon)\,f(\epsilon\,T)$, which is lower bounded by one, will typically be close to one only if $T$ is comparable to ${d'}^2$, i.e., to the total number of energy gaps, and will otherwise be roughly of the order of $\landauOmega({d'}^2/T)$ for smaller $T$.
So, even under the favorable assumption that $g((p_k)_{k=1}^{d'})$ is of the order of $\landauO(1/d')$, equilibration of a subsystem $S$ can only be guaranteed after a time $T$ that is roughly of the order of $\landauOmega(d_S^2\,d')$.
Both $d'$ and $d_S$ typically grow exponentially with the size of the composite system and the subsystem $S$, respectively.
Hence, times of the order of $\landauOmega(d_S^2\,d')$ are unphysical already for systems of moderate size.
This weakness of theorems such as Theorem~\ref{thm:equilibrationonaverage} has been criticised in Ref.~\cite{1109.4696v1} (see Section~\ref{sec:timescales} for more details on equilibration times).

There are at least two possible replies to this criticism:
First, it is known that there are systems in which equilibration does indeed take extremely long (see Section~\ref{sec:timescales}) and thus, being a very general statement, Theorem~\ref{thm:equilibrationonaverage} is probably close to optimal.
Proofs of shorter equilibration times will need further assumptions, such as locality or translation invariance of the Hamiltonian, and restrictions on the allowed measurements \cite{1110.5759v1,Linden09}.
Second, almost all systems in which equilibration has been studied and in which equilibration of some property on reasonable time scales could be demonstrated were found to exhibit equilibration towards the time average (see for example Refs.~\cite{Gemmer09,Campos10,Fagotti2012,Rigol11,1110.4690v1,Rigol07,Rigol2006,Zhuang2013}), so in these cases the upper bound on the equilibration time implied by Theorem~\ref{thm:equilibrationonaverage} is not tight, but the theorem still captures the relevant physics.
Transient equilibration to metastable states that precedes equilibration to the time average seems to require special structure in the Hamiltonian.
That the physics of such special systems is not captured by a result as general as Theorem~\ref{thm:equilibrationonaverage} is not too surprising.

An interesting variant of the subsystem equilibration setting is investigated in Ref.~\cite{MasterThesisHutter}, in which the subsystem $S$ can initially be correlated (either classically or even quantum mechanically) with a reference system $R$.
The ``knowledge'' about the initial state of $S$ stored in the reference $R$ can in principle help to distinguish the state $\rho^S(t)$ from $\omega^S$.
Still, by using \emph{decoupling theorems} \cite{Dupuis2010,1109.4348v1,Szehr2012} and properties of \emph{smooth min and max entropies} \cite{Koenig08,Ciganovic2013} it is possible to show subsystem equilibration on average under conditions similar to those of Theorem~\ref{thm:equilibrationonaverage}, in the sense that the combined state of $S$ and $R$ is on average almost indistinguishable from $\omega^{SR} = \taverage{\rho^{SR}}$.

In the above disquisition on equilibration we have put a focus on the more recent literature, however, many of the ideas behind the results mentioned above can already be found in the work of von Neumann \cite{vonneumann1929}.
We encourage the interested reader to consider the English translation \cite{Tumulka2010} of this article and the discussion of von Neumann's results in Ref.~\cite{0907.0108v1} and the brief summary of parts of this article in Section~\ref{sec:typicality} of this work.

Further statements concerning equilibration towards the de-phased state, which are related to those discussed above, can also be found in Refs.~\cite{1108.2985v3,1108.0374,1112.5295v1,1107.6035v1}.
We will discuss these works in more detail in Section~\ref{sec:timescales}.

\subsection{Equilibration during intervals}
\label{sec:equlibrationinthestrongsense}
In this section we investigate under which conditions equilibration during intervals can be guaranteed.
After a brief overview of the literature on the topic we will concentrate on the results presented in Ref.~\cite{cramer10_1}.
Instead of reproducing the full proof we will only give the intuition behind it and describe the general structure.
One reason for this is that Ref.~\cite{cramer10_1} is concerned with a special class of bosonic Hamiltonians, so-called \emph{quadratic bosonic Hamiltonians}, i.e., Hamiltonians that are quadratic polynomials in the bosonic creation and annihilation operators.
For these Hamiltonians there exists a special formalism based on so-called \emph{covariance matrices} that allows, for example, to calculate for a special class of initial states, namely \emph{Gaussian states}, the time evolution of the expectation values of certain observables in a computationally efficient way.
A full introduction of this formalism is beyond the scope of this review.
More details can be found for example in Refs.~\cite{Eisert03,Braunstein2005,Adesso2007,GaussianReview}.

Equilibration during intervals of non-Gaussian initial states under certain quadratic Hamiltonians has been proven in Ref.~\cite{PhysRevLett.10-5} and the results have later been generalised and improved in Ref.~\cite{cramer10_1}.
The techniques are inspired by earlier works \cite{Dudnikova2003} on classical harmonic crystals, i.e., systems of coupled classical harmonic oscillators, and can be seen as bounds on the pre-asymptotic behaviour and an extension to finite system sizes of the results on equilibration of Ref.~\cite{Lanford1972}.
See also Refs.~\cite{Lanford1972,Tegmark1994,Barthel08} for related results on equilibration starting from Gaussian initial states.

More precisely, the results on equilibration during intervals of Ref.~\cite{cramer10_1} concern systems evolving under certain quadratic Hamiltonians of the form
\begin{equation} \label{eq:quadratichamiltonian}
  \H = \frac{1}{2}\,\sum_{x,y\in\Vset} 
  \left( b_x^\dagger\,K_{x,y}\,b_y + b_x\,K_{x,y}\,b_y^\dagger \right) ,
\end{equation}
where $b_x,b_x^\dagger$ are the bosonic annihilation/creation operators on site $x \in \Vset$ and $K \in \R^{|\Vset|\times|\Vset|}$.
The operator $\H$, as defined in \texteqref{eq:quadratichamiltonian}, is unbounded and hence, in principle, a careful treatment of the system with the methods of functional analysis \cite{Reed1980,Reed1975} would be necessary.
The Hamiltonian in \texteqref{eq:quadratichamiltonian} is, however, particle number preserving.
Thus, when we restrict to initial states with finite particle number the whole evolution happens in a finite dimensional subspace of the Fock space.
The Hamiltonian $\H$ and all relevant observables can then be represented by bounded operators on this subspace.
We are hence back in the framework of finite dimensional quantum mechanics as introduced in Section~\ref{sec:preliminaries} and the following statement is well-defined:

\begin{theorem}[Equilibration during intervals] \label{thm:equilibrationduringintervals}
  Consider the class of systems with a finite number of bosons in $M$ modes on a ring with nearest neighbour interactions, i.e., $\Vset = [M]$ and $\Eset=\{(1,2),\allowbreak(2,3),\dots,(M,1)\}$, evolving under a Hamiltonian of the form given in \eqref{eq:quadratichamiltonian} with $K_{x,y} = -\delta_{|x-y| \mod M,1}$.
  Let $\mcH$ be the direct sum of Fock layers up to the maximal particle number.
  If the initial state $\rho(0) \in \Qst(\mcH)$ satisfies a form of decay of correlations (Assumptions~1--3 in Ref.~\cite{cramer10_1}) and has time independent second moments (see Ref.~\cite{cramer10_1}), 
  then for every $S \subset \Vset$ and every $\epsilon>0$ there exists a system size $M^\ast$, such that for all $M \geq M^\ast$ there exists a time $t_{\mathrm{relax}}$ independent of $M$ and a time $t_{\mathrm{rec}} \in \landauOmega(M^{6/7})$ such that there exists a Gaussian state $\tilde\omega \in\Qst(\mcH)$ such that
  \begin{equation}
    \forall t \in [t_{\mathrm{relax}},t_{\mathrm{rec}}]\itholds \tracedistance{\rho^S(t)}{\tilde\omega^S} \leq \epsilon .
  \end{equation}
\end{theorem}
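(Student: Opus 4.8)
The plan is to reduce the dynamics of $\rho^S(t)$ to that of a single‑particle propagator on the ring, to control its dispersive decay, and then to invoke a quantitative quantum central limit theorem to conclude that $\rho^S(t)$ stays close to a fixed Gaussian state throughout the stated window, the recurrence time being the scale on which the finite‑size corrections to the propagator remain negligible. First I would use that the Hamiltonian \eqref{eq:quadratichamiltonian} is quadratic and particle‑number preserving, so that on the finite direct sum of Fock layers fixed in the theorem everything is captured by bounded operators on a finite‑dimensional space and, in the Heisenberg picture, the mode operators evolve linearly as $b_x(t) = \sum_{y\in\Vset} W_{x,y}(t)\,b_y$ with $W(t)$ the single‑particle propagator generated by $K$. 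Since $K$ is the circulant adjacency matrix of the ring, $W(t)$ is circulant and is diagonalised by the discrete Fourier transform,
\begin{equation}
  W_{x,y}(t) = \frac{1}{M}\sum_{p=0}^{M-1} \mathrm{e}^{2\iu\,t\cos(2\pi p/M)}\,\mathrm{e}^{2\pi\iu\,p(x-y)/M} ,
\end{equation}
and every $n$‑point function of the $b_x(t),b_x^\dagger(t)$ with $x\in S$ in the state $\rho(0)$ becomes a fixed linear functional of the $n$‑th order moments of $\rho(0)$, with coefficients built from the rows of $W(t)$ labelled by $S$. This reduces all properties of $\rho^S(t)$ to those of $W(t)$ and of the moments of $\rho(0)$.

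Next, Poisson summation identifies $W_{x,y}(t)$, up to phases, with the finite‑ring aliasing $\sum_{n\in\Z} J_{x-y+nM}(2t)$ of the infinite‑chain Bessel propagator. Combining the classical uniform bound $\sup_m\abs{J_m(2t)}\in\landauO(t^{-1/3})$ with the fact that $J_m(2t)$ is negligible outside the light cone $\abs{m}\lesssim 2t$ yields
\begin{equation} \label{eq:propagatordecayplan}
  \delta(t) \coloneqq \sup_{x\in S,\,y\in\Vset}\abs{W_{x,y}(t)} \in \landauO\bigl((1+t/M)\,t^{-1/3}\bigr) .
\end{equation}
Feeding this into the linear relations above, together with the stationarity of the second moments of $\rho(0)$ and the decay of its correlations (Assumptions~1--3 of Ref.~\cite{cramer10_1}), shows that the covariance matrix of $\rho^S(t)$ converges, as $t$ grows with $M$ large, to a fixed matrix; one then takes $\tilde\omega\in\Qst(\mcH)$ to be the Gaussian state whose covariance matrix restricts to this limit on $S$. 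The core step --- the content of Ref.~\cite{cramer10_1} --- is a quantitative \emph{quantum central limit theorem}: since each $b_x(t)$ with $x\in S$ is a linear combination of the initial mode operators whose weights are uniformly small by \eqref{eq:propagatordecayplan} but whose squares sum to unity (as $W(t)$ is unitary), and $\rho(0)$ has clustering correlations, a Lindeberg‑type comparison of the quantum characteristic functions of $\rho^S(t)$ and $\tilde\omega^S$ gives a bound
\begin{equation} \label{eq:clteplan}
  \tracedistance{\rho^S(t)}{\tilde\omega^S} \leq c_{\abs{S}}\bigl(\eta(t) + \eta'(t,M)\bigr) ,
\end{equation}
with $c_{\abs{S}}$ depending only on $\abs{S}$ and the decay‑of‑correlations parameters of $\rho(0)$, $\eta(t)$ the $M$‑independent ``infinite chain'' relaxation error with $\eta(t)\to 0$ as $t\to\infty$, and $\eta'(t,M)$ the finite‑size correction controlled by the $t/M$ term in \eqref{eq:propagatordecayplan}.

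From \eqref{eq:clteplan} the two time scales are extracted as follows. Requiring $c_{\abs{S}}\,\eta(t)\leq\epsilon/2$ fixes a threshold $t_{\mathrm{relax}}$ depending only on $\epsilon$ and $\abs{S}$; choosing $M^\ast$ large enough then makes $c_{\abs{S}}\,\eta'(t,M)\leq\epsilon/2$ hold at $t=t_{\mathrm{relax}}$ for all $M\geq M^\ast$. Since $\eta'(t,M)$ increases with $t$, requiring it to stay below $\epsilon/2$ caps $t$ at some $t_{\mathrm{rec}}$; optimising the competing estimates entering $\eta'$ --- the dispersive $t^{-1/3}$ decay, the $t/M$ count of aliased contributions, and the tail from the decay‑of‑correlations assumption --- produces $t_{\mathrm{rec}}\in\landauOmega(M^{6/7})$, the fractional exponent being an artefact of the bounds rather than a tight estimate (the genuine revival occurs only at $t\sim M$, when the ballistically spreading excitation wraps around the ring). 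For $t\in[t_{\mathrm{relax}},t_{\mathrm{rec}}]$, \eqref{eq:clteplan} then delivers $\tracedistance{\rho^S(t)}{\tilde\omega^S}\leq\epsilon$.

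The hard part is the quantitative quantum central limit theorem behind \eqref{eq:clteplan}: turning the heuristic ``a sum of many weakly correlated contributions is Gaussian'' into a bound \emph{in trace distance} --- not merely convergence of finitely many moments --- that is uniform over the whole interval $[t_{\mathrm{relax}},t_{\mathrm{rec}}]$ and over every admissible correlated initial state, with all constants independent of the total size $M$. The bosonic setting makes this harder still, since one must control phase‑space tails of unbounded operators; it is precisely the finite‑particle‑number restriction and the decay‑of‑correlations assumptions that render this tractable, and assembling them into the clean estimate \eqref{eq:clteplan} --- with an $M$‑independent $t_{\mathrm{relax}}$ and an explicit $\landauOmega(M^{6/7})$ recurrence window --- is the technical heart of Ref.~\cite{cramer10_1}.
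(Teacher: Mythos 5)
Your proposal is correct and faithfully reconstructs the route of Ref.~\cite{cramer10_1}, the reference to which the paper's one-line proof defers: reduction to the circulant single-particle propagator, Bessel-function dispersive decay combined with Poisson-summation aliasing of the finite ring, a quantitative quantum central limit theorem in trace distance under the clustering hypotheses, and extraction of $t_{\mathrm{relax}}$ and $t_{\mathrm{rec}}$ by balancing the $M$-independent relaxation error against the finite-size error. The only minor imprecision is the remark that the covariance matrix of $\rho^S(t)$ ``converges'' to a fixed matrix --- under the stated hypothesis of time-independent second moments it is already constant and coincides with that of $\tilde\omega^S$; the content of the CLT is the convergence of the higher moments.
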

\begin{proof}
  The theorem is essentially implied by Theorem~2 and Corollary~1 from Ref.~\cite{cramer10_1}, as well as the discussion between them.
  The scaling of the times $t_{\mathrm{relax}}$ and $t_{\mathrm{rec}}$ follows from Eq.~(61) and Lemma~4 in Ref.~\cite{cramer10_1}.
\end{proof}
The theorem proves equilibration during the interval $[t_{\mathrm{relax}},t_{\mathrm{rec}}]$ of all small subsystems of a sufficiently large system.
It is key to this type of equilibration that the state $\tilde\omega$ is a Gaussian state, even if the system was initially prepared in a non-Gaussian state.
In fact, a similar convergence to a Gaussian state can be proven even in instances where the second moments are not constant in time.
Then it is still true that non-Gaussian states become locally Gaussian over time, but local expectation values will then not become stationary. 
Again, it is important to note that the class of Hamiltonians considered here is special ---  the Hamiltonians are quadratic in the bosonic operators --- 
but this does not apply to the initial states.
The technical requirements on the initial state allow, for example, for ground states of gapped interacting local Hamiltonians.
These conditions, precisely laid out in Ref.~\cite{cramer10_1}, ask for an algebraic decay of two- and four-point-functions, as well as an algebraic decay of correlations between Weyl operators belonging to distant regions.

The time $t_{\mathrm{relax}}$ depends on the size of the subsystem $S$ under consideration, but is independent of the size of the composite system.
It depends on the speed at which the Hamiltonian is able to transport correlations through the system and the length scale on which the correlations in the initial state decay.
The time $t_{\mathrm{rec}}$ is a lower bound on the recurrence time and is slightly smaller than the time it takes for a signal to travel around the ring of bosonic modes.

\subsection{Other notions of equilibration}
\label{sec:othernotionsofequilibration}
In this section we briefly cover two other notions of equilibration for closed quantum systems.
The first alternative notion of equilibration we want to discuss was proposed in Ref.~\cite{0907.0108v1} and further investigated in Ref.~\cite{Goldstein2014}.
This work is closely related to an article of von Neumann \cite{vonneumann1929}.
There, von Neumann postulates that on large systems only a set of so-called \emph{macroscopic observables} is accessible.
The macroscopic observables are required to commute, thus they divide the Hilbert space in subspaces, so-called \emph{phase cells}, each containing states that belong to the same sequence of eigenvalues for all the macroscopic observables (see also the more detailed discussion of Ref.~\cite{vonneumann1929} in Section~\ref{sec:typicality}).
If one of the phase cells is particularly large, Ref.~\cite{0907.0108v1} associates it with \emph{thermal equilibrium} and says that a system is in \emph{thermal equilibrium} if and only if its state is almost entirely contained in that cell.
Variants of the results from Ref.~\cite{vonneumann1929} can then be used to prove equilibration in this sense.

Reminding oneself that measurements of quantum systems are ultimately \emph{sampling experiments} opens up an entirely new vista on the problem of equilibration, which leads us the second alternative notion of equilibration.
Performing a measurement of an observable does neither provide the experimentalist with the measurement statistic nor does it yield the expectation value of the observable.
Both can only be approximately determined by repeatedly performing the same experiment many times.
How many repetitions are needed to distinguish whether the measurement statistic of a given observable is close or far from that predicted by equilibrium statistical mechanics?
Such questions have been posed and partially answered in the fields of \emph{sample complexity} \cite{Batu2001,Batu2000,Canonne2012} and \emph{state discrimination} \cite{Audenaert2012,Audenaert2012a}.
Using the complexity of the task of collecting information about a quantum system as a justification for a statistical description was recently proposed in Ref.~\cite{Ududec2012}, which defines the concept of \emph{information theoretic equilibration}.
Essentially the authors of Ref.~\cite{Ududec2012} are able to show that with the use of very fine grained observables pure quantum states are practically indistinguishable from states corresponding to statistical ensembles.

\subsection{Lieb-Robinson bounds}
\label{sec:liebrobinson}

An important tool for the study of equilibration phenomena is provided by \emph{Lieb-Robinson bounds} \cite{Lieb1972,1004.2086v1,Hastings2006}.
They limit the speed at which excitations can travel through a quantum lattice system equipped with a locally interacting Hamiltonian.
They can be viewed as an upper bound on group velocity of any excitation.
In systems satisfying a Lieb-Robinson bound, information propagation is essentially contained within a causal cone, reminiscent of a  ``light cone'' or ``sound cone'' (see Figure~\ref{fig:liebrobinsoncones}).
Any excitations spreading faster than a maximum velocity are exponentially suppressed in the distance.
Such bounds make rigorous the expectation that no instantaneous information propagation should be possible in quantum lattice models, and thereby immediately provide lower bounds to equilibration times for such models.
The implications of Lieb-Robinson bounds to entanglement dynamics will be discussed in Section~\ref{sec:entanglement}.

Concretely, Lieb-Robinson bounds are statements of the following type:
\begin{theorem}[Lieb-Robinson bound (corollary of Theorem~1 from \cite{Kliesch2013})]
  Consider a locally interacting fermionic or spin system with Hilbert space $\mcH$ and Hamiltonian $\H \in \Obs(\mcH)$.
Let $A,B \in \Obs(\mcH)$ be observables and denote $B(t) \coloneqq \e^{-\iu\,\H\,t}\, B\, \e^{\iu\,\H\,t}$.
Then
  \begin{equation}
    \norm[\infty]{[A,B(t)]} \leq C \norm[\infty]A \norm[\infty]B \e^{v\,|t|-\dist(A,B)}
  \end{equation}
  where the \emph{Lieb-Robinson speed} $v$ depends only on the operator norm of the local terms of $\H$ and the coordination number $\max_{X \in \Eset} |\{Y \in \Eset \oftype X \intersection Y \neq \emptyset \}|$ of the interaction graph, and $C$ is a constant that depends only on $\min(|\supp(A)|,|\supp(B)|)$.
\end{theorem}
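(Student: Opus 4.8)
The plan is to adapt the now-standard iterative estimate of Lieb, Robinson, Hastings and Nachtergaele--Sims, in the fermion-adapted form of Ref.~\cite{Kliesch2013}. First I would fix $A$, write the local Hamiltonian as $\H = \sum_{Z \in \Eset} \H_Z$, and follow the operator-valued function $F(t) \coloneqq [A, B(t)]$. Differentiating $B(t) = \e^{-\iu\,\H\,t}\,B\,\e^{\iu\,\H\,t}$ and applying the Jacobi identity gives
\begin{equation}
  \frac{\del}{\del t}\,F(t) = -\iu\,[\H, F(t)] + \iu\,[[\H, A], B(t)] .
\end{equation}
The first term on the right only generates a unitary conjugation, so after conjugating $F(t)$ by $\e^{\iu\,\H\,t}$ it drops out; integrating the resulting equation and using that the operator norm is unitarily invariant yields the integral inequality
\begin{equation} \label{eq:lrintegral}
  \norm[\infty]{[A, B(t)]} \leq \norm[\infty]{[A, B]} + \int_0^{|t|} \norm[\infty]{[[\H, A], B(s)]}\,\dd s .
\end{equation}
Since $A$ acts as the identity outside $\supp(A)$, only the finitely many edge terms $\H_Z$ with $Z \intersection \supp(A) \neq \emptyset$ contribute to $[\H, A] = \sum_{Z : Z \intersection \supp(A) \neq \emptyset} [\H_Z, A]$, and each summand satisfies $\norm[\infty]{[\H_Z, A]} \leq 2\,\norm[\infty]{\H_Z}\,\norm[\infty]{A}$ while being supported on $Z \cup \supp(A)$.

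Next I would iterate \eqref{eq:lrintegral}, with each $[\H_Z, A]$ taking over the role of $A$. After $n$ steps one is left with a sum over chains $(Z_1, \dots, Z_n)$ of interaction edges with $Z_1 \intersection \supp(A) \neq \emptyset$ and $Z_j \intersection Z_{j+1} \neq \emptyset$, where each chain is weighted by $\prod_j 2\,\norm[\infty]{\H_{Z_j}}$, by an $n$-fold iterated time integral contributing $|t|^n/n!$, and by a residual commutator of $B$ with an operator supported on $\supp(A) \cup Z_1 \cup \dots \cup Z_n$. The decisive observation is that this residual commutator \emph{vanishes} whenever that enlarged support is disjoint from $\supp(B)$, and by the definition of the graph distance this can fail only once $n \geq \dist(A,B)$; hence only chains of length at least $\dist(A,B)$ contribute. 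In the fermionic case this step relies on the fact that two \emph{even} operators with disjoint supports commute, which is exactly the locality property secured by the parity superselection rule and the support conventions of Section~\ref{sec:preliminaries}.

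Finally I would collect the estimates. With $J \coloneqq \max_{Z \in \Eset} \norm[\infty]{\H_Z}$ and coordination number $\kappa \coloneqq \max_{X \in \Eset} |\{Y \in \Eset : X \intersection Y \neq \emptyset\}|$, the number of admissible chains of length $n$ is bounded by a constant times $|\supp(A)|\,\kappa^n$, so that
\begin{equation}
  \norm[\infty]{[A, B(t)]} \leq c\,|\supp(A)|\,\norm[\infty]{A}\,\norm[\infty]{B} \sum_{n \geq \dist(A,B)} \frac{(2\,J\,\kappa\,|t|)^n}{n!} ,
\end{equation}
and the elementary tail bound $\sum_{n \geq m} x^n/n! \leq \e^{\e\,x - m}$ (valid for $x\geq 0$) turns the right-hand side into $C\,\norm[\infty]{A}\,\norm[\infty]{B}\,\e^{v\,|t| - \dist(A,B)}$ with $v = 2\,\e\,J\,\kappa$, which indeed depends only on $J$ and $\kappa$. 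Because $\norm[\infty]{[A, B(t)]} = \norm[\infty]{[A(-t), B]}$, the same argument can be run with $A$ and $B$ exchanged, so the prefactor can be taken to depend only on $\min(|\supp(A)|, |\supp(B)|)$. I expect the main obstacle to lie not in any single inequality but in the combinatorial bookkeeping of the branching sum over overlapping interaction terms --- and, in the fermionic case, in verifying that this growing-support/locality structure is not spoiled by Jordan--Wigner strings; both are carried out carefully in Ref.~\cite{Kliesch2013}.
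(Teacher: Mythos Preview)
The paper itself gives no proof of this theorem --- it is simply quoted as a corollary of Theorem~1 in Ref.~\cite{Kliesch2013} --- so your sketch is attempting more than the paper does. The overall strategy is the right one, but there is a genuine gap in the iteration step. You correctly derive the integral inequality
\[
  \norm[\infty]{[A,B(t)]} \leq \norm[\infty]{[A,B]} + \int_0^{|t|}\norm[\infty]{[[\H,A],B(s)]}\,\dd s,
\]
but you then iterate it by letting $[\H_{Z_1},A]$ play the role of $A$. The support of $[\H_{Z_1},A]$ is $Z_1\cup\supp(A)$, so at the next step the admissibility condition on $Z_2$ is $Z_2\cap(Z_1\cup\supp(A))\neq\emptyset$, and after $n$ steps each $Z_{j+1}$ need only overlap the \emph{entire growing region} $\supp(A)\cup Z_1\cup\dots\cup Z_j$, not just $Z_j$. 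The sequences you obtain are therefore not chains of consecutively overlapping edges, and their number is not bounded by $c\,|\supp(A)|\,\kappa^n$: because the growing region gains $\landauO(1)$ new vertices at each step, the number of choices for $Z_{j+1}$ grows linearly in $j$, producing a combinatorial factor of order $n!$ that cancels the $|t|^n/n!$ from the time integrals. What survives is only a short-time estimate, not the Lieb--Robinson cone.

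The standard repair (and the one underlying Ref.~\cite{Kliesch2013}) is to conjugate not by $\e^{\iu\H t}$ but by $\e^{\iu\H_0 t}$ with $\H_0\coloneqq\sum_{Z:\,Z\cap\supp(A)=\emptyset}\H_Z$. Since $[\H_0,A]=0$ this still preserves $\norm[\infty]{[A,B(t)]}$, and one obtains instead
\[
  \norm[\infty]{[A,B(t)]} \leq \norm[\infty]{[A,B]} + 2\,\norm[\infty]{A}\sum_{Z:\,Z\cap\supp(A)\neq\emptyset}\int_0^{|t|}\norm[\infty]{[\H_Z,B(s)]}\,\dd s .
\]
Now it is $\H_Z$ itself, with support exactly $Z$, that takes the role of $A$ in the next round, so the new edge must overlap only $Z$; this genuinely yields chains with $Z_j\cap Z_{j+1}\neq\emptyset$ and the $\kappa^n$ count you claimed. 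With this corrected recursion in hand, the remainder of your argument --- the vanishing of short chains, the tail bound on the exponential series, the symmetry in $A$ and $B$, and the remark on even fermionic operators --- goes through as you wrote it.
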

The theorem says that the commutator $[A,B(t)]$ is exponentially suppressed with the distance between the support of $A$ and a ``light-cone'' that grows with the time the observable $B$ is evolved under the Hamiltonian $\H$.
As $A$ could be one of the local terms of $\H$ this in particular implies that the distant terms of the Hamiltonian do not significantly influence the time evolution of $B$ and that for any time $t$ the operator $B(t)$ can be approximated by an observable with support only slightly larger than the base of the ``light-cone'' at that time.

Such ``light-cone''-like dynamics has been systematically explored and put into the context of equilibration analytically and numerically \cite{CalabreseCardy06,DeChiara2005,CalabreseQuenchEntanglement,PhysRevLett.10-5,cramer10_1,Eisert06,Kollath08,Geiger2013,Hauke2013,Mazza2013} as well as experimentally \cite{1111.0776v1,Jurcevic2014,Langen2013,Richerme2014a,Ronzheimer2013}.
Similar bounds also exist for more general settings, like local Liouvillian dynamics \cite{Nachtergaele11,Poulin10,Kliesch2013}, exponentially decaying but no longer strictly local interactions \cite{Hastings2006}, as well as for certain long-ranged, i.e., power law like decaying, interactions \cite{Hastings2006,EisertKastner,Hauke2013} as long as the exponent is sufficiently large. Such long-ranged
interactions have been experimentally investigated in systems of trapped ions
\cite{Jurcevic2014,Richerme2014a}.

Lieb-Robinson bounds can also be proven for certain systems with Hamiltonians with local terms with unbounded operator norm \cite{1010.4576v1,Nachtergaele09,HarmonicLiebRobinson}.
For example, for quadratic bosonic systems with Hamiltonians of the form
\begin{equation}\label{eq:generalquadratichamiltonian}
  \H=\frac{1}{2}\sum_{x,y\in \Vset}\left(b_x\ad \, K_{x,y}\, b_y+b_x \, K_{x,y}\, b_y\ad
    +b_x \, L_{x,y}\, b_y+b_x\ad \, L_{x,y}\, b_y\ad\right), \qquad K,L \in \R^{|\Vset|\times|\Vset|}
\end{equation}
where $b_x,b_x^\dagger$ are again the bosonic annihilation/creation operators on site $x \in \Vset$ a Lieb-Robinson bound holds.
Writing
\begin{align}
  \label{ident}
  K_{x,y}&=\frac{Q_{x,y}+P_{x,y}}{2} & L_{x,y}&=\frac{Q_{x,y}-P_{x,y}}{2},
\end{align}
such Hamiltonians can be cast into a form reflecting couplings between canonical positions $q_x \coloneqq (b_x + b\ad_x)/\sqrt{2}$ and momenta $p_x \coloneqq \iu (b\ad_x - b_x)/\sqrt{2}$
\begin{equation}
  \label{Hamiltonian}
  \H=\frac{1}{2}\sum_{x,y\in  \Vset}\left(q_x \, Q_{x,y} \, {q}_y+ {p}_x \, P_{x,y}\,  {p}_y \right), \qquad Q,P\in \R^{|\Vset|\times|\Vset|}.
\end{equation}
In this setting, local means that
$K_{x,y} = L_{x,y}= Q_{x,y}=P_{x,y}=0$  for $ \dist(x,y)>R$
for some $R\in \N$.
We write 
$d_{x,y} \coloneqq \dist(x,y)/R$ and define $ \tau \coloneqq \max\{{\norm[\infty]{P\,Q}}^{1/2},{\norm[\infty]{Q\,P}}^{1/2}\}\,|t|$, 
then the following Lieb-Robinson bound is valid:
\begin{theorem}[Lieb-Robinson bounds for quadratic bosonic systems \cite{HarmonicLiebRobinson}]
  Consider a Hamiltonian of the form given in \texteqref{Hamiltonian} then
\begin{align}
  \begin{sesac}
    &\frac{\sqrt{\norm[\infty]{PQ}}}{\norm[\infty]{P}} \norm[\infty]{[{q}_x(t),{q}_y]} \\
    &\frac{\sqrt{\norm[\infty]{PQ}}}{\norm[\infty]{Q}} \norm[\infty]{[{p}_x(t),{p}_y]}    
  \end{sesac}
  &\le
  \frac{\tau^{d_{x,y}+2}\cosh\left(\tau\right)}{d_{x,y}!},\\
  \intertext{and}
  \begin{sesac}
    &\norm[\infty]{[{q}_x(t),{p}_y]} \\
    &\norm[\infty]{[{p}_x(t),{q}_y]}    
  \end{sesac}
  &\le
  \frac{\tau^{d_{x,y}}\cosh\left(\tau\right)}{d_{x,y}!} .
\end{align}
\end{theorem}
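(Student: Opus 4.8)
The plan is to exploit that a Hamiltonian quadratic in the bosonic operators generates a \emph{linear} flow on the canonical operators, so that the commutators in question are $c$-numbers fixed entirely by a $2|\Vset|\times 2|\Vset|$ real matrix whose locality is inherited from the range $R$ of the couplings. In particular, although $\H$ in \eqref{Hamiltonian} is unbounded and acts on the whole, infinite-dimensional Fock space, only the \emph{algebraic} closure of the Heisenberg equations will actually be needed, so no functional-analytic subtlety enters the final estimates; if one insists, everything can be phrased on the dense domain of analytic (for instance finite-particle) vectors, on which $\e^{-\iu\,\H\,t}$ acts as a strongly continuous unitary group.

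First I would write down the Heisenberg equations of motion. Taking $P,Q$ symmetric without loss of generality, a short computation from \eqref{Hamiltonian} and the canonical commutation relations gives $\dot q = P\,p$ and $\dot p = -Q\,q$, i.e.
\begin{equation}
  \frac{\del}{\del t}\begin{pmatrix} q(t)\\ p(t)\end{pmatrix} = M\begin{pmatrix} q(t)\\ p(t)\end{pmatrix},\qquad M \coloneqq \begin{pmatrix} 0 & P\\ -Q & 0\end{pmatrix},
\end{equation}
so that $(q(t),p(t))^{\top} = \e^{t\,M}\,(q(0),p(0))^{\top}$ with $\e^{t\,M}$ a matrix of $c$-numbers. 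Writing $r=(q,p)$ for the full vector of canonical operators and $J$ for the standard symplectic form, so that $[r_\alpha(0),r_\beta(0)] = \iu\,J_{\alpha,\beta}\,\1$, linearity gives at once $[r_\alpha(t),r_\beta(0)] = \iu\,(\e^{t\,M}J)_{\alpha,\beta}\,\1$. Using $M^{2}=\diag(-PQ,-QP)$ --- whence $M^{2n}=\diag((-PQ)^{n},(-QP)^{n})$, while the odd powers $M^{2n+1}$ couple the $q$- and $p$-sectors through $(PQ)^{n}P$ and $(QP)^{n}Q$ --- I would read off the four closed forms
\begin{align}
  [q_x(t),p_y] &= \iu\,[\cos(t\sqrt{PQ})]_{x,y}\,\1, & [p_x(t),q_y] &= -\iu\,[\cos(t\sqrt{QP})]_{x,y}\,\1,\\
  [q_x(t),q_y] &= -\iu\,\bigl[\tfrac{\sin(t\sqrt{PQ})}{\sqrt{PQ}}P\bigr]_{x,y}\,\1, & [p_x(t),p_y] &= -\iu\,\bigl[\tfrac{\sin(t\sqrt{QP})}{\sqrt{QP}}Q\bigr]_{x,y}\,\1,
\end{align}
where $\cos(t\sqrt X)=\sum_{n\geq0}\tfrac{(-1)^{n}t^{2n}}{(2n)!}X^{n}$ and $\tfrac{\sin(t\sqrt X)}{\sqrt X}=\sum_{n\geq0}\tfrac{(-1)^{n}t^{2n+1}}{(2n+1)!}X^{n}$ are entire in $X$, so that $\sqrt{PQ}$ is never actually inverted.

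Next I would bring in locality. Since $P_{x,y}=Q_{x,y}=0$ whenever $\dist(x,y)>R$, any product of $m$ matrices drawn from $\{P,Q\}$ has vanishing $(x,y)$ entry unless $\dist(x,y)\leq mR$; hence $((PQ)^{n})_{x,y}=0$ for $2n<d_{x,y}$ and $((PQ)^{n}P)_{x,y}=0$ for $2n+1<d_{x,y}$, and likewise with $P\leftrightarrow Q$. Thus in each of the four power series only terms with $t$-power at least $d_{x,y}$ survive. Bounding every surviving matrix element by the operator norm of its product, $|((PQ)^{n})_{x,y}|\leq\norm[\infty]{PQ}^{n}$ and $|((PQ)^{n}P)_{x,y}|\leq\norm[\infty]{PQ}^{n}\,\norm[\infty]{P}$ (and the $Q$-analogues), inserting the prefactors $\sqrt{\norm[\infty]{PQ}}/\norm[\infty]{P}$ respectively $\sqrt{\norm[\infty]{PQ}}/\norm[\infty]{Q}$ for the $\sin$-type commutators, and using $\norm[\infty]{PQ},\norm[\infty]{QP}\leq\tau^{2}/t^{2}$, collapses all four inequalities to tail estimates of the form $\sum_{2n\geq d_{x,y}}\tau^{2n}/(2n)!$ and $\sum_{2n+1\geq d_{x,y}}\tau^{2n+1}/(2n+1)!$. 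These are controlled by the elementary bound $d!\,k!\leq(d+k)!$: pulling out $\tau^{d_{x,y}}/d_{x,y}!$ in the $\cos$-type case, and $\tau^{d_{x,y}+2}/d_{x,y}!$ in the $\sin$-type case (the extra $\tau^{2}$ absorbing the offset between the odd exponent $2n+1$ and the index $d_{x,y}$), and then resumming the remaining single-parity series against $\sum_{j\geq0}\tau^{2j}/(2j)!=\cosh\tau$, yields right-hand sides of the claimed form.

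The one genuinely delicate point is this last resummation: one has to keep careful track of the parity mismatch between the admissible powers ($2n$ or $2n+1$) and the index $d_{x,y}$ --- which in general is only rational, $d_{x,y}=\dist(x,y)/R$ --- so as to land on an honest $\cosh\tau$ (or on a $\sinh\tau$, which is majorised by $\cosh\tau$) rather than on $\e^{\tau}$, and to check that the chosen normalisation $\sqrt{\norm[\infty]{PQ}}/\norm[\infty]{P}$, together with the accumulated powers of $\norm[\infty]{PQ}$ and $\norm[\infty]{QP}$, reassembles precisely into the stated $\tau$-powers with exponents $d_{x,y}$ and $d_{x,y}+2$ in the two families. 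Everything preceding that step is just the bookkeeping of a finite, explicitly solvable linear system.
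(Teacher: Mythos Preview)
The paper does not give its own proof of this theorem; it is stated with a citation to Ref.~\cite{HarmonicLiebRobinson} and used as a quoted result. Your approach --- solving the linear Heisenberg equations $\dot q=Pp$, $\dot p=-Qq$ via the matrix exponential $\e^{tM}$, reading off the four commutators as matrix elements of $\cos(t\sqrt{PQ})$, $\cos(t\sqrt{QP})$, $\tfrac{\sin(t\sqrt{PQ})}{\sqrt{PQ}}P$, and $\tfrac{\sin(t\sqrt{QP})}{\sqrt{QP}}Q$, then using the finite range of $P$ and $Q$ to truncate the power series at order $d_{x,y}$ and bounding the remaining tail --- is precisely the argument of the cited reference, so there is nothing to compare against within this review.

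Your identification of the delicate step is accurate: the bookkeeping that turns the single-parity tails $\sum_{2n\geq d_{x,y}}\tau^{2n}/(2n)!$ and $\sum_{2n+1\geq d_{x,y}}\tau^{2n+1}/(2n+1)!$ into the stated right-hand sides, via $(d+k)!\geq d!\,k!$, is where one has to be careful about the parity of $d_{x,y}$ relative to the series and about how the normalisations $\sqrt{\norm[\infty]{PQ}}/\norm[\infty]{P}$ and $\sqrt{\norm[\infty]{PQ}}/\norm[\infty]{Q}$ convert $t^{2n+1}\norm[\infty]{PQ}^{n}\norm[\infty]{P}$ into $\tau^{2n+1}$. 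Everything up to that point is, as you say, explicit linear algebra on a $2|\Vset|\times 2|\Vset|$ matrix and requires no input from the infinite-dimensional Fock space beyond the CCR.
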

That is, for sufficiently large $\dist(x,y)$, one finds a faster-than-exponential decay of commutators between the canonical position and momentum operators.
This gives rise to a ``light cone'' with the Lieb-Robinson velocity
\begin{equation}\label{me2}	
  v=\e\,R\max\{\norm[\infty]{Q\,P}^{1/2},\norm[\infty]{P\,Q}^{1/2}\}.
\end{equation}

Despite the results of Refs.~\cite{1010.4576v1,Nachtergaele09,HarmonicLiebRobinson}, a full proof of a Lieb-Robinson bounds for a natural, interacting, infinite dimensional model, such as the Bose-Hubbard model with finite filling, is to date still missing.

\begin{figure}[tb]
  \centering
  \begin{tikzpicture}[domain=-2:2,samples=41,scale=1.5]
    \begin{scope}[xshift=-2.5cm]
      \draw[thick,->] (-2,0) -- (2,0) node[at end,below] {};
      \draw[thick,->] (0,0) -- (0,2) node[at end,left] {$t$} node[at start,below=0.7cm] {(a)};
      \clip (-2,0) rectangle (2,1.8);
      \fill[niceblue,opacity=0.5] plot (\x,{abs(\x)}) -- cycle;
      \draw[thick] plot (\x,{abs(\x)});
    \end{scope}
    \begin{scope}[xshift=2.5cm]
      \draw[thick,->] (-2,0) -- (2,0) node[at end,below] {};
      \draw[thick,->] (0,0) -- (0,2) node[at end,left] {$t$} node[at start,below=0.7cm] {(b)};
      \clip (-2,0) rectangle (2,1.8);
      \fill[niceblue,opacity=0.5] plot (\x,{exp(abs(\x))-1}) -- cycle;
      \draw[thick] plot (\x,{exp(abs(\x))-1});
    \end{scope}
  \end{tikzpicture}
  \caption{\label{fig:liebrobinsoncones} Schematic depiction of the Lieb-Robinson ``light'' cones in clean systems (a) and the more stringent bounds that can be derived in disordered systems (b). Outside the shaded area 
  causal influences are exponentially suppressed.}
\end{figure}
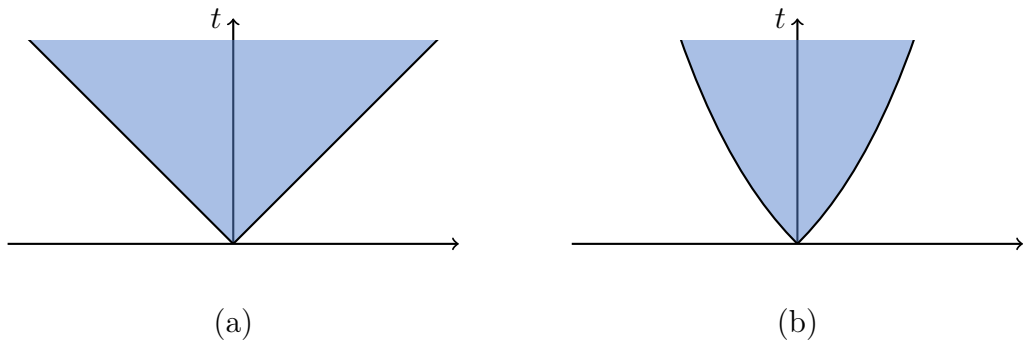

A problem that has recently started to attract an increasing amount of attention is transport in disordered systems. 
For an XY spin chain with disordered interactions and disordered external magnetic field a Lieb-Robinson type bound of the form
\begin{equation}
  \norm[\infty]{[A,B(t)]} \leq C\,n^2\,|t|\,\norm[\infty]A \norm[\infty]B \e^{-\eta\,\dist(A,B)}
\end{equation}
with $n$ the length of the chain and $C,\eta>0$ constants was derived in Ref.~\cite{Burrell2007} (see also Ref.~\cite{Logarithmic} for a similar result).
Notice that the bound is significantly more stringent than the ones we discussed before.
In order for the right hand side to be significantly larger than zero, a time $t$ that scales exponentially with the distance $\dist(A,B)$ is necessary (see Figure~\ref{fig:liebrobinsoncones}).
Hence, in the considered disordered system information propagation is constrained to a region that is not even a cone --- its radius only grows logarithmic with time. There is even evidence that for a given realisation of a disordered XY chain, one obtains a certain type of zero velocity Lieb-Robinson bound with high probability \cite{Hamza2012,MBLMPS}.
We will return to the problem of transport and (many-body) localisation in disordered systems in Section~\ref{sec:mbl}.

\subsection{Time scales for equilibration on average}
\label{sec:timescales}
In this section we summarise what is known about the time scales on which subsystem equilibration to the reduction of the de-phased state happens, i.e., on which time scales small subsystems equilibrate towards their time averaged state.
We will see that it is possible to go beyond what Theorem~\ref{thm:equilibrationonaverage} implies, but that all analytical results known to date that do so have the disadvantage of not being applicable to concrete Hamiltonians, but are only statements about all but few Hamiltonians from certain probability measures.

To discuss such results we will need to refer to and use methods of typicality and measure concentration.
In particular, we will encounter the uniform or Haar measure $\muhaar[U(d)]$ on the unitary group $U(d)$ of dimension $d$ and the probability $\probability_{U \sim \muhaar[U(d)]}(A)$ that a given statement $A$ holds for unitary operators $U \sim \muhaar[U(d)]$ drawn from the Haar measure.
Readers unfamiliar with these constructions might want to refer to Section~\ref{sec:typicality}, where we discuss them in more detail.

We argued in the paragraphs following Theorem~\ref{thm:equilibrationonaverage} that the bounds in \texteqref{eq:equilibrationonaverageforexpectationvalues} and \texteqref{eq:equilibrationonaverageforrestricedpovms} can be expected to become meaningful only if $T$ is of the order of $\landauOmega(d_S^2\,d')$.
As $d'$ usually grows exponentially with the system size the equilibration times implied by Theorem~\ref{thm:equilibrationonaverage} become physically meaningless already for medium sized systems.

There are good reasons to believe that without further assumptions on the Hamiltonian no significantly better general bounds on the subsystem equilibration time can hold. An example of a system that indeed can take exponentially long to equilibrate is a bipartite system in which the subsystem is only coupled to a low dimensional subspace of the Hilbert space of the bath.
It can then take exponentially long before the Hamiltonian on the bath has rotated the state of the bath into this subspace, thereby effectively leaving the subsystem uncoupled for extremely long times (see Ref.~\cite{Malabarba2014} for a related construction).
Such a coupling to a low dimensional subspace is necessarily non-local and hence unphysical.
In Ref.~\cite{Schiulaz2015}, however, it is demonstrated that density inhomogeneities can persist also for exponentially long times even in translation invariant interacting lattice models.
Very slow dynamics is also characteristic for disordered and glassy systems.

Numerical evidence suggests that most natural, locally interacting systems without disorder started in reasonable initial states do not exhibit such extremely long equilibration times,
see for example Refs.~\cite{Rigol08,Venuti09,1108.2703v1,Sirker2013,Fagotti2012,Eckstein2008,Gemmer09,Campos10,Rigol11,1110.4690v1,Rigol07,Rigol2006,1104.0154v1,Torres-Herrera2013,1104.3232v1,Sorg2014}, 
even though surprisingly slowly relaxing local observables can be constructed in some cases \cite{Kim2014a} and also power law approaches to equilibrium can occur \cite{1103.0787v1}.

As it is still unclear how the features of natural many-body models, such as locality of interactions, can be exploited to derive tighter bounds on equilibration time scales, Refs.~\cite{1108.2985v3,1108.0374,1112.5295v1,1107.6035v1,1111.6119} instead consider \emph{random Hamiltonians} and Ref.~\cite{Malabarba2014} certain types of random observables, as well as a class of non-random low rank observables.
We first cover the results of the type derived in Refs.~\cite{1108.2985v3,1108.0374,1112.5295v1,1107.6035v1,1111.6119} but concentrate on Ref.~\cite{1108.0374}, as it goes beyond the rather unrealistic scenario of Hamiltonians with Haar random eigenstates.

As a warm-up, we shall, however, consider exactly the situation of Hamiltonians with Haar random eigenvectors.
First, we define what a \emph{Haar random Hamiltonian} is:
Consider a system with Hilbert space $\mcH$ of dimension $d$ and fix an observable $G \in \Obs(\mcH)$.
Then for $U \sim \muhaar[U(d)]$ the operator
\begin{equation} \label{eq:defrandomhamiltonian}
  \H_G(U) \coloneqq U\,G\,U\ad
\end{equation}
is a \emph{Haar random Hamiltonian}.
Of course, $G$ and $\H_G(U)$ share the same spectrum and eigenvalue multiplicities for any unitary $U$, but the energy eigenstates / spectral projectors of $\H_G(U)$ are Haar random.
Fixing $G$ is thus equivalent to fixing the eigenvalues and degeneracies of the ensemble $\H_G(U),\ U \sim \muhaar[U(d)]$ of Haar random Hamiltonians.

A quantity that will play an important role in the theorems to come is 
\begin{equation} \label{eq:deffouriertransformedcpectrum}
  f_G(t) \coloneqq \frac{1}{d}\,\sum_{k=1}^d \e^{- \iu\,\tilde{E_k}\,t} ,
\end{equation}
where $(\tilde E_k)_{k=1}^d$ is the sequence of eigenvalues with respective multiplicity of $G$ (and hence also of $\H_G(U)$ for any unitary $U$).
The function $f_G$ can be interpreted as the \emph{Fourier transform} of the sequence $(\tilde E_k)_{k=1}^d$ \cite{1108.0374}.

We can now state the first result of Ref.~\cite{1108.0374}, which concerns quantum systems composed of spin-1/2 systems, so-called \emph{qubits}, i.e., quantum systems whose Hilbert space is $\C^2$:
\begin{theorem}[Equilibration under Haar random Hamiltonians {\cite[Result~1]{1108.0374}}] \label{thm:equilibrationunderrandomhamiltonians}
  Consider a bipartite system consisting of $|\Vset|$ many qubits, i.e., $\Vset = S \dunion B$ and $\mcH = \bigotimes_{x\in\Vset} \mcH_{\{x\}}$ with $\mcH_{\{x\}} = \C^2$ for all $x \in \Vset$, starting in a fixed initial state $\rho(0) \in \Qst(\mcH)$.
  Then, for every $G \in \Obs(\mcH)$, every $t \in \R$, and every $\epsilon > 0$ it holds that
  \begin{equation} \label{eq:equilibrationunderrandomhamiltonsbound}
    \probability_{U \sim \muhaar[U(d)]}\left( \tracedistance{\rho^S(t)}{\omega_{H_G(U)}^S} > \frac{\sqrt{d_S}}{2\,\epsilon} 
    \left(
    { |f_G(t)|^4 + \frac{g_G^2}{d^2} + \frac{7}{d_B} }
    \right)^{1/2}
    \right) < \epsilon ,
  \end{equation}
  where $\omega_{H_G(U)}^S \coloneqq \Tr_B(\$_{\H_G(U)}(\rho(0)))$ and $g_G \coloneqq \max_{k\in[d]} |\{l\oftype \tilde E_l = \tilde E_k \}|$ with $(\tilde E_k)_{k=1}^d$ the sequence of eigenvalues with respective multiplicity of $G$.
\end{theorem}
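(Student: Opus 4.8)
This is a concentration statement, so the plan is to control the \emph{average} of the trace distance over $U \sim \muhaar[U(d)]$ and then invoke Markov's inequality: since $\probability_{U}( X > a ) \leq \expectation_{U}(X)/a$, it suffices to prove that $\expectation_{U \sim \muhaar[U(d)]} \tracedistance{\rho^S(t)}{\omega_{\H_G(U)}^S} \leq \tfrac{\sqrt{d_S}}{2}\,(|f_G(t)|^4 + g_G^2/d^2 + 7/d_B)^{1/2}$, where $\rho^S(t) = \Tr_B(\e^{\iu\,\H_G(U)\,t}\,\rho(0)\,\e^{-\iu\,\H_G(U)\,t})$ carries the $U$-dependence. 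Writing $\omega^S$ for $\omega_{\H_G(U)}^S$ and $\Delta \coloneqq \rho^S(t) - \omega^S$ (self-adjoint), I would combine $\tracedistance{\rho^S(t)}{\omega^S} = \tfrac12\,\norm[1]{\Delta} \leq \tfrac{\sqrt{d_S}}{2}\,\norm[2]{\Delta}$ on $\mcH_S$ (the Schatten-norm inequality recalled in Section~\ref{sec:preliminaries}) with Jensen's inequality, $\expectation_U\norm[2]{\Delta} \leq (\expectation_U\norm[2]{\Delta}^2)^{1/2} = (\expectation_U\Tr[\Delta^2])^{1/2}$, which reduces the whole problem to the Hilbert--Schmidt second-moment estimate
\begin{equation} \label{eq:proofproposal-hs}
  \expectation_{U \sim \muhaar[U(d)]} \Tr[\Delta^2] \leq |f_G(t)|^4 + \frac{g_G^2}{d^2} + \frac{7}{d_B} .
\end{equation}

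To evaluate $\expectation_U\Tr[\Delta^2] = \expectation_U\Tr[(\rho^S(t))^2] - 2\,\expectation_U\Tr[\rho^S(t)\,\omega^S] + \expectation_U\Tr[(\omega^S)^2]$, I would rewrite each reduced-state overlap as a full-system trace against the swap $\mathbb{S}_S$ on the two copies of $\mcH_S$, via $\Tr_S[(\Tr_B X)(\Tr_B Y)] = \Tr[(X\otimes Y)(\mathbb{S}_S\otimes\1_B\otimes\1_B)]$, and then insert $\e^{\pm\iu\,\H_G(U)\,t} = U\,\e^{\pm\iu\,G\,t}\,U\ad$ together with $\Pi_k = U\,P_k\,U\ad$ for the spectral projectors of $\H_G(U)$, $P_k$ being those of $G$. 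Each of the three terms then becomes a Haar integral of a monomial with four factors $U$ and four factors $U\ad$, which I would evaluate with the fourth-moment (Weingarten) formulas for the unitary group --- equivalently by expanding over the permutation operators $\{P_\pi : \pi\in S_4\}$ spanning the commutant of $U^{\otimes 4}$, with Weingarten coefficients $\mathrm{Wg}(\pi,d) = \landauO(d^{-4})$.

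Collecting the contributions, I expect the ``diagonal'' permutations to reproduce $\Tr[(\omega^S)^2]$ inside all three terms, so that the dominant pieces cancel in the combination $(+1,-2,+1)$, leaving only subleading remainders: (i) the permutations that pair the two factors $\e^{\pm\iu\,G\,t}$ with the ``wrong'' copy collapse, after the partial traces, to powers of $\tfrac1d\,\Tr(\e^{-\iu\,G\,t}) = f_G(t)$ and $\tfrac1d\,\Tr(\e^{\iu\,G\,t}) = \overline{f_G(t)}$, producing the $|f_G(t)|^4$ term; (ii) when several eigenvalues $\tilde E_k$ of $G$ coincide, the sums over the level labels no longer fully diagonalise, which produces the $g_G^2/d^2$ term (absent for a non-degenerate spectrum); (iii) the non-leading Weingarten coefficients together with $\Tr(\mathbb{S}_S) = d_S$ against the bath dimension $d_B$ give a residual $\landauO(1/d_B)$, bounded crudely by $7/d_B$. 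Throughout one only uses $\rho(0)\geq 0$, $\Tr\rho(0)=1$, $\norm[\infty]{\e^{\pm\iu\,G\,t}}=1$ and $\sum_k P_k = \1$.

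The main obstacle is precisely this fourth-moment bookkeeping: one must identify which of the $4!=24$ permutation contributions are $\landauO(1)$ and verify that they cancel \emph{exactly} across the three terms, not merely to leading order, and then bound the multiplicity-dependent and $\landauO(1/d_B)$ remainders uniformly in $\rho(0)$, in $t$, and in the spectrum of $G$; keeping track of the Weingarten weights $\mathrm{Wg}(\pi,d)$ against the combinatorial factors is the delicate part, and the degeneracy factor $g_G$ has to be handled honestly rather than assumed away. Once \eqref{eq:proofproposal-hs} is established, feeding it back through the Jensen and trace-versus-Hilbert--Schmidt steps and then Markov's inequality gives the theorem, since with $a$ equal to the right-hand side of the claimed bound one has $\expectation_U(\tracedistance{\rho^S(t)}{\omega^S})/a \leq \epsilon$.
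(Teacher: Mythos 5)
Your plan coincides with how Result~1 of Ref.~\cite{1108.0374} is obtained: reduce the probabilistic claim to a bound on the average Hilbert--Schmidt distance via $\norm[1]{\argdot}\leq\sqrt{d_S}\,\norm[2]{\argdot}$, Jensen, and Markov, and evaluate that second moment as a fourth-order Haar integral over $U(d)$ (a Weingarten sum over $S_4$, using $\e^{\pm\iu\,\H_G(U)\,t}=U\,\e^{\pm\iu\,G\,t}\,U\ad$ and $\Pi_k=U\,P_k\,U\ad$). The one part you flag as still to do --- checking that the leading permutation contributions cancel exactly across $\Tr[(\rho^S(t))^2]-2\,\Tr[\rho^S(t)\,\omega^S]+\Tr[(\omega^S)^2]$ --- can be sidestepped by noting that $\rho(t)-\omega=\sum_{k\neq l}\Pi_k\,\rho(0)\,\Pi_l\,\e^{-\iu(E_k-E_l)t}$ contains only off-diagonal blocks from the outset, so that expanding $\Tr[\Delta^2]$ directly in this form yields a single fourth-moment sum with the cancellation built in, and the $|f_G(t)|^4$, $g_G^2/d^2$, and $7/d_B$ pieces then emerge by grouping permutations exactly as you describe.
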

A very similar result is also contained in Ref.~\cite{1108.2985v3}.
Essentially, Theorem~\ref{thm:equilibrationunderrandomhamiltonians} connects the temporal evolution of the trace distance of $\rho^S(t)$ from the equilibrium state $\omega^S_G$ with the temporal evolution of $|f_G(t)|$.
If the bath is large and the Hamiltonian has only few degeneracies, then for most Haar random Hamiltonians the distance $\tracedistance{\rho^S(t)}{\omega_{H_G(U)}^S}$ is small whenever $|f_G(t)|$ is small.
This will make it possible to give bounds on equilibration time scales.

The above result can be extended to a more general ensemble of random Hamiltonians.
More specifically, consider again the setting of a composite system of $N$ qubits and the ensemble $\H_G(U)$, but now with $G \in \Obs(\mcH)$ diagonal in some product basis and $U$ given by a \emph{random circuit} of \emph{circuit depth} $C \in \Z^+$.
Here, a \emph{circuit} is a sequence of so-called \emph{quantum gates}, i.e., unitary quantum channels that each act on only one or two qubits.
The gates can be members of a so called \emph{universal gate set}, i.e., a set of quantum gates such that any unitary can be approximated arbitrarily well by a circuit of gates from this set.
The \emph{circuit depth} of a circuit is the number of gates in the circuit.
Finally, a \emph{random circuit} is a circuit in which the gates have been drawn randomly according to some measure from a universal gate set.
We write $\muC$ for the measure on unitaries induced by random circuits of circuit depth $C$ with gates drawn uniformly at random from some fixed, finite universal gate set.
It is known that $\lim_{C\to\infty} \muC = \muhaar[U(d)]$  and that for large enough $C$ the measure $\muC$ approximates $\muhaar[U(d)]$ in the sense of being an \emph{approximate unitary design} \cite{Brandao2012}.
This holds regardless of which finite universal gate set is used.

For the random circuit ensemble of random Hamiltonians the following statement holds, which generalises Theorem~\ref{thm:equilibrationunderrandomhamiltonians}:
\begin{theorem}[Equilibration under random circuit Hamiltonians {\cite[Result~3]{1108.0374}}] \label{thm:equilibrationundercircuitrandomhamiltonians}
   Consider a bipartite system consisting of $N \coloneqq |\Vset|$ many qubits, i.e., $\Vset = S \dunion B$ and $\mcH = \bigotimes_{x\in\Vset} \mcH_{\{x\}}$ with $\mcH_{\{x\}} = \C^2$ for all $x \in \Vset$, starting in a fixed initial state $\rho(0)$.
  There exists a constant $\alpha \in \R$ that depends only on the universal gate set such that for every $G \in \Obs(\mcH)$ diagonal in a product basis, every $t \in \R$, every circuit depth $C \in \Z^+$, and every $\epsilon > 0$
  \begin{equation} \label{eq:equilibrationundercircuitrandomhamiltonsbound}
    \probability_{U \sim \muC}\left( \tracedistance{\rho^S(t)}{\omega_{H_G(U)}^S} > \frac{\sqrt{d_S}}{2\,\epsilon} 
    \left({ |f_G(t)|^4 + \frac{g_G^2}{d^2} + \frac{7}{d_B} + d^3\, 2^{-\alpha\,C/N}}\right)^{1/2}
    \right) < \epsilon ,
  \end{equation}
  where $\omega_{H_G(U)}^S \coloneqq \Tr_B(\$_{\H_G(U)}(\rho(0)))$ and $g_G \coloneqq \max_{k\in[d]} |\{l\oftype \tilde E_l = \tilde E_K \}|$ with $(\tilde E_k)_{k=1}^d$ the sequence of eigenvalues with respective multiplicity of $G$.
\end{theorem}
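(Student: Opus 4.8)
The plan is to obtain the statement from the Haar-case Theorem~\ref{thm:equilibrationunderrandomhamiltonians} by replacing the Haar average with the average over random circuits, exploiting that for the fixed order $k=4$ the measure $\muC$ approximates $\muhaar[U(d)]$ well in the sense of being an approximate unitary $4$-design. The starting point is the observation that the bound \texteqref{eq:equilibrationunderrandomhamiltonsbound} follows, via Markov's inequality $\probability(X>a)\leq\expectation[X^2]/a^2$, from a second-moment estimate of the form
\begin{equation} \label{eq:sketchsecondmoment}
  \expectation_{U \sim \muhaar[U(d)]}\left[ \tracedistance{\rho^S(t)}{\omega_{\H_G(U)}^S}^2 \right] \leq \frac{d_S}{4}\left( |f_G(t)|^4 + \frac{g_G^2}{d^2} + \frac{7}{d_B} \right) ,
\end{equation}
with $a$ the threshold in \texteqref{eq:equilibrationunderrandomhamiltonsbound}, where one first uses $\tracedistance{\rho^S(t)}{\omega^S}^2 \leq (d_S/4)\,\norm[2]{\rho^S(t)-\omega^S}^2$ to pass to the Hilbert--Schmidt norm. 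It therefore suffices to prove the analogue of \eqref{eq:sketchsecondmoment} with $\muhaar[U(d)]$ replaced by $\muC$ and an additional additive error.

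The crucial structural fact is the polynomial degree of \eqref{eq:sketchsecondmoment} in the unitary. Writing $\rho^S(t) = \Tr_B\left(U\,\e^{-\iu G t}\,U\ad\,\rho(0)\,U\,\e^{\iu G t}\,U\ad\right)$ and $\omega_{\H_G(U)}^S = \Tr_B\left(\sum_k U\,P_k\,U\ad\,\rho(0)\,U\,P_k\,U\ad\right)$ with $(P_k)$ the spectral projectors of $G$, the difference $\rho^S(t)-\omega_{\H_G(U)}^S$ is a polynomial of degree two in the entries of $U$ and degree two in those of $\bar U$, so $\norm[2]{\rho^S(t)-\omega_{\H_G(U)}^S}^2$ is balanced of degree four in $U$ and four in $\bar U$. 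Its average over any ensemble of unitaries is thus completely fixed by the ensemble's fourth-moment operator, and the operators appearing as coefficients --- $\rho(0)$ of unit trace norm, the unitaries $\e^{\pm\iu G t}$, the projectors $P_k$, and the partial trace --- all have controlled norms.

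Next I would invoke the quantitative approximate-design property of random circuits from Ref.~\cite{Brandao2012}: for $k=4$ there is a constant $\alpha$, depending only on the finite universal gate set, such that once the circuit depth $C$ exceeds a threshold of order $N^2$ the $4$-fold twirl channel of $\muC$ differs from that of $\muhaar[U(d)]$ by at most $d^{\landauO(1)}\,2^{-\alpha\,C/N}$ in an appropriate norm (for smaller $C$ the asserted inequality is vacuous). Feeding this into \eqref{eq:sketchsecondmoment} and tracking the dimensional factors incurred when the fourth-moment operator is contracted against the bounded coefficient operators yields
\begin{equation} \label{eq:sketchcircuitmoment}
  \expectation_{U \sim \muC}\left[ \tracedistance{\rho^S(t)}{\omega_{\H_G(U)}^S}^2 \right] \leq \frac{d_S}{4}\left( |f_G(t)|^4 + \frac{g_G^2}{d^2} + \frac{7}{d_B} + d^3\,2^{-\alpha\,C/N} \right) ,
\end{equation}
the extra $d^3\,2^{-\alpha\,C/N}$ absorbing the design error together with those dimensional factors. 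Markov's inequality applied to \eqref{eq:sketchcircuitmoment}, exactly as in the Haar case, then gives \texteqref{eq:equilibrationundercircuitrandomhamiltonsbound}.

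The main obstacle is precisely this last passage --- from ``$\muC$ is an approximate $4$-design'' to the explicit term $d^3\,2^{-\alpha\,C/N}$ in \eqref{eq:sketchcircuitmoment}. It requires the sharp quantitative form of the random-circuit design bound of Ref.~\cite{Brandao2012}, including the exponential decay of the design error in $C/N$ with a gate-set-dependent rate beyond the threshold depth, together with a careful bookkeeping of which powers of $d$ are picked up when $\expectation_U[\norm[2]{\rho^S(t)-\omega_{\H_G(U)}^S}^2]$ is rewritten through the fourth-moment operator and the difference of the two twirls is bounded against the bounded coefficient operators. Everything else --- the reduction to a second moment, the degree count, and the concluding Markov step --- is routine once Theorem~\ref{thm:equilibrationunderrandomhamiltonians} and its proof are available.
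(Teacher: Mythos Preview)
The paper does not give its own proof of this theorem: it is stated as a direct citation of Result~3 from Ref.~\cite{1108.0374}, with no accompanying proof environment. There is therefore nothing in the paper to compare your argument against line by line.

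That said, your sketch is the correct and standard route to such a statement, and it is precisely the mechanism one expects the original reference to use. The three essential ingredients you identify are right: (i) the Haar-case bound of Theorem~\ref{thm:equilibrationunderrandomhamiltonians} arises from a second-moment estimate on $\norm[2]{\rho^S(t)-\omega^S_{\H_G(U)}}^2$ followed by Markov; (ii) that second moment is a balanced polynomial of degree four in $U$ and $\bar U$, so only the fourth-moment operator of the ensemble matters; (iii) the quantitative approximate-$4$-design property of random circuits from Ref.~\cite{Brandao2012} bounds the discrepancy between the $\muC$ and Haar fourth moments by a term exponentially small in $C/N$, up to polynomial-in-$d$ prefactors, which is exactly the extra $d^3\,2^{-\alpha C/N}$ in \eqref{eq:equilibrationundercircuitrandomhamiltonsbound}. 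Your own identification of the only genuinely delicate step --- extracting the explicit $d^3\,2^{-\alpha C/N}$ from the design bound with careful dimension bookkeeping --- is accurate; everything else is routine once Theorem~\ref{thm:equilibrationunderrandomhamiltonians} is in hand.
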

As can be seen from \texteqref{eq:equilibrationundercircuitrandomhamiltonsbound}, a slightly super-linear circuit complexity, i.e., $C = C(N) \notin \landauO(N)$, is sufficient to make the additional term in \texteqref{eq:equilibrationundercircuitrandomhamiltonsbound} (compared to \texteqref{eq:equilibrationunderrandomhamiltonsbound}) go to zero for large $N$.

If this is the case, and in addition $N$ is large enough, the bath is much larger than the subsystem, i.e., $d_B \gg d_S$, and $G$ has only few degeneracies, i.e., $g_G \ll d$, then the right hand side of both \texteqref{eq:equilibrationunderrandomhamiltonsbound} and \eqref{eq:equilibrationundercircuitrandomhamiltonsbound} is approximately equal to $|f_G(t)|^2\,\sqrt{d_S}/(2\,\epsilon)$.
Hence, the bounds are non-trivial for reasonably small $\epsilon$ for all $t$ for which $\sqrt{d_s}\,|f(t)|^2 \ll 1$.
For which times $t$ this is the case of course crucially depends on the spectrum that was fixed by fixing $G$.

The spectrum of the Ising model with transverse field, for example, leads to an approximately Gaussian decay of $|f(t)|^2$, implying an estimated equilibration time of the order of $\landauO(N^{-1/2})$ \cite{1108.0374}.
For more general locally interacting Hamiltonians on $D$-dimensional lattices one can show equilibration times of the order of $\landauO(N^{1/(5\,D)-1/2})$ \cite{1112.5295v1}.

This means that given an initial state $\rho(0)$, if $G$ is chosen to be the Hamiltonian of the transverse field Ising model and $U \sim \muhaar[U(d)]$, then the dynamics under the Haar random Hamiltonian $\H_G(U)$, which has the same spectrum as $G$, is, with high probability, such that the time evolution $\rho\oftype\R\to\Qst(\mcH)$ is such that the state of any small subsystem $S$ equilibrates to the reduced state of the de-phased state on that subsystem, during a time of the order of $\landauO(N^{-1/2})$.
This, however, is in contradiction with the intuition that larger systems should take longer to equilibrate, simply because excitations in locally interacting spin systems travel with a finite speed (see also Section~\ref{sec:liebrobinson}).
One would expect that for locally interacting systems of $N$ spins on a $D$ dimensional regular lattice with nearest neighbour or short range interactions, subsystem equilibration should happen on a time scale of the order of $\landauTheta(N^{1/D})$, where $N^{1/D}$ is the \emph{linear size of the system}, for many reasonable initial states.

Another point of criticism is that one can show that for Haar random Hamiltonians the subsystem equilibrium state is the maximally mixed state \cite[Corollary 1]{1112.5295v1} and a similar statement can be shown for the random circuit ensemble of random Hamiltonians.
Systems to which the above results apply can thus never exhibit subsystem equilibration to an interesting, e.g., finite temperature, state.

The reason for both of these problems is that neither the model of Haar random Hamiltonians nor that of Hamiltonians whose diagonalizing unitary is given by a random circuit with high circuit complexity are good models for realistic, locally interacting quantum systems.
Simply put, even though random Hamiltonian ensembles have been successfully used to model certain features of realistic Hamiltonians in the context of \emph{random matrix theory} \cite{1102.0528v1,0412017v2,Guhr1998,1006.1634v1,Gemmer09,Tabor1989,Bohigas1984,Tao2012,mehta90}, the eigenstates of reasonable locally interacting quantum systems are far from Haar random.

We hence turn to the results of Ref.~\cite{Malabarba2014} for concrete Hamiltonians and measurements.
The main result of that work is a bound on the equilibration time of low rank measurements, i.e., POVMs with two outcomes, one of which is a low rank projector.
Such measurements do not correspond to local observables, but rigorous bounds on their equilibration behaviour can be given even for concrete situations:
\begin{theorem}[Fast equilibration of low rank observables \cite{Malabarba2014}] \label{thm:fastequilibrationforlowrankobservables}
  Given a system with Hilbert space $\mcH$ and non-degenerate Hamiltonian $\H \in \Obs(\mcH)$ with spectral decomposition $\H = \sum_{k=1}^{d} E_k\,\ketbra{E_k}{E_k}$.
  For $\rho(0) \in \Qst(\mcH)$ the initial state of the system, let $\omega = \$_H(\rho(0))$ be the de-phased state and define the energy level occupations $p_k \coloneqq \bra{E_k}\rho(0)\ket{E_k}$.
  Let $\POVMs = \{(\Pi,\1-\Pi)\}$ with $\Pi$ a rank $K$ projector, then
  \begin{equation}
    \taverage[T]{\tracedistance[\POVMs]{\rho(t)}{\omega}} \leq C\,(\eta(1/T)\,K)^{1/2}
  \end{equation}
  with $C = 5\,\pi/(4\,\sqrt{1-1/\e})+1$ and for any $\Delta\geq 0$
  \begin{equation}
    \eta(\Delta) \coloneqq \sup_{E \in \R} \sum_{k\suchthat E_k \in [E,E+\Delta]} p_k.
  \end{equation}
\end{theorem}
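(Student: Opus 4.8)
The plan is to compare $\rho(t)$, on the time scale $T$, with the state obtained by dephasing $\rho(0)$ only \emph{within energy windows of width $1/T$}, and to bound the difference using the rank constraint $\Tr\Pi = K$ together with the suppression of finite-time averages of oscillating phases. Since $\POVMs = \{(\Pi,\1-\Pi)\}$ contains a single POVM and $|\Tr((\1-\Pi)(\rho(t)-\omega))| = |\Tr(\Pi(\rho(t)-\omega))|$, one has $\tracedistance[\POVMs]{\rho(t)}{\omega} = |\Tr(\Pi(\rho(t)-\omega))|$, so by concavity of the square root it is enough to bound $\taverage[T]{|\Tr(\Pi(\rho(t)-\omega))|^2}$ by a constant times $K\,\eta(1/T)$. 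First I would partition $\R$ into the windows $\mathcal{I}_j \coloneqq [\,j/T,(j+1)/T)$, $j\in\Z$, let $P_j$ be the spectral projector of $\H$ onto $\mathcal{I}_j$ and $\bar E_j$ the midpoint of $\mathcal{I}_j$, and set $\bar\rho(t) \coloneqq \sum_j P_j\,\rho(t)\,P_j$. Since the $P_j$ commute with $\e^{-\iu\H t}$ one gets $P_j\rho(t)P_j = W_j(t)\,(P_j\,\rho(0)\,P_j)\,W_j(t)\ad$ with $W_j(t) \coloneqq \e^{\iu(\H-\bar E_j)t}P_j$, and because every eigenvalue of $\H$ in $\mathcal{I}_j$ lies within $1/(2T)$ of $\bar E_j$ we have $\norm[\infty]{W_j(t)-P_j} \le |t|/(2T) \le 1/2$ for $t\in[0,T]$. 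Using $|a+b|^2 \le 2|a|^2+2|b|^2$, it then suffices to bound $\sup_{t\in[0,T]}|\Tr(\Pi(\bar\rho(t)-\omega))|^2$ and $\taverage[T]{|\Tr(\Pi(\rho(t)-\bar\rho(t)))|^2}$ separately, each by a constant times $K\,\eta(1/T)$.

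For the first (\emph{static}) term I would use that $\Pi Z\Pi$ has rank at most $K$, hence $|\Tr(\Pi Z)| = |\Tr(\Pi Z\Pi)| \le \sqrt{K}\,\norm[2]{\Pi Z\Pi} \le \sqrt{K}\,\norm[2]{Z}$, and then estimate $\norm[2]{\bar\rho(t)-\omega}$. Writing $P_j\,\rho(0)\,P_j = P_j\,\omega\,P_j + Y_j$ with $Y_j$ the off-diagonal (in the energy basis) part of $\rho(0)$ inside $\mathcal{I}_j$, the bound $\norm[\infty]{W_j(t)-P_j}\le 1/2$ gives, for $t\in[0,T]$, $\norm[2]{\bar\rho(t)-\omega} \le \big(\sum_k p_k^2\big)^{1/2} + \big(\sum_j\norm[2]{Y_j}^2\big)^{1/2}$, the first term accounting for the commutators $[W_j(t),P_j\,\omega\,P_j]$. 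Now $\sum_k p_k^2 \le \max_k p_k \le \eta(1/T)$, and, using $|\bra{E_k}\rho(0)\ket{E_l}|^2 \le p_k p_l$, $\norm[2]{P_j\,\rho(0)\,P_j} \le \Tr(P_j\,\rho(0))$, and $\sum_j\sum_{k\colon E_k\in\mathcal{I}_j}p_k = 1$, one obtains $\sum_j\norm[2]{Y_j}^2 = \sum_j\sum_{k\neq l\colon E_k,E_l\in\mathcal{I}_j}|\bra{E_k}\rho(0)\ket{E_l}|^2 \le \sum_j\big(\sum_{k\colon E_k\in\mathcal{I}_j}p_k\big)^2 \le \eta(1/T)$. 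Hence $\sup_{t\in[0,T]}|\Tr(\Pi(\bar\rho(t)-\omega))|^2 = \landauO(K\,\eta(1/T))$.

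For the second (\emph{cross-window}) term, $\Tr(\Pi(\rho(t)-\bar\rho(t))) = \sum\bra{E_l}\Pi\ket{E_k}\bra{E_k}\rho(0)\ket{E_l}\,\e^{\iu(E_k-E_l)t}$ where the sum is over pairs $k,l$ lying in distinct windows, I would expand the time-averaged square, producing a double sum weighted by $\taverage[T]{\e^{\iu\omega t}}$ (of modulus at most $\min(1,2/(|\omega|T))$) evaluated at differences of energy gaps. I would pull out the factor $K$ by a Cauchy--Schwarz step exploiting $\sum_{k,l}|\bra{E_l}\Pi\ket{E_k}|^2 = \Tr\Pi = K$, insert $|\bra{E_k}\rho(0)\ket{E_l}|^2 \le p_k p_l$ together with $p_k \le \eta(1/T)$, and reorganise the remaining sum according to which window each energy index lies in, so that $\eta(1/T) = \sup_E\sum_{k\colon E_k\in[E,E+1/T]}p_k$ caps the per-window occupations; the suppression of $\taverage[T]{\e^{\iu\omega t}}$ for $|\omega|\gtrsim 1/T$ should then leave $\taverage[T]{|\Tr(\Pi(\rho(t)-\bar\rho(t)))|^2} = \landauO(K\,\eta(1/T))$. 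Combining with the static estimate, applying Jensen, and tracking all constants would produce the stated $C = 5\pi/\big(4\sqrt{1-1/\e}\,\big)+1$.

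The main obstacle will be precisely this cross-window estimate. The difficulty is that two \emph{adjacent} windows can contain arbitrarily close energy levels, so $\taverage[T]{\e^{\iu\omega t}}$ is not uniformly small there, and summing the tails $2/(|\omega|T)$ naively over pairs of windows diverges logarithmically --- this is exactly the mechanism behind the $\log d'$ in Theorem~\ref{thm:equilibrationonaverage}. Avoiding it requires a sharper input: either averaging the window partition over a uniform random shift in $[0,1/T)$, or using the square-integrability of $\omega\mapsto\taverage[T]{\e^{\iu\omega t}}$ (a large-sieve-type inequality for exponential sums) rather than its $1/(|\omega|T)$ decay, so that the final bound is governed by the rank $K$ and the \emph{energy}-window occupation $\eta(1/T)$ alone, with no counting factor; fixing the exact value of $C$ is then a matter of bookkeeping. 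Note that, in contrast with Theorem~\ref{thm:equilibrationonaverage}, this argument uses only non-degeneracy of $\H$ --- to identify $\omega$ with the diagonal of $\rho(0)$ in the energy basis --- and never non-degeneracy of the energy gaps.
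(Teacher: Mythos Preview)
The paper does not actually prove this theorem: it is quoted verbatim from Ref.~\cite{Malabarba2014} and no argument is given in the review beyond the statement. So there is no ``paper's own proof'' to compare against here; the relevant comparison is with the argument in Ref.~\cite{Malabarba2014}.

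On the merits of your sketch: the reduction to $\taverage[T]{|\Tr(\Pi(\rho(t)-\omega))|^2}$ and the extraction of the rank via $|\Tr(\Pi Z)|\le \sqrt{K}\,\norm[2]{Z}$ are fine, and your ``static'' estimate is essentially correct. The genuine gap is exactly where you flag it: the cross-window term. You assert that averaging the window offset or invoking a large-sieve/Montgomery--Vaughan type inequality ``should'' kill the logarithm and produce a clean $\landauO(K\,\eta(1/T))$, but you do not carry this out, and it is not bookkeeping --- it is the whole content of the result. Without that step you are back at Theorem~\ref{thm:equilibrationonaverage} with its $N(\epsilon)$/$\log d'$ loss, and you certainly will not recover the specific constant $C=5\pi/(4\sqrt{1-1/\e})+1$.

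For context, the proof in Ref.~\cite{Malabarba2014} does not proceed via a hard partition of the spectrum into $1/T$-windows. It works directly with the finite-time kernel and bounds the relevant double sum by a quantity controlled by $\eta(1/T)$ using an explicit inequality for the averaging function (this is where the $\pi$ and the $\sqrt{1-1/\e}$ come from). Your window decomposition introduces an artificial boundary problem (adjacent windows with nearly coincident levels) that the direct kernel approach avoids. If you want to salvage your route, the random-shift averaging over the window offset is the right idea, but you then need to show that the expected cross-window contribution is bounded by a constant times $K\,\eta(1/T)$ --- and that argument, done carefully, is essentially equivalent in difficulty to the direct kernel bound in Ref.~\cite{Malabarba2014}.
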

One can now show that $\eta$ is lower bounded by $1/\deff(\omega)$ but also argue that, up to reasonably large $T$, it holds that $\eta(1/T) \leq C'/T$ with $C'$ a constant that depends on the shape of the energy distribution of the initial state.
In particular, the above theorem predicts an at least power-law like approach to equilibrium of low rank measurements on a time scale proportional to $K/\sigma_E$ with $\sigma_E$ the energy uncertainty in the initial state (for details see Ref.~\cite{Malabarba2014}).

In addition to the above results some \emph{lower} bounds on equilibration time scales exist:
For example, if a state has overlap only with energy eigenstates of the Hamiltonians in an energy interval of width $\Delta E$, then the equilibration time is at least of the order of $\landauOmega(1/\Delta E)$ \cite{Gogolin10-masterthesis} (see also Ref.~\cite{PhysRevE.50.88}).
Similarly, if the Hamiltonian $\H$ of a bipartite system with $\Vset = S \dunion B$ is uncoupled, except for a small 
coupling Hamiltonian $H_I \coloneqq \H - \H_S - \H_B$, then the equilibration time is at least of the order of $\landauOmega(1/\norm[\infty]{\H_I})$ \cite[Section 2.6.3]{Gogolin10-masterthesis}.
Similarly, lower bounds on the equilibration and thermalisation time --- as will be discussed in Section \ref{sec:thermalisation} --- follow from bounds on the rate of change of certain entropies \cite{MasterThesisHutter,Hutter11}.
In Ref.~\cite{Kastner11}, lower bounds on the equilibration time of the type $\landauOmega(N^{1/2})$ have been obtained for a class of spin systems with long range interactions.
For spin systems with short range interactions, Lieb-Robinson bounds (see Section~\ref{sec:liebrobinson}) immediately imply lower bounds on the equilibration time for certain initial states that are of the order of the linear size of the system.
Finally, in systems whose density of states can be approximated by a continuous function the \emph{Riemann-Lebesgue Lemma} \cite{Bochner49} can be used to give upper bounds on equilibration time scales \cite{Yukalov2011}.
Despite the large number of results the full problem still awaits a solution.

\subsection{Fidelity decay}
\label{sec:fidelitydecay}
A scenario in which the equilibration behaviour has been studied in detail and is now particularly well understood is that of \emph{fidelity decay}.
Rather than looking at the expectation value of say a local observable the quantity, whose equilibration is of interest here is the fidelity between the initial state and the time evolved state at time $t$.
For pure initial states $\psi(0) = \ketbra{\psi}{\psi} \in \Qst(\mcH)$ and unitary time evolution under a Hamiltonian $\H \in \Obs(\mcH)$ the fidelity takes the simple form
\begin{equation}
  \fidelity{\psi(0)}{\psi(t)} = \Tr(\psi(0)\,\psi(t)) = \left|\bra\psi \e^{-\iu\,\H\,t} \ket\psi\right|^2 = \Big| \sum_k |\braket\psi{E_k}|^2 \e^{-\iu\,E_k\,t} \Big|^2 , 
\end{equation}
which makes apparent that it can be seen as the square of the Fourier transform of the weighted energy distribution of the initial state.

As $\psi(0)$ can be seen as a low rank (in fact rank one) observable, the result on the equilibration times of low rank observables (Theorem~\ref{thm:fastequilibrationforlowrankobservables}) can be used to bound the time scales on which fidelity decays.
Below this, typically power-law bound a rich variety of different decay behaviours can be observed \cite{Flambaum2001,Santos2014,Vyas2014}.

\section{Investigations of equilibration for specific models}
\label{sec:numericalandanalyticalinvesitgationsofequilibration}
There is a large body of literature studying equilibration dynamics of quantum many-body systems partly with analytical, but mostly with numerical methods.
These works typically focus on a specific model or a subclass of models.
In this section we cover a selection of works in this direction.
We will discuss many more works with a similar scope later in Section~\ref{sec:numericalthermal} once we have introduced the concept of thermalisation.

\subsection{Global quenches}
Often the behaviour of quantum systems after a suddenly altered Hamiltonian, a so called \emph{quench} is considered.
In this much discussed setting, the initial state $\rho(0)$ is, e.g., the ground state of a locally interacting Hamiltonian $\H_0$, and following the sudden quench to a different locally interacting Hamiltonian $\H$, properties of
\begin{equation}
  \rho(t) = \e^{-\iu\,t\,\H} \,\rho(0)\, \e^{\iu\,t\,\H}
\end{equation}
are explored.
The seminal early study \cite{Barouch1970} introduces quenches to the literature and finds a ``non-approach to equilibrium'' in the XY model that is mapped to a quadratic fermionic system.
Refs.~\cite{CalabreseCardy06,Calabrese2007} use field theoretical methods to gain insight into the dynamics of correlation functions after quenches.
If the final Hamiltonian is close to being critical, notions of universality are being identified at long times.
The early work \cite{Kollath07} investigates an out of equilibrium phase diagram of the Bose-Hubbard model, arising from quenches from the superfluid to Mott phase.
Ref.~\cite{Flesch08} also considers out of equilibrium dynamics in the Bose-Hubbard model and discusses signatures of equilibration that can be probed using optical super-lattices.
A similar setting is numerically analysed in Ref.~\cite{1101.2659v1}, which is then taken 
as a benchmark for an experiment performing a dynamical quantum simulation.
Ref.~\cite{Krutitsky2014} numerically investigates quenches inside the Mott phase with a method most suitable for lattices with high coordination number.
Ref.~\cite{Barmettler} studies the relaxation dynamics in XXZ chains following a quench,
Here, a rich phenomenology emerges and both oscillatory and exponential relaxation are being observed.
Counter-intuitively, the relaxation speed increases at a critical point for the anisotropy parameter.
The seminal experimental work \cite{Greiner2002a} also studies the on-equilibrium evolution of coherent states, being superpositions of different particle number states in a three dimensional optical lattice,
observing collapses and revivals.
A very powerful tool in numerical studies is provided by time-dependent variants of the density-matrix renormalisation group (DMRG) approach and related tensor network approaches \cite{Schollwock201196,Hastings2009,Muller-Hermes2012,Hallberg2006}.
An early example being Ref.~\cite{DeChiara2005}, which studies the spreading of entanglement after quenches in Heisenberg spin chains.
Noteworthy are Refs.~\cite{Cramer2008,Flesch08}, which investigate equilibration with such methods in a setting described by the Bose-Hubbard model which can be realised with ultra cold atoms \cite{1101.2659v1}.
Ref.~\cite{Karrasch2012} uses time dependent DMRG methods to study the relaxation dynamics after quenches in the Tomonaga-Luttinger model and in systems of spin-less fermions and finds universal long time behaviour.
Ref.~\cite{Queisser2013} discusses the equilibration in the Bose-Hubbard and Fermi-Hubbard models following a global quench, employing an expansion in large coordination numbers.
Ref.~\cite{1108.2703v1} uses quantum Monte Carlo techniques to investigates the equilibration dynamics after switching off the coupling of the Hubbard-model starting form a thermal state.
Ref.~\cite{Manmana2009} focuses on quenches in 1D spin-less fermions.
Ballistic transport is observed, except if the quench is from a metallic state deep into the insulating phase, in which case local domains form reminiscent of the picture provided by the Kibble-Zurek mechanism \cite{Zurek_review,KZP}.

\subsection{Local and geometric quenches}
By no means are the sudden global quenches the only type of non-equilibrium situation considered in the literature.
\emph{Local quenches} are also frequently investigated \cite{Calabrese2007a,LocalQuenches2,PhysRevE.88.032913,Jurcevic2014,1004.2232v1}, as well as \emph{geometric quenches} \cite{GeometricCaux,GeometricQuench,Rigol08}, in which the system's response to a sudden alteration of its geometry is being studied.
Ref.~\cite{GeometricCaux} considers general geometric quenches between systems integrable by means of the Algebraic Bethe ansatz and how it allows to compute overlaps between eigenstates of the old and new Hamiltonian.
Ref.~\cite{GeometricQuench} investigates the dynamics of entanglement and equilibration after a geometric quench in the anisotropic spin-1/2 Heisenberg chain.
Ref.~\cite{Rigol08} studies the relaxation after a "valve" between to previously isolated systems of hard-core bosons is opened.
Refs.~\cite{Calabrese2007a,LocalQuenches2} develop a quantum field theory approach to describe the growth of entanglement and the dynamics of correlation functions after a quench during which two uncoupled halves, initially in their ground state, of a translation invariant system are joined together.
Refs.~\cite{1004.2232v1,Ponomarev2012} consider a related scenario in which two systems initially at different temperature are joined together and the authors also find equilibration.
Also related is the series of works Refs.~\cite{Bernard2014,Bernard2012,Bhaseen2015,Doyon2015,DeLuca2013} in which properties of the non-equilibrium steady state are studied that can emerge in such a situation if the two systems are infinitely large. In particular, the steady state energy current and its fluctuations, as well as the time dependence of local observables are calculated. 
Refs.~\cite{Bernard2015,Castro-Alvaredo2014} consider 
the non-equilibrium dynamics emerging from bringing two
systems together in a language of conformal and relativistic quantum field theory.
Ref.~\cite{PhysRevE.88.032913} presents a detailed numerical study of the time evolution under various, integrable and non-integrable, translation invariant spin Hamiltonians for several types of initial states, including domain wall states with all spin in the left half up and all in the right half down. This situation can be thought of as a local quench.
A similar setting is analysed in Ref.~\cite{Gritsev10} for XXZ chains, where equilibration is also found, albeit to a state that retains memory on its initial state.
Local quenches and the subsequent (quasi-particle) dynamics can now also be probed experimentally with impressive precision \cite{Jurcevic2014}.

\subsection{Entanglement dynamics}\label{sec:entanglement}
Early on, it has been realised that the light-cone like propagation of excitations following global quenches is accompanied by a growth of entanglement if the initial state has low entanglement or even is a product state \cite{CalabreseQuenchEntanglement,DeChiara2005,Calabrese2007a,Cramer2006,Eisert06,Bravyi06-1}. 
Entanglement is always defined with respect to a separation of the system into distinct, spatially separate subsystems.
Bipartite entanglement of pure states with respect to a decomposition $\Vset = X \union \compl X$ for some subsystem $X \subset \Vset$ can be measured in terms of the \emph{entanglement entropy} defined for any state $\rho$ as
\begin{equation}
  E_X(\rho) = \Svn(\rho^X),
\end{equation}
where $\Svn$ denotes the von Neumann entropy.
In a precise sense, this is the ``unique measure of entanglement'' in this pure bipartite setting \cite{Plenio07} and we call a pure state $\rho$ \emph{uncorrelated} (with respect to a decomposition $\Vset = X \union \compl X$) if $E_X(\rho) = 0$.
Other \emph{Renyi entropies}
\begin{equation}
  E_X^p(\rho) = \frac{1}{1-p}\log_2(\tr( (\rho^X)^p)
\end{equation}
for $p>0$, however, also play a role when it comes to questions of approximations of states with tensor network methods \cite{Approximation,Ge}.
Lieb-Robinson bounds imply an affine upper bound for the entanglement entropy following global quenches:

\begin{observation}[Entanglement growth]
For any locally interacting system with Hamiltonian $\H$ and any $X \subseteq \Vset$ it holds that
\begin{equation}
  E_X(\rho(t)) - E_X(\rho(0)) \in \landauO(t) .
\end{equation}
Conversely, there exist pairs of (translation invariant) locally interacting Hamiltonians $\H$ and pure uncorrelated initial states $\rho(0)$ such that
\begin{equation}
  E_X(\rho(t)) \in \landauOmega(t) .
\end{equation}
Moreover, if $\rho(0)$ is uncorrelated, then for $t$ fixed and any family of subsystems $X \subset \Vset$ of increasing size the entanglement entropy scales only like the boundary of these subsets in the sense that
\begin{equation}
  E_X(\rho(t)) \in \landauO(|X_\partial|) ,
\end{equation}
where $X_\partial$ is the set of elements of the edge set $\Eset$ of the Hamiltonian that overlap with both $X$ and $\compl X$.
\end{observation}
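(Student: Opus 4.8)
The plan is to prove the first and third claims together --- both reduce to a single bound on the \emph{rate} at which entanglement can be produced across the cut $\Vset = X \dunion \compl X$ --- and to establish the second claim by exhibiting one explicit exactly solvable model.

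For the upper bounds, decompose the local Hamiltonian as $\H = \H_X + \H_{\compl X} + \H_\partial$, where $\H_\partial \coloneqq \sum_{Y \in X_\partial} \H_Y$ collects exactly the interaction terms whose support overlaps both $X$ and $\compl X$; by the locality assumption each such $\H_Y$ is supported on $\landauO(1)$ sites and, assuming as usual that $X_\partial$ is finite, $\H_\partial$ is bounded. Differentiating $E_X(\rho(t)) = \Svn(\rho^X(t))$ along the trajectory $\frac{\del}{\del t}\rho(t) = -\iu\,[\H,\rho(t)]$, the contributions of $\H_X$ and $\H_{\compl X}$ vanish after the partial trace (using $[\rho^X(t),\log_2\rho^X(t)] = 0$), leaving
\[ \frac{\del}{\del t}\Svn(\rho^X(t)) = \sum_{Y \in X_\partial}\Gamma_{\H_Y}(\rho(t)), \]
with $\Gamma_h(\sigma)$ the instantaneous entangling rate of a Hamiltonian $h$ on a state $\sigma$ across the $X|\compl X$ cut. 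The decisive input is then the \emph{small incremental entangling} bound \cite{VanAcoleyen2013}, $|\Gamma_{\H_Y}(\sigma)| \leq c\,\norm[\infty]{\H_Y}\,\log_2 d_Y$, where $d_Y$ is a local Hilbert-space dimension attached to $\supp(\H_Y)$ and hence $\landauO(1)$. Summing over $Y \in X_\partial$ and integrating in time gives $|E_X(\rho(t)) - E_X(\rho(0))| \leq \kappa_X\,|t|$ with $\kappa_X \coloneqq c\sum_{Y\in X_\partial}\norm[\infty]{\H_Y}\log_2 d_Y \in \landauO(|X_\partial|)$. For fixed $X$ this is the first claim, $E_X(\rho(t)) - E_X(\rho(0)) \in \landauO(t)$; and if $\rho(0)$ is uncorrelated, so that $E_X(\rho(0)) = 0$, then at fixed $t$ it is the third claim, $E_X(\rho(t)) \in \landauO(|X_\partial|)$. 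An alternative route to the same conclusion runs through the Lieb--Robinson bound of Section~\ref{sec:liebrobinson}, localising $U(t)$ near the boundary up to exponentially small error, as in \cite{Bravyi06-1,Eisert06}; for bosonic systems, where $\H_\partial$ need not be bounded, one instead estimates the entanglement directly in the covariance-matrix formalism.

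For the converse it suffices to produce a single translation-invariant locally interacting Hamiltonian and a single uncorrelated pure initial state with $E_X(\rho(t)) \in \landauOmega(t)$. A natural choice is a quasi-free model --- a one-dimensional chain of harmonic oscillators (or of free fermions) with nearest-neighbour quadratic couplings of the type \eqref{eq:quadratichamiltonian} --- quenched from a product state such as the ground state of the fully decoupled chain. Under the resulting Gaussian dynamics $\rho^X(t)$, and hence $\Svn(\rho^X(t))$, is computable in closed form from the time-evolved covariance matrix, and the ballistic spreading of correlations realises the Calabrese--Cardy quasiparticle picture: entangled quasiparticle pairs created at $t=0$ separate at the group velocity, so that for $X$ a half-line the number of pairs straddling the cut --- and with it $\Svn(\rho^X(t))$ --- grows linearly in $t$ up to the recurrence time. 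Explicit constructions of exactly this kind are given in \cite{Eisert06,CalabreseQuenchEntanglement}.

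I expect two points to carry the real weight. In the upper bounds, the passage from $\frac{\del}{\del t}\Svn(\rho^X(t)) = \sum_Y \Gamma_{\H_Y}(\rho(t))$ to a finite bound is delicate precisely when $\rho^X(t)$ is rank-deficient, since $\log_2\rho^X(t)$ is then unbounded; the whole force of the small-incremental-entangling theorem is that the rate nevertheless stays finite and is controlled by the \emph{local} dimension $d_Y$ rather than by $\dim(\mcH_X)$, and the derivative manipulation itself must be justified by a standard regularisation argument. In the converse, the quasiparticle count is only a heuristic in general, so one genuinely has to commit to an exactly solvable model and verify the linear lower bound through an honest covariance-matrix computation.
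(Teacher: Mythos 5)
Your argument is essentially the one the paper relies on through its cited references: the upper bounds (claims one and three) follow from the small–incremental–entangling bound of \cite{VanAcoleyen2013}, or alternatively from the Lieb--Robinson localisation argument of \cite{Eisert06,Bravyi06-1}, exactly as you lay out; and the linear lower bound (claim two) is obtained from a translation-invariant quasi-free chain quenched from a product state and verified by an explicit covariance-matrix computation, which is precisely the route of \cite{Cramer2006} (you cite \cite{Eisert06,CalabreseQuenchEntanglement}, which contain equivalent constructions). The decomposition $\H = \H_X + \H_{\compl X} + \H_\partial$, the cancellation of the $\H_X$ and $\H_{\compl X}$ contributions after the partial trace, the per-boundary-term SIE bound with $\landauO(1)$ local dimension, the integration to $\kappa_X |t|$ with $\kappa_X \in \landauO(|X_\partial|)$, and the use of $E_X(\rho(0))=0$ for the area-law claim are all correct, and you have rightly flagged the two genuine technical points (regularisation of $\frac{\del}{\del t}\Svn$ at rank-deficient $\rho^X$, and that the quasiparticle picture is only a heuristic unless backed by an honest computation in an exactly solvable model).
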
 
The first and last statements have been proven in Refs.~\cite{Eisert06,Bravyi06-1} and improved in Ref.~\cite{VanAcoleyen2013}. The second statement follows from Refs.~\cite{Cramer2006,Schuch08}.
The intuition behind these statements is clear: 
Following the ballistic propagation of quasi-particles, at most a linear growth of 
entanglement over any finite cut can be observed.
This indeed follows from suitable Lieb-Robinson bounds.
Similarly, such a bound can be saturated for quadratic models, so it is tight in this sense.
At the same time, for each fixed time $t>0$, the entanglement entropy follows what is called an \emph{area law} \cite{Eisert2008} for the entanglement entropy, in that the entanglement scales at most like the boundary area $|X_\partial|$ of the subset $X$.

There is a large body of literature that corroborates the intuition behind this theorem \cite{Calabrese2007a,DeChiara2005,Eisert06,VanAcoleyen2013,Jurcevic2014,Kollath08}.
The early Ref.~\cite{DeChiara2005}, for example, studies the spreading of 
entanglement after quenches in Heisenberg spin chains, Ref.~\cite{Cramer2006} 
studies quadratic models, Ref.~\cite{Calabrese2007a,CalabreseQuenchEntanglement} specifically develops the quasi-particle picture.

The above notion of entanglement is not the only one that can and has been meaningfully
considered. The correlations present in states arising from out of equilibrium dynamics can also be captured in terms of bipartite entanglement of separated subsystems which jointly still form a small subset of the entire lattice.
Since the state under consideration is then no longer pure, other \emph{measures of entanglement} \cite{Entanglement,Plenio07} have to be employed, such as the \emph{entanglement of formation} \cite{BennettBig,Wootters} or the \emph{negativity} \cite{VidalNegativity,PhD,OldNegativity}.
Such entanglement measures have defining features of being monotone under \emph{local operations and classical communication} (LOCC) and vanish on separable states as discussed in Eq.~\eqref{Separable}).

While it is known from the \emph{monogamy of entanglement} \cite{Seevick10} that at any time most sites of a lattice are not entangled, in the course of entanglement dynamics, remote sites generically get entangled at suitable times \cite{Nano,Sengupta,Detecting}.
The intuition is that ``wave fronts'' of entanglement propagate ballistically through the lattice. Such bipartite entanglement has already been experimentally observed in systems of trapped ions \cite{Jurcevic2014}.

\subsection{Ramps, slow quenches and the dynamics of quantum phase transitions}
Many works discuss also \emph{non-instantaneous ramps} and other instances of so-called \emph{slow quenches}.
In this context, the dynamics under a family of locally interacting Hamiltonians
\begin{equation}
  \H(t) = \H_0 + f(t)\, V
\end{equation}
is usually studied, with $f \oftype [0,\infty[ \to \R$ being a suitably slowly varying function and $\rho(0)$ the ground state of $\H(0)=\H_0$.
Such 
a situation is specifically interesting when at an instance in time $t_0$ the Hamiltonian $\H(t_0)$ undergoes a second order quantum phase transition.
If the change of the Hamiltonian in time is sufficiently slow, far away from the phase transition the adiabatic theorem will be applicable and the state $\rho(t)$ is then well approximated by 
the instantaneous ground state of the Hamiltonian at that given time $t$.
However, in the vicinity of the critical point, the Hamiltonian gap will close down, and no 
slow change of the Hamiltonian will be sufficiently slow such that the adiabatic theorem could still capture the situation at hand.
This setting hence allows to explore the \emph{dynamics of quantum phase transitions}.
This review cannot give justice to this topic, which can be considered a research field in its own right.
We still attempt to give a short sketch of important ideas.

The \emph{Kibble-Zurek mechanism} provides an intuitive understanding of the 
phenomenology of such slow quenches across critical points \cite{Zurek_review,KZP}.
It is specifically well understood for thermal phase transitions, a setting in which it has also been experimentally tested \cite{Kibble_Vortex,Ulm2013,KZM_Ion2}.
For quantum phase transitions similar scaling laws can be derived in the limit of infinitely slow ramps, invoking adiabatic perturbation theory and universality arguments but the situation is more involved \cite{Polkovnikov08-1,Polkovnikov2005,Sen2008,Barankov2008}.
Ref.~\cite{Schutzhold2006}, considers an exponential ramp from the superfluid phase into the insulating one and calculates the time dependence of various experimentally relevant quantities for this case, and Ref.~\cite{Fischer2008} treats further analytically solvable ramps.

Ref.~\cite{Chen_Slow} experimentally probes the Mott-insulator to superfluid transition in the Bose-Hubbard model by slowly decreasing the ratio of the interaction energy to the hopping strength.
Ref.~\cite{Braun2014a} studies the Mott insulator to superfluid quantum phase transition experimentally with ultra cold atoms and compares the findings to extensive numerics for the Bose-Hubbard model, using exact diagonalisation and tensor network techniques.
Also in the Bose-Hubbard model Refs.~\cite{Bernier2012,Bernier2011} study the formation and melting of Mott-insulating domains during ramps with tDMRG methods.
Ref.~\cite{Haque_Crossover} analytically investigates finite time ramps of the inter-mode interaction strength in a Luttinger liquid model.
The series of works Refs.~\cite{Uhlmann2007,Uhlmann2010,Uhlmann2010a} investigates the formation of topological defects after quenches that involve the breaking of a continuous rotational symmetry.
For reviews on this field --- to the extent it is understood to date --- see Refs.~\cite{Zurek_review,Polkovnikov08-1,Campo2014}.

\section{Quantum maximum entropy principles}
\label{sec:aquantummaximumentropyprinciple}
In this section we connect the pure state statistical mechanics framework with the canonical approach to justify the ensembles of statistical physics by means of a maximum entropy principle.
We first show that the apparent equilibrium state in systems that equilibrate on average can always be defined in terms of an entropy maximisation under the constraint that the expectation values of all conserved quantities are held fix.
Then we will discuss the possibly surprising fact that in many cases, in particular following quenches of sufficiently complex quantum systems, equilibrium expectation values of many relevant observables are very well approximated by those in a state that is the entropy maximiser given a much smaller set of constants of motion --- a so-called \emph{generalised Gibbs ensemble} (GGE).

\subsection{A maximum entropy principle based on all constants of motion}
We have seen in Section~\ref{sec:equlibrationintheweaksense} that if the expectation value of an observable or the reduced state of a subsystem equilibrates on average, then they necessarily equilibrate to their expectation value in, or the reduced state of, the time averaged/de-phased state $\omega = \taverage{\rho} = \$_\H(\rho(0))$.
The state $\omega$ hence encodes the information necessary to describe the equilibrium properties of such a system.
It turns out that it is also the maximum entropy state given all constants of motion:

\begin{theorem}[Maximum entropy principle \cite{PhysRevLett.10-6}] \label{thm:maximumentropyprinciple}
  Consider the time evolution $\rho\oftype\R\to\Qst(\mcH)$ of a quantum system with Hilbert space $\mcH$ and Hamiltonian $\H \in \Obs(\mcH)$.
  If the expectation value of an operator $A \in \Bop(\mcH)$ equilibrates on average, then it equilibrates towards its time average, given by
  \begin{equation}
    \taverage{\Tr(A\,\rho)} = \Tr(A\,\omega) ,
  \end{equation}
  where $\omega = \taverage{\rho}$ is the unique quantum state that maximises the von Neumann entropy $\Svn$, given all conserved quantities.
\end{theorem}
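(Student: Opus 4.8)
The statement splits into three claims: (i) an observable that equilibrates on average can only equilibrate to its infinite-time average; (ii) that time average equals $\ex A\omega$ with $\omega=\$_\H(\rho(0))$; and (iii) $\omega$ is the unique maximiser of the von Neumann entropy among all states sharing the expectation value of every conserved quantity with $\rho(0)$. For (i) I would just unfold the definition of equilibration on average: setting $\bar a\coloneqq\taverage{\ex A{\rho(t)}}$ (the limit exists because $t\mapsto\ex A{\rho(t)}$ is a trigonometric polynomial, hence Besicovitch almost-periodic, cf.\ \texteqref{eq:generalinfinitetimeaverage}), one has for every $c\in\R$ the elementary identity $\taverage{(\ex A{\rho(t)}-c)^2}=\taverage{(\ex A{\rho(t)}-\bar a)^2}+(\bar a-c)^2$, so $\bar a$ is the unique constant minimising the mean-square deviation and therefore the only admissible equilibrium value. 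Claim (ii) then follows by pulling the bounded linear functional $\Tr(A\,\argdot)$ through the time average and using $\taverage\rho=\$_\H(\rho(0))=\omega$ from Section~\ref{sec:preliminaries}: $\bar a=\taverage{\Tr(A\,\rho(t))}=\Tr\big(A\,\taverage\rho\big)=\Tr(A\,\omega)=\ex A\omega$.

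The substance is in (iii), and the plan is first to pin down the feasible set exactly. \emph{Claim:} a state $\rho'\in\Qst(\mcH)$ satisfies $\Tr(C\,\rho')=\Tr(C\,\omega)$ for every conserved quantity $C$ (that is, every $C\in\Obs(\mcH)$ with $[\H,C]=0$) if and only if $\$_\H(\rho')=\omega$. For the ``if'' direction, such a $C$ is block-diagonal, $C=\sum_k\Pi_k C\,\Pi_k$, so by cyclicity $\Tr(C\,\rho')=\sum_k\Tr(C\,\Pi_k\rho'\Pi_k)=\Tr(C\,\$_\H(\rho'))=\Tr(C\,\omega)$. For the ``only if'' direction, fix $k$ and an arbitrary $B\in\Obs(\mcH)$; the operator $C\coloneqq\Pi_k B\,\Pi_k$ is self-adjoint and satisfies $\H C=E_k C=C\H$, hence is a conserved quantity, so the hypothesis forces $\Tr(B\,\Pi_k\rho'\Pi_k)=\Tr(B\,\Pi_k\omega\,\Pi_k)$; as $B$ ranges over all observables this gives $\Pi_k\rho'\Pi_k=\Pi_k\omega\,\Pi_k$, and summing over $k$ yields $\$_\H(\rho')=\$_\H(\omega)=\omega$ by idempotence of the de-phasing map. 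Thus fixing all constants of motion is \emph{equivalent} to fixing the full within-eigenspace blocks of the state, which is precisely the data retained by $\$_\H$ (and reduces, in the non-degenerate case, to fixing the level populations $p_k$). Note that $\omega$ itself is feasible, since $\$_\H(\omega)=\omega$.

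It then remains to show that $\Svn$ is uniquely maximised at $\rho'=\omega$ over the set of feasible states $\{\rho'\in\Qst(\mcH):\$_\H(\rho')=\omega\}$. Since $\omega$ commutes with each $\Pi_k$, so does $\log_2\omega$, which is therefore block-diagonal; hence for any feasible $\rho'$ one has $\Tr(\rho'\log_2\omega)=\Tr(\$_\H(\rho')\log_2\omega)=\Tr(\omega\log_2\omega)$, and $\$_\H(\rho')=\omega$ also forces $\supp(\rho')\subseteq\supp(\omega)$, so that
\[
  \Sr{\rho'}{\omega}=\Tr(\rho'\log_2\rho')-\Tr(\rho'\log_2\omega)=\Svn(\omega)-\Svn(\rho') .
\]
Non-negativity of the quantum relative entropy (Klein's inequality) now gives $\Svn(\rho')\le\Svn(\omega)$, with equality if and only if $\rho'=\omega$. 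Combined with the previous paragraph this shows that $\omega$ is the unique state of maximal von Neumann entropy consistent with all constants of motion, which together with (i)--(ii) proves the theorem. This last computation is the standard ``pinching does not decrease entropy'' argument and is the exact analogue of the maximum-entropy characterisation of the Gibbs state recalled in Section~\ref{sec:preliminaries}, there with the single constraint given by $\H$.

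I expect no deep obstacle here. The one point genuinely requiring care is the ``only if'' direction of the feasible-set claim: recognising that constraining \emph{all} conserved quantities fixes not merely the populations $\Tr(\Pi_k\rho(0))$ but the entire operators $\Pi_k\rho'\Pi_k$, so that the feasible set is exactly the fibre of $\$_\H$ over $\omega$. Once that identification is in place, the entropy maximisation is routine.
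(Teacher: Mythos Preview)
Your proof is correct and follows essentially the same architecture as the paper's: identify the feasible set as the fibre of the de-phasing map $\$_\H$ over $\omega$, then show $\omega$ maximises entropy on that fibre. The one minor difference is in the entropy step: the paper invokes the pinching inequality (entropy is non-decreasing under $\$_\H$, citing \cite[Problem II.5.5]{bhatia}) for the bound and then a separate strict-concavity/convex-optimisation argument for uniqueness, whereas you obtain both at once from the identity $\Sr{\rho'}{\omega}=\Svn(\omega)-\Svn(\rho')$ and Klein's inequality. Your packaging is arguably tidier, but the mathematical content is the same; you are also more explicit than the paper in proving the ``only if'' direction of the feasible-set characterisation, which the paper merely asserts.
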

\begin{proof}
  That the equilibrium value of the expectation value of $A$ is given by $\Tr(A\,\omega)$ follows directly from the definition of equilibration on average.
  The time averaged state $\omega$ is equal to the de-phased initial state 
  \begin{equation}
    \$_H(\rho(0)) = \sum_{k=1}^{d'} \Pi_k\,\rho(0)\,\Pi_k .
  \end{equation}
  The de-phasing map $\$_\H$ is a so-called \emph{pinching} and the von Neumann entropy is non-decreasing under pinchings \cite[Problem II.5.5]{bhatia} (this is a generalisation of \emph{Schur's theorem}).
  Furthermore, two states $\sigma_1,\sigma_2 \in \Qst(\mcH)$ yield the same expectation values for all conserved quantities, i.e., all $A \in\Obs(\mcH)$ that commute with the Hamiltonian $[A,\H] = 0$, if and only if $\$_\H(\sigma_1) = \$_\H(\sigma_2)$.
  This already shows that $\omega$ has the maximal achievable von Neumann entropy given all conserved quantities (see also figure~\ref{fig:maximumentropyprinciple}).
  It remains to show uniqueness.
  Let $\Basis$ be a basis of the linear span of all $A \in\Obs(\mcH)$ with $[A,\H] = 0$.
  The objective function of the maximisation problem, namely the von Neumann entropy, is a strictly concave function $\Svn\oftype\Qst(\mcH) \to \R$ and it is optimised over all $\sigma \in \Qst(\mcH)$ under the finite number of affine equality constrains $\forall B \in \Basis \itholds \Tr(B\,\sigma) = \Tr(B\,\rho(0))$.
  Under these conditions uniqueness follows from a standard result from convex optimisation \cite{Boyd2004}.
\end{proof}

\begin{figure}[tb]
  \centering
%  \begin{subfigure}{0.3\textwidth}
    \begin{tikzpicture}
      \begin{scope}[scale=0.45,transform shape]
      \path[use as bounding box] (-1.5,-0.3) rectangle (8.7,-8.7);
      \foreach \i/\ri/\phii/\omegai in {1/0.137264/-159.008/1,2/0.0919508/93.4659/1,3/0.204039/-110.476/1,4/0.14146/28.212/2,5/0.116946/145.35/4,6/0.128177/59.1885/4,7/0.113461/-153.794/5,8/0.0667032/-167.439/6}
      \foreach \j/\rj/\phij/\omegaj in {1/0.137264/-159.008/1,2/0.0919508/93.4659/1,3/0.204039/-110.476/1,4/0.14146/28.212/2,5/0.116946/145.35/4,6/0.128177/59.1885/4,7/0.113461/-153.794/5,8/0.0667032/-167.439/6}
      {
        \draw[very thin,-latex] (\i,-\j) -- +(\phii-\phij:\ri*\rj*12); 
        % \draw<4>[thick,-latex] (\i,-\j) -- +({\phii-\phij+(\omegai-\omegaj)*60}:\ri*\rj*12); 
        \ifnum \omegai=\omegaj
        % \draw<5->[thick,-latex] (\i,-\j) -- +(\phii-\phij:\ri*\rj*12);
        \fi
        % \draw<4>[thick] (\phii-\phij:\ri*\rj*12)+(\i,-\j) arc (\phii-\phij:{\phii-\phij+(\omegai-\omegaj)*60}:\ri*\rj*12)+(\i,-\j);
        \node[minimum size=1cm] at (\i,-\j) (n\i\j) {};
      }
      
      \draw[thick] (n11.north west) to[bend right=6] node[midway,anchor=base east] {\Large $\rho(0)=$} (n18.south west);
      \draw[thick] (n88.south east) to [bend right=6] (n81.north east);
      \fill[opacity=0.2,niceblue] (n11.north west) rectangle (n33.south east);
      \fill[opacity=0.2,niceblue] (n44.north west) rectangle (n44.south east);
      \fill[opacity=0.2,niceblue] (n55.north west) rectangle (n66.south east);
      \fill[opacity=0.2,niceblue] (n77.north west) rectangle (n77.south east);
      \fill[opacity=0.2,niceblue] (n88.north west) rectangle (n88.south east);
    \end{scope}
      \node [below=5pt of n48] {(a)};
    \end{tikzpicture}
    \hfill
%    \caption{}
%    \label{fig:maximumentropyprinciplea}
%  \end{subfigure}\hfill
%  \begin{subfigure}{0.3\textwidth}
    \begin{tikzpicture}
      \begin{scope}[scale=0.45,transform shape]
      \path[use as bounding box] (-1.5,-0.3) rectangle (8.7,-8.7);
      \foreach \i/\ri/\phii/\omegai in {1/0.137264/-159.008/1,2/0.0919508/93.4659/1,3/0.204039/-110.476/1,4/0.14146/28.212/2,5/0.116946/145.35/4,6/0.128177/59.1885/4,7/0.113461/-153.794/5,8/0.0667032/-167.439/6}
      \foreach \j/\rj/\phij/\omegaj in {1/0.137264/-159.008/1,2/0.0919508/93.4659/1,3/0.204039/-110.476/1,4/0.14146/28.212/2,5/0.116946/145.35/4,6/0.128177/59.1885/4,7/0.113461/-153.794/5,8/0.0667032/-167.439/6}
      {
        % \draw[very thin,-latex] (\i,-\j) -- +(\phii-\phij:\ri*\rj*12); 
        \draw[very thin,-latex] (\i,-\j) -- +({\phii-\phij+(\omegai-\omegaj)*60}:\ri*\rj*12); 
        \ifnum \omegai=\omegaj
        \draw[very thin,-latex] (\i,-\j) -- +(\phii-\phij:\ri*\rj*12);
        \fi
        \draw[very thin] (\phii-\phij:\ri*\rj*12)+(\i,-\j) arc (\phii-\phij:{\phii-\phij+(\omegai-\omegaj)*60}:\ri*\rj*12)+(\i,-\j);
        \node[minimum size=1cm] at (\i,-\j) (n\i\j) {};
      }
      \draw[thick] (n11.north west) to[bend right=6] node[midway,anchor=base east] {\Large $\rho(t)=$} (n18.south west);
      \draw[thick] (n88.south east) to [bend right=6] (n81.north east);
      \fill[opacity=0.2,niceblue] (n11.north west) rectangle (n33.south east);
      \fill[opacity=0.2,niceblue] (n44.north west) rectangle (n44.south east);
      \fill[opacity=0.2,niceblue] (n55.north west) rectangle (n66.south east);
      \fill[opacity=0.2,niceblue] (n77.north west) rectangle (n77.south east);
      \fill[opacity=0.2,niceblue] (n88.north west) rectangle (n88.south east);
    \end{scope}
      \node [below=5pt of n48] {(b)};
    \end{tikzpicture}
    \hfill
%    \caption{}
%    \label{fig:maximumentropyprincipleb}
%  \end{subfigure}\hfill
%  \begin{subfigure}{0.3\textwidth}
    \begin{tikzpicture}
      \begin{scope}[scale=0.45,transform shape]        
      \path[use as bounding box] (-1.5,-0.3) rectangle (8.7,-8.7);
      \foreach \i/\ri/\phii/\omegai in {1/0.137264/-159.008/1,2/0.0919508/93.4659/1,3/0.204039/-110.476/1,4/0.14146/28.212/2,5/0.116946/145.35/4,6/0.128177/59.1885/4,7/0.113461/-153.794/5,8/0.0667032/-167.439/6}
      \foreach \j/\rj/\phij/\omegaj in {1/0.137264/-159.008/1,2/0.0919508/93.4659/1,3/0.204039/-110.476/1,4/0.14146/28.212/2,5/0.116946/145.35/4,6/0.128177/59.1885/4,7/0.113461/-153.794/5,8/0.0667032/-167.439/6}
      {
        % \draw[very thin,-latex] (\i,-\j) -- +(\phii-\phij:\ri*\rj*12); 
        % \draw[very thin,-latex] (\i,-\j) -- +({\phii-\phij+(\omegai-\omegaj)*60}:\ri*\rj*12); 
        \ifnum \omegai=\omegaj
        \draw[very thin,-latex] (\i,-\j) -- +(\phii-\phij:\ri*\rj*12);
        \fi
        % \draw[very thin] (\phii-\phij:\ri*\rj*12)+(\i,-\j) arc (\phii-\phij:{\phii-\phij+(\omegai-\omegaj)*60}:\ri*\rj*12)+(\i,-\j);
        \node[minimum size=1cm] at (\i,-\j) (n\i\j) {};
      }
      
      \draw[thick] (n11.north west) to[bend right=6] node[midway,anchor=base east] {\Large $\taverage{\rho}=$
      } (n18.south west);
      \draw[thick] (n88.south east) to [bend right=6] (n81.north east);
      \fill[opacity=0.2,niceblue] (n11.north west) rectangle (n33.south east);
      \fill[opacity=0.2,niceblue] (n44.north west) rectangle (n44.south east);
      \fill[opacity=0.2,niceblue] (n55.north west) rectangle (n66.south east);
      \fill[opacity=0.2,niceblue] (n77.north west) rectangle (n77.south east);
      \fill[opacity=0.2,niceblue] (n88.north west) rectangle (n88.south east);
    \end{scope}
      \node [below=5pt of n48] {(c)};
    \end{tikzpicture}
%    \caption{}
%    \label{fig:maximumentropyprinciplec}
%  \end{subfigure}
  \caption{(Reproduction from Ref.~\cite{Gogolin2014}) Dephasing implies a maximum entropy principle.
A quantum system started in an initial state $\rho(0)$ represented in panel (a) in an eigenbasis of its Hamiltonian $\H$ with degenerate subspaces corresponding to the squares, evolves (b) in a way such that time averaging its evolution (c) has the same effect as de-phasing the initial state with respect to $\H$.
The time averaged state $\taverage{\rho}$ is the state that maximises the von Neumann entropy under the constraint that all conserved quantities give the same expectation value as in the initial state $\rho(0)$.}
  \label{fig:maximumentropyprinciple}
\end{figure}
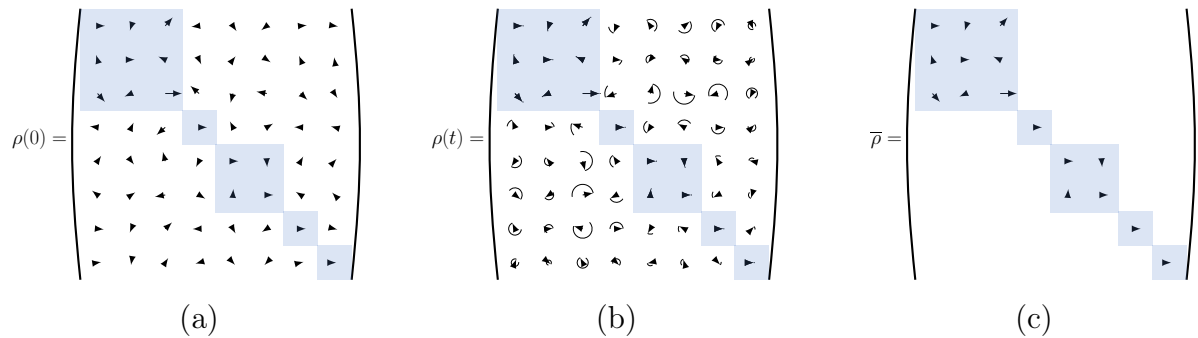
Theorem~\ref{thm:maximumentropyprinciple} is very reminiscent of Jaynes' maximum entropy principle.
It is in any case remarkable that it is not, as in Jaynes' approach, a postulate motivated by a subjective interpretation of probability that is taken as a starting point of a statistical theory, but a consequence of purely quantum mechanical considerations.
The unitary quantum dynamics of closed systems alone gives rise to a maximum entropy principle.

Theorem~\ref{thm:maximumentropyprinciple} is at the same time not the end of the story.
It says that the equilibrium expectation values of all observables that equilibrate on average can be calculated from the state that maximises the von Neumann entropy given \emph{all} conserved quantities (compare also Ref.~\cite{a726f01}).
The number of \emph{all} linearly independent conserved quantities of a composite quantum system, however, increases exponentially with the number of constituents, and finding each of them individually usually again requires resources that scale exponentially with the system size.
The predictive power of Theorem~\ref{thm:maximumentropyprinciple} is hence rather limited.

\subsection{Generalised Gibbs ensembles}
\label{sec:gge}
In light of the insight discussed in the last section an interesting question to ask is \cite{Sirker2013}: ``How many and which conserved quantities are actually relevant? Can one reasonably describe the equilibrium state by maximising entropy holding the expectation values of a much smaller number of possibly even efficiently obtainable conserved quantities fixed?''

For many practically relevant locally interacting Hamiltonians, a number of conserved quantities can be identified that are local in some sense.
In fact, one of the possible definitions of quantum integrability (see also Section~\ref{sec:integrabilityinquantumtheory}) is that there exists a number of conserved quantities scaling linearly in the system size.
When such systems seemingly equilibrate to the time average state $\omega$ under unitary dynamics, this time average can not be expected to be described by a Gibbs ensemble.
The system may, however, still be reasonably expected to equilibrate to the maximum entropy state given these suitably local constants of motion (see Refs.~\cite{Polkovnikov11,Perk1977a,Girardeau1969} and the references therein).
Such a maximum entropy state is usually referred to as a \emph{generalised Gibbs ensemble} (GGE) \cite{Rigol07,Rigol08,PhysRevLett.10-5,CauxEssler,deNardis2,Cassidy11,1104.0154v1,PhysRevLett.111.100401}.

More precisely, a \emph{conserved quantity} is an observable $A \in \Obs(\mcH)$ for which $[\H,A] = 0$.
Moreover, an operator $A \in \Obs(\mcH)$ is called \emph{local} if it is only supported on some $X\subset \Vset$, with $|X|= K$, for some constant $K\in \N$ independent of the system size.

Often, one considers conserved quantities that are \emph{approximately local}.
This notion can be made precise as follows.
For simplicity, we restrict the attention to finite dimensional spin systems.
For a region $X\subset\Vset$ and $l\in\Z^+$ denote by $X_l$ the sets
\begin{equation}
  X_l \coloneqq \{x \in  \Vset \oftype \dist(x,X) < l \},
\end{equation}
of sites of the lattice that contain $X$ as well as all sites within distance at most $l$ from some site in $X$.
For regions $X\subset\Vset$ define the map $\Gamma_X \oftype \Bop(\mcH) \to \Bop(\mcH)$ which acts on operators $A \in \Bop(\mcH)$ as
\begin{equation}
  \Gamma_X(A) \coloneqq  \Tr_{\compl X} (A) \otimes \1_{\compl X} / d^{|\compl X|} \, .
\end{equation}
If $\supp(A) \subseteq X$, then $\Gamma_X(A) = A$, otherwise it die-cuts away everything of $A$ that acts non-trivially outside of $X$, i.e., for any $A \in \Bop(\mcH)$ and $X \subseteq \Vset$ it holds that $\supp(\Gamma_X(A)) \subseteq X$.

We call an operator \emph{$(g,K)$-local} for some function $g \oftype \Z^+ \to \R$ if there is some $X \subset \Vset$ of cardinality $K$ such that
\begin{equation}
  \norm[\infty]{A - \Gamma_{X_l} (A)} \leq \norm[\infty]{ A }\, g(l).
\end{equation}
Often, the function $g$ is taken to be exponentially decaying
\begin{equation}
  g(x) = c_1\, \e^{-c_2 x}
\end{equation}
for some suitable constants $c_1,c_2>0$.
We call operators $A \in \Bop(\mcH)$ that are $(g,K)$-local in this sense \emph{approximately local (with exponential tails)}.
Taking $g$ to be a step function gives the special case of (exactly) $K$-local operators.

Finally we define the notion of a \emph{quasi-local operator}.
Consider a translation invariant spin system, i.e., a system whose vertex set $\Vset$ corresponds to the sites of a regular lattice in a way such that its Hamiltonian $\H \coloneqq \sum_{X \in \, \Eset} \H_X$ is invariant under a set of translations $T_x \oftype \Vset \to \Vset$ of the lattice, indexed by the element $x \in \Vset$ that is mapped to the first element of $\Vset$, in the sense that for all $x \in \Vset$ it holds that $T_x(\H) \coloneqq \sum_{X \in \, \Eset} \H_{T_x(X)} = \H$.
Thereby, and slightly abusing notation, we have implicitly defined the action of a translation $T_x$ on operators $\Bop(\mcH)$ in an obvious way.
An operator $A \in \Bop(\mcH)$ is then called \emph{quasi-local} \cite{Prosen2014} if it can be written in the form
\begin{equation}
  A = \frac{1}{|\Vset|} \sum_{x \in \Vset} T_x(B) , 
\end{equation}
for some operator $B \in \Bop(\mcH)$ that is approximately local with exponential tails.

We now give a precise definition of the \emph{generalised Gibbs ensemble} in a way that seems useful in its own right:
\begin{definition}[Generalised Gibbs ensemble]
Let $K \in \Z^+$ and $g\oftype\Z^+ \to \R$ be a suitably decaying function.
Consider a system with locally interacting Hamiltonian $\H \in \Obs(\mcH)$ and a set $\mcC$ of constants of motion that are either local, approximately local, or quasi-local (as defined above), or the eigenmode occupations if the Hamiltonian is quadratic.
The \emph{generalised Gibbs ensemble} (GGE) of the system for a given initial state $\rho(0) \in \Qst(\mcH)$ is then defined as
\begin{equation}
  \sigma_\mathrm{GGE} \coloneqq \arg\max_{\sigma\in\Qst(\mcH)}  \left\{ \Svn(\sigma) \oftype \forall A\in \mcC \suchthat \Tr(A\,\sigma) = \Tr(A\,\rho(0)) \right\}.
\end{equation}
\end{definition}
Note that the resulting state is always of the form
\begin{equation} \label{eq:ggestate}
  \sigma_\mathrm{GGE} \propto \e^{-\sum_{A \in \mcC} \beta_A\,A}
\end{equation}
with generalised inverse temperatures (Lagrange multipliers) $\beta_A \in \R$, one for each conserved quantity $A \in \mcC_{|\Vset|}$.
It is often clear what suitable sets of constants of motion are, for example in quenches to integrable systems.
In situations in which these constants of motion are ambiguous, the GGE inherits the same ambiguity \cite{Wouters2014,Mestyan2014,Pozsgay2014,Sels2014}.
As pointed out in the previous subsection, if all constants of motion are taken into account, then all finite dimensional systems equilibrate on average to the respective GGE if they equilibrate on average at all \cite{PhysRevLett.10-6}.

There is a large body of evidence that integrable quantum models indeed equilibrate to a suitable generalised Gibbs ensemble in this sense.
The works Refs.~\cite{Rigol07,Rigol08,PhysRevLett.10-5} noticed the significance of the GGE early on.
Refs.~\cite{Iucci2009,Iucci2010,1104.0154v1} discuss the behaviour of the one- and two-point correlation functions after a quench in various models,
and it is found that the relaxation dynamics and equilibrium values can be well understood by means of a GGE.
The validity of the generalised Gibbs ensemble is also studied in Ref.~\cite{Cassidy11}, where in particular, a quench of one-dimensional hard-core bosons in an optical lattice is considered, and in Ref.~\cite{Fagotti2014} for a spin-1/2 Heisenberg XXZ chain with an anisotropy for various initial states,
followed up by Ref.~\cite{deNardis2}. Ref.~\cite{Cazalilla11} develops a general method based on Wick's theorem that allows to show that the GGE correctly captures the equilibrium properties in certain quenches in solvable systems starting in thermal states, such as certain Ising chains, the Luttinger model, 1D hard-core bosons, and XX spin chains, and explains how quasi-particle occupations suffice to construct the GGE in these systems.
Ref.~\cite{Sotiriadis2014} follows up on this by considering situations with interacting pre-quench Hamiltonians. 
In contrast, Ref.~\cite{Kormos2014a} considers a quench from quadratic to infinitely strongly interacting (hard core) bosons and obtains exact results on the time evolution and shows that the equilibrium state is described by a GGE; Ref.~\cite{Mazza2014} follows up on this by analytical investigating how a trapping potential influences the equilibration dynamics and equilibrium properties.
Ref.~\cite{Kormos2013} analyses the GGE in the Lieb-Liniger model, finds potentially observable differences between the GGE and the grand canonical ensemble, and highlights that the GGE can turn out to be ill-defined if the initial expectation values of conserved quantities diverge.
Ref.~\cite{deNardis1} discusses an interaction quench in the Lieb-Liniger model
where the GGE implementation is not well defined and where also the idea of the representative state has been tested for the first time
in a truly interacting model where no use of the conserved charges was made.
Ref.~\cite{CauxEssler} is generally concerned with integrable models.
In this work, the concept of a ``representative Hamiltonian eigenstate'' 
is introduced and it is shown how to construct it efficiently by means of a generalised thermodynamic Bethe ansatz.
For long times, the equilibrium values of local observables after a quench are given by this ``representative Hamiltonian eigenstate''.
A framework for geometric quenches in integrable models based on the algebraic Bethe ansatz is developed in Ref.~\cite{GeometricCaux}.
Refs.~\cite{Pozsgay2014,Wouters2014,Goldstein2014a} study the crucial question of which constants of motion need to be included in the GGE to make it correctly reproduce the post quench equilibrium state.
For certain quenches in an XXZ model and attractive Lieb-Liniger model they find that even if all known local conserved quantities are included the GGE it still fails to reproduce the equilibrium expectation values of even some local observables. 
It is conceivable that this is an indication that the model might have more (quasi-)local conserved quantities.
Finding them all is a non-trivial task \cite{Prosen2013}.
Trying to identify the physically most relevant observables, Ref.~\cite{Sels2014} proposes to rather try to find the best possible approximation to the dephased state $\omega$ with an ansatz of GGE form \eqref{eq:ggestate} with as few observables as possible and exemplifies that this allows for example to capture the dephased state of a locally interacting fermionic system much better than with the standard GGE containing only mode occupations.

In Ref.~\cite{1109.5904v1} the ability of the GGE to capture scenarios with repeated quenches is explored.
Ref.~\cite{Fioretto2010} collects evidence that the GGE can correctly capture the long time limit of the expectation values of local operators in certain integrable models.
Ref.~\cite{Fagotti2012} investigates differences between the infinite time time averaged equilibrium state of a translation invariant finite system and the infinite time limit of the state of the corresponding infinite system and whether their properties are captured by the GGE and which role is played by local conserved quantities common to both the initial and final Hamiltonian of a quench. 
Ref.~\cite{Velenik2014} proposes generalised form factors for the analysis of correlation functions in generalised Gibbs ensembles.

\section{Typicality}
\label{sec:typicality}
We have up to now managed to avoid the introduction of \emph{ensembles}, or as one could say not \emph{put any probabilities by hand}.
However, ensembles and averages with respect to certain postulated probability distribution do play important roles in statistical mechanics.
In this section we review some arguments that can be used in the framework of pure state statistical mechanics to justify their use. These approaches to explain the applicability of statistical mechanics are based on the insight that under certain assumptions most individual instances of a situation lead to a behaviour that is very similarly to the average or \emph{typical} behaviour.

\subsection{Typicality for uniformly random state vectors}
We begin by reviewing the most influential articles on the subject in historic order, starting with the works of Schr{\"o}dinger \cite{Schroedinger1927} and von Neumann \cite{vonneumann1929}.
We will then state, prove, and discuss a general typicality theorem for uniformly random quantum state vectors.
We finish this section with a discussion of typicality in other ensembles and the most common objections against typicality arguments.

The strategy behind justifications for the use of ensembles is to argue that most states drawn according to some reasonable measure from a set of physically reasonable states have approximately the same properties, so that for computations it is practical to work with an average state.
This average state can, for example, turn out to be the state corresponding to a micro-canonical or canonical ensemble.

The use of such \emph{typicality arguments} in the foundations of quantum statistical mechanics has a long history.
First considerations along these lines already appear in a work by Schr{\"o}dinger \cite{Schroedinger1927} from 1927.
After an introduction into (first order) perturbation theory and a discussion of resonance phenomena in quantum mechanics with a focus on energy exchange in weakly interacting systems he goes on discussing what he calls a ``statistical hypothesis''\footnote{German original \cite{Schroedinger1927}: \foreignlanguage{ngerman}{``Statistische Hypothese''.}}.
He aims at describing the long time behaviour of weakly interacting systems hoping to find thermodynamic behaviour.
More specifically, he considers two systems that each have a pair of energy levels with the same gap.
The coupling between them that mixes the levels is assumed to be weak.
As his previous calculation had shown that the time averaged state depends on the initial state, he proposes to make an assumption about the initial energy level populations.
His assumption is that the populations of the levels are proportional to the products of the degrees of degeneracy of the non-interacting levels.
By introducing an entropy like quantity, he argues that if one of the systems is sufficiently large, this implies that when populations of energy levels whose reduced states on the small system are almost identical are combined, then the combined populations satisfy a canonical distribution.
By this, he effectively argues that initial states fulfiling his ``statistical hypothesis'' have reduced states on the small system that are well described by thermal states.

The concept of \emph{typicality} is even more prominent in an article by von Neumann \cite{vonneumann1929} from 1929.
His work has been translated by Tumulka \cite{Tumulka2010} and reviewed and refined by Goldstein, Lebowitz, Mastrodonato, Tumulka, and Zanghi \cite{0907.0108v1}.
Von Neumann sets out to clarify ``how it can be that the known thermodynamic methods of statistical mechanics enable one to make statements about imperfectly (e.g., only macroscopically) known systems that in most cases are correct.''%
\footnote{German original \cite{vonneumann1929}: \foreignlanguage{ngerman}{``[\dots] wie es kommt, dass die bekannten thermodynamischen Methoden der statistischen Mechanik es erm{\"o}glichen, 
{\"u}ber mangelhaft (d.h.\ nur makroskopisch) bekannte Systeme meistens richtige Aussagen zu machen.''}}
He goes on to describe that this means to clarify ``first, how the strange, seemingly irreversible behaviour of entropy emerges, and second, why the statistical properties of the (fictitious) micro-canonical ensemble can be assumed for the imperfectly known (real) systems, and that these questions will be tackled with the methods of quantum mechanics.''%
\footnote{German original \cite{vonneumann1929}: \foreignlanguage{ngerman}{``Insbesondere, wie erstens das 
eigent{\"u}mliche, irreversibel scheinende Verhalten der Entropy zustande kommt, und warum zweitens die statistischen Eigenschaften der (fiktiven) mikrokanonischen Gesamtheit f{\"u}r 
das mangelhaft bekannte (wirkliche) System unterstellt werden d{\"u}rfen.
Und zwar sollen diese Fragen mit den Mitteln der Quantenmechanik angegriffen werden.''}}
He further argues that the phase space of classical systems \cite{Kinchin1949}, a central object in Gibbs' formulation of classical statistical mechanics \cite{Gibbs1902}, should, in the context of quantum mechanics, be replaced by a system of mutually commuting macroscopic observables that approximate the true non-commuting quantum observables.
Each sequence of eigenvalues of all macroscopic observables is associated with a \emph{phase cell}, i.e., the subspace spanned by the state vectors that all give precisely these measurement outcomes for the macroscopic measurements, but which are macroscopically indistinguishable from each other.
Following Ref.~\cite{0907.0108v1}, we denote the projector onto the phase cell characterised by the sequence $\nu$ of macroscopic measurement outcomes by $P_\nu$.
The approximation of the microscopic observables is to be taken coarse enough, such that, for example, the commuting macroscopic position and momentum observables do not get in conflict with Heisenberg's uncertainty relation for the true microscopic position and momentum operators.
One of von Neumann's main results is his ``quantum ergodic theorem''\footnote{German original \cite{vonneumann1929}: \foreignlanguage{ngerman}{``Ergodensatz [...] in der neuen Mechanik''.}}.
Essentially, he is able to show the following (for details see the original article and Theorem~1 in Ref.~\cite{0907.0108v1}):
Fix the dimensions $\rank(P_\nu)$ of the phase cells, if they are all neither too small not too large, then for any fixed Hamiltonian without degeneracies and non-degenerate energy gaps (see also Section~\ref{sec:equlibrationintheweaksense}), most decompositions of the Hilbert space into phase cells with these dimensions have the property that, for all initial states and most times during the evolution, the evolving state of the system and a suitable micro-canonical state are approximately macroscopically indistinguishable.
This property is called ``normal typicality'' by the authors of Ref.~\cite{0907.0108v1}.
The result can actually be slightly generalised (Theorems~2 and 3 in Ref.~\cite{0907.0108v1}) and von Neumann's theorem can be reformulated into a statement about all initial states, all decompositions into phase cells, and most Hamiltonians \cite{0907.0108v1}.

It is worth noting that the notion of typicality in Refs.~\cite{vonneumann1929,0907.0108v1} concerns not the quantum state (vector) but the set of macroscopic observables.
The statement holds for most decompositions of the Hilbert space in phase cells (with certain properties), or most Hamiltonians, but for \emph{all} initial states.
In the following, typicality will mostly concern the quantum state (vector), i.e., we will encounter statements that hold for \emph{most} (initial) state vectors.

Typicality arguments feature prominently in the PhD thesis of Lloyd \cite{slloydthesis} (see also Ref.~\cite{lloyed13}).
Essentially he shows that for any fixed observable, if quantum state vectors are drawn uniformly at random from a subspace of a Hilbert space (we will soon make this more precise), then the mean square deviation of the expectation value of the observable in such a random state from that in the corresponding micro-canonical state is inverse proportional to the dimension of the subspace.

In a similar spirit, the concept of typicality is a cornerstone of the arguments in the book by Gemmer \cite{Gemmer09}.
As a measure of typicality the authors propose the \emph{Hilbert space variance} and derive bounds for the Hilbert space variance of various physically interesting quantities, ranging from expectation values of observables and distances of reduced states to entropies and purities.
As in Ref.~\cite{slloydthesis} and the present work, the aim is to use typicality to justify the methods of statistical mechanics and thermodynamics.

Many of the ideas of the works summarised above have later reappeared in Ref.~\cite{Goldstein06} in which the term \emph{canonical typicality} was coined.
Ref.~\cite{Goldstein06} is intended to be a clarification and extension of the work of Schr{\"o}dinger \cite{Schroedinger1927}, which we discussed earlier, and remarks in his book \cite{Schroedinger1952} on statistical thermodynamics.
After a translation of the classical proof of the canonical ensemble from the micro-canonical one to the quantum setting, the authors argue that the law of large numbers implies that if a state vector is drawn uniformly at random from a high dimensional subspace, its reduced state on a small subsystem will look similar to the reduced state of the micro-canonical state corresponding to that subspace.

Before we go on, we must say more precisely what we mean by drawing a state vector \emph{uniformly at random} from a subspace.
Intuitively it should mean that any state from the subspace is as probable as any other.
Mathematically this is made precise in the notion of \emph{left/right invariant measures} \cite{halmos}.
Haar's theorem \cite{Haar1933} implies that for any finite $d$ there is a unique left and right invariant, countably additive, normalised measure on the unitary group $U(d)$ \cite{halmos}.
We refer to this measure as the \emph{Haar measure} on $U(d)$ and denote it by $\muhaar[U(d)]$.
Left and right invariant means that for any unitary $U \in U(d)$ and any Borel set $\mathscr{B}\subseteq U(d)$
\begin{equation}
  \muhaar[U(d)](\mathscr{B}) = \muhaar[U(d)](U\,\mathscr{B}) = \muhaar[U(d)](\mathscr{B}\,U) ,
\end{equation}
where $U\,\mathscr{B}$ and $\mathscr{B}\,U$ are the left and right translates of $\mathscr{B}$.
In this sense, the Haar measure $\muhaar[U(d)]$ is the uniform measure on $U(d)$.

The Haar measure on the group of unitaries that map a (restricted) subspace $\mcH_R \subseteq \mcH$ of dimension $d_R$ into itself induces in a natural way a uniform measure $\muhaar[\mcH_R]$ on state vectors $\ket\psi\in\mcH_R$.
We call state vectors drawn according to this measure, and also pure quantum states $\ketbra \psi \psi$ drawn according to the natural induced measure, \emph{Haar random} and write $\ket\psi \sim \muhaar[\mcH_R]$.

A practical way to obtain state vectors distributed according to this measure is to fix a basis $(\ket{j})_{j=1}^{d_R}$ for the subspace $\mcH_R$ and then draw the real and imaginary part of $d_R$ complex numbers $(c_j)_{j=1}^{d_R}$ from normal distributions of mean zero and variance one.
The state vector
\begin{equation} \label{eq:normalisedstateintermsofcoefficients}
  \ket{\psi} = \frac{\sum_{j=1}^{d_R} c_j \ket{j}}{( {\sum_{j=1}^{d_R} |c_j|^2} )^{1/2}}
\end{equation}
is then distributed according to $\muhaar[\mcH_R]$, i.e., $\ket\psi \sim \muhaar[H_R]$ \cite{Zyczkowski2001}.
We denote the probability that an assertion $\mathbb{A}(\ket{\psi})$ about a state vector $\ket\psi$ is true if $\ket\psi \sim \muhaar[\mcH_R]$ by $\probability_{\ket\psi \sim \muhaar[\mcH_R]}(\mathbb{A}(\ket\psi))$.

In the framework of \emph{measure theory} \cite{halmos}, \emph{typicality} can be seen as a consequence of the phenomenon of \emph{measure concentration} \cite{ledoux01,CHATTERJEE07}.
In particular a result known as \emph{Levy's lemma}, has been used in Refs.~\cite{Popescu06,Popescu05} to obtain theorems in the spirit of Refs.~\cite{slloydthesis,Goldstein06}, but with stronger bounds on the probabilities to observe large deviations from the (micro)canonical ensemble.
Refs.~\cite{Popescu06,Popescu05} focused mainly on reduced states of small subsystems of states drawn at random from high dimensional subspaces.
Based on the same techniques, in Ref.~\cite{Gogolin10-masterthesis}, similar results have been obtained for the expectation values of individual observables on the full system as well as their variances, and for sets of commuting observables, developing further ideas of Ref.~\cite{vonneumann1929} concerning macroscopic measurements.

Furthermore, an extension to the distinguishability under a restricted set of POVMs is possible.
We summarise these results in a single theorem, which, however, is not optimal in terms of constants and scaling (compare Refs.~\cite{Popescu05,Gogolin10-masterthesis} for details).

\begin{theorem}[Measure concentration for quantum state vectors] \label{thm:measureconcentrationforquantumstatevectors}
  Let $R\subset\R$ and let $\mcH_R \subseteq \mcH$ be the subspace of the Hilbert space $\mcH$ of a system with Hamiltonian $\H \in \Obs(\mcH)$ that is spanned by the eigenstates of $\H$ to energies in $R$ and let $d_R \coloneqq \dim(\mcH_R)$.
  Then for every $\epsilon>0$ it holds that (i) for any operator $A \in \Bop(\mcH)$
  \begin{equation} \label{eq:measureconcentrationforobservables}
    \probability_{\ket\psi\sim\muhaar[\mcH_R]}\left( | \ex A {\ketbra\psi\psi} - \ex A {\rhomc[\H](R)} | \geq \epsilon \right) \leq 2\,\e^{-C\,d_R\,\epsilon^2/\|A\|_\infty^2} ,
  \end{equation}
  and (ii) for any set $\POVMs$ of POVMs
  \begin{equation} \label{eq:measureconcentrationforpovms}
    \probability_{\ket\psi\sim\muhaar[\mcH_R]}\left(\tracedistance[\POVMs]{\ketbra\psi\psi}{\rhomc[\H](R)} \geq \epsilon \right) \leq 2\,h(\POVMs)^2\,\e^{-C\,d_R\,\epsilon^2/h(\POVMs)^2} ,
  \end{equation}
  where $C = 1/(36\,\pi^3)$ and
  \begin{equation} \label{eq:definitionofhinthemeasureconcentrationtheorem}
    h(\POVMs) \coloneqq \min(|{\union \POVMs}|, \dim(\mcH_{\supp(\POVMs)})) .
  \end{equation}
\end{theorem}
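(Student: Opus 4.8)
\emph{Proof strategy.} The plan is to derive everything from measure concentration on the sphere. I will use L\'evy's lemma in the form: for a real function $f$ on the Euclidean unit sphere $S^{n-1}$ with Lipschitz constant $\eta$, $\probability(\abs{f-\expectation f}\ge\epsilon)\le 2\,\e^{-\kappa\,n\,\epsilon^2/\eta^2}$ with a universal $\kappa$ (one may take $\kappa=1/(9\pi^3)$), applied with $n=2\,d_R$ after identifying the unit sphere of the complex space $\mcH_R$ with $S^{2\,d_R-1}\subset\R^{2\,d_R}$ and $\muhaar[\mcH_R]$ with the uniform measure on it. The second ingredient, which makes the statement clean, is the identity $\expectation_{\ket\psi\sim\muhaar[\mcH_R]}\ketbra\psi\psi=\1_R/d_R=\rhomc[\H](R)$, where $\1_R$ is the orthogonal projector onto $\mcH_R$: left-invariance of $\muhaar[\mcH_R]$ forces this average to commute with every unitary preserving $\mcH_R$, hence to be a multiple of $\1_R$, and $Z_{mc}[\H](R)=\dim(\mcH_R)=d_R$ together with $\sum_{k:E_k\in R}\Pi_k=\1_R$ pins it to the micro-canonical state. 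Thus a deviation of a Haar-random quantity from its mean is literally a deviation from its micro-canonical value, and both parts reduce to Lipschitz concentration.

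For part (i), I would first take $A$ self-adjoint and set $f(\psi)\coloneqq\ex{A}{\ketbra\psi\psi}=\bra\psi A\ket\psi$. The identity above gives $\expectation f=\Tr(A\,\1_R/d_R)=\ex{A}{\rhomc[\H](R)}$, and on unit vectors $\abs{f(\psi_1)-f(\psi_2)}\le 2\,\norm[\infty]{A}\,\norm{\ket{\psi_1}-\ket{\psi_2}}$, so $f$ has Lipschitz constant $2\,\norm[\infty]{A}$; L\'evy's lemma with $n=2\,d_R$ then yields \eqref{eq:measureconcentrationforobservables}. For general $A\in\Bop(\mcH)$ I would split $A=\Re(A)+\iu\,\Im(A)$ into self-adjoint parts, apply the self-adjoint case to each, and combine via $\{\abs{z}\ge\epsilon\}\subseteq\{\abs{\Re z}\ge\epsilon/\sqrt 2\}\cup\{\abs{\Im z}\ge\epsilon/\sqrt 2\}$. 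Collecting the Lipschitz constant $2\,\norm[\infty]{A}$, the dimension doubling and this last splitting produces the (deliberately non-sharp) constant $C=1/(36\pi^3)$.

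For part (ii) I would prove a bound for each argument of the $\min$ in \eqref{eq:definitionofhinthemeasureconcentrationtheorem} and keep the better one. Write $\Delta\coloneqq\ketbra\psi\psi-\rhomc[\H](R)$. \emph{(a) The $\abs{\union\POVMs}$ branch.} For every $M\in\POVMs$ one has $\tracedistance[\POVMs]{\ketbra\psi\psi}{\rhomc[\H](R)}\le\tfrac12\sum_{M_a\in\union\POVMs}\abs{\Tr(M_a\Delta)}$ uniformly in $M$, so if each of the $N\coloneqq\abs{\union\POVMs}$ terms is below $\delta$ the distinguishability is below $N\delta/2$; taking $\delta=2\epsilon/N$ and union-bounding part (i) over the $N$ operators $M_a$ (each with $\norm[\infty]{M_a}\le1$) gives a bound of the claimed shape. (If one unions over all outcomes of all POVMs one gets the cruder $\sum_{M\in\POVMs}\abs{M}$ in place of $N$; the sharpening to the distinct elements is the same one used in the proof of Theorem~\ref{thm:equilibrationonaverage}.) \emph{(b) The $\dim(\mcH_{\supp(\POVMs)})$ branch.} Put $S'\coloneqq\supp(\POVMs)$, $d_{S'}\coloneqq\dim(\mcH_{S'})$ and $\sigma\coloneqq\rhomc[\H](R)$. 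Every POVM element being supported on $S'$, $\Tr(M_k\Delta)=\Tr(\trunc{M_k}{S'}(\ketbra\psi\psi-\sigma)^{S'})$, hence $\tracedistance[\POVMs]{\ketbra\psi\psi}{\sigma}\le\tracedistance{(\ketbra\psi\psi)^{S'}}{\sigma^{S'}}$. The map $\ket\psi\mapsto\tracedistance{(\ketbra\psi\psi)^{S'}}{\sigma^{S'}}$ is $1$-Lipschitz on the sphere --- by contractivity of the trace distance under the partial trace together with $\tracedistance{\ketbra{\psi_1}{\psi_1}}{\ketbra{\psi_2}{\psi_2}}=(1-\abs{\braket{\psi_1}{\psi_2}}^2)^{1/2}\le\norm{\ket{\psi_1}-\ket{\psi_2}}$ --- so L\'evy's lemma applies directly, with no net. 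It then remains to bound its Haar mean: a swap-trick evaluation of the expected purity $\expectation_{\ket\psi\sim\muhaar[\mcH_R]}\Tr\bigl[((\ketbra\psi\psi)^{S'})^2\bigr]$ of the reduced state, fed into $\tracedistance{\rho}{\1_{S'}/d_{S'}}\le\tfrac12\bigl(d_{S'}(\Tr(\rho^2)-1/d_{S'})\bigr)^{1/2}$, yields $\expectation\,\tracedistance{(\ketbra\psi\psi)^{S'}}{\sigma^{S'}}\in\landauO(d_{S'}/\sqrt{d_R})$, which is the Popescu--Short--Winter estimate. Putting this into L\'evy's lemma and absorbing the mean into $\epsilon$ proves the claim once $\epsilon$ exceeds a fixed multiple of $d_{S'}/\sqrt{d_R}$; for smaller $\epsilon$ the asserted right-hand side is already $\ge1$ (because of the $h(\POVMs)^2$ prefactor and the tiny $C$), so it holds trivially.

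I expect the one genuinely computational step to be the expected-purity estimate in (b): evaluating $\expectation\Tr\bigl[((\ketbra\psi\psi)^{S'})^2\bigr]$ for a \emph{general} energy subspace $\mcH_R$, which need not factorise across the cut $S'\,\dunion\,\compl{S'}$, and reading off the $\landauO(d_{S'}/\sqrt{d_R})$ scaling in $d_R$ alone. Everything else --- the reductions in (ii), the Lipschitz bounds, the identification of the complex sphere with a real one --- is routine, and the remaining nuisance is purely bookkeeping: matching the L\'evy bounds (which live around the mean and carry their own universal constant) to the deliberately loose form stated here, which is exactly why the theorem is flagged as not optimal in its constants or scaling.
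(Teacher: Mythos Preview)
Your proposal is correct and follows essentially the same route as the paper. The paper's proof is deliberately terse---it cites Theorem~2.2.2 of Ref.~\cite{Gogolin10-masterthesis} for part~(i), reduces the $\dim(\mcH_{\supp(\POVMs)})$ branch of part~(ii) to the subsystem trace-distance bound and then invokes Eq.~(75) of Ref.~\cite{Popescu05}, and handles the $|\union\POVMs|$ branch by the same union-bound-over-POVM-elements argument you describe---and you have correctly reconstructed what those cited results actually do (L\'evy's lemma for the expectation-value functional, L\'evy's lemma for the $1$-Lipschitz reduced-state trace distance plus the swap-trick purity estimate). One small imprecision: in branch~(b) the comparison state is $\sigma^{S'}=\Tr_{\compl{S'}}\rhomc[\H](R)$, not $\1_{S'}/d_{S'}$, so the Hilbert--Schmidt step should bound $\expectation\norm[2]{(\ketbra\psi\psi)^{S'}-\sigma^{S'}}^2$ directly rather than going through the maximally mixed state; this is exactly what Popescu--Short--Winter do and gives the same scaling you claim.
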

\begin{proof}
  \texteqref{eq:measureconcentrationforobservables} is Theorem~2.2.2 from Ref.~\cite{Gogolin10-masterthesis}.
  We now prove \texteqref{eq:measureconcentrationforpovms} for $h(\POVMs)$ equal to the second argument of the $\min$ in \texteqref{eq:definitionofhinthemeasureconcentrationtheorem}.
  Let $S \coloneqq \bigcup_{M \in \union \POVMs} \supp(M)$ and remember that then for all $\rho,\sigma \in \Qst(\mcH)$
  \begin{equation}
    \tracedistance[\POVMs]{\rho}{\sigma} \leq \tracedistance{\smash{\rho^S}}{\smash{\sigma^S}} .
  \end{equation}  
  Then Eq.~(75) in Section~VI.C of Ref.~\cite{Popescu05} yields the result.
  To finish the proof, note that \texteqref{eq:distinguishabilityunderrestrictedsetsofpovms} implies that for any $\rho,\sigma \in \Qst(\mcH)$
  \begin{align} \label{eq:distinguishabilityupperboundforproofofmeasureconcentrationtheorem}
    \tracedistance[\POVMs]\rho\sigma &\coloneqq \sup_{M \in \POVMs} \frac{1}{2}\,\sum_{k=1}^{|M|} |\Tr(M_k\,\rho) - \Tr(M_k\,\sigma)| \\
    &\leq \frac{1}{2}\,\sum_{M \in \union\POVMs } |\Tr(M\,\rho) - \Tr(M\,\sigma)| \\
    &\leq \frac{1}{2}\,|{\union\POVMs}| \sup_{M \in \union\POVMs } |\ex M \rho - \ex M \sigma | .
  \end{align}
  Together with Boole's inequality this yields that for every $\sigma \in \Qst(\mcH)$
  \begin{align}
    &\probability_{\ket\psi\sim\muhaar[\mcH_R]}\left(\tracedistance[\POVMs]{\ketbra\psi\psi}\sigma \geq \epsilon\right) \nonumber \\
    \leq &1 - \probability_{\ket\psi\sim\muhaar[\mcH_R]}\left(\bigcap_{M \in \union\POVMs} |\ex M {\ketbra\psi\psi} - \ex M \sigma |< \frac{2\,\epsilon}{|{\union\POVMs}|} \right) \\
    = &\probability_{\ket\psi\sim\muhaar[\mcH_R]}\left(\bigcup_{M \in \union\POVMs} |\ex M {\ketbra\psi\psi} - \ex M \sigma| \geq \frac{2\,\epsilon}{|{\union\POVMs}|} \right) \\
    \leq &\sum_{M \in \union\POVMs} \probability_{\ket\psi\sim\muhaar[\mcH_R]}\left( |\ex M {\ketbra\psi\psi} - \ex M \sigma| \geq \frac{2\,\epsilon}{|{\union\POVMs}|} \right) .
  \end{align}
  The proof of the result for $h(\POVMs)$ equal to the first argument of the $\min$ in \texteqref{eq:definitionofhinthemeasureconcentrationtheorem} can then be finished by choosing $\sigma = \rhomc[\H](R)$, using \texteqref{eq:measureconcentrationforobservables}, and the fact that for all $M \in \union \POVMs$ it holds that $\norm[\infty]{M} \leq 1$.
  Disregarding a favorable factor of $2$ and using the (highly non-optimal) bound $|{\union\POVMs}| < |{\union\POVMs}|^2$ yields the unified result as stated in the theorem.
\end{proof}

A physically particularly relevant case is when $\supp(\POVMs)$ is contained in some small subsystem $S \supseteq \supp(\POVMs)$ and $R = [E,E+\Delta]$ is some energy interval.
Then the theorem yields a probabilistic bound on the distance $\tracedistance{\ketbra\psi\psi^S}{\rhomc^S[\H]([E,E+\Delta])}$.
If $\ket\psi\sim\muhaar[\mcH_R]$ and the dimension $d_R$ of the micro-canonical subspace $\mcH_R$ to the energies in the interval $[E,E+\Delta]$ fulfils $d_R \gg d_S$, then $\tracedistance{\ketbra\psi\psi^S}{\rhomc^S[\H]([E,E+\Delta])}$ is small with very high probability.
That is, the reduced state on $S$ of a random state from the subspace corresponding to the energy interval $R$ is indistinguishable from the reduction of the corresponding micro-canonical state, with high probability.

The same holds in the more general setting that one has access only to a sufficiently small number of measurements, which in total have a sufficiently small number of different outcomes.
If the total number of different outcomes $|\union \POVMs|$ is much smaller than the dimension of the subspace corresponding to the energy interval $[E,E+\Delta]$, a random state from this subspace is with high probability indistinguishable from the micro-canonical state.

For a family of Hamiltonians of locally interacting quantum systems with increasing system size, if $\Delta$ is kept fix and $E$ is chosen such that $R = [E,E+\Delta]$ is not too close to the boundaries of the spectrum of the Hamiltonian, then $d_R$ typically grows exponentially with the system size $|\Vset|$.
For a locally interacting system with a macroscopic number of constituents one would thus need to be able to distinguish an astronomically large number of different measurement outcomes to have a realistic chance of distinguishing a random state from a micro-canonical state.

Similar methods as those used above were employed in Ref.~\cite{Linden09} to prove that for Haar random pure states from high dimensional subspaces the effective dimension (which we encountered in Section~\ref{sec:equlibrationintheweaksense}) with respect to a fixed Hamiltonian is of the order of the dimension of the subspace, with probability exponentially close to one.
The result can be generalised to certain measures over states that are product with respect to a bipartition $\Vset = S \dunion B$ \cite{Gogolin10-masterthesis}.

\subsection{Typicality for other measures over quantum state vectors}

In addition to the Haar measure, other measures over quantum state vectors have been considered in the literature:
This has been done in order to incorporate meaningful physical constraints into notions of typicality.
Refs.~\cite{Lane,Bender05,Brody07,1003.4982,1104.4625v1,1102.3651v1} 
introduce the \emph{mean energy ensemble}.
Instead of the uniform measure on a subspace corresponding to some energy interval, the mean energy ensemble consists of random state vectors which have a fixed energy expectation value with respect to some given Hamiltonian $\H$.
Under certain conditions on the spectrum of $\H$ it can be shown that the mean energy ensemble exhibits measure concentration \cite{1003.4982}.
In addition to that, it is possible to identify the typical reduced state of states drawn from the mean energy ensemble \cite{1003.4982}, and it can be shown that under certain conditions states from the mean energy ensemble typically have a high effective dimension \cite{Gogolin10-masterthesis}. 

Ref.~\cite{Reimann07} considers an ensemble of quantum state vectors of the form given in \texteqref{eq:normalisedstateintermsofcoefficients}, in which the expansion coefficients $c_j = \braket{j}{\psi}$ have fixed modulus but random phases.
Concentration results, similar in spirit to Theorem~\ref{thm:measureconcentrationforquantumstatevectors}, can be shown for this ensemble that yield typicality whenever sufficiently many energy levels are populated.

Ref.~\cite{Bartsch09} extends the notion of typicality to the dynamics of systems.
Similarly as in the mean energy ensemble, the authors define an ensemble of initial states that share the same expectation value with respect to some given observable and then investigate the time evolution of this expectation value under a Hamiltonian.
The authors find \emph{dynamical typicality}, i.e., that states that initially give similar expectation values also typically lead to a similar dynamical evolution of these expectation values.

Typicality can also be used to speed up numerical calculations.
Instead of sampling over exponentially large sets of states, often drawing just a few representatives can already be sufficient to estimate expectation values \cite{Sugiura12}.
Ref.~\cite{White09} for example introduces the concept of \emph{minimally entangled typical quantum states}, which, given a Hamiltonian $\H$ and inverse temperature $\beta$, constitute an ensemble of pure states whose average corresponds to the thermal state of $\H$ at inverse temperature $\beta$.
The ensemble can be used to more efficiently calculate for example thermal expectation values of observables.
A related approach, which has recently been put forward in Refs.~\cite{Garnerone2010,Garnerone10-1,Garnerone2013,Garnerone2013a}, is to investigate and exploit typicality in the context of so-called \emph{matrix product states}.
The effects of typicality allow for the numerical approximation of thermal expectation values of observables in situations where naive approaches are infeasible \cite{Garnerone2013a}.
In Ref.~\cite{Steinigeweg2013} a method for numerically checking the validity of the \emph{eigenstate thermalisation hypothesis} (see Section~\ref{sec:thermalisationunderassumptionsontheeigenstates}) is proposed that exploits techniques to apply exponentials of operators to random pure states.
Typicality ensures that only few such random states are needed to obtain conclusive results, thereby vastly reducing the computational cost.

Typicality arguments are sometimes dismissed for being ``unphysical'' \cite{Bocchieri1958,Farquhar1957}.
Ref.~\cite{0907.0108v1}, for example, contains a very interesting review of the mostly negative reception of von Neumann's quantum ergodic theorem (see also Section~\ref{sec:typicality}).
Whether the concept of typicality is really superior to other approaches towards the foundations of statistical mechanics and thermodynamics, such as \emph{ergodicity}, the principle of \emph{maximum entropy}, or postulating \emph{ensembles}, is to some extent a matter of personal taste.
However, especially with respect to the latter, typicality has at last one important advantage:
Instead of simply postulating that a certain ensemble yields a reasonable description of a certain physical situation, typicality shows, in a mathematically very well-defined way, when and why details do not matter.
If most states anyway exhibit the same or very similar properties, then this does provide a heuristic, but pretty convincing, argument in favour of the applicability of ensembles.
It is hence an argument supporting a description of large systems with ensembles.

\section{Thermalisation}
\label{sec:thermalisation}
Given the findings presented in the last sections a natural question to ask is:
When do closed quantum systems in pure states that evolve unitarily not only equilibrate, but actually thermalise in the sense that under reasonable restrictions on the experimental capabilities they appear to be \emph{thermalised} or in \emph{thermodynamic equilibrium}?

To make this question meaningful we will define the term \emph{thermalisation} in this section.
Then, in Section~\ref{sec:thermalisationunderassumptionsontheeigenstates} and \ref{sec:thermalisationunderassumptionsontheinitialstate}, we will discuss two general complementary approaches to explain and understand thermalisation in the framework of pure state quantum statistical mechanics in detail.
The first approach is the so-called \emph{eigenstate thermalisation hypothesis} (ETH), the second is based on a quantum version of the classical derivation of the canonical ensemble from the micro-canonical one, augmented with rigorous perturbation theory.
The first approach is based mostly on assumptions on the eigenspaces/eigenstates of the Hamiltonian, while the second one instead requires stronger assumptions on the initial state.
We then turn to a discussion of thermalisation in locally interacting translation invariant systems and a result concerning the equivalence of the canonical- and micro-canonical ensemble in Section~\ref{sec:translationallyinvariant}.
It is possible to interpolate between ETH approach and that based on assumptions on the initial state to some extent.
We say more on that and on alternative notions of thermalisation in Section~\ref{sec:othernotionsandhybridapproaches}.
We finish by surveying numerical investigations of thermalisation and analytical results concerning concrete model Hamiltonians or specific classes of systems in Section~\ref{sec:numericalthermal}.

Throughout this section a focus will be put on \emph{subsystem thermalisation}, i.e., the thermalisation of a small part (subsystem) of a large composite quantum system via the interaction with the rest of the system (bath).
The whole composite system (subsystem and bath together) is thereby assumed to be in a pure state evolving according to the standard (Schr{\"o}dinger-)von-Neumann-equation under some Hamiltonian $\H$.
Let $S \subset \Vset$ be the vertex set of the subsystem and $B = \compl{S}$ that of the bath, then we will call the sum $\H_S + \H_B \eqqcolon \H_0$ of the two restricted Hamiltonians $\H_S$ and $\H_B$ the \emph{non-interacting Hamiltonian} and $\H_I \coloneqq \H - \H_0$ the \emph{interaction Hamiltonian}.
We will say that a Hamiltonian $\H$ is non-interacting if $\H = \H_S + \H_B$.

Whenever the term \emph{bath} is used in the following it refers to this model of thermalisation.
In particular we do not mean quantum systems that are already initially in a thermal state or other models of heat baths.
It is crucial to note that approaches that explain thermalisation in quantum systems by investigating the behaviour of systems coupled to such \emph{thermal baths} cannot solve the fundamental problem of thermalisation, as they leave open the question how the thermal bath became thermal in the first place.

\subsection{What is thermalisation?}
\label{sec:whatisthermalisation}
Whenever a term from one theory is used in a different context, a proper definition is mandatory.
This is particularly true for terms as involved as \emph{thermalisation} and \emph{thermodynamic equilibrium} which, already in classical statistical mechanics, have several different meanings depending on the context.
To take account of the complex nature of the term thermalisation we will not jump directly to a definition.
Instead, we will consider a number of conditions that each capture certain aspects of thermalisation and whose fulfilment, depending on the context, one might or might not find necessary to say that a system has thermalised.

The catalog of properties that we will consider has been chosen with the setting of subsystem thermalisation in mind.
Based on this discussion we will then carefully define what we consider sufficient to call a (sub)system thermalised, leaving open the possibility of defining other, possibly less strict, notions of thermalisation.
In addition to that, we will also define the term \emph{subsystem initial state independence}, a property that we regard as a \emph{necessary} prerequisite for the thermalisation of subsystems, and which we will discuss in more detail in Section~\ref{sec:absenceofthermalisation}.

The aspects of thermalisation that we will use as a guideline for our definition of thermalisation are:
\begin{enumerate}
\item \label{item:thermalisationconditions:equilibration} \emph{Equilibration}:
  Equilibration is generally considered to be a necessary condition for thermalisation.
  In the following we will mostly be concerned with \emph{subsystem equilibration on average} and \emph{apparent equilibration on average} of the whole system under restricted sets of POVMs (see also Section~\ref{sec:notionsofequilibration}).
\item \label{item:thermalisationconditions:subsysteminitialstateindependece} \emph{Subsystem initial state independence}:
  The equilibrium state of a small subsystem should be independent of the initial state of that subsystem.
  If a system exhibits some local exactly conserved quantities then one might still call it thermal and describe its equilibrium state by, for example, a so-called \emph{generalised Gibbs ensemble} \cite{Jaynes,PhysRev.108.17,Rigol07}.
  However, even such a behaviour is often already considered to be non-thermal.
  We will take the more cautious point of view that a system should not be considered thermalising if its equilibrium state depends on details of its own initial state, despite the 
  absence of local exactly conserved quantities.
\item \label{item:thermalisationconditions:bathstateindependence} \emph{Bath state independence}:
  It is generally expected that the equilibrium expectation values of local observables on a small subsystem are almost independent of the details of the initial state of the rest of the system, but should rather only depend on its ``macroscopic properties'', such as the energy density, which one would expect to have an influence on the temperature of the thermalising subsystem.
\item \label{item:thermalisationconditions:diafonalform} \emph{Diagonal form of the subsystem equilibrium state}:
  The equilibrium state of a small subsystem should be approximately diagonal in the energy eigenbasis of a suitably defined ``self-Hamiltonian''.
  If the interaction with the bath makes the state of the subsystem approximately diagonal in some basis one could call this \emph{decoherence}.
\item \label{item:thermalisationconditions:gibbsstate} \emph{Gibbs state}:
  Ultimately, one would like to recover the standard assumption of (classical) statistical physics that the equilibrium state is in some sense close to a Gibbs/thermal state.
\end{enumerate}

In the light of Condition~\refitem{item:thermalisationconditions:equilibration} it seems to be a sensible approach to define thermalisation \emph{on average} or \emph{during an interval} depending on the type of equilibration that goes along with thermalisation.
Conditions~\refitem{item:thermalisationconditions:subsysteminitialstateindependece} and \refitem{item:thermalisationconditions:bathstateindependence} make clear that thermalisation should be defined with respect to sets of initial states. This leads us to the following definition of thermalisation:

\begin{definition}[Thermalisation on average] \label{def:thermlaisationonaverage}
  We say that a system with Hilbert space $\mcH$ and Hamiltonian $\H \in \Obs(\mcH)$ \emph{thermalises} on average with respect to a set $\POVMs$ of POVMs and for a given set of initial states $\Qst_0 \subseteq \Qst(\mcH)$ if for each state $\rho(0) \in \Qst_0$, the system apparently equilibrates on average to an equilibrium state $\omega = \$_\H(\rho(0))$ that is close to a thermal state $\rhog[\tilde \H]\big(\beta(\Tr(\H\,\rho(0)))\big)$ for some Hamiltonian $\tilde \H$ in the sense that for some suitable function $\beta \oftype \R \to \R$ the distinguishability $\tracedistance[\POVMs]{\omega}{\rhog[\tilde \H]\big(\beta(\Tr(\H\,\rho(0)))\big)}$ is sufficiently small.
\end{definition}

Definition~\ref{def:thermlaisationonaverage} implicitly also defines \emph{thermalisation on average of subsystems}.
Just choose $\POVMs$ to be the set of all POVMs with support on a subsystem $S\subset\Vset$ and $\tilde \H = \H_S$.
If on the contrary $\POVMs$ contains POVMs whose support covers the whole system, then $\tilde \H = \H$ is a natural choice.
Moreover, in practice one would probably want that the function $\beta$ has some physically nice properties, like being smooth and or monotonically decreasing.
Thermalisation during intervals can be defined equivalently, but as we will not discuss it here, we keep the definition as simple as possible.

It seems worth emphasising again that the above definition does not say that a system thermalises if and only if the given conditions are met, but only says we call it thermalising if at least the given conditions are met.
It gives a set of conditions that are sufficient for thermalisation.
In addition it shall be noted that for the case of subsystem thermalisation with a small subsystem our definition of thermalisation implies that the equilibrium state of the subsystem must be nearly independent of the subsystems initial state. 
We discuss \emph{subsystem initial state independence} in more detail in Section~\ref{sec:violationofinitialstateindependence}.

An obvious question to ask now is:
What are reasonable sets $\Qst_0$ of initial states?
Particularly important is the \emph{energy distribution} of the initial states, i.e., the sequence $(p_k)_{k=1}^{d'}$ of the energy level populations $p_k \coloneqq \Tr(\Pi_k\,\rho(0))$, as it is conserved under time evolution.
Taking the classical derivation of the canonical ensemble from the micro-canonical one as a guideline, thermalisation can only be expected to happen for initial states whose energy distribution is not too wide, i.e., the energies of the significantly populated levels must be in an interval small compared to $\norm[\infty]\H$.
We will see in Sections~\ref{sec:thermalisationunderassumptionsontheeigenstates} and \ref{sec:thermalisationunderassumptionsontheinitialstate} that such a condition will play an important role in proofs of thermalisation.

In the above definition of thermalisation on average we deliberately left open the question of what ``sufficiently small'' means.
This is ultimately to be decided in the specific situation at hand.
One would probably want that $\tracedistance[\POVMs]{\omega}{\rhog[\tilde H](\beta)}$ somehow suitably decreases with the size of the system.
However, we want to have a definition of thermalisation that is applicable to finite systems.
Moreover, we want to avoid the technicalities of defining thermalisation for sequences of quantum systems of increasing size.

\subsection{Thermalisation under assumptions on the eigenstates}
\label{sec:thermalisationunderassumptionsontheeigenstates}
At the center of the first approach to show thermalisation in quantum systems is the \emph{eigenstate thermalisation hypothesis} (ETH).
There exist various version of the ETH in the literature and we will give a more precise definition below, but a minimal version of the ETH can informally be phrased as follows: ``A Hamiltonian fulfils the ETH if the expectation values of physically relevant observables in its energy eigenstates are approximately smooth functions of their energy.''
As we will see in this section, observables for which a system fulfils the ETH thermalise on average under reasonable conditions.
The ETH is usually said to date back to the two works \cite{PhysRevA.43.20,PhysRevE.50.88}.
As the role of these works is, however, often misunderstood it is worth starting this section with a short historical review:

Already in 1985 Ref.~\cite{Jensen1985} investigated numerically how relatively small quantum systems equilibrate to a state that can be well described by statistical mechanics.
The computational power available at that time made it possible to study a spin-1/2 Ising chain with up to seven sites in a transverse field by means of exact diagonalisation.
Ref.~\cite{Jensen1985} investigates the equilibration behaviour of both global and local observables and compares time averages with micro-canonical and canonical averages.
The authors conclude that ``both integrable and non-integrable quantum systems with as few as seven degrees of freedom can exhibit statistical behaviour for finite times.''
They also describe the reason for the statistical behaviour they observe, which is essentially the mechanism that is today known as the ETH: ``If the expectation values [of an observable in the energy eigenstates] are smooth functions of the energy [\dots], then the short-time average of the observable will be very close to the ensemble average.''
In fact, it seems fair to say that the authors did anticipate large parts of the recent debate on equilibration and thermalisation in closed quantum systems.
The last sentence of the abstract for example reads ``This work clarifies the impact of integrability and conservation laws on statistical behaviour. The relation to quantum chaos is also discussed.'' It is remarkable that Ref.~\cite{Jensen1985} is nevertheless essentially completely ignored by almost the whole recent literature centred around such questions (Refs.~\cite{Yukalov2011,Reimann2013} being notable exceptions).

In Ref.~\cite{PhysRevA.43.20} a mechanism that can lead to the thermalisation of quantum systems is identified, which the author calls \emph{eigenstate thermalisation}.
A quantum and a classical version of a hard sphere gas serve as prototypical examples to illustrate this mechanism.
A central role is played by \emph{Berry's conjecture}.
It states that in certain quantum systems, whose classical counterparts exhibit \emph{classical chaos}, the energy eigenstates to energies in the bulk of the spectrum are superpositions of plain waves with random phases and random Gaussian amplitudes \cite{Berry1977}.
It is argued that in the hard sphere gas, whose classical version is indeed \emph{chaotic}, all energy eigenstates that satisfy Berry's conjecture have a single particle momentum distribution that is thermal.
Finally, thermalisation is explained by the accumulation of relative phases between energy eigenstates due to time evolution.
This \emph{de-phasing} destroys any fine tuned setting of the phases that might have been present in the coherent superposition of energy eigenstates that made up the initial state.
Such a fine tuning is necessary to get an initial state that is out of equilibrium.

Ref.~\cite{PhysRevE.50.88} aims at providing a quantum mechanical justification for the applicability of statistical ensembles.
The main idea is to model interacting composite quantum systems by starting with a non-interacting Hamiltonian that can be well understood, and then modelling generic effects of the interactions by adding a small random Hamiltonian --- very much in the spirit of random matrix theory \cite{Bohigas1984,Tao2012,mehta90}.
Due to the fact that composite quantum systems generically have exponentially dense spectra, i.e., either exponentially small gaps between neighbouring eigenvalues and/or exponentially large degenerate eigenspaces, any extremely small perturbation will typically mix an exponentially large number of energy eigenstates of the non-interacting Hamiltonian.
This smears out their individual properties and should make the expectation values of physical observables in individual energy eigenstates of the perturbed Hamiltonian similar to those in a micro-canonical state with a similar mean energy.

A much more rigorous formulation of the idea behind eigenstate thermalisation can be found in Ref.~\cite{tasaki98} (see also Ref.~\cite{Tasaki97}).
This article considers bipartite systems with $\Vset = S \dunion B$, whose non-interacting part $\H_0 = \H_S + \H_B$ of the Hamiltonian $\H = \H_0 + \H_I$ is non-degenerate.
The interaction Hamiltonian $\H_I$ is assumed to couple only neighbouring energy levels, i.e., it is of the form
\begin{equation}
  \forall k\in[d]\itholds \bra{E^0_k} \H_I \ket{E^0_{k'}} = \lambda/2\,\delta_{|k-k'|,1} 
\end{equation}
for some $\lambda\in\R$ such that $\levelspaceing^{\max}_B \ll \lambda \ll \levelspaceing^{\min}_S$ with $\levelspaceing^{\max}_B$ the maximal spacing between the energy eigenvalues of $\H_B$ and $\levelspaceing^{\min}_S$ the minimal level spacing of $\H_S$.
It is first argued heuristically and then proved, under some additional technical assumptions, that such Hamiltonians indeed exhibit eigenstate thermalisation in the sense that for most $k$ and all observables $A_S$ with $\supp(A_S) \subseteq S$ it holds that $\bra{E_k} A_S \ket{E_k} \approx \Tr(A_S\, g[H_S](\beta(E_k)))$ (see Eq.~(5) and (6) in Ref.~\cite{Tasaki97}).

The \emph{eigenstate thermalisation hypothesis} (ETH) gained wide popularity after the very influential article Ref.~\cite{Rigol08}, which states the ETH as follows:
\begin{conjecture}[Eigenstate thermalisation hypothesis as stated in Ref.~\cite{Rigol08}] \label{conjecture:ethrigolform}
  The expectation value $\bra{E_k} A \ket{E_k}$ of a few-body observable $A$ in an eigenvector 
  $\ket E_k$ of the Hamiltonian, with energy $E_k$, of a large interacting many-body system equals the thermal [\dots] average of $A$ at the mean energy $E_k$.
\end{conjecture}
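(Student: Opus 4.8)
The plan is, first, to be honest that the statement as phrased is a \emph{conjecture} and cannot hold unconditionally: it is violated by integrable systems, by systems carrying extensively many (quasi-)local conserved charges, and by many-body localised systems, so any attempt at a proof must either be heuristic (valid for ``generic'', non-integrable Hamiltonians) or a rigorous statement about a restricted class of models. I would therefore split the task into two complementary parts and pursue them in parallel. Throughout I would keep in mind that the \emph{diagonal} matrix elements $\bra{E_k}A\ket{E_k}$ are what matters here, and that the link to thermalisation is provided by combining the sought-for smoothness with the equilibration-on-average bound of Theorem~\ref{thm:equilibrationonaverage}: for non-degenerate $\H$ the infinite-time average of $\ex{A}{\rho(t)}$ equals $\sum_k p_k\,\bra{E_k}A\ket{E_k}$, so if this function of $E_k$ is smooth and the energy distribution $(p_k)_{k=1}^{d'}$ is sharply peaked, the equilibrium value is forced to be $\bra{E_{\bar k}}A\ket{E_{\bar k}}$ at the mean energy, which the ETH identifies with the thermal average.

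For the \emph{heuristic} part I would follow Refs.~\cite{PhysRevA.43.20,PhysRevE.50.88}. Assuming the system has a classically chaotic limit, I would invoke \emph{Berry's conjecture} \cite{Berry1977}: energy eigenstates $\ket{E_k}$ in the bulk of the spectrum look, in a suitable basis, like random superpositions with independent Gaussian amplitudes and random phases supported on the energy shell $R$ around $E_k$. Averaging $\bra{E_k}A\ket{E_k}$ over this ensemble of random eigenvectors then reproduces exactly the micro-canonical expectation $\Tr(A\,\rhomc[\H](R))$, while the variance is of order $1/d_R$. Since $d_R$ grows exponentially in the system size, I would apply a concentration argument — morally the one behind Theorem~\ref{thm:measureconcentrationforquantumstatevectors}, now with $\mcH_R$ taken to be the energy shell rather than an arbitrary subspace — to conclude that all but an exponentially small fraction of eigenstates in the shell have $\bra{E_k}A\ket{E_k}$ within any fixed $\epsilon$ of the micro-canonical value. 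Smoothness in $E_k$ would follow because adjacent shells have nearly equal micro-canonical expectation values, and the equivalence-of-ensembles results of Section~\ref{sec:translationallyinvariant} would let me replace the micro-canonical by the thermal average at the temperature fixed by $E_k$.

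For the \emph{rigorous} part I would drop the generic-chaos picture and argue model by model in the spirit of Ref.~\cite{tasaki98}. One would take $\H = \H_0 + \H_I$ with $\H_0 = \H_S + \H_B$ non-interacting and $\H_I$ a perturbation designed so that each exact eigenstate is a near-uniform superposition of a large number of $\H_0$-eigenstates whose reduced states on $S$ are almost equal; degenerate first-order perturbation theory together with a counting argument then yields $\bra{E_k}A_S\ket{E_k}\approx\Tr(A_S\,\rhog[\H_S](\beta(E_k)))$ for most $k$ and all observables $A_S$ supported on $S$. Feeding this, together with the narrow-energy-window hypothesis, into Theorem~\ref{thm:equilibrationonaverage} upgrades the eigenstate statement to the statement that the subsystem actually thermalises on average in the sense of Definition~\ref{def:thermlaisationonaverage}.

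The main obstacle is precisely the step I would be glossing over: establishing, for a concrete \emph{non-random}, locally interacting, physically relevant Hamiltonian, that its bulk eigenstates genuinely behave enough like random vectors on the energy shell — i.e.\ controlling the relevant matrix elements without simply postulating Berry's conjecture or random-matrix statistics. This is where rigour currently fails, and it is the reason the ETH remains a conjecture: no sharp, verifiable criterion (``non-integrability'' is not one, in the quantum setting) is known to imply it, while the counterexamples from integrability and many-body localisation show that \emph{some} such criterion is unavoidable.
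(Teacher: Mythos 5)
You correctly recognise that the statement is a \emph{conjecture}, and the paper itself offers no proof of it: it states the ETH, traces the heuristic backing to Berry's conjecture and random-matrix reasoning \cite{PhysRevA.43.20,PhysRevE.50.88}, points to Tasaki's rigorous model-specific result \cite{tasaki98}, and explicitly notes that ``it is still open under which precise conditions the ETH holds in this or a similar form.'' Your proposal is a faithful summary of precisely this state of affairs --- including the route from ETH to thermalisation via Theorem~\ref{thm:equilibrationonaverage} and Observation~\ref{obs:ethissufficientforthermalisation}, and the caveats about integrability and many-body localisation --- so it matches the paper's own treatment.
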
 

It is emphasised that \emph{thermal average} in this context can also mean the micro-canonical average.
Ref.~\cite{Rigol08} studies a system of hard core bosons on a lattice.
It is demonstrated that the observed thermalisation can be explained by the fact that certain physically relevant observables have expectation values in most energy eigenstates that indeed resemble those in a micro-canonical state.
The validity of numerous variants of the ETH has been studied extensively and in great detail at hand of many physically relevant models.
This will be detailed in Section~\ref{sec:numericalthermal}.

A slightly generalised and sharpened version of the ETH that captures the spirit of \emph{eigenstate thermalisation} and applies to degenerate Hamiltonians is the following:
\begin{definition}[Eigenstate thermalisation hypothesis (ETH)] \label{def:eth}
  A Hamiltonian $\H$ fulfils the \emph{eigenstate thermalisation hypothesis} in a set $R \subset \R$ of energies with respect to a set $\POVMs$ of POVMs if and only if all its spectral projectors $\Pi_k$ to energies $E_k \in R$ have the property that there is a sufficiently smooth function $\beta\colon R \to \R^+$ such that for each $k$ with $E_k \in R$ it holds that for all normalised pure states $\psi \in \Qst(\mcH)$ with the property $\psi \leq \Pi_K$ the distinguishability $\tracedistance[\POVMs]{\psi}{\rhog[\H](\beta(E_k))}$ is sufficiently small.
\end{definition}
Again, we have deliberately left open what is meant by ``sufficiently smooth'' and ``sufficiently small''.

It is still open under which precise conditions the ETH holds in this or a similar form.
The rigorous derivations of Ref.~\cite{tasaki98} have so far not been generalised to more reasonable physical interactions.
Methods to analytically check the ETH in ``non-integrable models'' that are interesting in the context of condensed matter theory currently seem to be out of reach.
Very recently in Ref.~\cite{Mueller2013} a statement reminiscent of the ETH was proved under fairly general conditions.
More precisely, Ref.~\cite{Mueller2013} shows \emph{weak local diagonality} (Theorems~4 and 38) of the energy eigenstates of a certain type of Hamiltonian.
In the language used here a slightly simplified version of this statement can be formulated as follows:
\begin{theorem}[Weak local diagonality \cite{Mueller2013}]
  Consider a locally interacting spin system with Hilbert space $\mcH$ and Hamiltonian $\H \in \Obs(\mcH)$ whose interaction graph $(\Vset,\Eset)$ is a hypercubic lattice of spatial dimension $D$ and let $S \subset B \subset \Vset$ be subsystems.
  Then there exist constants $C,c,v > 0$, which depend only on $D$ and the local interaction strength $J \coloneqq \max_{X\in\Eset} \norm[\infty]{\H_X}$ of the Hamiltonian such that for any energy eigenstate $\ket E$ of $\H$ to energy $E$ there exists a state $\rho^B_E \in \Qst(\mcH_B)$ that satisfies for any two energy eigenstates $\ket{E^B_l}, \ket{E^B_m}$ of $\trunc {\H_B} B$ with energies $E^B_l$ and $E^B_m$
  \begin{equation}
    |\bra{E^B_l} \rho^B_E \ket{E^B_m}| \leq \e^{-\dist(S,\compl B)\,(E^B_l-E^B_m)^2/(8\,c\,v^2)}
  \end{equation}
  and at the same time
  \begin{equation}
    \norm[1]{\Tr_{B \setminus S}(\rho^B_E) - \Tr_{\Vset \setminus S}(\ketbra{E}{E}) } \leq C\,A^2\,J\,
    \left({\frac{\dist(S,\compl B)}{4\,c\,v^2}}\right)^{1/2}\,\e^{-c\,\dist(S,\compl B)/2} .
  \end{equation}
\end{theorem}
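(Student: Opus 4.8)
The plan is to build $\rho^B_E$ from the reduced eigenstate $\Tr_{\Vset\setminus B}(\ketbra{E}{E}) \in \Qst(\mcH_B)$ by applying a \emph{Gaussian energy filter} associated with the truncated Hamiltonian $\trunc{\H_B}{B} \in \Obs(\mcH_B)$. Concretely, with a width $\sigma>0$ still to be fixed and $g_\sigma(t) \coloneqq (2\pi\sigma^2)^{-1/2}\,\e^{-t^2/(2\sigma^2)}$ the normalised Gaussian, one would set
\begin{equation}
  \rho^B_E \coloneqq \int_\R g_\sigma(t)\; \e^{-\iu\,\trunc{\H_B}{B}\,t}\; \Tr_{\Vset\setminus B}(\ketbra{E}{E})\; \e^{\iu\,\trunc{\H_B}{B}\,t}\; \dd t .
\end{equation}
Since $g_\sigma \geq 0$ with $\int_\R g_\sigma = 1$ and each integrand is a unitary conjugate of a state, $\rho^B_E \in \Qst(\mcH_B)$ automatically. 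In the eigenbasis $(\ket{E^B_l})$ of $\trunc{\H_B}{B}$ the time average amounts to multiplication by the Fourier transform of $g_\sigma$, so $|\bra{E^B_l}\rho^B_E\ket{E^B_m}| = \e^{-\sigma^2(E^B_l - E^B_m)^2/2}\,|\bra{E^B_l}\Tr_{\Vset\setminus B}(\ketbra{E}{E})\ket{E^B_m}| \leq \e^{-\sigma^2(E^B_l - E^B_m)^2/2}$. Choosing $\sigma^2 = \dist(S,\compl B)/(4\,c\,v^2)$ then reproduces the first bound of the statement verbatim, with $v$ slated to become a Lieb--Robinson velocity and $c$ pinned down only at the very end. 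The key structural point is that this choice of $\sigma$ is \emph{forced to be as small as the first bound allows}, and that the resulting $\sigma = \big(\dist(S,\compl B)/(4cv^2)\big)^{1/2}$ is precisely the prefactor that will reappear in the second bound.

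For the second bound the decisive input is that $\ket{E}$ is an \emph{exact} eigenstate of the full $\H$, so $\e^{\iu\,\H\,t}\ketbra{E}{E}\e^{-\iu\,\H\,t} = \ketbra{E}{E}$, whereas $\H$ and the embedded restricted Hamiltonian $\H_B \coloneqq \sum_{Y\in\Eset\colon Y\subseteq B}\H_Y$ differ only by $W \coloneqq \H - \H_B$, a sum of terms each supported at graph distance at least $\dist(S,\compl B) - \landauO(1)$ from $S$. Introducing the unitary $V_t \coloneqq \e^{-\iu\,\H\,t}\,\e^{\iu\,\H_B\,t}$ one has $\e^{-\iu\,\H_B\,t}\ketbra{E}{E}\e^{\iu\,\H_B\,t} = V_t\ad\,\ketbra{E}{E}\,V_t$ (using the eigenstate property once more), hence
\begin{equation}
  \Tr_{B\setminus S}(\rho^B_E) = \int_\R g_\sigma(t)\,\Tr_{\Vset\setminus S}\!\big(V_t\ad\,\ketbra{E}{E}\,V_t\big)\,\dd t ,
\end{equation}
and for every observable $O$ with $\supp(O)\subseteq S$ it holds that $|\Tr(O\,(V_t\ad\ketbra{E}{E}V_t - \ketbra{E}{E}))| \leq \norm[\infty]{[V_t,O]}$, which is in turn controlled by the difference of the Heisenberg evolutions $\e^{\iu\,\H_B\,t}O\e^{-\iu\,\H_B\,t}$ and $\e^{\iu\,\H\,t}O\e^{-\iu\,\H\,t}$ of $O$. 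A Duhamel interpolation in $W$ bounds that difference by a time integral of commutators $\norm[\infty]{[\H_Y,O(s)]}$ of the individual terms of $W$ with the evolved, still $S$-localised, $O$, each of which is controlled by the Lieb--Robinson bound (the corollary of Theorem~1 of \cite{Kliesch2013} quoted above), since $\dist(\supp(\H_Y),S) \geq \dist(S,\compl B) - \landauO(1)$. Organising the terms of $W$ into distance shells around $S$ --- where the hypercubic geometry (bounded coordination number) and the local interaction strength $J = \max_{X\in\Eset}\norm[\infty]{\H_X}$ enter, and where a careful accounting produces the surface-area factor $A^2$ of the statement --- yields $\norm[\infty]{[V_t,O]} \leq C_1\,A^2\,J\,\norm[\infty]{O}\,\Phi_v(|t|)\,\e^{-\dist(S,\compl B)}$, with $\Phi_v$ the Lieb--Robinson ``cone'' function (e.g.\ $\Phi_v(|t|)=\e^{v|t|}$ for the exponential form of the bound).

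It then remains to maximise this estimate over observables $O$ with $\norm[\infty]{O}\leq1$ --- using $\norm[1]{\sigma_1-\sigma_2} = \max_{O=O\ad,\,\norm[\infty]{O}\leq1}\Tr(O(\sigma_1-\sigma_2))$ for Hermitian $\sigma_1-\sigma_2$ --- to obtain a bound on $\norm[1]{\Tr_{\Vset\setminus S}(V_t\ad\ketbra{E}{E}V_t) - \Tr_{\Vset\setminus S}(\ketbra{E}{E})}$, and to average it over $t$ against $g_\sigma$, recalling the displayed identity for $\Tr_{B\setminus S}(\rho^B_E)$. The Gaussian average $\int_\R g_\sigma(t)\,\Phi_v(|t|)\,\dd t$ is finite (the Gaussian tail beats the cone) and, with the filter width $\sigma^2 = \dist(S,\compl B)/(4cv^2)$ forced by the first bound, is of order $\e^{v^2\sigma^2/2} = \e^{\dist(S,\compl B)/(8c)}$ up to an $\landauO(1)$ factor; multiplied by the $\e^{-\dist(S,\compl B)}$ from the Lieb--Robinson bound this leaves a net decay $\e^{-(1-1/(8c))\,\dist(S,\compl B)}$. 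Taking the Lieb--Robinson bound of \cite{Kliesch2013} in the normalisation where its spatial decay rate equals $1$ and choosing $c>0$ with $1-1/(8c)>c/2$ (for instance $c=1/2$), this is majorised --- after enlarging $C$ --- by the claimed $C\,A^2\,J\,\big(\dist(S,\compl B)/(4cv^2)\big)^{1/2}\,\e^{-c\,\dist(S,\compl B)/2}$; the extra $\big(\dist(S,\compl B)/(4cv^2)\big)^{1/2}$ in the statement is harmless slack, and is in fact exactly the filter width $\sigma$, which the careful shell accounting anyway produces as a genuine prefactor. The constants $C,c,v$ produced this way depend only on $D$ (geometry and coordination number) and on $J$ (through the Lieb--Robinson velocity and the number of local terms), as required. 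The main obstacle is this Lieb--Robinson comparison of the $\H_B$- and $\H$-generated dynamics of observables supported in $S$ --- in particular organising the sum over the boundary terms of $W$ into shells so that only an area-type prefactor survives --- together with the scale competition that underlies the whole construction: the first bound demands the Gaussian filter as broad as possible, the second degrades with $\sigma$ through the Lieb--Robinson cone, and it is only the matching scaling $\sigma^2 \propto \dist(S,\compl B)$ that keeps an exponential decay in $\dist(S,\compl B)$ in both bounds simultaneously.
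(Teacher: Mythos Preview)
The paper does not prove this theorem; it merely quotes it from Ref.~\cite{Mueller2013} (Theorems~4 and 38 there) and then explains its physical content. So there is no ``paper's own proof'' to compare against.

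That said, your approach is the right one and is essentially the argument of the original reference: construct $\rho^B_E$ by Gaussian time-averaging the reduced eigenstate under the truncated dynamics $\trunc{\H_B}{B}$, which immediately damps the off-diagonal matrix elements in the $\trunc{\H_B}{B}$-eigenbasis by $\e^{-\sigma^2(E^B_l-E^B_m)^2/2}$; then exploit that $\ket{E}$ is stationary under the \emph{full} $\H$ to rewrite the $\H_B$-averaged state as an average over $V_t\ad\ketbra{E}{E}V_t$ with $V_t=\e^{-\iu\H t}\e^{\iu\H_B t}$, and bound the effect on $S$-local observables via the Duhamel/Lieb--Robinson comparison of the two dynamics, using that $W=\H-\H_B$ is supported at distance $\gtrsim \dist(S,\compl B)$ from $S$. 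Your identification $\norm[\infty]{[V_t,O]}=\norm[\infty]{\e^{\iu\H_B t}O\e^{-\iu\H_B t}-\e^{\iu\H t}O\e^{-\iu\H t}}$ (by unitary invariance) is correct, and the scale-matching $\sigma^2\propto\dist(S,\compl B)/v^2$ is exactly the competition that drives the argument. The factor $\sigma=(\dist(S,\compl B)/(4cv^2))^{1/2}$ indeed arises naturally from the time integral in the Duhamel step, and the undefined symbol $A$ in the statement is, as you surmise, a boundary/area-type factor from the shell summation over the terms of $W$. Nothing substantive is missing from your sketch.
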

Essentially the theorem tells us that if $S$ is sufficiently far from the boundary of $B$, then for each energy eigenvector
$\ket{E}$ of $\H$ there exists a state in $\Qst(\mcH_B)$ that is both approximately diagonal in the eigenbasis of $\trunc {\H_B} B$ and locally on $S$ hard to distinguish from $\ketbra{E}{E}$.
If one could improve the result to the effect that it would show local indistinguishability not only from an approximately diagonal state but from a thermal state then it would constitute a proof of an ETH like statement.
However, such a generalisation can almost surely hold only under additional assumptions \cite{Mueller2013}.

The ETH, as defined in Definition~\ref{def:eth}, is sufficient for thermalisation in the following sense:
\begin{observation}[Thermalisation in systems that fulfil the ETH] \label{obs:ethissufficientforthermalisation}
  Systems whose Hamiltonian $\H \in \Obs(\mcH)$ fulfils the ETH, as stated in Definition~\ref{def:eth}, for a set $R \subset \R$ of energies with respect to a set $\POVMs$ of POVMs, thermalise on average with respect to the set $\POVMs$, in the sense of Definition~\ref{def:thermlaisationonaverage}, for all initial states for which the system apparently equilibrates on average with respect the restricted set $\POVMs$ of POVMs (see also Section~\ref{sec:notionsofequilibration}) and whose energy distribution is sufficiently narrow and contained in $R$, i.e., $E_k \notin R \implies \Tr(\Pi_k\,\rho(0)) = 0$.
\end{observation}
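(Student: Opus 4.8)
The plan is to verify the two requirements of Definition~\ref{def:thermlaisationonaverage} in turn, with the choice $\tilde\H = \H$ and $\beta$ the function supplied by Definition~\ref{def:eth}. The first requirement, apparent equilibration on average of $\rho(t)$ towards $\omega = \$_\H(\rho(0))$ under $\POVMs$, is already among the hypotheses of the observation, so nothing needs to be done there. The entire work is to establish that $\omega$ is $\POVMs$-indistinguishable from the Gibbs state $\rhog[\H](\beta(\bar E))$ at the conserved mean energy $\bar E \coloneqq \Tr(\H\,\rho(0)) = \sum_k E_k\,p_k$, where $p_k \coloneqq \Tr(\Pi_k\,\rho(0))$ are the energy level occupations and, by assumption, $p_k = 0$ whenever $E_k \notin R$.

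First I would resolve the de-phased state along the energy eigenspaces, writing $\omega = \$_\H(\rho(0)) = \sum_{k\colon p_k>0} p_k\,\omega_k$ with $\omega_k \coloneqq \Pi_k\,\rho(0)\,\Pi_k/p_k$ a state supported on the range of $\Pi_k$. Diagonalising $\omega_k$ exhibits it as a convex mixture of pure states $\psi$ with $\psi \leq \Pi_k$; since only eigenspaces with $E_k \in R$ contribute, Definition~\ref{def:eth} applies to each such $\psi$ and bounds $\tracedistance[\POVMs]{\psi}{\rhog[\H](\beta(E_k))}$. The structural fact needed here is that $\tracedistance[\POVMs]{\argdot}{\sigma}$ is convex in its first argument for every fixed $\sigma$ --- by \eqref{eq:distinguishabilityunderrestrictedsetsofpovms} it is a supremum over $M \in \POVMs$ of functions of $\rho$ that are convex, being sums of moduli of affine functionals --- which carries the ETH bound first from the pure states to each $\omega_k$ and then from the $\omega_k$ to $\omega$.

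Next I would pass from the family $\rhog[\H](\beta(E_k))$ of reference states, one per occupied level, to the single target $\rhog[\H](\beta(\bar E))$. This is where the hypotheses that $\beta$ is sufficiently smooth and that the energy distribution is sufficiently narrow enter: if all occupied $E_k$ lie in an interval of width $\Delta$, then so does $\bar E$, being a convex combination of them, and Lipschitz continuity of $\beta$ gives $|\beta(E_k) - \beta(\bar E)| \leq L_\beta\,\Delta$. Combined with the elementary estimate $\norm[1]{\rhog[\H](\beta) - \rhog[\H](\beta')} \leq 2\,\norm[\infty]{\H}\,|\beta - \beta'|$, obtained by differentiating $\e^{-\beta\,\H}/Z[\H](\beta)$ and using that $\rhog[\H](\beta)$ commutes with $\H$, together with $\tracedistance[\POVMs]{}{} \leq \tracedistance{}{}$ from \eqref{eq:tracedistanceboundsrestrictedtracedistance}, each $\rhog[\H](\beta(E_k))$ is $\POVMs$-close to $\rhog[\H](\beta(\bar E))$. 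One application of the triangle inequality for the pseudometric $\tracedistance[\POVMs]{}{}$ together with the two convexity steps then gives
\begin{equation}
  \tracedistance[\POVMs]{\omega}{\rhog[\H](\beta(\bar E))} \leq \sum_{k\colon p_k>0} p_k\left(\tracedistance[\POVMs]{\omega_k}{\rhog[\H](\beta(E_k))} + \tracedistance[\POVMs]{\rhog[\H](\beta(E_k))}{\rhog[\H](\beta(\bar E))}\right) ,
\end{equation}
a convex combination of terms each controlled by the ETH smallness parameter and by $\norm[\infty]{\H}\,L_\beta\,\Delta$, hence sufficiently small.

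The argument is mostly bookkeeping; the main obstacle is definitional rather than analytic. One must make the three qualifiers \emph{sufficiently smooth}, \emph{sufficiently narrow} and \emph{sufficiently small} quantitatively compatible across Definitions~\ref{def:eth} and~\ref{def:thermlaisationonaverage}, i.e.\ phrase the statement so that both the ETH distinguishability parameter and the product $\norm[\infty]{\H}\,L_\beta\,\Delta$ --- in which $\Delta$ is the width of the energy window carved out by $R$, and $\norm[\infty]{\H}$ should ideally be sharpened to a more local spectral width --- can be absorbed into whatever numerical notion of sufficiently small distinguishability Definition~\ref{def:thermlaisationonaverage} is instantiated with. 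A secondary point, already dispatched by convexity above, is that Definition~\ref{def:eth} is phrased for pure states $\psi \leq \Pi_k$ whereas the building blocks $\omega_k$ of the de-phased state are generically mixed.
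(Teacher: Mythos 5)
Your proposal is correct, and it supplies a proof the paper does not actually give: in the text, Observation~\ref{obs:ethissufficientforthermalisation} is asserted as ``widely known'' with a single accompanying remark, namely that the requirement in Definition~\ref{def:eth} that $\tracedistance[\POVMs]{\psi}{\rhog[\H](\beta(E_k))}$ be small for \emph{all} normalised pure $\psi \leq \Pi_k$ is crucial. Your argument hits exactly this subtlety: by decomposing $\omega = \sum_k p_k\,\omega_k$ with $\omega_k = \Pi_k\rho(0)\Pi_k/p_k$ and diagonalising each $\omega_k$, the pure states that appear are arbitrary vectors in the degenerate eigenspace, not members of a preferred eigenbasis, so the strong form of Definition~\ref{def:eth} is what is needed and what you invoke. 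The two auxiliary ingredients --- convexity of $\tracedistance[\POVMs]{\argdot}{\sigma}$ in the first argument (a supremum over sums of moduli of affine functionals), and the Lipschitz-type estimate $\norm[1]{\rhog[\H](\beta)-\rhog[\H](\beta')} \leq 2\norm[\infty]{\H}|\beta-\beta'|$ obtained via $\partial_\beta\rhog[\H](\beta) = -(\H - \ex{\H}{\rhog})\,\rhog[\H](\beta)$ --- are both sound. Your closing caveat is also apt: since the paper deliberately leaves ``sufficiently smooth'', ``sufficiently narrow'' and ``sufficiently small'' unquantified, the chain of estimates you assemble is exactly what a quantitative version of the observation would have to balance, and your identification of the relevant product $\norm[\infty]{\H}\,L_\beta\,\Delta$ is the right bookkeeping term (with, as you note, $\norm[\infty]{\H}$ preferably replaced by a local spectral width for a non-vacuous bound in the thermodynamic limit).
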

The fact that the ETH is sufficient for thermalisation in this or a similar sense is widely known (see, for example, Refs.~\cite{tasaki98,Sirker2013}).
It is worth noting that the strong requirement in Definition~\ref{def:eth} that the distinguishability $\tracedistance[\POVMs]{\psi}{\rhog[\H](\beta(E_k))}$ must be small for all normalised pure states $\psi \leq \Pi_K$ is crucial for the above observation to hold.
At the same time, this requirement obviously becomes harder to satisfy the more degenerate the Hamiltonian is.

If one takes the point of view that one should say that a system thermalises only if it thermalises in the sense of Definition~\ref{def:thermlaisationonaverage} for \emph{all} equilibrating initial states with a sufficiently narrow energy distribution, then fulfilment of the ETH is at the same time essentially also necessary for thermalisation.
We will not make this statement fully rigorous, but the intuition behind it is as follows:
If the ETH is not fulfilled, there should always exist initial states with a narrow energy distribution that only have overlap with energy levels that, for certain observables or POVMs, produce a measurement statistic that is sufficiently far from that of the closest thermal state. This distinguishability from the thermal state will then still be visible in the de-phased state and hence survive de-phasing and equilibration.

Such arguments, and the above mentioned apparent connection between the ETH and \mbox{(non-)}integrability, has lead some authors to proclaim \cite{Kollath07,Cramer2008,Flesch08,Rigol08,Rigol09,Banuls10} that non-integrable systems thermalise and integrable systems do not.
While there is evidence that in many models this is indeed the case, we will see in Section~\ref{sec:absenceofthermalisation} and \ref{sec:integrability} that the situation is in fact more involved.
We will given an overview of the numerical and analytical literature on thermalisation in the context of the ETH in Section~\ref{sec:numericalthermal}. 

We have seen that the ETH as defined in Definition~\ref{def:eth} is by construction essentially sufficient and, in a certain sense, necessary for thermalisation.
The necessary part, however, only holds if one is willing to call a system thermalising only if it thermalises for a given set of POVMs for \emph{all} initial states with a sufficiently narrow energy distribution for which it also apparently equilibrates.
Hence, there is the possibility to show thermalisation in systems that do not fulfil the ETH, if one is willing to restrict the class of allowed initial states.
As we will see in the following this can indeed be done.

\subsection{Subsystem thermalisation under assumptions on the initial state}
\label{sec:thermalisationunderassumptionsontheinitialstate}
In this section we discuss a second approach towards the problem of thermalisation that is independent of the eigenstate thermalisation hypothesis (ETH).
Instead of making strong assumptions concerning the properties of the energy eigenstates of the Hamiltonian we will show thermalisation under stronger assumptions concerning the energy distribution of the initial state.
This alternative and complementing approach is inspired by an argument from classical statistical mechanics, which we will lift to the quantum setting.
The details of this approach were first worked out in Ref.~\cite{Riera2012}.

The first motivation for this work comes from the fact that explaining thermalisation by using the eigenstate thermalisation hypothesis has one important drawback --- that the ETH is indeed a \emph{hypothesis}.
One could make the provocative claim that this leads to the ironic situation that attempts to explain thermalisation by the ETH have the following problem:
They essentially try to explain one phenomenon that is not well understood by another one that is almost as little understood \cite{Singh}.

The second motivation comes from the consideration that demanding thermalisation of \emph{all} initial states with an energy distribution that is only required to be narrow but otherwise allowed to have arbitrary complex structure is asking for too much.

In the light of typicality arguments (Section~\ref{sec:typicality}) it seems plausible to restrict the class of initial states for which one tries to show thermalisation, or even to be content with an argument that shows thermalisation for most states from some measure.
In addition, certain restrictions on the initial states are anyway already necessary to prove equilibration on average in the first place (Section~\ref{sec:equlibrationintheweaksense}), and practical limits on the experimental capabilities can be used to argue that many initial states of macroscopic objects are essentially impossible to prepare \cite{Reimann12,Reimann2012,Reimann08}.

The third motivation comes from the known fact that in some systems the ETH is not fulfilled and this has been linked to the \emph{integrability} of these models, while \emph{non-integrability} is often associated with a fulfilment of the ETH and thermalisation (see for example Refs.~\cite{Rigol07,Rigol08,Rigol11,Larson13,Polkovnikov11,Cassidy11,Gritsev10,Fioretto2010,1006.1634v1,Cazalilla11,1103.0787v1}).
What \emph{\mbox{(non-)}integrability} even means in the context of quantum mechanics is, however, 
far from being settled \cite{1012.3587v1,Benet2003} (see also Section~\ref{sec:integrability}).
It is thus of interest to approach the problem of thermalisation in a way that is independent of the concept of integrability.

As we will see in the following, restricting the class of initial states makes it possible to rigorously prove subsystem thermalisation on average without any reference to the ETH for both spin and fermionic systems.
The overall structure of the argument is depicted in Figure~\ref{fig:structureofthermalisationargument}.
The result that we will derive and discuss in this section can be combined with either the typicality theorems from Section~\ref{sec:typicality} or the dynamical equilibration theorems from Section~\ref{sec:equilibration}.
The former yields a kinematic thermalisation statement (Observation~\ref{obs:thermalisationonofrandomstaates}) that holds for most Haar random states from a certain subspace.
The latter yields a dynamic thermalisation result (Observation~\ref{obs:thermalisationonaverage}) that proves thermalisation on average in the sense of Definition~\ref{def:thermlaisationonaverage} for all initial states from a certain class of states.
It is hence closer to the thermalisation statement obtained under the ETH (Observation~\ref{obs:ethissufficientforthermalisation}), which we discussed in the last section.

In essence, the proofs of the statements presented in this section are translations of the classical derivation of the canonical ensemble for small subsystems of large weakly interacting systems that are described by a micro-canonical ensemble to the quantum setting.
The main difficulty is that in quantum mechanics the interaction between the small subsystem and the bath not only shifts the eigenenergies of the non-interacting Hamiltonian, but, in addition, significantly perturbs the energy eigenstates.
In many previous accounts of the thermalisation problem this issue has been partially overlooked or at least not been addressed rigorously.
Compare for example Refs.~\cite{Popescu05,Popescu06,Goldstein06}.

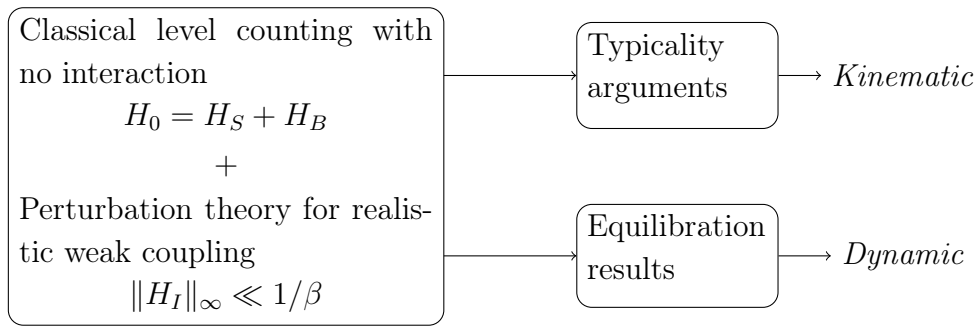
\begin{figure}[bt]
  \centering
  \begin{tikzpicture}
    \node (plus) {$+$};
    \node[above of=plus,node distance=1.2cm] (goldstein) {
      \begin{minipage}{5.5cm}
          \begin{block}{}
            Classical level counting with no interaction\\\centerline{$\H_0 = \H_S + \H_B$}
          \end{block}
        \end{minipage}
      };
    \node[below of=plus,node distance=1.2cm] (perturbation) {
        \begin{minipage}{5.5cm}
          \begin{block}{}
            Perturbation theory for realistic weak coupling\\\centerline{$\|\H_I\|_\infty \ll 1/\beta$}
          \end{block}
        \end{minipage}
      };
    \coordinate (goldsteinne) at (goldstein.north east);
    \draw[rounded corners=0.2cm] (goldstein.north east) rectangle (perturbation.south west);
      \node[right of=goldstein,node distance=6cm,draw,rounded corners=0.2cm] (typicality) {
        \begin{minipage}{2.4cm}
          \begin{block}{}
            Typicality\\arguments
          \end{block}
        \end{minipage}
      };
      \draw[->] (goldsteinne |- goldstein) -- (typicality);
      \node[right of=typicality,node distance=3cm] (kinematic) {\emph{Kinematic}};
      \draw[->]  (typicality) -- (kinematic);
      \node[right of=perturbation,node distance=6cm,draw,rounded corners=0.2cm] (equilibration) {
        \begin{minipage}{2.4cm}
          \begin{block}{}
            Equilibration results
          \end{block}
        \end{minipage}
      };
      \draw[->] (goldsteinne |- perturbation) -- (equilibration);
      \node[right of=equilibration,node distance=3cm] (dynamic) {\emph{Dynamic}};
      \draw[->]  (equilibration) -- (dynamic);
    \end{tikzpicture}
  \caption{(Reproduction from Ref.~\cite{Gogolin2014}) Structure of the proof of thermalisation from Ref.~\cite{Riera2012}.}
  \label{fig:structureofthermalisationargument}
\end{figure}

How does the interaction influence the Hamiltonian?
The eigenvalues of the interacting Hamiltonian are shifted at most by the operator norm of the interaction Hamiltonian with respect to those of the non-interacting Hamiltonian \cite[Theorem III.2.1]{bhatia}.
As long as the interaction is weak, in the sense that its operator norm is small compared to an energy uncertainty or measurement resolution, the change in the eigenvalues will thus be insignificant.

The energy eigenstates, or in the case of a degenerate Hamiltonian the spectral projectors, are much more fragile.
Naive perturbation theory breaks down \cite{Sakurai1995} as soon as the strength of the perturbation is larger than the gaps of the non-interacting Hamiltonian.
The gaps of a locally interacting quantum system are, however, usually exponentially small in the system size.
Indeed, if the non-interacting Hamiltonian $\H_0$ and the interaction Hamiltonian $\H_I$ are not diagonal in the same basis, the energy eigenstates of $\H = \H_0 + \H_I$ will usually be markedly different from those of $\H_0$.

Before we tackle this problem, let us consider the non-interacting case, i.e., a Hamiltonian of the form $\H_0 \coloneqq \H_S + \H_B$. Let $\H_0$ and $\trunc {\H_S} S$ have spectral decompositions $\H_0 = \sum_k^{d'_0} E^0_k\,\Pi^0_k$ and $\trunc {\H_S} S = \sum_l^{d_S'} E^S_l\,\Pi^S_l$, respectively.
Moreover, let $(\ket{\tilde E^S_l})_{l=1}^{d_S}$ and $(\ket{\tilde E^B_m})_{m=1}^{d_B}$ be some orthonormal eigenbases with corresponding eigenvalues $(\tilde E^S_l)_{l=1}^{d_S}$ and $(\tilde E^B_m)_{m=1}^{d_B}$ of $\trunc {\H_S} S$ and $\trunc {\H_B} B$, respectively.
The Hamiltonians $\trunc {H_S} S$, $\trunc {H_B} B$, and $\H_0$ are allowed to have degeneracies, i.e., $l\neq l' \notimplies \tilde E^S_l \neq \tilde E^S_{l'}$ and  $m\neq m' \notimplies \tilde E^B_m \neq  \tilde E^B_{m'}$ and the bases are not unique.
Remember that, on the other hand, by definition, the elements of the sequences $(E^0_k)_{k=1}^{d'_0}$ and $(E^S_l)_{l=1}^{d'}$ are distinct.

We first look at the case of spin systems.
In such systems each of the spectral projectors $\Pi^0_k$ of $\H_0$ is of the form 
\begin{equation}
  \Pi^0_k = \sum_{l,m\suchthat \tilde E^S_l+\tilde E^B_m=E^0_k} \ketbra{\tilde E^S_l}{\tilde E^S_l} \otimes \ketbra{\tilde E^B_m}{\tilde E^B_m} . \label{eq:energyprojectorsofnoninteractinghamiltonianinspinsystems}
\end{equation}
The micro-canonical state $\rhomc[\H_0]([E,E+\Delta])$ to an energy interval $[E,E+\Delta]$ is hence proportional to
\begin{align}
  \rhomc[\H_0]([E,E+\Delta]) &\propto \sum_{k\suchthat E^0_k \in [E,E+\Delta]}\ \sum_{l,m\suchthat \tilde E^S_l+\tilde E^B_m=E^0_k} \ketbra{\tilde E^S_l}{\tilde E^S_l} \otimes \ketbra{\tilde E^B_m}{\tilde E^B_m} . \label{eq:firstequalityinthecountingargumentforspins}\\
  \intertext{Its reduced state $\rhomc^S[\H_0]([E,E+\Delta]) = \Tr_B \rhomc[\H_0]([E,E+\Delta])$ on $S$ therefore satisfies}
  \rhomc^S[\H_0]([E,E+\Delta]) &\propto \sum_{k\suchthat E^0_k \in [E,E+\Delta]}\ \sum_{l,m\suchthat \tilde E^S_l+\tilde  E^B_m=E^0_k} \ketbra{\tilde E^S_l}{\tilde E^S_l} \\
  &= \sum_{k\suchthat E^0_k \in [E,E+\Delta]}\, \sum_{l=1}^{d_S} \ketbra{\tilde E^S_l}{\tilde E^S_l} \, |\{m\oftype \tilde E^S_l+\tilde E^B_m=E^0_k \}| \\
  &= \sum_{k\suchthat E^0_k \in [E,E+\Delta]}\, \sum_{l=1}^{d_S'} \Pi^S_l \, |\{m\oftype E^S_l+\tilde E^B_m=E^0_k \}| \\
  &= \sum_{l=1}^{d_S'} \Pi^S_l \, |\{m\oftype E^S_l+\tilde E^B_m \in [E,E+\Delta] \}| \\
  &= \sum_{l=1}^{d_S'} \Pi^S_l \, \#_\Delta[\trunc{\H_B}B](E-E^S_l) ,
\end{align}
where 
\begin{equation}
  \#_\Delta[\trunc{\H_B}B](E) \coloneqq |\{m\oftype \tilde E^B_m \in [E,E+\Delta] \}| = \rank(\rhomc[\trunc{\H_B}B]([E,E+\Delta]))
\end{equation}
is the \emph{number of orthonormal energy eigenstates} of the bath Hamiltonian $\H_B$ to energies in the interval $[E,E+\Delta]$.

For systems of fermions \texteqref{eq:energyprojectorsofnoninteractinghamiltonianinspinsystems} does not hold, because the Hilbert space of the joint system is not the tensor product of the Hilbert spaces of the subsystems.
However, the following quite lengthy calculation shows an equivalent result also for fermionic systems.
Readers not interested in the details can safely jump directly to Observation~\ref{obs:gibbsstatesasreductionsofmicrocanonicalstates}.

Denote by $f_x,f\ad_x$ the fermionic annihilation and creation operators on $\mcH$ and by $\tilde f_x, \tilde f\ad_x$ with $x \in S$ those acting on $\mcH_S$ and for $x \in B$ those acting on $\mcH_B$.
Furthermore, denote the vacuum state vector
in $\mcH$ by $\ket{0}$ and the projectors in $\Bop(\mcH)$ onto the subspace with no particle in system $S$ or $B$ by $\ketbra{0}{0}_S$, and $\ketbra{0}{0}_B$, respectively.
The projectors $\ketbra{0}{0}_S$, $\ketbra{0}{0}_B$, and $\ketbra{0}{0}$ are all even operators and $\ketbra{0}{0} = \ketbra{0}{0}_S\,\ketbra{0}{0}_B$.
For each $l \in [d_S]$ let $p^{\H_S}_l$ be the representation of the eigenstate $\ket{\tilde E^S_l}$ as a polynomial in the fermionic operators on $\mcH_S$, i.e., $\ket{\tilde E^S_l} = p^{\H_S}_l((\tilde f_s,\tilde f\ad_s)_{s\in S})\,\ket{0}_S$, and likewise for $p^{\H_B}_m$.
Note that the $p^{\H_S}_l$ and the $p^{\H_B}_m$ are either even or odd polynomials as otherwise the projectors $\ketbra{\tilde E^S_l}{\tilde E^S_l}$ and $\ketbra{\tilde E^B_m}{\tilde E^B_m}$ would not be even.
Furthermore, note that commuting two polynomials that are both either even or odd gives a global minus sign only if both polynomials are odd.
As $\H_S$ and $\H_B$ are even operators it is straight forward to verify that the states $\ket{\tilde E^S_l + \tilde E^B_m} \coloneqq p^{\H_S}_l((f_s,f\ad_s)_{s\in S})\,p^{\H_B}_m((f_b,f\ad_b)_{b\in B})\,\ket{0}$ are eigenstates of $\H^0$ to energy $\tilde E^S_l + \tilde E^B_m$.
In fact, they form an orthonormal basis of $\mcH$ in which $\H^0$, $\H_S$, and $\H_B$ are jointly diagonal.
For the sake of brevity we omit the subscripts $_{s\in S}$ and $_{b\in B}$ in the following calculation.
It is again straight forward to verify that for any even operator $A \in \Bop(\mcH)$ with $\supp(A) \subseteq S$ it holds that
\begin{align}
  &\Tr\big(A\,\ketbra{\tilde E^S_l + \tilde E^B_m}{\tilde E^S_l + \tilde E^B_m}\big) \nonumber \\
  = &\Tr\big(A\, p^{\H_S}_l((f_s,f\ad_s))\,p^{\H_B}_m((f_b,f\ad_b))\,\ketbra{0}{0}_S\,\ketbra{0}{0}_B\,p^{\H_B}_m((f_b,f\ad_b))\ad\,p^{\H_S}_l((f_s,f_s))\ad \big) \\
  = &\Tr\big(A\, p^{\H_S}_l((f_s,f\ad_s))\,\ketbra{0}{0}_S\,p^{\H_S}_l((f_s,f\ad_s))\ad\,p^{\H_B}_m((f_b,f\ad_b))\,\ketbra{0}{0}_B\,p^{\H_B}_m((f_b,f\ad_b))\ad \big) \\ 
  = &\Tr\big(A\,\ketbra{\tilde E^S_l}{\tilde E^S_l} \big) .
\end{align}
The last step can be shown by explicitly writing out the trace in the Fock basis and inserting an identity between the operators that are supported on $S$ and those supported on $B$.

Now, note that any operator $A \in \Bop(\mcH)$ with $\supp(A) \subseteq S$ can be written as a sum of an even and odd part and that only the even part can contribute to an expectation value of the form $\Tr(A\,\ketbra{\tilde E^S_l + \tilde E^B_m}{\tilde E^S_l + \tilde E^B_m})$.
The above calculation is hence sufficient to show that (remember the definition of the partial trace in \texteqref{eq:partialtrace})
\begin{equation}
  \forall l\in[d_S],m\in[d_B]\itholds \Tr_B(\ketbra{\tilde E^S_l + \tilde E^B_m}{\tilde E^S_l + \tilde E^B_m}) = \ketbra{\tilde E^S_l}{\tilde E^S_l} .
\end{equation}
Finally, realizing that
\begin{equation}
  \begin{split}
    \rhomc[\H_0]&([E,E+\Delta])\\
    &= \sum_{k\suchthat E^0_k \in [E,E+\Delta]}\ \sum_{l,m\suchthat \tilde E^S_l+\tilde E^B_m=E^0_k} \Tr_B(\ketbra{\tilde E^S_l + \tilde E^B_m}{\tilde E^S_l + \tilde E^B_m}) 
  \end{split}
\end{equation}
yields an expression equivalent to \texteqref{eq:firstequalityinthecountingargumentforspins} and the proof then proceeds analogously.
We summarise the result of the above calculation in the following observation:
\begin{observation}[Gibbs states as reductions of micro-canonical states of the non-interacting Hamiltonians] \label{obs:gibbsstatesasreductionsofmicrocanonicalstates}
  Let $[E,E+\Delta]$ be an energy interval and $\H_0 = \H_S + \H_B$ a non-interacting Hamiltonian of a bipartite quantum system of spins or fermions with $\Vset = S \dunion B$.
If for some $\beta \in \R$ it holds that
  \begin{equation} \label{eq:expoenntialdensityofstates}
    \#_\Delta[\trunc{\H_B}B](E) \propto \e^{-\beta\,E} ,
  \end{equation}
  then $\rhomc^S[\H_0]([E,E+\Delta])$ takes the well known form of a thermal state, i.e.,
  \begin{equation} \label{eq:noninteractingreducedmicrocanonicalstateisequaltocanonicalstate}
    \rhomc^S[\H_0]([E,E+\Delta]) \propto \sum_{l=1}^{d_S'} \Pi^S_l \, \e^{-\beta\,E^S_l} \propto \rhog[\trunc{\H_S}S](\beta) = \rhog^S[\H_0](\beta).
  \end{equation}
\end{observation}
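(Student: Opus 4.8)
The plan is to reduce the statement, via an explicit level-counting computation, to the single identity
\begin{equation} \label{eq:proposal-counting}
  \rhomc^S[\H_0]([E,E+\Delta]) \propto \sum_{l=1}^{d_S'} \Pi^S_l \, \#_\Delta[\trunc{\H_B}B](E - E^S_l) ,
\end{equation}
where $\#_\Delta[\trunc{\H_B}B](\cdot)$ counts bath eigenstates in a window of width $\Delta$, and to feed in the exponential density-of-states hypothesis \eqref{eq:expoenntialdensityofstates} only at the very end. Granting \eqref{eq:proposal-counting}, substituting \eqref{eq:expoenntialdensityofstates} turns the $l$-th coefficient into an exponential weight in $E^S_l$ multiplied by an $l$-independent factor that depends only on $E$; the latter is absorbed into the normalisation of $\rhomc^S[\H_0]$, leaving $\sum_l \Pi^S_l \, \e^{-\beta E^S_l} \propto \rhog[\trunc{\H_S}S](\beta) = \rhog^S[\H_0](\beta)$, which is precisely \eqref{eq:noninteractingreducedmicrocanonicalstateisequaltocanonicalstate}.

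For spin systems, \eqref{eq:proposal-counting} is nearly immediate. Here $\mcH = \mcH_S \otimes \mcH_B$ and every spectral projector of $\H_0 = \H_S + \H_B$ has the product form \eqref{eq:energyprojectorsofnoninteractinghamiltonianinspinsystems}. First I would sum these projectors over the window $[E,E+\Delta]$, take the partial trace over $B$, and then regroup the resulting double sum over the product basis by the \emph{distinct} eigenvalues $E^S_l$ of $\trunc{\H_S}S$; the inner sum over bath indices collapses to $\#_\Delta[\trunc{\H_B}B](E - E^S_l)$, which gives \eqref{eq:proposal-counting}. Nothing beyond the tensor-product decomposition and elementary counting is used.

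The substantive part is the fermionic case, in which $\mcH$ is a Fock space and \eqref{eq:energyprojectorsofnoninteractinghamiltonianinspinsystems} is unavailable. The plan here is: (i) fix orthonormal eigenbases $(\ket{\tilde E^S_l})_l$ of $\trunc{\H_S}S$ and $(\ket{\tilde E^B_m})_m$ of $\trunc{\H_B}B$, represent each vector by a polynomial $p^{\H_S}_l$, $p^{\H_B}_m$ in the fermionic operators acting on the respective vacuum, and observe that each such polynomial is purely even or purely odd, since otherwise the rank-one projectors $\ketbra{\tilde E^S_l}{\tilde E^S_l}$ and $\ketbra{\tilde E^B_m}{\tilde E^B_m}$ would fail to be even and hence violate the fermion number parity superselection rule; (ii) form the joint vectors $\ket{\tilde E^S_l + \tilde E^B_m} \coloneqq p^{\H_S}_l\,p^{\H_B}_m\,\ket{0}$ on $\mcH$ and check, using that $\H_S$ and $\H_B$ are even, that they form an orthonormal eigenbasis of $\H_0$ jointly diagonalising $\H_0$, $\H_S$ and $\H_B$; (iii) prove the marginal identity $\Tr_B(\ketbra{\tilde E^S_l + \tilde E^B_m}{\tilde E^S_l + \tilde E^B_m}) = \ketbra{\tilde E^S_l}{\tilde E^S_l}$ for all $l,m$. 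For (iii) I would pair both sides against an arbitrary \emph{even} observable $A$ with $\supp(A) \subseteq S$ --- it suffices to test even observables, since the odd part of any $S$-supported operator has vanishing expectation in these eigenstates --- commute the $B$-supported factors $p^{\H_B}_m$ past $A$ and past the $S$-supported factors (any sign incurred when two odd polynomials are exchanged appears identically on the ket and the bra side and therefore cancels), and then eliminate $p^{\H_B}_m\,\ketbra{0}{0}_B\,(p^{\H_B}_m)\ad$ by writing the trace out in the Fock basis and inserting a resolution of the identity between the $S$-part and the $B$-part. With this identity in hand, the fermionic micro-canonical state --- the sum over the window $[E,E+\Delta]$ of the partial traces of the joint eigenprojectors --- collapses term by term to exactly the spin-case expression, and the counting argument of the spin case then applies verbatim.

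I expect the main obstacle to be the fermionic sign bookkeeping in step (iii): one must work with the marginal defined through even observables as in \eqref{eq:reducedstate} (otherwise the partial trace over fermionic modes is not well-defined), and carefully verify that the non-local phases attached to $p^{\H_S}_l$ and $p^{\H_B}_m$ cancel in $p^{\H_B}_m\,\ketbra{0}{0}_B\,(p^{\H_B}_m)\ad$ after the Fock-basis expansion. Once this identity is secured, the regrouping by distinct subsystem eigenvalues, the emergence of $\#_\Delta[\trunc{\H_B}B]$, and the closing substitution of the exponential ansatz are all routine.
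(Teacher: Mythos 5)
Your proposal is correct and follows essentially the same route as the paper: the level-counting reduction to $\sum_l \Pi^S_l\,\#_\Delta[\trunc{\H_B}{B}](E-E^S_l)$ via partial trace and regrouping by distinct subsystem eigenvalues in the spin case, and the polynomial/parity argument, joint eigenbasis $\ket{\tilde E^S_l + \tilde E^B_m}$, and marginal identity verified by pairing against even $S$-supported observables in the fermionic case, with the exponential density-of-states ansatz substituted only at the end. The sign-bookkeeping concern you flag in step (iii) is exactly where the paper also takes care, and your resolution (phases cancel on ket and bra, test only even observables) matches the paper's.
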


Note how $\beta$, which was introduced in \texteqref{eq:expoenntialdensityofstates} simply as a parameter describing the shape of the number of states, ends up being the inverse temperature of the thermal state $\rhog[\trunc{\H_S}S](\beta)$ of the subsystem $S$.
Similar calculations (at least for spin systems) can be found for example in Refs.~\cite{Goldstein06,tasaki98,Reimann07,Gemmer09} and in many textbooks on statistical mechanics.

For finite dimensional baths the proportionality $\#_\Delta[\trunc{\H_B}B](E) \propto \e^{-\beta\,E}$ can never be exactly fulfilled simply because $\#_\Delta[\trunc{\H_B}B](E)$ is not continuous.
A detailed analysis \cite[Appendix A]{Riera2012} shows that if the logarithm of the number of states $\ln(\#_\Delta[\trunc{\H_B}B](E))$ can be sufficiently well approximated by a twice differentiable function whose second derivative is small compared to the width of the relevant energy range $[E-\norm[\infty]{\H_S},E+\norm[\infty]{\H_S}]$, then \texteqref{eq:noninteractingreducedmicrocanonicalstateisequaltocanonicalstate} is fulfilled approximately.
The first derivative of this approximation ends up being the inverse temperature of the thermal state, the second derivative enters the error bound.

It is widely known that natural locally interacting Hamiltonians $\H$ with bounded local terms ``generically'' have an approximately Gaussian number of states $\#_\Delta[\H](E)$ if the system size is sufficiently large \cite[Section 12.2]{Gemmer09} (see also Ref.~\cite{Hartmann2005,Keating} for some rigorous results).
It is more common to refer to the \emph{density of states} in this case, which is essentially the limit of $\#_\Delta[\H](E)/\Delta$ for $\Delta$ small and increasing system size.
If the bath Hamiltonian $\H_B$ is taken to be such a model with a nearly Gaussian density and number of states, the approximation by a twice differentiable function is possible and the distance $\tracedistance{\rhomc^S[\H_0]([E,E+\Delta])}{\rhog[\trunc{\H_S}S](\beta)}$ can be bounded \cite[Appendix B]{Riera2012} and is usually exponentially small in the size of the bath.
In the following we will call locally interacting systems that have this property ``generic''.

The value of $\beta$ for which $\tracedistance{\rhomc^S[\H_0]([E,E+\Delta])}{\rhog[\trunc{\H_S}S](\beta)}$ is small depends on $E$.
If $\#_\Delta[\trunc{\H_B}B]$ is indeed close to a Gaussian, then $\ln(\#_\Delta[\trunc{\H_B}B])$ can be well approximated by an inverted parabola.
Its first derivative, which is essentially the optimal $\beta$, is large for low values of $E$, thus associating them with low temperatures.
For values of $E$ in the center of the spectrum it goes to zero, corresponding to infinite temperature, and becomes negative for even higher values of $E$.

In conclusion, we can say that the reduction on $S$ of a micro-canonical state to an energy interval $[E,E+\Delta]$ of a system that is a composite system with $\Vset = S \dunion B$ and without any interaction between $S$ and $B$, whose Hamiltonian $\H_B$ on $B$ is a ``generic'' many-body Hamiltonian, will typically be exponentially close to a Gibbs state of $\H_S$ with an inverse temperature $\beta$ that depends in a reasonable way on $E$.
This works for all values of $E$ that are neither too low nor too high.
At the edges of the spectrum the number of states of the bath will be too low to allow for a good approximation of the number of states by a twice differentiable function.
In addition, $\Delta$ must be both small compared to $\norm[\infty]{\H}$ and large compared to the largest gaps in the spectrum of $\H$ in the relevant energy range.

Now we consider the influence of an interaction between $S$ and $B$.
The challenge posed by the fact that such an interaction will typically markedly perturb the energy eigenstates can be overcome by a perturbation theorem based on a result of Ref.~\cite{bhatia} (see also Refs.~\cite{Chandler70,bhatia83}) for projectors that are sums of spectral projectors.
\begin{theorem}[Stability of sums of spectral projectors (implied by Theorem~1 of Ref.~\cite{Riera2012})] \label{thm:stabiityofsumsofspectralprojectors}
  Given an energy interval $[E,E+\Delta]$ and two Hamiltonians $\H,\H' \in \Obs(\mcH)$ with spectral decompositions $H = \sum_k E_k\,\Pi_k$ and $H' = \sum_k E'_k\,\Pi'_k$.
  Let $P$ and $P'$ be projectors that are sums of the spectral projectors $\Pi_k$ and $\Pi'_k$ to energies in $[E,E+\Delta]$ of $\H$ and $\H'$, respectively, i.e.,
  \begin{align} \label{eq:defofPandPprime}
    P &\coloneqq \sum_{k\suchthat E_k\in[E,E+\Delta]} \Pi_k & &\text{and}& P' &\coloneqq  \sum_{k\suchthat E'_k\in[E,E+\Delta]} \Pi'_k .
  \end{align}
  Then for every $\epsilon>0$
  \begin{equation}
    \norm[1]{P-P'} \leq \big(\rank(P) + \rank(P')\big)\, \frac{\norm[\infty]{\H-\H'}}{\epsilon} + \rank(P_\epsilon) + \rank(P'_\epsilon)
  \end{equation}
  where
  \begin{align}
    P_\epsilon &\coloneqq \sum_{k\suchthat E_k\in[E,E+\epsilon]\union[E+\Delta-\epsilon,E+\Delta]} \Pi_k \\
    \intertext{and}
    P'_\epsilon &\coloneqq  \sum_{k\suchthat E'_k\in[E,E+\epsilon]\union[E+\Delta-\epsilon,E+\Delta]} \Pi'_k .
  \end{align}
\end{theorem}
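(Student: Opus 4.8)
The plan is to bound the trace norm $\norm[1]{P-P'}$ by peeling off the eigenspaces lying near the two edges of the window $[E,E+\Delta]$ and controlling the remaining ``interior'' part by a resolvent estimate. One may assume from the outset that $\epsilon < \Delta/2$, since otherwise the boundary strips $[E,E+\epsilon]$ and $[E+\Delta-\epsilon,E+\Delta]$ already cover $[E,E+\Delta]$, so $P_\epsilon = P$ and $P'_\epsilon = P'$ and the claimed inequality follows trivially from $\norm[1]{P-P'}\leq\norm[1]{P}+\norm[1]{P'}=\rank(P)+\rank(P')$. It is also convenient to regard the energy sub-windows as half-open, so that $[E,E+\epsilon)$, $[E+\epsilon,E+\Delta-\epsilon]$ and $(E+\Delta-\epsilon,E+\Delta]$ partition $[E,E+\Delta]$; eigenvalues sitting exactly on an inner boundary point can be assigned to either side without affecting the estimate.

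First I would use the elementary identity $P-P' = P(\1-P') - (\1-P)P'$ together with the triangle inequality and the invariance of the trace norm under taking adjoints, $\norm[1]{P(\1-P')}=\norm[1]{(\1-P')P}$, to reduce the problem to bounding $\norm[1]{(\1-P)P'}$ alone: indeed $\norm[1]{P-P'}\leq\norm[1]{(\1-P')P}+\norm[1]{(\1-P)P'}$, and the two summands are exchanged by the substitution $(H,P,P_\epsilon)\leftrightarrow(H',P',P'_\epsilon)$.

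Next I would split $P' = P'_\epsilon + Q'$, where $Q'$ is the spectral projector of $H'$ onto the shrunk interior window $[E+\epsilon,E+\Delta-\epsilon]$. Since $Q'\leq P'$ we have $\rank(Q')\leq\rank(P')$, and since $(\1-P)P'_\epsilon$ has operator norm at most one and rank at most $\rank(P'_\epsilon)$, its trace norm is at most $\rank(P'_\epsilon)$. The crux is then the bound $\norm[1]{(\1-P)Q'}\leq\rank(P')\,\norm[\infty]{H-H'}/\epsilon$. To obtain it, take a normalised eigenvector $\ket\psi$ of $H'$ with eigenvalue $E'\in[E+\epsilon,E+\Delta-\epsilon]$; then $(H-E')\ket\psi = (H-H')\ket\psi$, so $\norm{(\1-P)(H-E')\ket\psi}\leq\norm[\infty]{H-H'}$. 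On the other hand $\1-P$ commutes with $H$ and projects onto the span of eigenvectors of $H$ whose eigenvalues lie outside $[E,E+\Delta]$ and are therefore at distance at least $\epsilon$ from $E'$, so $\norm{(\1-P)(H-E')\ket\psi}=\norm{(H-E')(\1-P)\ket\psi}\geq\epsilon\,\norm{(\1-P)\ket\psi}$; hence $\norm{(\1-P)\ket\psi}\leq\norm[\infty]{H-H'}/\epsilon$. Writing $Q'=\sum_j\ketbra{\psi_j}{\psi_j}$ over an orthonormal eigenbasis of its range and using $\norm[1]{(\1-P)\ketbra{\psi_j}{\psi_j}}=\norm{(\1-P)\ket{\psi_j}}$ with subadditivity of the trace norm gives the interior bound. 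Collecting terms, $\norm[1]{(\1-P)P'}\leq\rank(P')\,\norm[\infty]{H-H'}/\epsilon+\rank(P'_\epsilon)$, and adding the symmetric bound for $\norm[1]{(\1-P')P}$ yields exactly the stated inequality.

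The one step that needs care is the resolvent estimate: it relies on $\1-P$ commuting with $H$ — so that $H-E'$ restricted to the range of $\1-P$ is normal with every nonzero singular value at least $\epsilon$ — and on the interior window having been narrowed by precisely $\epsilon$ at each edge so that this gap is genuinely available. Everything else is bookkeeping with ranks and two applications of the triangle inequality.
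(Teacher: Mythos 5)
Your proof is correct. The paper itself does not give a proof of this theorem---it only cites Theorem~1 of Ref.~\cite{Riera2012} (which in turn builds on the perturbation-theoretic toolbox of Refs.~\cite{bhatia,Chandler70,bhatia83}), so there is no in-text argument to compare against. Your argument is the natural Davis--Kahan-style gap estimate, and all the steps check out: the decomposition $P-P'=P(\1-P')-(\1-P)P'$ plus adjoint symmetry of the trace norm reduces everything to bounding $\norm[1]{(\1-P)P'}$; the split $P'=P'_\epsilon+Q'$ isolates the boundary strips, which are handled by the rank-times-operator-norm bound $\norm[1]{(\1-P)P'_\epsilon}\leq\rank(P'_\epsilon)$; and for an eigenvector $\ket\psi$ of $H'$ with eigenvalue $E'$ in the shrunk window, the identity $(H-E')\ket\psi=(H-H')\ket\psi$ together with the spectral gap $\geq\epsilon$ between $E'$ and $\spec(H)\setminus[E,E+\Delta]$ gives $\norm{(\1-P)\ket\psi}\leq\norm[\infty]{H-H'}/\epsilon$. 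Summing over an orthonormal eigenbasis of the range of $Q'$, using that $\norm[1]{A\ketbra\psi\psi}=\norm{A\ket\psi}$ for normalised $\ket\psi$, gives the interior contribution $\rank(Q')\,\norm[\infty]{H-H'}/\epsilon\leq\rank(P')\,\norm[\infty]{H-H'}/\epsilon$; adding the symmetric estimate yields exactly the claimed bound. The only bookkeeping worth flagging, which you already note, is that $Q'$ should be taken over the open interior $(E+\epsilon,E+\Delta-\epsilon)$ so that $P'_\epsilon+Q'=P'$ without double-counting eigenvalues at the seams---for such $E'$ the distance to the complement of $[E,E+\Delta]$ is strictly greater than $\epsilon$, so the gap bound is comfortably available.
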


The rather technical theorem stated above has immediate consequences for the stability of micro-canonical states:
\begin{corollary}[Stability of micro-canonical states \cite{Riera2012}] \label{corr:stabilityofmicrocanonicalstates}
  Given an energy interval $[E,E+\Delta]$ and two Hamiltonians $\H,\H' \in \Obs(\mcH)$ with spectral decompositions $H = \sum_k E_k\, \Pi_k$ and $H' = \sum_k E'_k\, \Pi'_k$ it holds that for every $\epsilon>0$
  \begin{equation} \label{eq:stabilityofmicrocanonicalstates}
    \tracedistance{\rhomc[\H]([E,E+\Delta])}{\rhomc[\H']([E,E+\Delta])} \leq \frac{\norm[\infty]{\H-\H'}}{\epsilon} + \frac{\Delta\Omega+\Omega_\epsilon}{2\,\Omega_{\max}} ,
  \end{equation}
  where $\Omega_{\min/\max} \coloneqq \min / \max \big(\rank(\rhomc[\H]([E,E+\Delta])),\rank(\rhomc[\H']([E,E+\Delta]))\big)$, $\Omega \coloneqq \Omega_{\max} - \Omega_{\min}$, and 
  \begin{equation}
    \begin{split}
      \Omega_\epsilon \coloneqq &\rank(\rhomc[\H]([E,E+\epsilon]\union[E+\Delta-\epsilon,E+\Delta])) \\
      + &\rank(\rhomc[\H']([E,E+\epsilon]\union[E+\Delta-\epsilon,E+\Delta])) .      
    \end{split}
  \end{equation}
\end{corollary}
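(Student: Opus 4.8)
The plan is to reduce the statement to Theorem~\ref{thm:stabiityofsumsofspectralprojectors} by unfolding the definition of the micro-canonical state. Write $P$ and $P'$ for the two spectral-projector sums defined in \texteqref{eq:defofPandPprime}, so that $\rhomc[\H]([E,E+\Delta]) = P/\Omega_1$ and $\rhomc[\H']([E,E+\Delta]) = P'/\Omega_2$ with $\Omega_1 \coloneqq \rank(P) = \Tr(P)$ and $\Omega_2 \coloneqq \rank(P') = \Tr(P')$; here one uses that the rank of a micro-canonical state equals the rank of its underlying projector, since $\rhomc[\H](R)$ is a positive multiple of $\sum_{k\suchthat E_k\in R}\Pi_k$. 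I would assume that both energy windows contain spectrum, so that $\Omega_1,\Omega_2\geq 1$ and both states are well defined, and, since the right-hand side of \texteqref{eq:stabilityofmicrocanonicalstates} is symmetric under $\H\leftrightarrow\H'$, I would also assume $\Omega_1\leq\Omega_2$, i.e.\ $\Omega_{\min}=\Omega_1$, $\Omega_{\max}=\Omega_2$, and $\Omega=\Omega_2-\Omega_1$.

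The first step is to pass from the normalised states back to the unnormalised projectors. Using $\tracedistance\rho\sigma=\frac{1}{2}\,\norm[1]{\rho-\sigma}$ and the algebraic splitting
\begin{equation}
  \frac{P}{\Omega_1}-\frac{P'}{\Omega_2} = \frac{P-P'}{\Omega_2} + \frac{\Omega}{\Omega_1\,\Omega_2}\,P ,
\end{equation}
followed by the triangle inequality for $\norm[1]{\argdot}$ and the observation that $\norm[1]{P}=\Tr(P)=\Omega_1$ because $P$ is an orthogonal projector, I would obtain
\begin{equation}
  \tracedistance{\rhomc[\H]([E,E+\Delta])}{\rhomc[\H']([E,E+\Delta])} \leq \frac{\norm[1]{P-P'}}{2\,\Omega_{\max}} + \frac{\Omega}{2\,\Omega_{\max}} .
\end{equation}
The one point that deserves attention here is the choice made in the splitting: the residual scalar $1/\Omega_1-1/\Omega_2\geq 0$ has to be attached to the \emph{smaller}-rank projector $P$, so that its trace norm $\Omega_1$ cancels one of the two denominators and leaves $\Omega/\Omega_{\max}$ rather than the strictly weaker $\Omega/\Omega_{\min}$; had $\Omega_2<\Omega_1$, one would instead divide $P-P'$ by $\Omega_1$ and attach the residual to $P'$, with the same final bound.

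It then remains to feed in Theorem~\ref{thm:stabiityofsumsofspectralprojectors}, which applies verbatim to the present $P,P'$ and to $P_\epsilon,P'_\epsilon$ the spectral-projector sums to energies in $[E,E+\epsilon]\union[E+\Delta-\epsilon,E+\Delta]$. Bounding $\rank(P)+\rank(P')=\Omega_1+\Omega_2\leq 2\,\Omega_{\max}$ and identifying $\rank(P_\epsilon)+\rank(P'_\epsilon)=\Omega_\epsilon$ (again using rank of a micro-canonical state $=$ rank of its projector, now on the two-interval set), this yields
\begin{equation}
  \frac{\norm[1]{P-P'}}{2\,\Omega_{\max}} \leq \frac{\norm[\infty]{\H-\H'}}{\epsilon} + \frac{\Omega_\epsilon}{2\,\Omega_{\max}} ,
\end{equation}
and adding the $\Omega/(2\,\Omega_{\max})$ term from the previous step reproduces \texteqref{eq:stabilityofmicrocanonicalstates} (reading the symbol $\Delta\Omega$ there as the rank difference $\Omega=\Omega_{\max}-\Omega_{\min}$, which is the only sensible, dimensionally consistent interpretation). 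I do not expect any genuine obstacle: the argument is essentially bookkeeping, the only delicate spot being the splitting choice above; one should, in addition, simply note the degenerate edge cases — $\Omega_{\min}=0$, or $\epsilon$ large enough that the two boundary intervals overlap — in which the statement is either vacuous or the claimed bound exceeds $1$ and is therefore trivially true.
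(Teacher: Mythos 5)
Your proof is correct and takes essentially the same route as the paper: the paper's own proof is just ``by the triangle inequality $\tracedistance{\rhomc[\H](\cdot)}{\rhomc[\H'](\cdot)} \leq (\norm[1]{P-P'}+\Delta\Omega)/(2\,\Omega_{\max})$, then apply Theorem~\ref{thm:stabiityofsumsofspectralprojectors}'', and you have simply filled in the algebraic splitting that makes the first step precise. You also correctly identified that the symbol $\Delta\Omega$ in \texteqref{eq:stabilityofmicrocanonicalstates} must be read as the single quantity $\Omega_{\max}-\Omega_{\min}$, which the paper itself actually denotes $\Omega$ in its preceding definition --- a genuine notational slip in the corollary that your reading resolves in the only sensible way.
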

\begin{proof}
  By the triangle inequality 
  \begin{equation}
    \tracedistance{\rhomc[\H]([E,E+\Delta])}{\rhomc[\H']([E,E+\Delta])} \leq \frac{\norm[1]{P-P'}+\Delta\Omega}{2\,\Omega_{\max}}
  \end{equation}
  with $P,P'$ defined as in \texteqref{eq:defofPandPprime}.
  Theorem~\ref{thm:stabiityofsumsofspectralprojectors} finishes the proof.
\end{proof}

What is the meaning of the corollary?
The statement is non-trivial if $\norm[\infty]{\H-\H'} \ll \Delta$.
Then one can expect that there exists an $\epsilon$ with the property that $\norm[\infty]{\H-\H'}\ll\epsilon\ll\Delta$, such that both $\norm[\infty]{\H-\H'}/\epsilon \ll 1$ and $(\Delta\Omega+\Omega_\epsilon)/(2\,\Omega_{\max}) \ll 1$.
Under the assumption of an approximately uniform density of states one finds that $\Omega_\epsilon/(2\,\Omega_{\max}) \approx 2\,\epsilon/\Delta$ and $\Delta\Omega/(2\,\Omega_{\max}) \lessapprox \norm[\infty]{\H-\H'}/\Delta$ such that the optimal choice for $\epsilon$ is approximately $\epsilon \approx \sqrt{\norm[\infty]{\H-\H'} \Delta/2}$, which yields
\begin{equation}
  \tracedistance{\rhomc[\H](I)}{\rhomc[\H'](I)} \lessapprox 4 \left({\frac{\norm[\infty]{\H-\H'}}{\Delta}}\right)^{1/2} .
\end{equation}

While the above example provides some intuition for how powerful Theorem~\ref{thm:stabiityofsumsofspectralprojectors} and Corollary~\ref{corr:stabilityofmicrocanonicalstates} are, the case of a uniform density of states is not the relevant situation if one is interested in showing thermalisation.
As we have seen in the beginning of this section, for $\rhomc^S[\H_0]([E,E+\Delta])$ to become approximately thermal it is necessary that the number of states of the bath grows exponentially with $E$.
What happens in this case?

First, notice that the two terms in the right hand side of \texteqref{eq:stabilityofmicrocanonicalstates} are non-negative and hence must both be small individually for the inequality to become non-trivial.
For the interesting case $\H = \H_0 + \H_I$ and $\H' = \H_0$ this implies that it is necessary that $\norm[\infty]{\H_I} \ll \epsilon$, so that the first term can become small.
For the second term we restrict our attention to $\Omega_\epsilon/(2\,\Omega_{\max})$ as $\Delta\Omega$ can reasonably be assumed to be smaller than $\Omega_\epsilon$.

If to good approximation 
\begin{equation} \label{eq:approxexponentialdensityofstates}
  \#_\Delta[\H_0](E) \approx \#_\Delta[\H](E) \propto \e^{-\beta\,E} ,
\end{equation}
then \cite[Appendix H]{Riera2012}
\begin{equation}
  \frac{\Omega_\epsilon}{2\,\Omega_{\max}} \gtrapprox \frac{1-\e^{-\beta\,\epsilon}}{2\,(1-\e^{-\beta\,\Delta})} .
\end{equation}
That is, for Corollary~\ref{corr:stabilityofmicrocanonicalstates} to be non-trivial it must be possible to chose an $\epsilon$ such that
\begin{equation}
  \beta\,\norm[\infty]{\H_I} \ll \beta\,\epsilon \ll 1 .
\end{equation}
At the same time, if \texteqref{eq:approxexponentialdensityofstates} is fulfilled, then also \cite[Appendix H]{Riera2012}
\begin{equation}
  \frac{\Omega_\epsilon}{2\,\Omega_{\max}} \lessapprox \frac{\beta\,\epsilon}{1-\e^{-\beta\,\Delta}} .
\end{equation}
Under the reasonable assumption that $\Delta\Omega/(2\,\Omega_{\max}) \ll 1$ the choice $\epsilon = \sqrt{\norm[\infty]{\H_I}/\beta}$ yields
\begin{equation}
  \tracedistance{\rhomc[\H](I)}{\rhomc[\H_0](I)} \lessapprox 2 \frac{\sqrt{\beta\,\norm[\infty]{\H_I}}}{1-\e^{-\beta\,\Delta}} ,
\end{equation}
which gives a non-trivial upper bound as long as
\begin{equation} \label{eq:weakinteractioncondition}
  \norm[\infty]{\H_I} \ll 1/\beta \ll \Delta .
\end{equation}
Concluding, we can say that for reasonable bath Hamiltonians $\H_B$, and if the coupling is weak enough and $\Delta$ large enough such that \texteqref{eq:weakinteractioncondition} is fulfilled, then one can expect that
\begin{equation} \label{eq:expectedorderupperboundondeviationfromlocalthermal}
  \tracedistance{\rhomc^S[\H]([E,E+\Delta])}{\rhog[\trunc{\H_S}S](\beta)} \in 
  \landauO\left(\left({\beta\,\norm[\infty]{H_I}}\right)^{1/2} \right) ,
\end{equation}
i.e., that the reduced state on subsystem $S$ of the micro-canonical state is close to a Gibbs state of the restricted Hamiltonian truncated to $S$.
Corollary~\ref{corr:stabilityofmicrocanonicalstates} and the above discussion quantify the errors in the approximate equalities Eq.~(7) in Ref.~\cite{Popescu05} and Eq.~(18) in Ref.~\cite{Goldstein06}.

For the rest of this section we consider a bipartite quantum system with $\Vset = S \dunion B$ of spins of fermions with Hamiltonian $\H$.
Let $\H_0 \coloneqq \H_S + \H_B$ and $\H_I \coloneqq \H - \H_0$.
We are now in a position to state the \emph{kinematic} version of the thermalisation result, which follows from the above discussion of Corollary~\ref{corr:stabilityofmicrocanonicalstates} and Theorem~\ref{thm:measureconcentrationforquantumstatevectors}.
\begin{observation}[Most Haar random states are locally thermal \cite{Riera2012}] \label{obs:thermalisationonofrandomstaates}
  Let $R \coloneqq [E,E+\Delta]$ be an energy interval and $\mcH_R \subseteq \mcH$ the subspace spanned by all eigenstates of $\H$ to energies in $R$ with dimension $d_R \coloneqq \dim(\mcH_R)$.
  If the bath has a ``generic'' locally interacting Hamiltonian with the property that for energies in $[E,E+\Delta]$ the logarithm of the number of states $\ln \#_\Delta[\trunc{\H_B}B]$ can be well approximated by an affine function with slope $\beta$ and if moreover $\Delta$ is sufficiently large and the interaction sufficiently weak such that
  \begin{equation}
    \norm[\infty]{\H_I} \ll 1/\beta \ll \Delta ,
  \end{equation}
  and the interval $R$ is sufficiently far from the edges of the spectrum, then for every $\epsilon > 0$
  \begin{equation} \label{eq:proababilityofrandomstatebeeingnotthermal}
    \begin{split}
      \probability_{\ket\psi\sim\muhaar[\mcH_R]}\left(\tracedistance{\ketbra\psi\psi^S}{\rhog[\trunc{\H_S}S](\beta)} \geq \epsilon + \delta(\H_B) + \landauO\left(\left({\beta\,\norm[\infty]{H_I}}\right)^{1/2}\right) \right)\\
      \leq 2\,d_S^2\,\e^{-C\,d_R\,{\epsilon}^2/d_S^2} ,
    \end{split}
  \end{equation}
  where $C = 1/(36\,\pi^3)$ and $\delta(\H_B)$ decreases fast with the size of the bath.
\end{observation}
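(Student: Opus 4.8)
The plan is to obtain \texteqref{eq:proababilityofrandomstatebeeingnotthermal} from a three-term triangle inequality for the trace distance on the subsystem $S$, feeding in the measure-concentration bound of Theorem~\ref{thm:measureconcentrationforquantumstatevectors}, the perturbative stability of micro-canonical states of Corollary~\ref{corr:stabilityofmicrocanonicalstates}, and the level-counting result of Observation~\ref{obs:gibbsstatesasreductionsofmicrocanonicalstates}. Writing $\H_0 = \H_S + \H_B$ and $\H_I = \H - \H_0$ as in the statement, one has
\begin{align}
  \tracedistance{\ketbra\psi\psi^S}{\rhog[\trunc{\H_S}S](\beta)}
  &\leq \tracedistance{\ketbra\psi\psi^S}{\rhomc^S[\H](R)}
  + \tracedistance{\rhomc^S[\H](R)}{\rhomc^S[\H_0](R)} \nonumber \\
  &\quad + \tracedistance{\rhomc^S[\H_0](R)}{\rhog[\trunc{\H_S}S](\beta)} ,
\end{align}
and I would bound the three summands separately, noting that only the first carries probabilistic weight while the other two are deterministic and get absorbed into the threshold as $\landauO((\beta\,\norm[\infty]{\H_I})^{1/2})$ and $\delta(\H_B)$.

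First I would bound the first term with Theorem~\ref{thm:measureconcentrationforquantumstatevectors}(ii), applied with $\POVMs$ the set of all POVMs supported inside $S$. For this choice $h(\POVMs) = \min(|{\union\POVMs}|,\dim(\mcH_{\supp(\POVMs)})) = d_S$, and the distinguishability $\tracedistance[\POVMs]{\ketbra\psi\psi}{\rhomc[\H](R)}$ coincides with $\tracedistance{\ketbra\psi\psi^S}{\rhomc^S[\H](R)}$, because optimising over observables $A$ with $\supp(A) \subseteq S$ and $0 \leq A \leq \1$ reproduces the trace distance of the reductions (\texteqref{eq:tracedistanceasmaxoverobservables} read on $\mcH_S$). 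This yields precisely
\begin{equation}
  \probability_{\ket\psi\sim\muhaar[\mcH_R]}\!\left(\tracedistance{\ketbra\psi\psi^S}{\rhomc^S[\H](R)} \geq \epsilon\right) \leq 2\,d_S^2\,\e^{-C\,d_R\,\epsilon^2/d_S^2} ,\qquad C = 1/(36\,\pi^3) .
\end{equation}

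Next I would bound the second term with Corollary~\ref{corr:stabilityofmicrocanonicalstates} taken at $\H' = \H_0$, together with the monotonicity of the trace distance under the partial trace $\Tr_B$ (completely positive and trace preserving), so that $\tracedistance{\rhomc^S[\H](R)}{\rhomc^S[\H_0](R)} \leq \tracedistance{\rhomc[\H](R)}{\rhomc[\H_0](R)}$; since $\#_\Delta[\H](E)$ and $\#_\Delta[\H_0](E)$ are both approximately proportional to $\e^{-\beta\,E}$ on $R$, the optimisation following Corollary~\ref{corr:stabilityofmicrocanonicalstates} — taking the intermediate parameter $\epsilon' \approx \sqrt{\norm[\infty]{\H_I}/\beta}$, which is legitimate exactly under $\norm[\infty]{\H_I}\ll 1/\beta\ll\Delta$ — turns the right-hand side into $\landauO((\beta\,\norm[\infty]{\H_I})^{1/2})$. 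The third term I would control with Observation~\ref{obs:gibbsstatesasreductionsofmicrocanonicalstates}: the explicit evaluation of $\rhomc^S[\H_0](R)$ (the spin case, and the fermionic case after the level-counting bookkeeping carried out above) shows that an affine $\ln\#_\Delta[\trunc{\H_B}B]$ of slope $\beta$ on $R$ gives $\rhomc^S[\H_0](R) \propto \rhog[\trunc{\H_S}S](\beta)$ exactly, and the error of the affine approximation for a ``generic'' bath — governed by the second derivative of a twice-differentiable fit to $\ln\#_\Delta[\trunc{\H_B}B]$, cf.\ Appendices~A and~B of Ref.~\cite{Riera2012} — is the quantity $\delta(\H_B)$, which decays fast with the bath size. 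Summing the three bounds and collecting the deterministic terms into the threshold gives the claim.

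The hard part is not any one estimate: the perturbation theorem for sums of spectral projectors (Theorem~\ref{thm:stabiityofsumsofspectralprojectors}), the fermionic level-counting identity, and the concentration inequality of Theorem~\ref{thm:measureconcentrationforquantumstatevectors} are all in hand. The real work is the bookkeeping inside the third term — making precise in what quantitative sense a ``generic'' locally interacting bath has $\ln\#_\Delta$ well approximated by an affine function away from the spectral edges, and propagating the resulting curvature error through the reduction onto $S$ into an explicit $\delta(\H_B)$. This is where the hypotheses ``generic bath'', ``$R$ away from the spectral edges'', and ``$\Delta$ large compared to the largest relevant gaps'' are genuinely consumed, and it is the only step that is not a one-line appeal to an already-stated result.
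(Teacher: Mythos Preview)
Your proposal is correct and follows essentially the same route as the paper: the observation is stated there as an immediate consequence of the discussion surrounding Corollary~\ref{corr:stabilityofmicrocanonicalstates} (yielding the $\landauO((\beta\,\norm[\infty]{\H_I})^{1/2})$ contribution via \texteqref{eq:expectedorderupperboundondeviationfromlocalthermal}) combined with Theorem~\ref{thm:measureconcentrationforquantumstatevectors}, and the $\delta(\H_B)$ term is exactly the error from the affine approximation to $\ln\#_\Delta[\trunc{\H_B}B]$ treated in Observation~\ref{obs:gibbsstatesasreductionsofmicrocanonicalstates} and the paragraph following it. Your three-term triangle inequality makes explicit precisely the decomposition the paper leaves implicit, and your identification of the bookkeeping for $\delta(\H_B)$ as the only non-trivial step is accurate.
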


To state the \emph{dynamic} result we introduce the notion of \emph{rectangular states} \cite{Riera2012}.
We call a state $\rho \in \Qst(\mcH)$ of a quantum system with Hilbert space $\mcH$ and Hamiltonian $\H \in \Obs(\mcH)$ \emph{rectangular} with respect to an energy interval $[E,E+\Delta] \subset \R$ if de-phasing with respect to $\H$ yields the micro-canonical state corresponding to $[E,E+\Delta]$.
For example, if $\H$ has no degeneracies, then a state is a rectangular state if, when expressed in the eigenbasis of $\H$, it has non-zero matrix elements only in some diagonal block and the same value for each entry on the diagonal in this block.
The class of rectangular states is not a very large class of states, but generally comprises a lot of pure states and usually also states that are \emph{out of equilibrium}, in the sense that their reductions on a small subsystem are well distinguishable from a thermal state and at the same time have a sufficiently widespread energy distribution such that Theorem~\ref{thm:equilibrationonaverage} can be used to guarantee equilibration on average.
Nevertheless, all these states have a tendency to thermalise dynamically:
\begin{observation}[Thermalisation on average \cite{Riera2012}] \label{obs:thermalisationonaverage}
  Let $R \coloneqq [E,E+\Delta]$ be an energy interval.
  Let the bath have a ``generic'' locally interacting Hamiltonian with the property that in an energy interval $[E,E+\Delta]$ the logarithm of the number of states $\ln(\#_\Delta[\trunc{\H_B}B])$ can be well approximated by an affine function with slope $\beta$.
  If $\Delta$ is sufficiently large and the interaction sufficiently weak such that
  \begin{equation}
    \norm[\infty]{\H_I} \ll 1/\beta \ll \Delta ,
  \end{equation}
  and the interval $R$ is sufficiently far from the edges of the spectrum, then the time evolution is such that the subsystem $S$ thermalises on average, in the sense of Definition~\ref{def:thermlaisationonaverage}, for any initial state $\rho(0) \in \Qst(\mcH)$ that is rectangular with respect to $R$ in the sense that
  \begin{equation} 
    \taverage[T]{\tracedistance{\rho^S(t)}{\rhog[\trunc{\H_S}S](\beta)}} \leq 
    \frac{1}{2}
    \left(
    {N(\epsilon)\,d_S^2\,g((p_k)_{k=1}^{d'}) }
    \right)^{1/2}
    + \delta(\H_B) + \landauO\left(\left({\beta\,\norm[\infty]{H_I}}\right)^{1/2}\right) ,
  \end{equation}
  where $\delta(\H_B)$ decreases fast with the size of the bath, and, as in Theorem~\ref{thm:equilibrationonaverage},
  \begin{align}
    N(\epsilon) &\coloneqq \sup_{E \in \R} |\{(k,l) \in [d']^2\suchthat k\neq l \land E_k - E_l \in [E,E+\epsilon] \}| \\
    g((p_k)_{k=1}^{d'}) &\coloneqq \min(\sum_{k=1}^{d'} p_k^2, 3  \maxprime_k p_k ) ,
  \end{align}
  with $(p_k)_{k=1}^{d'}$ the energy populations, i.e., $p_k \coloneqq \Tr(\Pi_k\,\rho(0))$, and $\maxprime_k p_k$ the second largest element in $(p_k)_{k=1}^{d'}$.
\end{observation}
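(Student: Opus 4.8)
The plan is to reduce the claim to a single triangle inequality connecting three facts that are already available: the dynamical equilibration bound of Theorem~\ref{thm:equilibrationonaverage}, the identification of the dephased state of a rectangular state with a micro-canonical state, and the closeness of the reduced micro-canonical state to a Gibbs state of $\trunc{\H_S}{S}$. First I would use the hypothesis that $\rho(0)$ is rectangular with respect to $R=[E,E+\Delta]$: by definition $\omega\coloneqq\$_\H(\rho(0))=\rhomc[\H](R)$, hence $\omega^S=\rhomc^S[\H](R)$. The triangle inequality for the trace distance then gives, for every $t$,
\begin{equation}
  \tracedistance{\rho^S(t)}{\rhog[\trunc{\H_S}{S}](\beta)}
  \leq \tracedistance{\rho^S(t)}{\omega^S}
  + \tracedistance{\rhomc^S[\H](R)}{\rhomc^S[\H_0](R)}
  + \tracedistance{\rhomc^S[\H_0](R)}{\rhog[\trunc{\H_S}{S}](\beta)} ,
\end{equation}
and I would average this over $t\in[0,T]$, the last two terms being time independent.

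For the first term, note that $\tracedistance{\rho^S(t)}{\omega^S}=\tracedistance[\POVMs_S]{\rho(t)}{\omega}$, where $\POVMs_S$ is the set of all POVMs supported on $S$ (the equality case in \texteqref{eq:tracedistanceboundsrestrictedtracedistance} applied to the subsystem $S$). Applying Theorem~\ref{thm:equilibrationonaverage}(ii) with this $\POVMs_S$, for which $h(\POVMs_S)\leq\dim(\mcH_S)/2=d_S/2$, bounds $\taverage[T]{\tracedistance{\rho^S(t)}{\omega^S}}$ by $\tfrac12\,(N(\epsilon)\,f(\epsilon T)\,d_S^2\,g((p_k)_{k=1}^{d'}))^{1/2}$; since $f(\epsilon T)=1+8\log_2(d')/(\epsilon T)\to1$ this reproduces the first term of the claim, up to the residual factor $f(\epsilon T)$ that is suppressed in the statement but tends to one for large $T$.

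For the second term I would use that the partial trace $\Tr_B$ is completely positive and trace preserving, so $\tracedistance{\rhomc^S[\H](R)}{\rhomc^S[\H_0](R)}\leq\tracedistance{\rhomc[\H](R)}{\rhomc[\H_0](R)}$, and then apply Corollary~\ref{corr:stabilityofmicrocanonicalstates} with $\H'=\H_0$, so that $\norm[\infty]{\H-\H'}=\norm[\infty]{\H_I}$. Under the hypothesis $\norm[\infty]{\H_I}\ll1/\beta\ll\Delta$ together with the ``generic'' behaviour of the density of states --- which forces $\#_\Delta[\H_0](E)\approx\#_\Delta[\H](E)\propto\e^{-\beta E}$ on the relevant window, so that the edge-rank terms $\Omega_\epsilon/(2\,\Omega_{\max})$ are controlled as in the discussion following the corollary with the near-optimal choice $\epsilon=\sqrt{\norm[\infty]{\H_I}/\beta}$ --- this term becomes $\landauO\left(\left(\beta\,\norm[\infty]{\H_I}\right)^{1/2}\right)$. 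For the third term, Observation~\ref{obs:gibbsstatesasreductionsofmicrocanonicalstates} gives $\rhomc^S[\H_0](R)\propto\sum_l\Pi^S_l\,\#_\Delta[\trunc{\H_B}{B}](E-E^S_l)$, which equals $\rhog[\trunc{\H_S}{S}](\beta)$ exactly when $\#_\Delta[\trunc{\H_B}{B}](E)\propto\e^{-\beta E}$; quantifying the deviation by approximating $\ln\#_\Delta[\trunc{\H_B}{B}]$ by a twice differentiable function as in Ref.~\cite{Riera2012} produces the error $\delta(\H_B)$, which decreases fast with the bath size because a ``generic'' locally interacting bath has a nearly Gaussian density of states. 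Summing the three contributions gives the stated bound.

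The main obstacle is the second term. The interaction $\H_I$ generically perturbs the energy eigenstates of $\H_0$ drastically, since the spectral gaps of a locally interacting $\H_0$ are exponentially small in the system size, so naive perturbation theory is unavailable; one must instead invoke the stability of \emph{sums} of spectral projectors (Theorem~\ref{thm:stabiityofsumsofspectralprojectors}) and feed in the quantitative exponential-density-of-states estimate to keep the rank-counting corrections $\rank(P_\epsilon)+\rank(P'_\epsilon)$ subleading. The delicate bookkeeping is to reconcile the density-of-states hypothesis, which is stated only for $\trunc{\H_B}{B}$, with what is needed for $\H_0$ and for the full interacting $\H$; this is exactly what is hidden inside the word ``generic'' and inside $\delta(\H_B)$.
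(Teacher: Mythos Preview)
Your proposal is correct and follows essentially the same route as the paper: the observation is obtained by combining, via the triangle inequality, the equilibration bound of Theorem~\ref{thm:equilibrationonaverage} (using that a rectangular state dephases to $\rhomc[\H](R)$), the stability of micro-canonical states from Corollary~\ref{corr:stabilityofmicrocanonicalstates} (with the near-optimal choice $\epsilon\approx\sqrt{\norm[\infty]{\H_I}/\beta}$ under the exponential density-of-states assumption), and the level-counting result of Observation~\ref{obs:gibbsstatesasreductionsofmicrocanonicalstates}. Your identification of the main obstacle --- that the eigenstate perturbation must be controlled at the level of sums of spectral projectors rather than individual eigenstates --- is exactly the point the paper emphasises.
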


The class of rectangular states seems fairly unnatural on first sight, however, condition of being rectangular can be slightly weakened.
For small deviations from a rectangular state Observation~\ref{obs:thermalisationonaverage} still essentially holds, just an additional error must be taken into account.
If the deviation from rectangular is in a sense uncorrelated with the relevant properties of the energy eigenstates, then even relatively large deviations should be tolerable as the errors will not accumulate but rather cancel each other out.
In the worst case, however, the deviation from rectangular could be highly correlated with the expectation value of, say, a local observable.
Then, even small deviations from rectangular can lead to noticeable deviations of the equilibrium state from a thermal state. that this can indeed happen in natural models for natural initial states \cite{PhysRevLett.10-6}.
In this sense the condition of being rectangular is necessary for thermalisation if no conditions on the energy eigenstates are to be imposed.

A comment on the notion of weak coupling used here is in order:
The condition that is needed for the above results to be non-trivial is (compare \texteqref{eq:expectedorderupperboundondeviationfromlocalthermal})
\begin{equation} \label{eq:weakcouplingcondition} 
  {\beta\,\norm[\infty]{H_I}} \ll 1 .
\end{equation}
This is a significant improvement over the condition that would be necessary to guarantee that naive perturbation theory on the level of individual energy eigenstates is applicable (namely that $\norm[\infty]{H_I}$ is much smaller than the gaps of $\H_0$).
While the gaps of $\H_0$ become exponentially small with the system size $\beta$ can be expected to be an intensive quantity, i.e., to be independent of the system size. It may be worth noting that conditions similar to this
have been considered in very practical contexts, say, when studying the thermalisation of two weakly coupled finite metallic grains \cite{1206.2408v1}.

In the case of a 1D system with short range interactions and if $S$ is a set of consecutive sites $\norm[\infty]{H_I}$ is also intensive.
In this case, \texteqref{eq:weakcouplingcondition} is a physically natural condition to call the coupling \emph{weak}.
In the analogous situation in higher dimensional lattices, for example a system with nearest neighbour interactions on a 2D square lattice and $S$ the sites inside a ball around the origin, $\norm[\infty]{H_I}$, however, scales with the surface of the region $S$, making the above bounds useless already for medium sized $|S|$.
Thus, the above results are not entirely satisfactory.

The reason for this is essentially that the trace distance is a very sensitive metric.
If $\rhomc^S[\H]([E,E+\Delta])$ and $\rhog[\trunc{\H_S}S](\beta)$ for the optimal $\beta$ only differ slightly on each of the sites along the boundary of $S$, then their trace distance (at least as long as it is sufficiently far from one) will be approximately proportional to the surface of $S$.
In consequence, the unfavourable scaling of the given error bounds is expected.

\subsection{Thermalisation in translation invariant systems and equivalence of ensembles}
\label{sec:translationallyinvariant}
Thermalisation and the related question of the \emph{equivalence of ensembles} have recently also been investigated in the more concrete setting of \emph{(translation invariant) locally interacting systems} on cubic lattices \cite{Mueller2013}.
The additional structure can be used to go beyond the results discussed in Sections~\ref{sec:thermalisationunderassumptionsontheinitialstate}.
In this section we discuss the main results of Ref.~\cite{Mueller2013} and the generalisations achieved in Ref.~\cite{BrandaoNew}.

More concretely, Refs.~\cite{Mueller2013,BrandaoNew} consider systems with \emph{$k$-local Hamiltonians} on cubic lattices.
A Hamiltonian $\H$ is \emph{$k$-local} if for some spatial dimension $D\in\Z^+$ and linear size $n\in\Z^+$ the vertex set of the system is $\Vset = [n]^D$, the edge set $\Eset$ contains only subsystems $X$ of diameter at most $k$ measured in the graph distance of the lattice, and the corresponding local terms $\H$ have norm bounded by one, i.e., $\norm[\infty]{\H_X} = 1$.
Furthermore, a Hamiltonians $\H$ is called translation invariant if for any two subsystems $X,X' \subset \Vset$ that differ only by a translation on the lattice it holds that $H_X = H_{X'}$.

In Ref.~\cite{Mueller2013} a family of translation invariant systems of increasing size is considered \emph{thermalising} if they equilibrates on average to a state that, in the limit of infinite system size, becomes indistinguishable from thermal states of that stem.
This is a very natural notion of thermalisation in the translation invariant setting.
Ref.~\cite{Mueller2013} contains theorems very reminiscent to both the \emph{kinematic thermalisation} result (Observation~\ref{obs:thermalisationonofrandomstaates}) and the dynamical result (with a similar conditions on the initial state) on \emph{thermalisation on average} (Observation~\ref{obs:thermalisationonaverage}) for this notions of thermalisation.
The results of Ref.~\cite{Mueller2013} are applicable in situations with strong coupling between subsystem and bath, i.e., $\norm[\infty]{\H_I} > 1/\beta$ but are only asymptotic statements and work only for temperatures around which the translation invariant system has a ``unique phase'' (see Ref.~\cite{Reed1980,Mueller2013} for more details) in the limit of infinite system size.
To understand what a ``unique phase'' is note that in the limit of infinite system size a translation invariant state $\rho$ is given by a series of subsystems states $\rho_X$ which for all $X\subseteq X' \subset \Vset$ fulfil the consistency conditions $\rho_X = \Tr_{\compl X} \rho_{X'}$. 
It is then instructive to \emph{define} a translation invariant state $\rho$ of the infinite system to be \emph{thermal} if it minimises the free energy density
\begin{equation}
  f(\rho) \coloneqq \lim_{|X| \to \infty} \frac{\Tr(\H_X\,\rho_X) - S(\rho_X) / \beta}{|X|} .
\end{equation}
Whenever $|\Vset|$ is finite,  this definition is consistent with our definition of a thermal state from Eq.~\eqref{eq:defthermalstate} and moreover the thermal state is unique.
In infinite systems, however this is not the case any more and one hence says that a system has a ``unique phase'' around some inverse temperature $\beta$ if for all inverse temperatures close to $\beta$ the system has a unique thermal state in the above sense.
At low temperatures this condition is often violated (for example in the 2D Ising model below the Curie temperature).
In contrast, at high temperatures the existence of a unique phase is always ensured (see Section~\ref{sec:propertiesofthermalstatesofcompositesystems}).

The kinematic and dynamics thermalisation results of Ref.~\cite{Mueller2013} rest on a \emph{equivalence of ensembles} theorem.
Two ensembles, for example the canonical and micro-canonical ensemble, are said to be \emph{equivalent} here if their corresponding states become indistinguishable on small subsystems when the total system size is increased.
More concretely: given a locally interacting spin system with Hamiltonian $\H \in \Obs(\mcH)$ and an inverse temperature $\beta$, under which conditions does there exist a suitable energy interval $[E,E+\Delta]$ such that for all sufficiently small subsystems $S \subset \Vset$ the distinguishability $\tracedistance{\rhog^S[\H](\beta)}{\rhomc^S[\H]([E,E+\Delta])}$ is small.
Unfortunately Ref.~\cite{Mueller2013} does not give concrete finite size bounds on the distinguishability but only makes statements about the asymptotic behaviour.

This however was recently achieved in Ref.~\cite{BrandaoNew}, together with a generalisation to systems without translation invariance.
Before we can explain this result in more detail we need to introduce the notion of \emph{$(\xi,z)$-exponentially decaying correlations} that is closely connected to the notion of a ``unique state'' encountered before.
Remember the definition of the covariance from Eq.~\eqref{eq:covariance}.
A state $\rho$ of a system on a lattice is said to have \emph{$(\xi,z)$-exponentially decaying correlations} if for some constants $\xi,z \in \R^+$ and any two observables $A,B \in \Obs(\mcH)$
\begin{equation}
  \cov_\rho(A,B) \leq \norm[\infty]{A} \norm[\infty]{B} N^z\,\e^{-\dist(A,B)/\xi} ,
\end{equation}
where $\dist$ is again the graph distance of the lattice.

A simplified version of the main result of Ref.~\cite{BrandaoNew} can then be phrased as follows:
\begin{theorem}[Equivalence of ensembles {\cite[Theorem~1]{BrandaoNew}}]
  Fix a spatial dimension $D \in \Z^+$, a locality parameter $k \in \Z^+$, a linear region size $l\in\Z^+$, an inverse temperature $\beta$, and $\xi,z \in \R^+$.
  For $n\in\Z^+$ consider an infinite family of spin systems with vertex sets $\Vset = [n]^D$, Hilbert spaces $\mcH_\Vset$, and $k$-local Hamiltonians $\H_\Vset \Obs(\mcH)$.
  If the family of thermal states $\rhog[\H_\Vset](\beta)$ has $(\xi,z)$-exponentially decaying correlations, then for the family $\rhomc[\H_\Vset]([E_\Vset-\Delta_\Vset/2,E_\Vset+\Delta_\Vset/2])$ of micro-canonical states with
  \begin{align}
    E_\Vset &\coloneqq \Tr(\H_\Vset\,\rhog[\H_\Vset](\beta))\\
    \intertext{and}
    \Delta_\Vset &\coloneqq \left( \frac{1}{N} \Tr\big(\H_\Vset^2\,\rhog[\H_\Vset](\beta)\big) - \Tr\big(\H_\Vset\,\rhog[\H_\Vset](\beta)\big)^2 \right)^{1/2} 
  \end{align}
  it holds that
  \begin{equation}
    \lim_{n\to\infty} \tracedistance{\rhomc^{X_\Vset}[\H_\Vset]([E_\Vset,E_\Vset+\Delta_\Vset])}{\rhog^{X_\Vset}[\H_\Vset](\beta)} = 0 
  \end{equation}
  for any family $X_\Vset \subset \Vset$ of subsystems whose diameter grows at most as fast as $n^{1/(d+1)}$.
\end{theorem}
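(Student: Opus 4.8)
The plan is to reduce the claim to a comparison of the reduced states of $\rho_\beta \coloneqq \rhog[\H_\Vset](\beta)$ and of a micro-canonical state on the regions $X_\Vset$, exploiting that both states are diagonal in the eigenbasis of $\H_\Vset$. Write $E \coloneqq E_\Vset$, $N \coloneqq |\Vset|$, and for an energy window $W \subset \R$ let $\Pi_W$ denote the corresponding spectral projector of $\H_\Vset$. One fixes $W = [E - a\Delta_\Vset,\, E + a\Delta_\Vset]$ centred at the canonical energy, whose width is the appropriate multiple of the per-site fluctuation $\Delta_\Vset$ --- the scale on which the associated temperature barely changes --- and splits the target comparison $\rhomc^{X}[\H_\Vset]([E - \Delta_\Vset/2, E + \Delta_\Vset/2]) \approx \rho_\beta^{X}$ into (i) $\rho_\beta^{X} \approx (\rho_\beta|_{W})^{X}$, where $\rho_\beta|_{W} \coloneqq \Pi_{W}\rho_\beta/\Tr(\Pi_{W}\rho_\beta)$, and (ii) $(\rho_\beta|_{W})^{X} \approx \rhomc^{X}[\H_\Vset](W)$. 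Both of these, I claim, are controlled by a single quantity: how slowly the \emph{local} micro-canonical state $\rhomc^{X}[\H_\Vset]([E', E'+\delta])$ varies as $E'$ sweeps across $W$. The only further inputs are elementary: the bound $|\Tr(A\,\Pi_{W^{c}}\rho_\beta)| \le \norm[\infty]{A}\,\Tr(\Pi_{W^{c}}\rho_\beta)$ (valid since $\Pi_{W^{c}}\rho_\beta \ge 0$), which lets the ``leaked'' mass outside $W$ be absorbed into the slow-variation estimate, and the fact that on a sub-window of width $\delta \ll 1/\beta$ the Boltzmann weights $\e^{-\beta E_a}$ are flat up to $1 \pm \beta\delta$, so that $\rho_\beta|_{[E',E'+\delta]}$ and $\rhomc[\H_\Vset]([E',E'+\delta])$ coincide up to $\landauO(\beta\delta)$ when $W$ is partitioned into $\landauO(\Delta_\Vset/\delta)$ such sub-windows.

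\textbf{Supporting estimates.}
First I would check that the width $\Delta_\Vset$ prescribed in the theorem is of the order of the energy standard deviation, up to logarithmic corrections. This follows from $\cov_{\rho_\beta}(\H_\Vset,\H_\Vset) = \sum_{Z,Z' \in \Eset} \cov_{\rho_\beta}(\H_Z,\H_{Z'})$ and the $(\xi,z)$-decay of correlations (cf.~\eqref{eq:covariance}): for fixed $Z$, the $\landauO((\log N)^{D})$ terms within distance $\landauO(\xi z \log N)$ of $Z$ each contribute at most the unit norm of a $k$-local term, while the far terms contribute $\sum_{r} r^{D-1} N^{z}\e^{-r/\xi} = \landauO(\operatorname{polylog} N)$ because the exponential overwhelms the prefactor once $r \gtrsim \xi z \log N$; summing over $Z$ gives $\cov_{\rho_\beta}(\H_\Vset,\H_\Vset) = \landauO(N\operatorname{polylog} N)$. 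The same mechanism shows $|\cov_{\rho_\beta}(A,\H_\Vset)| = \landauO(\norm[\infty]{A}\operatorname{polylog} N)$ for $A$ supported on $X$ --- a bound reused below --- the point being that the $N^{z}$ prefactor is never felt at distances below $\landauO(\log N)$ and is killed by the exponential above it.

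\textbf{The core step: slow variation of micro-canonical local states.}
This is the heart of the proof. To each energy $E'$ attach the inverse temperature $\beta(E')$ at which the canonical energy equals $E'$; then $\beta'(E') = -1/\cov_{\rho_{\beta(E')}}(\H_\Vset,\H_\Vset) = -\landauTheta(1/N)$, \emph{provided} one also has the matching lower bound $\cov_{\rho_\beta}(\H_\Vset,\H_\Vset) = \landauOmega(N)$ (thermodynamic stability / positivity of the specific heat, again a feature of the non-critical regime that must be invoked). Hence, as $E'$ ranges over the window $W$ of width $\landauTheta(\sqrt N \operatorname{polylog} N)$, the associated temperature moves by only $o(1)$. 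I would turn this into a quantitative statement that $\rhomc^{X}[\H_\Vset]([E',E'+\delta])$ is nearly $E'$-independent across $W$ by comparing the micro-canonical states of \emph{consecutive} sub-windows through the stability bound for sums of spectral projectors (Theorem~\ref{thm:stabiityofsumsofspectralprojectors}) and its corollary (Corollary~\ref{corr:stabilityofmicrocanonicalstates}), viewing the passage from one sub-window to the next as a small deformation whose effect on an observable $A$ on $X$ is governed by connected correlations between $A$ and the Hamiltonian terms near $X$, which decay exponentially (with $N^{z}$ once more absorbed into logarithms). Summing the $\landauO(\Delta_\Vset/\delta)$ step errors and dividing by the extensive $\cov_{\rho_\beta}(\H_\Vset,\H_\Vset)$ entering through $\beta'(E')$ leaves a total of order $\operatorname{polylog}(N)/\sqrt N$, times a factor that depends on $X_\Vset$ only through a power of its diameter; the hypothesis $\operatorname{diam}(X_\Vset) \lesssim n^{1/(D+1)}$ is exactly what keeps this $X$-dependent factor subdominant, so the bound tends to zero. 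Once $\rhomc^{X}[\H_\Vset]([E',E'+\delta])$ is essentially constant in $E'$ throughout $W$, the fact that the micro-canonical state weights the sub-windows uniformly (by eigenstate count) while $\rho_\beta|_W$ weights them with the Boltzmann-tilted distribution becomes irrelevant: $\sum_i (q_i - p_i)\,\rhomc^{X}[\H_\Vset](W_i) \approx \bar\rho^{X}\sum_i(q_i - p_i) = 0$, which closes both (i) and (ii).

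\textbf{Main obstacle.}
The delicate point is precisely this slow-variation step. It carries an apparent circularity --- slow variation of $\rhomc^{X}[\H_\Vset]([E',E'+\delta])$ across an $\landauTheta(\sqrt N)$-wide energy band is essentially ``equivalence of ensembles at energy $E'$'' --- which has to be broken by arguing differentially and invoking exponential clustering \emph{only for the canonical states}, never assuming anything about individual eigenstates or about clustering of micro-canonical states. Making the accumulated deformation error uniform over an energy range as wide as $\sqrt N$ while simultaneously letting the subsystem grow, and doing so with only the weak $(\xi,z)$-form of decay rather than clean exponential decay, is where the real work lies; by comparison the energy-concentration bound and the Boltzmann-reweighting bookkeeping are routine.
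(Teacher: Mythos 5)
This theorem is stated in the review only as a citation to Ref.~\cite{BrandaoNew}; the review does not reproduce or sketch a proof, so there is no in-paper proof to compare you against. Evaluating your proposal on its own merits: the proof route you outline is quite different from the one actually used in the cited reference, which is information-theoretic. There one first shows, via a quantum Berry--Esseen argument, that the energy distribution of $\rhog[\H_\Vset](\beta)$ is close to a Gaussian of variance $\landauTheta(N)$; then one exploits that $\rhomc$ and $\rhog$ are simultaneously diagonal to compute $\Sr{\rhomc}{\rhog} = \Svn(\rhog) - \Svn(\rhomc) + \beta\,(E(\rhomc)-E(\rhog))$ and shows this is sub-extensive (of order $\sqrt{N}\operatorname{polylog}N$); and the technical heart is a separate lemma stating that sub-extensive relative entropy to a state with exponentially decaying correlations forces closeness of reduced states on all sufficiently small regions. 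That last lemma, not present in this review, is precisely what lets the argument avoid ever having to control the local micro-canonical state directly.

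Your proposal has a genuine gap at exactly the step you flag as ``the heart of the proof.'' To establish that $\rhomc^{X}[\H_\Vset]([E',E'+\delta])$ varies slowly as $E'$ sweeps the window, you invoke Theorem~\ref{thm:stabiityofsumsofspectralprojectors} and Corollary~\ref{corr:stabilityofmicrocanonicalstates}, but those results bound the change of a spectral projector or micro-canonical state under a \emph{norm-small perturbation of the Hamiltonian at a fixed energy window}, whereas you need to compare the \emph{same} Hamiltonian's micro-canonical states at \emph{shifted, disjoint energy windows}. Consecutive windows have orthogonal spectral projectors, so the global trace distance between their micro-canonical states is $1$, and data-processing gives you nothing; there is no small parameter of the form $\norm[\infty]{\H-\H'}/\epsilon$ to feed into the stability theorem. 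Arguing ``differentially'' does not escape the circularity you identify: the derivative $\ddel\ex{A}{\rhomc(E')}/\ddel E'$ is controlled by energy-resolved fluctuations that live in the micro-canonical (or, at the eigenstate level, ETH-like) world, and the $(\xi,z)$-clustering hypothesis applies only to the canonical state $\rhog(\beta)$. In short, the supporting covariance estimates you give are fine, but the reduction to ``slow variation'' leaves the hard part unproved, and the tools you cite from Ref.~\cite{Riera2012} are not the right lever for it. The relative-entropy route in Ref.~\cite{BrandaoNew} is what closes this gap.
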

The main virtue of Ref.~\cite{BrandaoNew} is that it actually gives a concrete finite size bound on the average distance between the canonical and the micro-canonical state on hyper-cubic subsystems.
In short, Ref.~\cite{BrandaoNew} shows that and how the canonical and micro-canonical states become indistinguishable on any sufficiently small subsystem when the total system size increases given that $\beta$ is such that the thermal state of the total system has exponentially decaying correlations.
We see later in Section~\ref{sec:propertiesofthermalstatesofcompositesystems} that at sufficiently high temperatures the necessary correlation decay can always be ensured.

\subsection{Hybrid approaches and other notions of thermalisation}
\label{sec:othernotionsandhybridapproaches}
We have seen in the last sections that both approaches to explain thermalisation, the eigenstate thermalisation hypothesis and thermalisation under assumptions on the initial state, have their advantages and drawbacks.
They can be understood as extreme scenarios.
In most cases where thermalisation of closed quantum systems happens it is probably due to a mixture of the two effects.
An interpolation between the two previously discussed approaches is provided by the \emph{eigenstate randomisation hypothesis} (ERH) \cite{Ikeda11}.
The ERH is a weaker condition than the ETH.
Instead of demanding that for certain observables the expectation values of all individual energy eigenstates with nearby energies give approximately the same expectation value (compare Conjecture~\ref{conjecture:ethrigolform} and Definition~\ref{def:eth}), the ERH requires only that the variance of certain coarse-grainings of the sequence of expectation values of an observable in the energy eigenstates becomes sufficiently small.
This, together with a condition on the smoothness of the energy distribution of the initial state that is milder than what we required when we introduced the class of rectangular states, is sufficient to prove a thermalisation result that is similar in spirit to Observation~\ref{obs:thermalisationonaverage} \cite{Ikeda11}.
Again, numerical evidence for the validity of the ERH in certain models has been collected \cite{Ikeda11}.

It seems worth repeating that the notion of thermalisation used here is surely not the only reasonable one. For example Ref.~\cite{0907.0108v1} works in the setting of macroscopic commuting observables of von Neumann, which we discussed briefly in Section~\ref{sec:typicality}.
A system is declared to be in thermal equilibrium if there is a phase cell that is much larger than all others and the state of the system is almost completely contained in the subspace corresponding to this cell.

Many other definitions of thermalisation or thermal equilibrium in quantum many-body systems are possible.
For example, in the context of the ETH it is sometimes said that a system is \emph{thermal} if the expectation values of a given observable in the energy eigenstates of a system are, up to small fluctuations, smooth functions of the energy (compare for example Ref.~\cite{Beugeling2013}).
The validity of fluctuation-dissipation theorems has also been considered as a condition for thermalisation \cite{Foini2011,Foini2012}.

The notion of \emph{relative thermalisation} \cite{DelRio2014} focuses on yet another aspect of thermalisation.
Rather than being concerned with the closeness of an equilibrium state to a thermal state of some kind it stresses that a system can be considered truly thermal only if it is not correlated with any other relevant system, as otherwise phenomena such as anomalous heat flow, which go against the predictions of thermodynamic, can occur.
In Ref.~\cite{DelRio2014} a subsystem $S$ is called approximately thermal relative to a reference system $R$, if the joint state $\rho_{SR}$ is close in trace norm to a state of the form $\pi_S \otimes \rho_R$ with $\pi_S$ being a suitable micro-canonical state.
Decoupling techniques can be used to show that whenever certain entropic inequalities are fulfilled then most joint evolutions of $S$, $R$, and an environment lead to approximate relative thermalisation \cite{DelRio2014}.

\subsection{Investigations of thermalisation in concrete models}
\label{sec:numericalthermal}
A large body of literature is concerned with investigations of thermalisation in specific quantum many-body models.
Many of those studies are directly concerned with testing a variant of the eigenstate thermalisation hypothesis (ETH) at the level of individual eigenstates.
The various \emph{eigenstate thermalisation hypotheses} differ in whether they conjecture closeness to a micro-canonical or a canonical average and concerning the type of observables they supposedly apply to.
\emph{Few body} and \emph{(approximately) local} observables are the two most frequently encountered choices.

The ETH gained wide popularity after the series of influential works \cite{Rigol08,Rigol09,Santos10}.
They identify the ETH as the mechanism for thermalisation and study its breakdown close to integrability in systems of hardcore bosons by means of exact diagonalisation.
Similar conclusions are reached in Ref.~\cite{1102.0528v1} for fermionic systems and the ETH is compared with other signatures of quantum chaos.
Ref.~\cite{Beugeling2013} represents a sound and detailed study of the validity of the ETH in systems with a tunable integrability breaking term by means of finite size scaling and varying the strength of the integrability breaking term.
Ref.~\cite{Steinigeweg2013} discusses the validity of the ETH in a simple model making use of a numerical technique that does not rely on exact diagonalisation.
Ref.~\cite{Steinigeweg2013a} presents a detailed study of the fluctuations of diagonal and off-diagonal matrix elements in the energy eigenbasis of certain physical observables in Heisenberg spin chains that confirms that in the non-integrable case the ETH is fulfilled.
Ref.~\cite{1103.0787v1} finds a breakdown of thermalisation and the ETH in a non-integrable model of spin-less fermions with a power law like random hopping term if the decay exponent is sufficiently large.
In Ref.~\cite{1108.0928v1} the ETH is connected with von Neumann's quantum ergodic theorem and it is confirmed that after a quench from a model that fulfils the ETH to one that does not (for the system being integrable), a system can still behave thermal. 
Ref.~\cite{Ikeda2013a} performs a finite size scaling analysis of the validity of the ETH in the (integrable) Lieb-Liniger model and demonstrates that a weaker version of the ETH still holds that is sufficient to guarantee apparent thermalisation for initial states that occupy sufficiently many energy eigenstates.

On top of that, a large body of literature exists that investigates all sorts of aspects of thermalisation and how various properties of the Hamiltonian and initial state influence it --- in fact, this has a long history \cite{1201.0578v1,Jensen1985}:
Refs.~\cite{Kollath07,Flesch08,Cramer2008,Ronzheimer2013,Sorg2014} numerically and experimentally study transport and thermalisation in the (non-integrable) Bose-Hubbard model.
Ref.~\cite{Moeckel2008} focuses on the (fermionic) Hubbard model at small interaction strength.
Using flow techniques the temporal evolution investigated and is found to go trough three distinct regimes.
After an initial build-up of correlations the system exhibits an intermediate, non-equilibrium, 
pre-thermalised, quasi-steady state and then eventually becomes indistinguishable from being thermalised.
Similar pre-thermalisation effects -- building upon the theoretical understanding discussed in Ref.~\cite{Berges2004} -- have been observed in Ref.~\cite{Queisser2013} in Bose- and Fermi-Hubbard models, in Ref.~\cite{Kehrein2013} in systems of spin-less fermions, and in Ref.~\cite{Gambassi} in instances of non-integrable quantum spin chains.
Similar pre-thermalisation effects were also found in systems evolving under stochastically changing Hamiltonians \cite{Marino2012,Marino2014}.
Ref.~\cite{Karzig2010} studies the energy relaxation and thermalisation of hot electrons in quantum wires.
Ref.~\cite{Banuls10} investigates the influence of the initial state on the time scales on which thermalisation happens in a non-integrable model.
Ref.~\cite{Sirker2013} looks at local and non-local conservation laws and how they influence the non-equilibrium dynamics and thermalisation.
The equilibration and thermalisation after a quench to a coupled Hamiltonian of two identical uncoupled systems initially in thermal states at different temperatures is studied in \cite{Ponomarev2012} and thermalisation to a state close to a joint thermal state is found. 
Ref.~\cite{Pagel2013} investigates conditions for equilibration and thermalisation (albeit in the sense of convergence in the limit $t\to\infty$) in the well studied model of a central harmonic oscillator linearly coupled to an infinite number of other oscillators starting from a non-thermal product initial state.

The bottom line of this large amount of investigations is as follows:
The energy eigenstates in the bulk of the spectrum, i.e., those to energies that are neither too low nor too high, of sufficiently large and sufficiently complicated composite quantum systems seem to generically fulfil some variant of the ETH for certain physically meaningful local or few body observables.
Equilibration of local and few body observables is a very common phenomenon shared by almost all reasonable locally interacting many-body models for wide classes of initial states.
This in turn implies that those systems which fulfil a suitable variant of the ETH also almost always dynamically thermalise after being started in a non-equilibrium initial state, like for example after a quench.

Many studies moreover conclude that the fulfilment of the ETH is related to \emph{non-integrability} or \emph{chaos} \cite{1103.0787v1,1102.0528v1,Polkovnikov11,1108.0928v1,Neuenhahn10,Larson13,1201.0186v1,Beugeling2013,1112.3424v1.pd,Beugeling2013,Singh}.
Moreover, it is often suggested that systems fulfil the ETH and thermalise if and only if they are \emph{non-integrable} \cite{Rigol08,Rigol09,Rigol11,Biroli09,Znidaric09}, disordered systems being an important exception \cite{PhysRevLett.10-6} (see also Section~\ref{sec:mbl}).
What precisely the term \emph{non-integrable} means in the context of many-body quantum mechanics and especially in systems without a well-defined classical limit and the relation between \emph{\mbox{(non-)}integrability} and \emph{(exact) solvability} are, however, still the subject of a lively debate \cite{1111.3375v1,1012.3587v1,Braak11,Fine2013}.
We will come back to this issue in Section~\ref{sec:integrability}.

\section{Absence of thermalisation and many-body localisation}
\label{sec:absenceofthermalisation}
In the past section we have identified and discussed conditions under which locally interacting many-body systems exhibit thermodynamic behaviour like equilibration and thermalisation.
Complementing these considerations, in this section we will identify and discuss scenarios in which thermalisation is prevented.
In particular we will be concerned with situations in which a system fails to thermalise locally because small subsystems retain memory of their initial conditions.
Quite intuitively the presence or absence of thermalisation is intimately linked to the transport properties of a system.
After all, for thermalisation to happen stating from a non-equilibrium initial condition, some equalisation of initial imbalances in, for example, the spatial distribution of energy or particles must happen.  
We will see that the concept of entanglement, to what extend it is present in the eigenstates of a Hamiltonian and how it spreads through the system during time evolution, will be of great use to gain insights into such transport processes.

We start by formulating what we mean by \emph{absence of thermalisation} and in particular define \emph{violation of subsystem initial state independence}.
The main part of this section will be dedicated to the discussion of physical situations in which one naturally expects such an absence of thermalisation to happen:
systems with \emph{static disorder} in the Hamiltonian.
This will lead us to the intriguing phenomenon of \emph{(many-body) localisation}, a type of localisation in which disorder and interactions interplay in a subtle fashion. 
In fact, one of the currently discussed definitions for many-body localisation in quantum systems takes the absence of thermalisation as its defining feature \cite{PhysRevB.82.17,Oganesyan2007,Huse2014}.
We make an attempt to survey the newly emerging debate concerning this phenomenon.
After a brief introduction to Anderson localisation we discuss properties that can be expected from a many-body localised phase and collect different notions of many-body localisation.

\subsection{Violation of subsystem initial state independence}
\label{sec:violationofinitialstateindependence}
We start with defining \emph{subsystem initial state independence}.
Roughly speaking, a system fulfils subsystem initial state independence for a certain set of initial states if changing only the subsystem part of an initial state from that set does not noticeably influence the equilibrium state of the subsystem.
This can be put as follows:

\begin{definition}[Subsystem initial state independence] \label{def:subsysteminitialstateindependence}
  We say that a composite system with Hilbert space $\mcH$ and Hamiltonian $\H \in \Obs(\mcH)$ satisfies \emph{subsystem initial state independence} for subsystem $S$ on average with respect to a given set of initial states $\Qst_0 \subseteq \Qst(\mcH)$ if for all $\rho(0) \in \Qst_0$ the equilibrium state on $S$ is sufficiently independent of its initial state in the sense that for every quantum channel $\Chann \in\Qch(\mcH)$ with support $\supp(\Chann) \subseteq S$ the states $\rho(0)$ and $\Chann(\rho(0))$ have the property that $\tracedistance{\Tr_{\compl{S}}[\$_\H(\rho(0))]}{\Tr_{\compl{S}}[\$_\H(\Chann(\rho(0)))]}$ is sufficiently small.
\end{definition}

If a system does not exhibit any local exactly conserved quantities, \emph{subsystem initial state independence}, as defined in Definition~\ref{def:subsysteminitialstateindependence}, with respect to a sufficiently large set of initial states $\Qst_0 \subset \Qst$, can rightfully be considered a necessary condition for thermalisation of small subsystems, regardless of which precise definition of thermalisation is adopted.

As was shown in Ref.~\cite{PhysRevLett.10-6} subsystem initial state independence after a quench can be provably violated if the Hamiltonian exhibits a \emph{lack of entanglement in the eigenbasis}.
The central quantity in the argument is the \emph{effective entanglement in the eigenbasis}.
Given a bipartite spin system with $\Vset = S \dunion B$, Hilbert space $\mcH$, and Hamiltonian $\H \in \Obs(\mcH)$ with spectral decomposition $\H = \sum_{k=1}^{d'} E_k\,\Pi_k$ we define for any pure state $\psi = \ketbra\psi\psi \in \Qst$ the \emph{effective entanglement in the eigenbasis} as
\begin{equation}
  R_{S|B}(\psi) \coloneqq \sum_{k=1}^{d'} p_k\,\tracedistance{\Tr_B(\Pi_k\,\psi\,\Pi_k)/p_k}{\psi^S} ,
\end{equation}
with $p_k \coloneqq \Tr(\Pi_k\,\psi)$ the energy level populations.
If the Hamiltonian is non-degenerate it takes the simpler form
\begin{equation} \label{eq:effectiveentranglementintheeigenbasisnondegenerate}
  R_{S|B}(\psi) = \sum_{k=1}^{d} p_k\,\tracedistance{\Tr_B(\ketbra{E_k}{E_k})}{\psi^S} .
\end{equation}
The name \emph{effective entanglement in the eigenbasis} is justified by a result of Ref.~\cite{PhysRevLett.10-6}, which bounds $R_{S|B}$ by a quantity that is closely related to the \emph{geometric measure of entanglement} \cite{Shimony95,Barnum2001,Wei2003}.
If the eigenstates of $\H$ are little entangled, and $\psi$ is a suitably chosen product state, then $R_{S|B}(\psi)$ is small.
In fact one can show \cite{PhysRevLett.10-6,Gogolin2014} that there exist many initial states that are perfectly distinguishable on the subsystem but that have both the properties needed to ensure equilibration on average of small subsystems according to Theorem~\ref{thm:equilibrationonaverage} and a small $R_{S|B}$ if $\H$ is non-degenerate and its eigenbasis is only little entangled. The type of system that are naturally expected to show such a behaviour, as will
be discussed in the subsequent subsection, are many-body localising systems.
The effective entanglement in the eigenbasis can be used to bound how much closer the reduced states on $S$ of two different initial states can move during equilibration on average in the following sense:

\begin{theorem}[Distinguishability of de-phased states {\cite[Theorem~1]{PhysRevLett.10-6}}] \label{thm:absenceofthermalisaton}
  Consider a bipartite spin system with $\Vset = S \dunion B$, Hilbert space $\mcH$ and Hamiltonian $\H \in \Obs(\mcH)$.
  For $j \in \{1,2\}$ let $\psi_j(0) = \psi^S_j(0) \otimes \psi^B_j(0) \in \Qst(\mcH)$ be two initial product states and set $\omega^{S(j)} \coloneqq \Tr_B(\$_\H(\psi_j(0)))$ then
  \begin{equation}
    \tracedistance{\omega^{S(1)}}{\omega^{S(2)}} \geq \tracedistance{\psi^S_1(0)}{\psi^S_2(0)} - R_{S|B}(\psi_1(0)) - R_{S|B}(\psi_2(0)) .      
  \end{equation}
\end{theorem}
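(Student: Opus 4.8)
The plan is to derive the inequality from two applications of the triangle inequality, the first of which is just the triangle inequality for $\norm[1]{\argdot}$ written out. First I would unpack the definitions. Since $\psi_j(0) = \psi_j^S(0) \otimes \psi_j^B(0)$ is product with respect to the bipartition $\Vset = S \dunion B$, one has $\Tr_B(\psi_j(0)) = \psi_j^S(0)$. Setting $p_k^{(j)} \coloneqq \Tr(\Pi_k\,\psi_j(0))$ and, for the indices with $p_k^{(j)} > 0$, $\sigma_k^{(j)} \coloneqq \Tr_B(\Pi_k\,\psi_j(0)\,\Pi_k)/p_k^{(j)} \in \Qst(\mcH_S)$, the reduced de-phased state becomes a genuine convex combination of states,
\[
  \omega^{S(j)} = \Tr_B\big(\$_\H(\psi_j(0))\big) = \Tr_B\Big(\sum_k \Pi_k\,\psi_j(0)\,\Pi_k\Big) = \sum_k p_k^{(j)}\,\sigma_k^{(j)},
\]
because $\sum_k p_k^{(j)} = 1$.

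Second, writing the trivial decomposition $\psi_j^S(0) = \sum_k p_k^{(j)}\,\psi_j^S(0)$ and using the triangle inequality for the trace norm (i.e.\ joint convexity of the trace distance),
\[
  \tracedistance{\omega^{S(j)}}{\psi_j^S(0)} = \frac{1}{2}\,\norm[1]{\sum_k p_k^{(j)}\,\big(\sigma_k^{(j)} - \psi_j^S(0)\big)} \leq \sum_k p_k^{(j)}\,\tracedistance{\sigma_k^{(j)}}{\psi_j^S(0)} = R_{S|B}(\psi_j(0)),
\]
which is precisely the defining expression of the effective entanglement in the eigenbasis. I would then close the argument with a second triangle inequality,
\[
  \tracedistance{\psi_1^S(0)}{\psi_2^S(0)} \leq \tracedistance{\psi_1^S(0)}{\omega^{S(1)}} + \tracedistance{\omega^{S(1)}}{\omega^{S(2)}} + \tracedistance{\omega^{S(2)}}{\psi_2^S(0)},
\]
insert the bound just derived into the first and third summands, and rearrange to isolate $\tracedistance{\omega^{S(1)}}{\omega^{S(2)}}$, yielding the claimed estimate.

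I do not expect a genuine obstacle here: the statement is essentially a three-line corollary of the definitions once the convex-combination picture of $\omega^{S(j)}$ is in place. The only points deserving a word of care are (i) that each $\sigma_k^{(j)}$ is a legitimate density operator --- positivity is immediate from $\Pi_k\,\psi_j(0)\,\Pi_k \geq 0$ together with positivity-preservation of the partial trace, and unit trace holds by construction --- so that the convexity manipulation applies; and (ii) the bookkeeping for degeneracies, namely that one works throughout with the spectral projectors $\Pi_k$ (as in the definition of $R_{S|B}$) rather than with rank-one projectors, every index $k$ with $p_k^{(j)} = 0$ simply dropping out of all sums. It is worth noting that purity of the $\psi_j(0)$ is never used; only their product structure across the cut enters, purity becoming relevant only later when $R_{S|B}$ is bounded by a geometric entanglement measure. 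One could also append a robustness remark: replacing the $\psi_j(0)$ by states that are only $\epsilon$-close in trace distance to product across $\Vset = S \dunion B$ weakens the bound by at most $2\,\epsilon$, by the same chain of triangle inequalities.
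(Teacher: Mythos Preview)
Your proof is correct and is the natural two-triangle-inequality argument: first bound $\tracedistance{\omega^{S(j)}}{\psi_j^S(0)}$ by $R_{S|B}(\psi_j(0))$ via convexity of the trace distance, then chain through $\omega^{S(1)}$ and $\omega^{S(2)}$. The paper does not actually supply its own proof of this statement; it simply quotes the result from Ref.~\cite{PhysRevLett.10-6}, so there is nothing to compare against here beyond noting that your derivation is the expected one.
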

If the state of the subsystem $S$ equilibrates on average during the evolution under $\H$ for the two initial states, then the de-phased states $\omega^{S(j)} = \Tr_B(\$_\H(\psi_{j}(0)))$ are the respective equilibrium states.
The theorem shows that if $R(\psi_1(0))$ and $R(\psi_2(0))$ are both small, then the subsystem equilibrium states $\omega^{S(1)}$ and $\omega^{S(2)}$ cannot be much less distinguishable than the initial states $\psi^S_1(0)$ and $\psi^S_2(0)$.
We summarise this in the following observation:
\begin{observation}[Absence of initial state independence] \label{obs:absenceofthermalisation}
  Consider a bipartite spin system with $\Vset = S \dunion B$ and Hilbert space $\mcH$.
  Let $\mcH_R \subseteq \mcH$ be a subspace of dimension $d_R \coloneqq \dim(\mcH_R)$.
  If $d_R$ is large and the Hamiltonian $\H \in \Obs(\mcH)$ has not too many degenerate energy gaps (see Theorem~\ref{thm:equilibrationonaverage} for details) and an orthonormal basis $(\ket{j})_{j=1}^{d_S}$ for $\mcH_S$ exists for which $\delta \coloneqq \max_{k\in[d]} \delta_k$, with
  \begin{equation} \label{eq:geometricentanglementdelta}
    \delta_k \coloneqq \min_{j\in [d_S]} \tracedistance{\Tr_B \ketbra{E_k}{E_K}}{\ketbra{j}{j}} ,
  \end{equation}
  is small, then for every $j,j' \in [d_S]$ there exist many initial states of the bath $\psi^B(0) \in \Qst(\mcH_B)$ such that according to Theorem~\ref{thm:equilibrationonaverage} both $\ketbra{j}{j} \otimes \psi^B(0)$ and $\ketbra{j'}{j'} \otimes \psi^B(0)$ lead to subsystem equilibration on average, but despite them having exactly the same initial state on the bath, the corresponding subsystem equilibrium states $\omega^{S(j)}$ and $\omega^{S(j')}$remain well distinguishable for most times during the evolution, in the sense that their trace distance $\tracedistance{\omega^{S(j)}}{\omega^{S(j')}}$ is significantly larger than zero whenever $j \neq j'$, because of Theorem~\ref{thm:absenceofthermalisaton}.
\end{observation}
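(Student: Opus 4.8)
The plan is to derive Observation~\ref{obs:absenceofthermalisation} by feeding carefully chosen initial states into Theorem~\ref{thm:absenceofthermalisaton} and then certifying that these states also satisfy the hypotheses of Theorem~\ref{thm:equilibrationonaverage}. Fix $j\neq j'$ in $[d_S]$ and a bath state $\psi^B(0)=\ketbra{\psi^B}{\psi^B}\in\Qst(\mcH_B)$ to be chosen below, and set $\psi_j(0)\coloneqq\ketbra{j}{j}\otimes\psi^B(0)$ and $\psi_{j'}(0)\coloneqq\ketbra{j'}{j'}\otimes\psi^B(0)$. Since $\ket j$ and $\ket{j'}$ are orthonormal, $\psi^S_j(0)=\ketbra{j}{j}$, $\psi^S_{j'}(0)=\ketbra{j'}{j'}$, and $\tracedistance{\psi^S_j(0)}{\psi^S_{j'}(0)}=1$. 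Theorem~\ref{thm:absenceofthermalisaton} then gives immediately
\begin{equation}
  \tracedistance{\omega^{S(j)}}{\omega^{S(j')}} \geq 1 - R_{S|B}(\psi_j(0)) - R_{S|B}(\psi_{j'}(0)) ,
\end{equation}
so everything reduces to (i) showing that $\psi^B(0)$ can be picked so that both $R_{S|B}$ terms are of order $\delta$ (in fact $\sqrt\delta$), and (ii) showing that for such $\psi^B(0)$ both $\psi_j(0)$ and $\psi_{j'}(0)$ equilibrate on average, so that $\omega^{S(j)}$ and $\omega^{S(j')}$ are genuinely the equilibrium states.

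For step (i) I would first rewrite, in the (effectively) non-degenerate situation at hand,
\begin{equation}
  R_{S|B}(\ketbra{j}{j}\otimes\psi^B(0)) = \sum_{k} |\braket{E_k}{j,\psi^B}|^2\, \tracedistance{\Tr_B(\ketbra{E_k}{E_k})}{\ketbra{j}{j}} ,
\end{equation}
and use the assumption $\delta_k\leq\delta$: because the $\ketbra{j}{j}$ are mutually orthogonal (so $\tracedistance{\ketbra{j}{j}}{\ketbra{j'}{j'}}=1$), the triangle inequality forces each eigenstate $\ket{E_k}$ to carry a well-defined label $\ell(k)\in[d_S]$ with $\tracedistance{\Tr_B(\ketbra{E_k}{E_k})}{\ketbra{\ell(k)}{\ell(k)}}\leq\delta$ and $\tracedistance{\Tr_B(\ketbra{E_k}{E_k})}{\ketbra{j}{j}}\geq 1-\delta$ whenever $j\neq\ell(k)$. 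Splitting the sum accordingly yields $R_{S|B}(\ketbra{j}{j}\otimes\psi^B(0))\leq\delta+(1-w_j)$ with $w_j\coloneqq\sum_{k\suchthat\ell(k)=j}|\braket{E_k}{j,\psi^B}|^2$ the weight of the initial state in the ``cluster'' $\mcH_j\coloneqq\mathrm{span}\{\ket{E_k}\suchthat\ell(k)=j\}$. A short subspace-perturbation estimate --- starting from the observation that $\delta_k\leq\delta$ implies $\|(\1-P_{V_j})\ket{E_k}\|^2\leq 2\delta$ for $\ell(k)=j$, where $V_j\coloneqq\ket j\otimes\mcH_B$, and controlling the Gram matrix of the $P_{V_j}$-images of an orthonormal basis of $\mcH_j$ --- then shows that $\mcH_j$ is $\landauO(\sqrt\delta)$-close, as a subspace, to a subspace $W_j$ of the ``slot'' $V_j$. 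Consequently any unit vector in $W_j$ is automatically a product state $\ket j\otimes\ket{\psi^B}$ with $w_j\geq 1-\landauO(\sqrt\delta)$; choosing $\ket{\psi^B}$ in the common part of the pulled-back subspaces of $\mcH_j$ and $\mcH_{j'}$ (which is high dimensional provided the cluster dimensions are comparable to $d_B$) makes both $w_j$ and $w_{j'}$ close to one simultaneously, hence $R_{S|B}(\psi_j(0)),R_{S|B}(\psi_{j'}(0))\in\landauO(\sqrt\delta)$ and $\tracedistance{\omega^{S(j)}}{\omega^{S(j')}}\geq 1-\landauO(\sqrt\delta)$, which is strictly positive once $\delta$ is small enough.

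For step (ii) I would invoke that $d_R=\dim\mcH_R$ is large together with the ``not too many degenerate energy gaps'' assumption. A measure-concentration/typicality estimate of the kind used for the effective dimension in Ref.~\cite{Linden09} shows that all but an exponentially small fraction of the candidate bath states make the energy populations $p_k=\Tr(\Pi_k\,\psi_j(0))$ spread out enough that $\maxprime_k p_k$ is tiny; together with $N(\epsilon)$ being small for a suitable $\epsilon$, Theorem~\ref{thm:equilibrationonaverage} then certifies subsystem equilibration on average (with equilibrium state $\Tr_B(\$_\H(\psi_j(0)))$) for both $\psi_j(0)$ and $\psi_{j'}(0)$. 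Since the family of admissible $\psi^B(0)$ identified in step (i) is high dimensional and the typicality estimate removes only a negligible fraction of it, one obtains ``many'' bath states with all required properties, which is the claim.

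The main obstacle I expect is the bookkeeping in step (i) that produces a \emph{single} bath factor $\psi^B(0)$ serving both labels $j$ and $j'$ while leaving enough freedom for the typicality argument of step (ii): one has to control the dimension of the intersection of the two pulled-back cluster subspaces, which is where an assumption that the cluster dimensions $\dim\mcH_j$ are roughly balanced (each comparable to $d_B$) genuinely enters, and one has to make the $\landauO(\sqrt\delta)$ subspace-closeness quantitative enough that it survives against the potentially large cluster dimension. Treating true degeneracies of $\H$ --- as opposed to degeneracies merely of its gaps --- in the definitions of $\delta_k$ and $R_{S|B}$ would be a further, more technical, complication.
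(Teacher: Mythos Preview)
Your proposal is essentially correct and follows the same logical skeleton the paper sketches: the observation in the paper carries no separate proof beyond pointing to Theorem~\ref{thm:equilibrationonaverage} for equilibration, to Theorem~\ref{thm:absenceofthermalisaton} for the distinguishability bound, and to Refs.~\cite{PhysRevLett.10-6,Gogolin2014} for the existence of many product initial states with small $R_{S|B}$. Your two-step programme (bound $R_{S|B}$ for both product states, then certify equilibration via typicality) is exactly this structure.

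Where your write-up differs slightly from the paper's narrative is in \emph{how} the smallness of $R_{S|B}$ is obtained. The paper simply invokes the bound of Ref.~\cite{PhysRevLett.10-6} relating $R_{S|B}$ to the geometric measure of entanglement of the energy eigenstates; smallness of $\delta$ then directly gives small geometric entanglement and hence small $R_{S|B}$ for suitable product states. You instead reconstruct this via a cluster labelling $\ell(k)$ and a subspace-perturbation argument. Both routes lead to the same place, and yours is a legitimate way to make the argument self-contained. A minor simplification of your step~(i): rather than controlling the Gram matrix of the projected eigenvectors, it is cleaner to bound $\Tr(P_{V_j}(\1-Q_j)P_{V_j})\leq d_B-(1-\delta)m_j$, use $m_j\geq d_B(1-d_S\delta)$ (which follows from $\|P_{V_j}\ket{E_k}\|^2\leq\delta$ for $\ell(k)\neq j$), and conclude by a Markov/eigenvalue-counting argument that a subspace of $\mcH_B$ of dimension $\geq 3d_B/4$ gives $1-w_j=\landauO(d_S\delta)$; intersecting the two such subspaces for $j$ and $j'$ still leaves dimension $\geq d_B/2$. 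This yields $R_{S|B}\in\landauO(d_S\delta)$ rather than $\landauO(\sqrt\delta)$ and sidesteps the Gram-matrix bookkeeping you flagged as the main obstacle.
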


A statement complementing Observation~\ref{obs:absenceofthermalisation} can be found in Ref.~\cite[Section B]{Linden09} (see also Ref.~\cite{MasterThesisHutter} for a generalisation to mixed initial states and situations with initial correlations to reference system).
There, it is shown that if the energy eigenstates of a non-degenerate Hamiltonian does contain a lot of entanglement, then subsystem initial state independence can be guaranteed.

In a very similar spirit as above, absence of initial state independence has also been studied later in Ref.~\cite{PhysRevE.82.01}, which gives a condition that is necessary for subsystem initial state independence.
The article mostly studies a simplified version of this condition, which essentially demands that the reductions of most eigenvectors of the Hamiltonian must be sufficiently close to the maximally mixed state.

More recently, initial state independence was studied in Ref.~\cite{Hutter11,MasterThesisHutter}).
By using the \emph{decoupling method} \cite{Dupuis2010,1109.4348v1,Szehr2012} and the formalism of so-called \emph{smooth min and max entropies} \cite{Koenig08,Ciganovic2013}.
The authors show that it can be decided from just looking at one particular initial state whether a system satisfies initial state independence for most initial states.
Moreover, they give sufficient and necessary entropic conditions for initial state independence of most initial states.
The authors consider both subsystem initial state independence and bath initial state independence, i.e., the independence of the equilibrium state of the subsystem from the initial state of the bath.
The results concerning the absence of subsystem initial state independence of Ref.~\cite{Hutter11}, when compared to those of Ref.~\cite{PhysRevLett.10-6} discussed above, have the advantage that they apply to specific points in time instead of time averaged states and that the subsystem does not need to be small.
On the other hand they only hold for most/typical initial states.

There exist several articles, including Refs.~\cite{1011.0781v1,Cazalilla11,Biroli09,Znidaric09}, that numerically and analytically study related effects.
Ref.~\cite{Biroli09} finds that the existence of few energy eigenstates that violate the \emph{eigenstate thermalisation hypothesis} (see also Section~\ref{sec:thermalisationunderassumptionsontheeigenstates} and in particular Definition~\ref{def:eth}) can lead to absence of thermalisation.
Ref.~\cite{Znidaric09} goes beyond the closed system setting and considers thermalisation and its absence in systems that are coupled to thermal baths and finds that certain integrable models do not thermalise.
Ref.~\cite{1011.0781v1} studies quenches in a homogeneous XY quantum spin chain with transverse field starting in ground, excited, and thermal states.
The authors find that after certain quenches local observables fail to thermalise and relate this behaviour to criticality.
Ref.~\cite{Cazalilla11} investigates equilibration and thermalisation in exactly solvable models and finds that in such models correlation functions can retain memory of the initial conditions.

\subsection{Anderson localisation}
\label{sec:anderson}
With the aim to develop a better understanding of particle and spin transport in materials with impurities, Anderson in his 1958 article \cite{Anderson1958} proposed a simple model for quantum mechanical particles in a lattice with a random potential and showed that the randomness can lead to a complete suppression of diffusion or transport.
This phenomenon became known as \emph{Anderson localisation}.
More concretely, the model studied by Anderson is a tight-binding model on a cubic lattice of dimension $D$ with a single particle hopping on the lattice sites. 
The Hilbert space is $\mcH = l^2(\Z^D)$ spanned by vectors $\ket{x}$ interpreted as the state with the particle at position $x \in \Z^D$.
The random Hamiltonian of the Anderson model reads
\begin{equation}\label{eq:andersonHamiltonian}
  \H (V)= \sum_{x,y\in \Z^D \oftype |x-y|=1} \ketbra{x}{y} 
  + \lambda\sum_{x \in \Z^D} V_{x}\, \ketbra{x}{x},
\end{equation}
where $\lambda>0$ and $V$ a family of random numbers $V_x$ drawn i.i.d.\ from a suitable distribution $\mu$.
The first term describes hopping between nearest neighbours in the lattice (Ref.~\cite{Anderson1958} actually also considers more general longer range hopping), while the second represents a random on-site potential.
For reviews on the Anderson model from the perspective of mathematical physics, see Refs.~\cite{Stolz,Abrahams2010,Hundertmark}.
For a general overview written on the occasion of the 50th anniversary of the phenomenon see Ref.~\cite{Lagendijk2009}.
For a good book also providing significant historical context, see Ref.~\cite{Abrahams2010}.
To simplify the discussion we concentrate on one spatial dimension $D=1$, and assume that the distribution $\mu$ is absolutely continuous with a bounded density of compact support.
Because of the existence of the rigorous mathematical literature, we moreover take the liberty to brush over some subtleties.

{The Anderson model exhibits ``localisation''.
This is true in at least two different senses of the term \cite{Stolz}:
First, the random Hamiltonian \eqref{eq:andersonHamiltonian} almost surely exhibits \emph{spectral localisation}, meaning that it has a pure point spectrum (that densely fills all non-trivial open intervals contained in its almost sure spectrum) and that the associated eigenfunctions are exponentially decaying.
The latter means that $H(V)$ has a complete countable set of eigenvectors $\{|E_k\rangle\}_k$ obeying
\begin{equation}
  \exists\+ C>0,\xi>0 \suchthat \forall \ket{E_k}\  \exists x_0 \suchthat \forall x \itholds |\braket{x}{E_k} | \leq C\,\e^{-|x-x_0|/\xi}.
\end{equation} 
Here $\xi>0$ is called the localisation length scale.
That is to say: Almost all Hamiltonian eigenvectors are exponentially clustering.
For systems in in more than one dimension a similar statement holds either at sufficiently high disorder, or for energies sufficiently close to band edges.
Note also that these results can be extended to finite systems, localisation then holding with high probability instead of almost surely.

Second, the model exhibits almost surely \emph{dynamical localisation}.
This can be captured as follows: The random Hamiltonian $H(V)$ is said to exhibits \emph{dynamical localisation} in an open interval $I$ if 
\begin{equation}
  \exists\+ C>0,\xi>0 \suchthat \forall x,y\in\Z^D \itholds \expectation\big(\sup_{t\in\R} \bra{x}\, e^{-\iu\,H(V)\,t}\, \Pi_I \, \ket{y} \big) < C\,\e^{-|x-y|/\xi}
\end{equation}
where $\Pi_I$ is the spectral projector corresponding to the interval $I$.
Dynamical localisation implies a complete absence of transport.
In particular it implies that all moments of the ``distance from the origin'' operator $|X|$, which acts like $|X|\,\ket{x} = |x|\,\ket{x}$, are bounded uniformly in time, i.e., that
\begin{equation}
  \forall p>0,\ x,y \in \Z^D \itholds \sup_{t\in\R} \norm{ |X|^q\, e^{-\iu\,H(V)\,t} \,\Pi_I \, \ket{\psi} } < \infty
\end{equation}
for $\ket\psi$ any state vector with compact support almost surely.
Despite the hopping term in the Hamiltonian, which in the absence of the disordered potential allows the particle to move through the lattice, in the Anderson model the particle ``gets stuck''.
The probability of finding it on a site different from its starting point decays exponentially with the distance uniformly for all times.
Dynamical localisation implies spectral localisation by the RAGE theorem, but the converse is not necessarily true.

The above discussion immediately carries over to, for instance, disordered quadratic fermionic systems in which the quasi-particles do not interact.
In one spatial dimension, the corresponding Hamiltonian reads
\begin{equation}
  \H(V) =\sum_{x\in \Z} \left(f_x^\dagger f_{x+1}+ f_{x+1}^\dagger f_x\right)+ \sum_{x\in \Z} V_x\, f_x^\dagger f_x,
\end{equation}
As the fermions do not interact in such quadratic models, each of them behaves as in Anderson's model and conductivity is completely lost.
The Hamiltonian can also be readily related to local spin models by virtue of the \emph{Jordan Wigner transformation} and Anderson's conclusion can be argued to still holds for interacting particles when the density is very low \cite{Anderson1958}.
}

\subsection{Many-body localisation}
\label{sec:mbl}
An intriguing, and in large parts still unsettled, question is whether and in what precise sense localisation survives in systems with interactions and significant particle densities.
This issue has already been raised by Anderson \cite{Anderson1958,Fleishman1980}.
One expects that in models with sufficiently strong disorder some characteristics of Anderson localisation should survive in the presence of interactions.
This new phase of matter is commonly referred to as the \emph{many-body localised} (MBL) phase.
In which sense and under which conditions this is in fact true is the subject of ongoing investigations.
In fact, there is no complete consensus yet as to what precisely constitutes many-body localisation in the first place. In the following we collect and compare different points of view (see also Ref.~\cite{Nandkishore2014}):

\begin{enumerate}
\item \emph{Suppression of transport and localisation in Fock space}: \label{item:localisationinfockspace}
  The influential Ref.~\cite{Basko2006a} gives significant evidence that indeed, localisation in the dynamical sense \cite{Hamza2012} is maintained in the presence of interactions, by invoking a combinatoric argument and perturbation theory: For sufficiently high disorder and sufficiently low temperature (and absence of a coupling to an external heat bath) it is demonstrated that the conductivity in a disordered fermionic lattice system is exactly zero.
  The argument makes use of the concept of \emph{localisation in Fock space} introduced in Ref.~\cite{PhysRevLett.78} (see also Ref.~\cite{Gornyi2005}):
  Consider a fermionic system whose Hamiltonian is a sum of a quadratic Hamiltonian $\H_0$ and an interaction term $\H_1$.
  A many-body state is called \emph{localised} if it is a superposition of only few of the (quasi-particle) eigenstates of $\H_0$.
  If the relevant eigenstates of $\H_0+\H_1$ are localised, i.e., all below a certain critical energy (called a \emph{mobility edge}), in this sense then below a critical temperature the system exhibits zero conductivity.

\item \emph{Absence of thermalisation}: \label{item:absenceofthermalisation}
  Closely related to the characteristic suppression of transport in localised systems is the absence of thermalisation due to a violation of initial state independence (see also Section~\ref{sec:violationofinitialstateindependence}).
  This is a natural expectation, since one does not expect the \emph{eigenstate thermalisation hypothesis} (ETH) to be valid within the MBL phase.
  Ref.~\cite{PhysRevB.82.17} for example studies a disordered Heisenberg chain and finds a violation of the ETH and interprets this as one of the defining features of MBL.
  As was shown in Ref.~\cite{PhysRevLett.10-6} violation of initial state independence in disordered systems can be understood as a consequence of a lack of entanglement in the eigenbasis.

\item \emph{Clustering of correlations}: \label{item:clusteringofcorrelations}
  Another definition puts the \emph{clustering of correlations of eigenvectors} into the centre of attention. 
  For quadratic models, it is expected that all Hamiltonian eigenvectors satisfy an \emph{area law} \cite{Eisert2008} for the entanglement entropy. 
  This means that in one dimension the von Neumann entropy of the reduced state of any energy eigenstate on any subsystem is upper bounded by a constant independent of the size of the subsystem.
  A similar feature has also been suggested as a possible definition for MBL \cite{Bauer2013}: 
  One then calls a system many-body localising if, not necessarily all but at least many (in a suitable sense), eigenstates satisfy an area law.
  A proof of a uniform area law (in expectation) was recently given for the case of the XY chain with disordered transverse magnetic field in Ref.~\cite{Abdulrahman2015}.
  Numerically, there is strong evidence that this is indeed the case in disordered interacting models, at least below a \emph{mobility edge} \cite{Luitz2014}, so an energy scale that separates the MBL from the ``ergodic'' regime.
  A connection with the dynamical aspect of localisation \cite{Hamza2012} was recently established in Ref.~\cite{MBLMPS}, where it was shown that invoking different readings of dynamical localisation, it follows that either all or many energy eigenvectors follow an area law.

\item \emph{Logarithmic growth of entanglement}: \label{item:logarithmicgrowthofentanglement}
  A yet different feature of MBL that has been suggested as a defining property is the \emph{logarithmic growth of entanglement} in time.
  While the entanglement of generic local Hamiltonian models is expected to grow linearly in time (see also Section~\ref{sec:liebrobinson}), quadratic models show a saturation of entanglement entropies.
  This is provably so, as a consequence of the complete suppression of transport.
  In interacting disordered models a slow --- logarithmic in time --- but unbounded growth of entanglement has been numerically observed \cite{Znidaric2008a,Pollmann}.
  This feature is perfectly compatible with individual eigenstates exhibiting little entanglement.
  In fact, maybe counter-intuitively, an unbounded growth of entanglement already follows from localised Hamiltonian eigenstates together with a generic spectrum \cite{MBLTransport}.

\item \emph{Approximately local constants of motion}: \label{item:localconstantsofmotion}
  Another discussed possibly defining feature of MBL is the presence of an \emph{extensive number of exactly or approximately local constants of motion} \cite{Chandran2014a}, with the feature that the Hamiltonian can be expressed entirely as a sum of polynomials in these quantities \cite{MBLTransport,Logarithmic,VidalMBL}.

  If indeed such local constants $\{A_j\}$ of motion can be found, $(g,K)$-local in the above
  sense for a suitable function $g$, violation of subsystem initial state independence and absence of thermalisation follow immediately: Since 
  \begin{equation}
    \tr(A_j \rho(t)) = \tr(A_j \rho(0)) 
  \end{equation}
  is true for all times $t$, the system can possibly only equilibrate to a state that has the same values for these conserved quantities (see Section~\ref{sec:gge}).

  Other, quite sophisticated features also follow from the presence of such approximately local constants of motion.
  For example, one can derive a Lieb-Robinson type bound with a causal ``cone'' that grows only logarithmically in time \cite{Logarithmic} (see also Section~\ref{sec:liebrobinson}).
  From such a bound one can derive that the entanglement entropy can grow at most logarithmic in time \cite{Logarithmic,Eisert06,Bravyi06-1}. 
  An similar bound has been also obtained in Ref.~\cite{Burrell2007} for a disordered XY spin chain and Ref.~\cite{Hamza2012} improves upon this by giving a \emph{zero velocity Lieb-Robinson bound} in disorder average for this model.

  A disadvantage of that definition is that it is far from clear how to construct or identify such approximately local constants of motion in the first place.
  In the disordered Ising model \cite{Imbrie2014} and the XXZ spin chain \cite{Chandran2014a} this is indeed possible, but no general strategy has yet been found \cite{VidalMBL}.
  Several of these defining features have also been connected and made plausible using real space renormalisation group approaches \cite{Vosk2014}.

\item \emph{Poissonian level statistics}: \label{item:poissonianlevelstatistics}
  Properties of the energy level statistics of Hamiltonians have proved to be useful indicators for quantum chaos and integrability.
  It is hence natural to investigate the influence of disorder on the level statistics.
  A key quantity in this context is the distribution of gaps between consecutive energy eigenvalues.
  For quadratic models, this distribution typically is a Poissonian one.
  For interacting models, it is generally expected to follow a Wigner-Dyson type distribution \cite{Oganesyan2007,Luitz2014,Jacquod1997,Georgeot1998,Haake10}.

  For typical many-body localised models, there is strong numerical evidence that the distribution is again close to Poissonian \cite{Luitz2014,Jacquod1997,PhysRevLett.78,PhysRevB.82.17,Oganesyan2007}.
  This can be quantified by the ratio of consecutive level spacings
  \begin{equation} 
    r_j= \frac{\min (\delta_j,\delta_{j+1}) }
    {
      \max (\delta_j,\delta_{j+1})
    },
  \end{equation}
  with $\delta_j= E_j-E_{j-1}$ being the gap between consecutive energy levels. 
  In the non-localised phase one can expect from Wigner's surmise leading to the Gaussian orthogonal (GOE) or unitary (GUE) ensemble that a disorder average of $r_j$ yields a value close to $r_\mathrm{GOE} \approx 0.5307$ or $r_\mathrm{GUE} \approx 0.5996$, while for a Poisson distribution, that one expect in the MBL phase, one obtains on average $r_\mathrm{Poisson}= 2 \log 2 - 1 \approx 0.3863$. 

  An extensive numerical analysis of this ratio of consecutive level spacings has been performed in Ref.~\cite{Luitz2014} for the random field Heisenberg model on a ring.
  Also finds excellent agreement of the position of the cross over in the consecutive level spacings statistics with that of an area law / volume law crossover of the entanglement entropy and a crossover in the scaling of the participation entropies (a quantity closely related to the inverse participation ratio and the effective dimension discussed in Section~\ref{sec:equlibrationintheweaksense}).
  This work also calculates these quantities in an energy resolved fashion and finds that it can happen that for low energies a system shows strong signatures of a Poissonian distribution, while for higher energies, it resembles a Gaussian orthogonal ensemble consistent with the existence of a mobility edge in interacting systems.

\item \emph{Power-law approach to equilibrium}: \label{item:survivalprobability}
  Ref.~\cite{Serbyn2014a} identified a power law (as opposed to exponential) approach to equilibrium of local observables as a characteristic feature of the MBL phase.
  In addition the MBL phase has also been found to exhibit a slow power law like decay of the disorder average of the survival probability, i.e., the fidelity with the initial state, at long times \cite{Torres-Herrera2015} (see also Section~\ref{sec:fidelitydecay}).
\end{enumerate}

Each of the definitions above only capture part of the intricate phenomenon of many-body localisation.
In particular it is far from clear whether disorder is really necessary to realise all of the above qualifying features of many-body localisation.
In fact, drawing intuition from classical glassy systems it is possible to design clean spin systems that show many of the features one would expect from a system with a MBL transition \cite{Garrahan}.
In fact many of the static properties discussed above may also occur in certain (nearly) integrable models without any disorder.
Concerning the dynamical features of MBL, Ref.~\cite{Yao2014} for example demonstrates that the slow growth of entanglement entropies \refitem{item:logarithmicgrowthofentanglement} can also exist in clean systems.
The same holds for long lived metastable states that break a symmetry of the system  \cite{Yao2014,Carleo2012,Schiulaz2015,Garrahan} (see also the effect of pre-thermalisation discussed in Section~\ref{sec:numericalthermal}).
It would be specifically intriguing to see rigorously whether fully translation invariant models can exhibit dynamical localisation in the sense of property \refitem{item:absenceofthermalisation} even for infinite time, similarly as this is possible for interacting disordered models \cite{PhysRevLett.10-6}.

In several physical architectures, Anderson and many-body localisation has already been experimentally observed.
Ref.~\cite{Billy2008} discusses an experimental observation of exponential localisation of a Bose-Einstein condensate in a random potential generated with a laser speckle pattern.
The recent Ref.~\cite{SchneiderMBL} experimentally probes the many-body localisation transition in a system of ultra cold fermions in a disordered optical lattice by measuring the imbalance between the occupation of even and odd sites starting from a situation where only even sites are occupied, resembling the experimental situation of Ref.~\cite{1101.2659v1}. 
For sufficiently strong disorder the imbalance is found to no longer decay to zero even for long times, reflecting the absence of thermalisation and the violation of a subsystem initial state independence very much in the spirit of Ref.~\cite{PhysRevLett.10-6}.

\section{Integrability}
\label{sec:integrability}
In this section we discuss a concept that has recently started playing an important role in the debate on equilibration and thermalisation in closed quantum systems --- the concept of \emph{integrability}.
It is often suggested or claimed that non-integrable systems thermalise, while integrable ones do not.
This wisdom has become folklore knowledge that is often invoked in discussions and talks on the topic (compare also Refs.~\cite{Rigol08,Rigol09,Rigol11,Biroli09,Znidaric09,1103.0787v1,1102.0528v1,Polkovnikov11,1108.0928v1,Neuenhahn10,Larson13,1201.0186v1,Beugeling2013,1112.3424v1.pd}).
In the following, we will briefly review the current state of affairs concerning the usage of the term \emph{(quantum) integrability} in the context of equilibration and thermalisation in closed quantum systems, comment on the concept of integrability and investigate to which extend the circumstantial evidence concerning the connection between \mbox{(non-)}integrability and thermalisation can be substantiated.

To that end we will first recapitulate the definition of integrability in classical mechanics and then discuss obstacles for a generalisation of the concept of integrability to the quantum setting.
This assessment is largely based on the previous works Refs.~\cite{Weigert1992,1012.3587v1,PhysRevLett.10-6}.
We finish with some speculations on the connection of quantum \mbox{(non-)}integrability and computational complexity.

\subsection{In classical mechanics}
\label{sec:integrabilityinclassicalmechanics}
In classical mechanics \cite{Arnold78} \emph{(Liouville) integrability} is a very well-defined concept.
Consider a classical system with $n \in \Z^+$ degrees of freedom, each associated with a \emph{coordinate} $q_k$ and a corresponding \emph{momentum} $p_k$.
Then, in the Hamiltonian formalism, the $2\,n$ \emph{canonical coordinates} $(q_k)_{k=1}^n$ and $(p_k)_{k=1}^n$ span the \emph{phase space} $\Cst $ of the system \cite{Kinchin1949}.
We assume that the \emph{Hamiltonian function} $\CH\oftype\Cst\to\R$, i.e., the energy functional, of the system is time independent.
It then governs the time evolution of the system via \emph{Hamilton's equations} \cite{Arnold78}:
\begin{align}
  &\forall k\in[n]\itholds &\dot p_k &= - \frac{\del\CH}{\del q_k} &  \dot q_k &= \frac{\del\CH}{\del p_k}
\end{align}
The dot indicates the derivative with respect to time of the corresponding quantity, i.e., $\dot q_k$ is the temporal change of $q_k$.
Integrating these differential equations yields the \emph{phase flow} $g_\CH^t\oftype\Cst \to \Cst$, which maps the initial phase space vector of a system at time $0$ to that at time $t \in R$.
Define for any two functions $F,G\oftype\Cst\to\R$ their \emph{Poisson bracket} $(F,G)\oftype\Cst\to\R$ as
\begin{equation}
  (F,G) \coloneqq \lim_{t\to0} \frac{\ddel}{\ddel t}\, F \circ g_G^t ,
\end{equation}
where $\circ$ denotes function \emph{composition}.
It turns out that $(\argdot,\argdot)$ is bilinear and skew-symmetric \cite{Arnold78}.
A function $F\oftype\Cst\to\R$ is called a \emph{first integral of motion} under the evolution induced by $\CH$ if $(F,\CH) = 0$.
More generally, if for $F,G\oftype\Cst\to\R$ it holds that $(F,G) = 0$, then $F$ and $G$ are said to be \emph{in involution}.
We can now define Liouville integrability:

\begin{definition}[Liouville integrability \cite{Arnold78}]
  A classical system with $n$ degrees of freedom is called (Liouville) integrable if it entails a sequence $(F_k)_{k=1}^{n}$ of $n$ independent first integrals of motion that are pairwise in involution.
\end{definition}
\emph{Liouville's theorem for integrable systems} shows that Liouville integrable systems can be solved, i.e., the time evolution can be explicitly calculated, in a systematic way by \emph{quadratures}, i.e., by direct integration of differential equations:
\begin{theorem}[Corollary of Liouville's theorem for integrable systems \cite{Arnold78}]
  If a system is Liouville integrable, its time evolution can be solved by quadratures.
\end{theorem}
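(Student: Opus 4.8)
The plan is to follow the classical Hamilton--Jacobi argument underlying Liouville's theorem, as presented in Ref.~\cite{Arnold78}. Denote by $F_1,\dots,F_n$ the $n$ independent first integrals in involution. First I would record two elementary consequences of the hypotheses: each $F_k$ Poisson-commutes with $\CH$, so the phase flow $g_\CH^t$ leaves the level sets $M_f\coloneqq\{x\in\Cst\oftype F_k(x)=f_k,\ k\in[n]\}$ invariant; and, at a regular value $f$, a short computation with the canonical symplectic structure shows that $\bigcap_k\ker\dd{F_k}$ is exactly the $n$-dimensional span of the Hamiltonian vector fields of the $F_k$. Since this common kernel also contains the Hamiltonian vector field of $\CH$, one concludes that $\CH$ is, at least locally, a function $\CH=\CH(F_1,\dots,F_n)$ of the integrals alone.

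Next, fixing such a regular $f$, I would solve the $n$ equations $F_k(q,p)=f_k$ for the momenta by the implicit function theorem, obtaining $p=p(q,f)$; this uses only the solution of equations, not integration, and is therefore admissible in a ``solution by quadratures''. The structural heart of the argument is the observation that involution of the $F_k$ forces each $M_f$ to be a Lagrangian submanifold, equivalently that the one-form $\alpha_f\coloneqq\sum_{k=1}^n p_k(q,f)\,\dd{q_k}$ is closed on $M_f$ (indeed $\dd{\alpha_f}$ equals the restriction of the symplectic form to $M_f$, which is isotropic of half-dimension precisely because the spanning fields $X_{F_j}$ satisfy $(F_j,F_k)=0$). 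Closedness then licenses a single quadrature to define the generating function
\begin{equation}
  S(q,f)\coloneqq\int_{q_0}^{q}\sum_{k=1}^n p_k(q',f)\,\dd{q'_k} ,
\end{equation}
whose value is (locally) path independent; this $S$ generates a canonical change of variables $(q,p)\mapsto(f,\psi)$ with $p_k=\del S/\del q_k$ and new coordinates $\psi_k\coloneqq\del S/\del f_k$.

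In the new coordinates the dynamics is trivial: because $\CH$ depends on $f$ only, Hamilton's equations read $\dot f_k=0$ and $\dot\psi_k=\del\CH/\del f_k=\omega_k(f)$, so $f_k(t)=f_k(0)$ and $\psi_k(t)=\psi_k(0)+\omega_k(f(0))\,t$. To recover the physical variables one inverts the relations $\psi_k=\del S(q,f)/\del f_k$ to get $q=q(f,\psi)$ (again merely solving equations) and then sets $p=\del S/\del q$ (differentiation). Thus the entire time evolution has been obtained by algebraic operations, differentiations, functional inversions, and exactly one integration --- that is, by quadratures.

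I expect the main obstacle to be the Lagrangian/closedness step together with its local-versus-global caveats: one must take $f$ to be a regular value, assume that the projection of the relevant piece of $M_f$ onto configuration space is a local diffeomorphism so that the $q_k$ may serve as coordinates there, and work on a simply connected piece so that $S$ is single valued; likewise the representation $\CH=\CH(F_1,\dots,F_n)$ is only local. The clean global version --- where, for compact connected $M_f$, one additionally shows $M_f\cong T^n$ and constructs genuine action--angle variables --- is the content of the Arnold--Liouville theorem and can simply be invoked if a global statement is desired.
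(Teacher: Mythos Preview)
The paper does not actually prove this theorem; it is stated as a citation from Arnold's textbook \cite{Arnold78}, and the paper only follows it with an informal paragraph describing the conclusion (confinement to an $n$-torus, existence of action--angle coordinates $\varphi_k$ with $\dot F_k=0$, $\dot\varphi_k=w_k$). Your sketch is precisely the standard Hamilton--Jacobi/generating-function argument from Arnold that the paper is deferring to, and it is correct in outline, including your honest listing of the local caveats (regular value, local invertibility, simple connectedness) and the remark that the global torus statement is the full Arnold--Liouville theorem. So there is no discrepancy to flag: you have supplied the proof the paper chose to omit, by the same route the cited reference uses.
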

In more detail: Liouville's theorem for integrable systems essentially ensures that, given the initial values of all canonical coordinates, the time evolution of an integrable system is confined to a \emph{smooth} \emph{submanifold} of the phase space that is \emph{diffeomorphic} to an $n$-dimensional \emph{torus}.
The time evolution is quasi-periodic and can be described in terms of the so-called \emph{action angle coordinates} $(\varphi_k)_{k=1}^n$ that parametrise the torus.

The action angle coordinates can be explicitly constructed from the sequence $(F_k)_{k=1}^{n}$ of $n$ independent first integrals of motion and the values fixed for them.
Fixing different values for the $n$ first integrals of motion results in different tori.
In the coordinate system of the action angle variables the equations of motion are given by $2\,n$ simple ordinary differential equations of the form $\dot F_k = 0$ and $\dot \varphi_k = w_k$, with $w_k \in \R$ being constants that depend on the values that were fixed for the $n$ first integrals of motion.

If a Liouville integrable system is perturbed, then the time evolution is generally not confined to a torus anymore and cannot be derived in a systematic way.
For small perturbations the Kolmogorov-Arnold-Moser (KAM) theorem ensures, under a so-called \emph{non-resonance condition}, that most tori are only deformed and the time evolution on them is then still quasi-periodic \cite{Moradi2001,Tabor1989,Poeschel03}.

In summary we have:
Integrability in classical systems implies \emph{systematic solvability} and thereby yields a \emph{qualitative classification} of classical systems.
Liouville integrable systems are not \emph{ergodic} (see Section~\ref{sec:canonicalapproaches}) in the sense that their phase space trajectory does not explore the whole phase space, but is confined to a portion of it.
Whether or not this implies that integrable systems cannot thermalise depends on the definition of thermalisation, but the motion of the system is quasi-periodic and hence no convergence of the state of the system in the limit $t\to\infty$ is possible.
Non-integrability in classical systems is \emph{not} sufficient for ergodicity or chaos and hence also not sufficient for notions of mixing or thermalisation based on these concepts.
Still, the concept of Liouville integrability yields a classification of systems with strong implications for their physical behaviour.

\subsection{In quantum mechanics}
\label{sec:integrabilityinquantumtheory}
Ideally, a notion of quantum integrability should yield a classification that divides quantum systems into two classes, \emph{integrable} ones and \emph{non-integrable} ones, with markedly different physical properties.
In addition it should, in some sense, be a generalisation of Liouville integrability.
However, if one tries to generalize the concept of Liouville integrability to quantum systems in a straight forward manner, one immediately encounters problems (see also Ref.~\cite{1111.3375v1}):

Consider a quantum system with $d$ dimensional Hilbert space $\mcH$ and Hamiltonian $\H \in \Obs(\mcH)$.
An orthonormal eigenbasis $(\ket{\tilde E_k})_{k=1}^d$ of $\H$, with corresponding eigenvalues $(\tilde E_k)_{k=1}^d$, can always be constructed in a systematic way by diagonalising the Hamiltonian.
The time evolution of an arbitrary initial state vector $\ket \psi \in \mcH$ is then given by
\begin{equation}\label{eq:timeevolutioninexpliciteform}
  t \mapsto \ket{\psi(t)} \coloneqq \sum_{k=1}^d |\braket{\tilde E_k}{\psi}|\,\e^{\iu\,\tilde\varphi_k(t)}\,\ket{\tilde E_k} ,
\end{equation}
with $\tilde\varphi_k(t) \coloneqq \arg(\braket{\tilde E_k}{\psi})-\,\tilde E_k\,t$, where $\arg$ is the \emph{argument} function, i.e., for every $c \in \C, c = |c|\,\e^{\iu\,\arg(c)}$.
The overlaps $\braket{\tilde E_k}{\psi}$ can also be calculated systematically, so the time evolution of a (finite dimensional) quantum system can always be obtained in a systematic way for any Hamiltonian and any initial state.

The analogy to the situation of Liouville integrable systems is striking:
The dimension $d$ plays the role of the number $n$ of degrees of freedom of the system in the classical case.
The linear functionals $|\bra{\tilde E_k} \,\argdot\, | \oftype \mcH \to \R$, induced by the eigenvectors of $\H$, are analogous to the first integrals of motion in Liouville's theorem on integrable systems, and the time independent moduli of the overlaps $|\braket{\tilde E_k}{\psi}| = |\braket{\tilde E_k}{\psi(t)}|$ play the role of the values fixed for these constants of motion.
Finally, the functions $\tilde \varphi_k$ in the right hand side of \texteqref{eq:timeevolutioninexpliciteform} satisfy differential equations analogous to those of the action angle variables, namely $\dot{\tilde{\varphi}}_k = \tilde E_k$, and the time evolution indeed happens on a $d$-torus.
As in the classical case, the specific torus to which the evolution is confined depends on the values fixed for the conserved quantities.
It seems that the dynamics of quantum systems is far less rich than that of classical systems.
This constitutes a major obstacle for a good definition of quantum \mbox{(non-)}integrability.

Before going on, it is reasonable to give a set of conditions that a good notion of \mbox{(non-)}integrability for quantum systems should satisfy.
It seems reasonable to demand \cite{1012.3587v1} that a definition of quantum integrability should:
\begin{enumerate}%[label={(Condition~\arabic*)},ref={\arabic*},leftmargin=*]
\item \label{item:reasonablenotionofquantumintegrabilitycondition4} have implications for the physical behaviour,
\item \label{item:reasonablenotionofquantumintegrabilitycondition1} be applicable to a large class of quantum systems,
\item \label{item:reasonablenotionofquantumintegrabilitycondition2} be unambiguous,
\item \label{item:reasonablenotionofquantumintegrabilitycondition3} be decidable for concrete models.
\end{enumerate}

Unfortunately almost none of the existing frequently used notions of quantum integrability seems to fulfil all these criteria.
The following is a list of some of the definitions of quantum integrability that have been introduced, together with exemplary references in which the corresponding definition appears or is used (see also Refs.~\cite{1012.3587v1,PhysRevLett.10-6,sutherland04,Weigert1992}).
A system is \emph{quantum integrable}:
\begin{enumerate}%[leftmargin=*]
\item \label{item:npotionsofintegrability_conservedquantity} If it exhibits $n$ physically meaningful mutually commuting conserved quantities that are in some sense independent \cite{Rigol07,Braak11,1109.5904v1,Barthel08,Hawkins2008,Jensen1985} (see also Ref.~\cite{Weigert1992} and the references therein) or depend linearly on some parameter of the Hamiltonian \cite{1111.3375v1}.
\item \label{item:npotionsofintegrability_betheansatz} If it is integrable by the Bethe ansatz \cite{sutherland04,Ikeda2013a,Beugeling2013}.
\item \label{item:npotionsofintegrability_nondiffractive} If it exhibits nondiffractive scattering \cite{sutherland04}.
\item \label{item:npotionsofintegrability_classicallimit} If it has a classical limit that is integrable \cite{Castagnino2006}.
\item \label{item:npotionsofintegrability_levelstatistics} If its level statistics follows a Poisson law and is non-integrable if it is of Wigner-Dyson type \cite{Casati1985,1103.0787v1,PhysRevB.82.17,Znidari2013,Atas12,Tabor1989,Bohigas1984,Fine2013,Jensen1985}.
\item \label{item:npotionsofintegrability_nolevelrepulsion} If it does not exhibit level repulsion \cite{Stepanov2008,Berry1977a}.
\item \label{item:npotionsofintegrability_quantumnumbers} If (many of) its eigenfunctions can be labeled in a certain way with quantum numbers \cite{Braak11,Berry1977a}.
\item \label{item:npotionsofintegrability_exaxclysolvable} If it is exactly solvable in any way \cite{Beugeling2013,Fendley1995,Braak11,Jensen1985}.
\end{enumerate}
In the first definition both \emph{physically meaningful} and $n$ can have very different meanings.
It can, for example, in the case of composite systems, refer to local operators.
The number $n$ is usually taken to be equal to the number of degrees of freedom of the model or the number of constituents in the case of composite systems.
In Ref.~\cite{1111.3375v1} $n$ can be any number between zero and $d-1$, and models are then classified according to this number $n$.
Similarly, \emph{independent} can have several meanings, \emph{linearly independent} and \emph{algebraically independent} being popular choices.
Usually all quadratic systems and systems such as the Hydrogen atom fall in this category.
Models that are integrable according to this definition are often also integrable according to one of the other definitions given above (especially Definitions~\refitem{item:npotionsofintegrability_betheansatz}, \refitem{item:npotionsofintegrability_quantumnumbers}, and \refitem{item:npotionsofintegrability_exaxclysolvable}).
Many of the definitions of integrability of this type suffer from the severe problem that if the definition is taken seriously, all quantum systems classify as integrable and hence it violates Condition~\refitem{item:reasonablenotionofquantumintegrabilitycondition4} (see the discussion above and Refs.~\cite{Weigert1992,1111.3375v1} for a critic of such notions of integrability).

Definitions~\refitem{item:npotionsofintegrability_betheansatz}, \refitem{item:npotionsofintegrability_nondiffractive}, and \refitem{item:npotionsofintegrability_classicallimit} are only applicable to restricted classes of models and hence violate Condition~\refitem{item:reasonablenotionofquantumintegrabilitycondition1} in the above list.
The same holds, although arguably in a weaker sense, for Definition~\refitem{item:npotionsofintegrability_nolevelrepulsion} and the version of Definition~\refitem{item:npotionsofintegrability_conservedquantity} of from Ref.~\cite{1111.3375v1}, which are only applicable to systems which have a natural tuning parameter.

Definitions~\refitem{item:npotionsofintegrability_levelstatistics} and \refitem{item:npotionsofintegrability_nolevelrepulsion} suffer from the problem that also certain models that are usually regarded as integrable can have spectra that would classify them as non-integrable \cite{Benet2003,Berry1977a}.
In fact, it is trivial to construct such examples.
In a composite systems of, say, spin-1/2 systems, one can simply take a Hamiltonian that is diagonal in the usual Pauli-$Z$ product basis and which hence should clearly be classified as integrable and set its spectrum to be that of some non-integrable model.
Moreover, natural tunable models are known that exhibit thermodynamic behaviour in both the regime that would be classified as integrable and the one that would be classified as non-integrable according to this definition \cite{Jensen1985}.
Hence, these definitions violate Condition~\refitem{item:reasonablenotionofquantumintegrabilitycondition2} and \refitem{item:reasonablenotionofquantumintegrabilitycondition4}.

Especially Definitions~\refitem{item:npotionsofintegrability_conservedquantity} and \refitem{item:npotionsofintegrability_exaxclysolvable} suffer from the problem that it might simply be a lack of imagination that prevents one from finding a relevant conserved quantity or from solving a given model and thus violate Condition~\refitem{item:reasonablenotionofquantumintegrabilitycondition3}.
This is well illustrated by the recent (partial) solution of the Rabi model, which was long thought to be non-integrable (see Ref.~\cite{Braak11} and the references therein).

In conclusion, it seems fair to say that the question of how to define integrability in quantum mechanics is still to some extent open and even more so for quantum non-integrability.
At the same time a number of very useful and promising indicators of and proposals for a definition of integrability exist (see also Refs.~\cite{1111.3375v1,1012.3587v1} for more background information and recent proposals).
Still, general claims that ``non-integrable quantum systems thermalise'' seem unjustified at present.

\section{Decay of correlations and stability of thermal states}
\label{sec:propertiesofthermalstatesofcompositesystems}
In this section we will somewhat depart from the pure state quantum statistical mechanics approach, as we will take the canonical ensemble for thermal states for granted and turn to a study of structural properties of such thermal states.
This will bring us to the seemingly innocent question: What is the meaning of temperature on very small scales and in which sense is temperature really \emph{intensive}, as is paradigmatically claimed in thermodynamics?
The problem with assigning locally a temperature to a small subsystem of a global system in a thermal state is the following: 
Interactions between the subsystem and its environment generate correlations that can lead to noticeable deviations of the state of the subsystem from a thermal state.
Given only a subsystem state, there is no canonical way to assign a temperature to the subsystem.
We shall call this the \emph{locality of temperature problem}.

This problem has been addressed in Refs.~\cite{Hartmann2003,Hartmann2004,Hartmann2004a,0908.3157v3}, and more recently extensively studied in Ref.~\cite{Kliesch2014}.
There, three theorems are proven:
A \emph{truncation formula}, which allows to express the influence of sets of locally interacting Hamiltonian terms on the expectation value of an observable in the thermal state of a locally interacting quantum system in terms of a correlation measure.
A \emph{clustering of correlations result}, which shows that above a universal critical temperature this correlation measure exhibits an exponential decay.
And finally, a result that ensures \emph{local stability} of thermal states above a universal critical temperature and thereby partially solves the locality of temperature problem.

\subsection{Locality of temperature}
These results build upon and significantly go beyond previous results on clustering of correlations in classical systems \cite{Rue99_stat_mech_book,Rue64}, for quantum gases \cite{Gin65}, i.e., translational invariant Hamiltonians in the continuum, and cubic lattices \cite{BratteliRobinson1,Greenberg1969,Park-Yoo}.
For the latter systems the existence and uniqueness of thermal states in the thermodynamic limit at high temperatures is proven and analyticity of correlations can be derived.
Moreover, in the regime of high temperatures, $n$-point correlation functions have been shown to cluster for spin gases \cite{Rue64,Gin65} and translational invariant bosonic lattices \cite{Park-Yoo}.

To begin the more detailed discussion, we introduce a quantity that measures correlations.
We define for any $\tau \in [0,1]$, any two operators $A,B \in \Bop(\mcH)$, and any quantum state $\rho \in \Qst(\mcH)$ the \emph{generalised covariance}
\begin{equation} \label{eq:generalisedcovariance}
 \cov_\rho^\tau(A,B) 
 \coloneqq \Tr(\rho^\tau A\, \rho^{1-\tau} B) - \Tr(\rho\, A) \Tr(\rho \, B) \, .
\end{equation}
The choice $\tau = 1$ gives the usual covariance\footnote{For the fine print see \cite{Kliesch2014}.}.
As a side remark, the quantity $\cov_\rho^\tau(A,B)$ also appears in studies of one dimensional models \cite{Cardy1996a}, where it can be written as a different times correlation function in terms of the transfer matrix of the system.

The reason for introducing the general definition here is that it naturally appears in the \emph{truncation formula}.
Before we can state it we need one last piece of notation.
For any subsystem $X \subset \Vset$ let $X_\partial \subset \Eset$ the set of edges that overlap with both $X$ and its complement, i.e.,
\begin{equation}
  X_\partial \coloneqq \{ Y \in \Eset\oftype Y \intersection X \neq \emptyset \land Y \intersection \compl X \neq \emptyset \} .
\end{equation}
We extend this notation to operators $A \in \Bop(\mcH)$ and define
\begin{equation}
  A_\partial \coloneqq \{ Y \in \Eset\oftype Y \intersection \supp(A) \neq \emptyset \land Y \intersection \compl\supp(A) \} .
\end{equation}
\begin{theorem}[Truncation formula {\cite[Corollary~1 and 4]{Kliesch2014}}] \label{thm:truncationformula}
  Consider a spin or fermionic system with Hilbert space $\mcH$ and let $\H \in \Obs(\mcH)$ be a locally interacting Hamiltonian with edge set $\Eset$.
  Let $B \subset \Eset$ and define for $s \in [0,1]$ the interpolating Hamiltonian $\H(s) \coloneqq \H - (1-s) \sum_{X \in B_\partial} \H_X$.
  Then, for any operator $A \in \Bop(\mcH)$ with $\supp(A) \subset B$ it holds that
  \begin{equation}\label{eq:truncation_error_in_terms_of_cov}
    \Tr\bigl(A\,\rhog[\H_B](\beta)\bigr)-\Tr\bigl(A\,\rhog[\H](\beta)\bigr) = \beta\,\int_0^1  \int_0^1 \cov_{\rhog[\H(s)](\beta)}^\tau(A, \sum_{X \in B_\partial} \H_X)\,\dd\tau\,\dd s .
  \end{equation}
\end{theorem}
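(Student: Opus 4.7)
The strategy is to interpolate between the two thermal states by a one-parameter family of Hamiltonians and apply the fundamental theorem of calculus, expressing the derivative via Duhamel's identity. Observing that $\H(1)=\H$ and $\H(0)=\H-\sum_{X\in B_\partial}\H_X$, and noting that removing the boundary edges decouples the Hamiltonian as $\H(0)=\H_B+\H_{\compl B}$ (in the notation of \texteqref{eq:restrictedhamiltonian}), the two commuting summands yield $\rhog[\H(0)](\beta)=\rhog[\H_B](\beta)\cdot\rhog[\H_{\compl B}](\beta)$ in tensor-product form; since $\supp(A)\subseteq B$, taking the partial trace over $\compl B$ gives $\Tr(A\,\rhog[\H(0)](\beta))=\Tr(A\,\rhog[\H_B](\beta))$, so the left-hand side of \texteqref{eq:truncation_error_in_terms_of_cov} equals $\Tr(A\,\rhog[\H(0)](\beta))-\Tr(A\,\rhog[\H(1)](\beta))=-\int_0^1\frac{\ddel}{\ddel s}\Tr(A\,\rhog[\H(s)](\beta))\,\dd s$.

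The main computation is to differentiate $\rhog[\H(s)](\beta)=\e^{-\beta\H(s)}/Z(s)$ with $Z(s)\coloneqq\Tr(\e^{-\beta\H(s)})$ and $V\coloneqq\frac{\ddel\H(s)}{\ddel s}=\sum_{X\in B_\partial}\H_X$. Duhamel's formula gives
\begin{equation}
  \frac{\ddel}{\ddel s}\e^{-\beta\H(s)}=-\beta\int_0^1 \e^{-\beta\tau\H(s)}\,V\,\e^{-\beta(1-\tau)\H(s)}\,\dd\tau,
\end{equation}
while $\frac{\ddel Z(s)}{\ddel s}=-\beta\,Z(s)\,\Tr(\rhog[\H(s)](\beta)\,V)$. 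Combining and using $\e^{-\beta\tau\H(s)}/Z(s)^\tau=\rhog[\H(s)](\beta)^\tau$ yields
\begin{equation}
  \frac{\ddel}{\ddel s}\rhog[\H(s)](\beta)=-\beta\int_0^1\!\bigl(\rhog[\H(s)](\beta)^\tau\,V\,\rhog[\H(s)](\beta)^{1-\tau}-\rhog[\H(s)](\beta)\,\Tr(\rhog[\H(s)](\beta)\,V)\bigr)\dd\tau.
\end{equation}
Multiplying by $A$ and taking the trace, cyclicity of the trace turns the bracket into $\Tr(\rhog[\H(s)](\beta)^{1-\tau}\,A\,\rhog[\H(s)](\beta)^\tau\,V)-\Tr(\rhog[\H(s)](\beta)\,A)\Tr(\rhog[\H(s)](\beta)\,V)$, which is exactly $\cov^{1-\tau}_{\rhog[\H(s)](\beta)}(A,V)$ by definition \eqref{eq:generalisedcovariance}. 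Substituting $\tau\mapsto 1-\tau$ in the $\tau$-integral (whose domain $[0,1]$ is symmetric) produces $\int_0^1\cov^\tau_{\rhog[\H(s)](\beta)}(A,V)\,\dd\tau$, and integrating in $s$ gives precisely \texteqref{eq:truncation_error_in_terms_of_cov}.

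The hardest conceptual point is neither the Duhamel step nor the algebraic identification of the generalised covariance, both of which are routine once the interpolation is fixed; rather, it is the clean bookkeeping at $s=0$ for fermionic systems, where one cannot naively write the state as a tensor product since $\mcH$ is a Fock space. Here one must invoke the fermion number parity superselection rule (so that $A$, $\H_B$, and $\H_{\compl B}$ are even and pairwise commute) and argue --- as was already done in Section~\ref{sec:thermalisationunderassumptionsontheinitialstate} for the comparable statement that reductions of non-interacting Gibbs states factorise --- that the reduction of $\rhog[\H_B+\H_{\compl B}](\beta)$ to the algebra of even operators supported on $B$ coincides with $\rhog[\H_B](\beta)$ in the sense of \texteqref{eq:reducedstate}. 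Once this boundary identification is secured, the spin and fermionic cases proceed identically, which is why the statement covers both simultaneously.
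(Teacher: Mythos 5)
Your proof is correct, and it reproduces the standard interpolation-plus-Duhamel argument used in the cited reference (Kliesch et al.): decouple at $s=0$, differentiate along the path, recognise the generalised covariance after cyclic reordering, and integrate. The paper itself does not reprove this theorem but merely cites it, so there is no internal proof to compare against; your version supplies exactly the missing calculation, and your remark on the fermionic boundary identification via the parity superselection rule (pointing to the analogous argument already given in Section~\ref{sec:thermalisationunderassumptionsontheinitialstate}) is the right way to close the one step that would otherwise be suspicious for Fock-space systems.
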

The left hand side of \texteqref{eq:truncation_error_in_terms_of_cov} is the difference between the expectation value of $A$ in the thermal states of the Hamiltonian $\H_B$ with only the terms contained in the region $B$ and the full Hamiltonian $\H$.
The truncation formula quantifies how the expectation value of $A$ changes when the terms that couple $B$ to the rest of the system are added or removed, hence the name, and tells us that this change can be expressed exactly in terms of the generalised covariance.

It is important to note that \texteqref{eq:truncation_error_in_terms_of_cov} is an \emph{equality}.
The generalised covariance exactly captures the response of expectation values in the thermal state to truncations of the Hamiltonian.
The truncation formula tells us that the response of the expectation value is small if and only if the right hand side of \texteqref{eq:truncation_error_in_terms_of_cov}, which is an average over the generalised covariance times $\beta$, is small.
In other words:
\begin{observation}[Locality of temperature \cite{Kliesch2014}] 
  Temperature can be defined locally on a given length scale if and only if the averaged generalised covariance is small compared to $1/\beta$ on that length scale.
\end{observation}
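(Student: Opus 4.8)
The plan is to read the \emph{Truncation formula} (Theorem~\ref{thm:truncationformula}) as an exact bookkeeping identity for the response of thermal expectation values to switching Hamiltonian terms on and off, and to extract the claimed equivalence from it in both directions. First I would fix the meaning of ``temperature can be defined locally on a given length scale $\ell$'': for a set of edges $B \subset \Eset$ and an observable $A \in \Bop(\mcH)$ with $\supp(A) \subset B$ that sits a graph distance at least $\ell$ from the boundary $B_\partial$ of $B$, one says temperature is local on scale $\ell$ if the expectation value of $A$ in the global thermal state $\rhog[\H](\beta)$ agrees, up to an error small compared to $\norm[\infty]{A}$, with the expectation value of $A$ in the thermal state $\rhog[\H_B](\beta)$ of the Hamiltonian restricted to $B$; by \texteqref{eq:tracedistanceasmaxoverobservables} this is equivalent to $\tracedistance{\rhog^{\supp(A)}[\H](\beta)}{\rhog^{\supp(A)}[\H_B](\beta)}$ being small. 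Note that $\H_B$ differs from $\H$ precisely by the boundary terms $\sum_{X\in B_\partial}\H_X$, which form a bona fide bounded observable (a sum of boundedly many local terms, each of bounded operator norm), so they may legitimately be inserted into the generalised covariance \texteqref{eq:generalisedcovariance}.

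For the ``if'' direction I would simply estimate the double integral in \texteqref{eq:truncation_error_in_terms_of_cov} by a supremum,
\begin{equation}
 \bigl|\Tr(A\,\rhog[\H_B](\beta)) - \Tr(A\,\rhog[\H](\beta))\bigr| \leq \beta\,\sup_{s,\tau\in[0,1]} \bigl| \cov_{\rhog[\H(s)](\beta)}^\tau\bigl(A,\ \textstyle\sum_{X\in B_\partial}\H_X\bigr) \bigr| ,
\end{equation}
so that whenever the $(s,\tau)$-averaged generalised covariance between $A$ and the boundary terms is small compared to $1/\beta$ on scale $\ell$ (i.e.\ when $\supp(A)$ is separated from $B_\partial$ by $\ell$), the truncation error is small and temperature is local on that scale; dividing through by $\norm[\infty]{A}$ makes the statement independent of the normalisation of $A$. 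Combining this with the clustering-of-correlations theorem of Ref.~\cite{Kliesch2014}, which bounds the generalised covariance exponentially in $\dist(\supp(A),B_\partial)$ above a universal critical temperature, then identifies the regime of $\beta$ and $\ell$ in which this smallness genuinely holds.

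For the ``only if'' direction I would use that \texteqref{eq:truncation_error_in_terms_of_cov} is an \emph{equality}: the $s$- and $\tau$-averaged generalised covariance of $A$ with $\sum_{X\in B_\partial}\H_X$ is exactly $1/\beta$ times the truncation error. Hence if temperature is local on scale $\ell$ for the natural family of Hamiltonians obtained by continuously switching the boundary terms of $B$ on and off — which is exactly what the interpolating Hamiltonians $\H(s)$ are, and which any robust notion of locality of temperature should encompass — then the truncation error is uniformly small along the path, and therefore the averaged generalised covariance is small compared to $1/\beta$. If one instead insists on a statement referring to $\H$ alone, the same conclusion follows by noting that $s \mapsto \rhog[\H(s)](\beta)$ is continuous and $\sum_{X\in B_\partial}\H_X$ is bounded, so that smallness of the covariance transfers from the endpoint $s=1$ to the whole interpolation up to a controlled error.

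The main obstacle is not any hard estimate — Theorem~\ref{thm:truncationformula} already does the analytic work — but the conceptual one of pinning down ``local on a length scale'' precisely enough that the equivalence is literally true: the generalised covariance appearing in \texteqref{eq:truncation_error_in_terms_of_cov} is evaluated along the interpolation $\H(s)$ rather than at $\H$ itself, so the converse genuinely needs either the robust reading of locality of temperature (stable under adding or removing the boundary terms of $B$) or the continuity argument sketched above. Once the definitions are set up in this way, the observation is an immediate two-line consequence of the truncation formula, read once as an inequality and once as an identity.
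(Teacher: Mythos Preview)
Your proposal is correct and follows exactly the same route as the paper: the observation is read off directly from the truncation formula \eqref{eq:truncation_error_in_terms_of_cov}, using that it is an \emph{equality}, so that the truncation error is small if and only if $\beta$ times the averaged generalised covariance is small. The paper in fact treats this observation entirely informally, with a single sentence of justification before stating it; you have been considerably more careful than the paper itself, in particular by flagging the subtlety that the covariance is evaluated along the interpolation $\H(s)$ rather than at $\H$ alone, which the paper does not discuss.
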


We will see shortly that if locally interacting spin or fermionic lattice systems are at a sufficiently high temperature, then the generalised covariance $\cov^\tau_{\rhog(\beta)}(A,B)$ between any to operators $A,B \in \Bop(\mcH)$ decays exponentially with the graph distance $\dist(A,B)$ between their supports.

\subsection{Clustering of correlations in high temperature thermal states}

The following theorem applies to all Hamiltonians whose interaction (hyper)graph has a finite \emph{growth constant}.
To explain what this means we need some additional notation.
A subset $F \subset \Eset$ of the edge set \emph{connects} $X$ and $Y$ if $F$ contains all elements of some sequence of pairwise overlapping edges such that the first overlaps with $X$ and the last overlaps with $Y$ and similarly for sites $x,y \in V$.
A subset $F \subset \Eset$ of the edge set $\Eset$ that connects all pairs of its elements is called \emph{connected} and connected subsets $F$ are also called \emph{animals} \cite{Miranda2011,Penrose1994}.
The size $|F|$ of an animal $F$ is the number of edges it contains.
It turns out that for many interesting (hyper)graphs the number of animals of a given size that contain a given edge grows exponentially with the size, but not faster.
That is, they have a finite \emph{growth constant}.
More precisely, the growth constant of a (hyper)graph $\mcG = (\Vset,\Eset)$ is the smallest constant $\animalc$ satisfying
\begin{equation}\label{eq:animal_bound}
  \forall k\in \Z^+\itholds \sup_{X \in \Eset} |\{ F\subset\Eset \text{ connected}\oftype X \in F \land |F|=k \}| \leq \animalc^k .
\end{equation}
For example, the growth constant $\animalc$ of the interaction graph of nearest neighbour Hamiltonians on $D$ dimensional cubic lattices can be bounded by $2\,D\,\e$ (see Lemma~2 in Ref.~\cite{Miranda2011}).
Moreover, there is a finite growth constant $\animalc$ for any regular lattice \cite{Penrose1994}, and there exist upper bounds on the growth constants of so-called spread-out graphs \cite{Miranda2011} that make it possible to bound the growth constant of the interaction hypergraphs of all $l$-local $k$-body Hamiltonians on regular lattices \cite{Kliesch2014}.
Where $l$-local $k$-body on a regular lattice means that $\Vset$ can be mapped onto the sites of a regular lattice such that $\Eset$ contains only subsystems which consist of at most $k$ sites that are all contained in a ball (measured in the graph distance of the regular lattice) of diameter $l$.
Apart from all $l$-local $k$-body Hamiltonians on regular lattices this also makes the following results indirectly applicable to systems with exponentially decaying interactions (such Hamiltonians can be exponentially well approximated by $l$-local $k$-body Hamiltonians) but not to Hamiltonians with algebraically decaying interactions, such as for example Coulomb or dipole interactions.
We can now state the clustering of correlations result:

\begin{theorem}[Clustering of correlations at high temperature {\cite[Theorem~2 and 4]{Kliesch2014}}] \label{thm:clustering}
  Consider a locally interacting system of spins or fermions with Hilbert space $\mcH$ and Hamiltonian $\H \in \Obs(\mcH)$ with \emph{local interaction strength} $J \coloneqq \max_{X\in\Eset} \norm[\infty]{\H_X}$ and interaction (hyper)graph $\mcG = (\Vset,\Eset)$ with growth constant $\animalc$.
  Define the \emph{critical temperature}
  \begin{equation}\label{eq:crit_beta_def}
    \beta^\ast \coloneqq \ln\big((1+\sqrt{1+4/\animalc})/2\big)/(2\,J) 
  \end{equation}
  and the \emph{thermal correlation length}
  \begin{equation} \label{eq:correlation_length_def}
    \xi(\beta) \coloneqq \left|1/\ln\left(\animalc\, \e^{2\,|\beta|\,J}(\e^{2\,|\beta|\,J}-1)\right)\right|  \, .
  \end{equation}
  Then, for every $|\beta|<\beta^\ast$, parameter $\tau \in [0,1]$, and every two operators $A,B \in \Bop(\mcH)$ with 
  $\dist(A, B) \geq \xi(\beta)\, \left|\ln\left(\ln(3)\, (1-\e^{-1/\xi(\beta)})/\min(|A_\partial |,|B_\partial |)\right)\right|$ ,
  \begin{equation} \label{eq:clustering} 
    |\cov^\tau_{\rhog(\beta)} (A, B)| \leq \frac{4\,\min(|A_\partial |,|B_\partial |) \norm{A}_\infty\, \norm{B}_\infty}{\ln(3)\, (1-\e^{-1/\xi(\beta)})}\,\e^{-\dist(A,B)/\xi(\beta)} .
  \end{equation}
\end{theorem}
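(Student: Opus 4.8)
The plan is to establish \texteqref{eq:clustering} via a high-temperature cluster expansion of the unnormalised Gibbs operator and of its ``split'' version $\e^{-\tau\beta\H}\,A\,\e^{-(1-\tau)\beta\H}$, exploiting that the subtraction of $\Tr(\rhog(\beta)\,A)\,\Tr(\rhog(\beta)\,B)$ in the generalised covariance removes all \emph{disconnected} contributions, so that what remains must bridge $\supp(A)$ and $\supp(B)$ through the interaction hypergraph and hence involves at least $\dist(A,B)$ interaction edges. First I would expand $\e^{-s\,\H} = \sum_{m\geq 0} \frac{(-s)^m}{m!}\,\H^m$ and insert $\H = \sum_{X\in\Eset}\H_X$, so that each exponential factor becomes a sum over ordered tuples $(X_1,\dots,X_m)$ of edges with weight $\frac{(-s)^m}{m!}$ and operator $\H_{X_1}\cdots\H_{X_m}$. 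Carrying this out for the three exponentials occurring in $\cov^\tau_{\rhog(\beta)}(A,B)=\frac1Z\Tr(\e^{-\tau\beta\H}A\,\e^{-(1-\tau)\beta\H}B)-\frac1{Z^2}\Tr(\e^{-\beta\H}A)\Tr(\e^{-\beta\H}B)$ yields a (formal) expansion of each trace indexed by finite multisets of edges.

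The key algebraic step is a Mayer-type reorganisation: grouping the occurring edges into connected components with respect to the interaction hypergraph, and using that a factor $\H_X$ with $X$ disjoint from $\supp(A)\union\supp(B)$ and from all other occurring edges can be effectively traced out, I would show the numerator factorises over components and that the disconnected part cancels exactly against $\Tr(\e^{-\beta\H}A)\Tr(\e^{-\beta\H}B)/Z$ together with the normalisation $Z$. What survives is a sum over connected edge-sets $F\subset\Eset$ that overlap both $\supp(A)$ and $\supp(B)$.

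Then come the quantitative bounds. Any such $F$ has $|F|\geq\dist(A,B)$; anchoring the enumeration at an edge in whichever of $A_\partial$, $B_\partial$ is smaller, the animal bound \texteqref{eq:animal_bound} gives at most $\min(|A_\partial|,|B_\partial|)\,\animalc^{|F|}$ connected sets of a given size. Each surviving term is bounded in operator norm by $\norm[\infty]{A}\,\norm[\infty]{B}$ times a product of factors $\norm[\infty]{\H_X}\leq J$ and the combinatorial weights from the Taylor coefficients; careful accounting of the $\frac{1}{m!}$'s and of the ordered sums over which exponential each edge sits in (including the ratio with $Z$) produces, per connected set, a factor of the form $\bigl(\e^{2|\beta|J}(\e^{2|\beta|J}-1)\bigr)^{|F|}$, uniformly in $\tau$ --- which is why the final estimate is $\tau$-independent. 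Summing the geometric series $\sum_{k\geq \dist(A,B)}\bigl(\animalc\,\e^{2|\beta|J}(\e^{2|\beta|J}-1)\bigr)^k$ converges precisely when $\animalc\,\e^{2|\beta|J}(\e^{2|\beta|J}-1)<1$, i.e. when $|\beta|<\beta^\ast$ with $\beta^\ast$ as in \texteqref{eq:crit_beta_def}, and produces a bound proportional to $\min(|A_\partial|,|B_\partial|)\norm[\infty]{A}\norm[\infty]{B}\,\e^{-\dist(A,B)/\xi(\beta)}$ with $1/\xi(\beta)=-\ln\bigl(\animalc\,\e^{2|\beta|J}(\e^{2|\beta|J}-1)\bigr)$ exactly as in \texteqref{eq:correlation_length_def}. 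The hypothesis $\dist(A,B)\geq \xi(\beta)\,|\ln(\ln(3)(1-\e^{-1/\xi(\beta)})/\min(|A_\partial|,|B_\partial|))|$ is precisely what is needed for the geometric-tail prefactor $\min(|A_\partial|,|B_\partial|)/(1-\e^{-1/\xi(\beta)})$ to be absorbed into the stated constant $4/(\ln 3\,(1-\e^{-1/\xi(\beta)}))$.

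The main obstacle is the algebraic reorganisation in the second step: unlike $\e^{-\beta\H}$, the fractional powers $\rhog(\beta)^\tau=\e^{-\tau\beta\H}/Z^\tau$ do not split into commuting local pieces, so one cannot simply factor a tensor product over regions, and one must instead handle the ordered, Duhamel-type structure of the expansion and prove a genuinely convergent polymer/cluster expansion for $\log Z$ and for the ``decorated'' partition functions carrying the insertions $A$ and $B$, controlling the radius of convergence uniformly in the system size and in $\tau$. Once this combinatorial backbone is established, the operator-norm estimates and the summation of the geometric series are routine.
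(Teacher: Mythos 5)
Your proposal captures the essential structure of the proof in Ref.~\cite{Kliesch2014}: a high-temperature cluster expansion of $\e^{-\beta\H}$ and of the split operator $\e^{-\tau\beta\H}\,A\,\e^{-(1-\tau)\beta\H}$, cancellation of the disconnected contributions against $\langle A\rangle\langle B\rangle$ and the normalisation $Z$, the animal bound \texteqref{eq:animal_bound} to count connected edge sets bridging $A_\partial$ and $B_\partial$, and a geometric series whose convergence radius yields precisely the stated $\beta^\ast$ and $\xi(\beta)$. You also correctly flag the one genuinely delicate point --- the non-commutativity that prevents a naive factorisation of the fractional powers $\rhog(\beta)^\tau$ and forces a Duhamel/interaction-picture bookkeeping --- which is exactly where the cluster-expansion lemma (Lemma~6 of Ref.~\cite{Kliesch2014}, adapted from Hastings's MPO approximation of thermal states) does the real work, so the proposal matches the paper's route rather than offering an alternative one.
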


The above theorem implies that in thermal states above the critical temperature the correlations between any two $A,B \in \Bop(\mcH)$ decay exponentially with their distance $\dist(A,B)$.
Importantly, the critical temperature \eqref{eq:crit_beta_def} is independent of global properties of $\H$ but only depends on the local interaction strength $J$ and the growth constant $\animalc$ of its interaction (hyper)graph.

In the context of this work, the most important implication of Theorem~\ref{thm:clustering} is the following result, which proves stability of thermal states above the critical temperature against local perturbations.
More precisely, it shows that changing the Hamiltonian of a locally interacting quantum system only outside of a subsystem $B$ has only limited influence on how thermal states to temperatures above the critical temperature look like in the interior $S \subset B$ of $B$ if the distance between $S$ and $B_\partial$ is large enough:

\begin{theorem}[Universal locality at high temperatures {\cite[Corollary~2 and 5]{Kliesch2014}}]\label{thm:intensivity}
  Let $\H$ be a Hamiltonian satisfying the conditions of Theorem~\ref{thm:clustering},
  let $\beta^\ast$ and $\xi(\beta)$ be defined as in \texteqref{eq:crit_beta_def} and \texteqref{eq:correlation_length_def},
  let $|\beta|< \beta^\ast$, and let $S \subset B \subseteq \Vset$ be subsystems with 
  $\dist(S, B_\partial) \geq \xi(\beta)\, \left|\ln\left(\ln(3)\, (1-\e^{-1/\xi(\beta)})/|S_\partial|\right)\right|$.
  Then
  \begin{equation}\label{eq:stabilityofthermalstatesbound}
    \tracedistance{\rhog^{S}[\H](\beta)}{\rhog^{S}[H_B](\beta)} \leq \frac{ v\, |\beta|\, J }{1-\e^{-1/\xi(\beta)}}\,\e^{- \dist(S, B_\partial) /\xi(\beta)} ,
  \end{equation}
  where $v \coloneqq 4\, |S_\partial|\,|B_\partial|/\ln(3)$.
\end{theorem}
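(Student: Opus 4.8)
The plan is to obtain \texteqref{eq:stabilityofthermalstatesbound} by combining three ingredients already at hand: the variational characterisation \texteqref{eq:tracedistanceasmaxoverobservables} of the trace distance, the truncation formula of Theorem~\ref{thm:truncationformula}, and the clustering estimate of Theorem~\ref{thm:clustering}. First I would reduce the statement to one about expectation values: by \texteqref{eq:tracedistanceasmaxoverobservables} the trace distance on the left of \texteqref{eq:stabilityofthermalstatesbound} equals the supremum, over observables $A$ on $\mcH_S$ with $0 \leq A \leq \1$, of $\Tr(A\,\rhog^{S}[\H](\beta)) - \Tr(A\,\rhog^{S}[H_B](\beta))$; identifying such an $A$ with its canonical embedding into $\Obs(\mcH)$, one has $\supp(A) \subseteq S \subset B$ and $\norm[\infty]{A} \leq 1$, and \texteqref{eq:reducedstate} turns this into a supremum of $\Tr(A\,\rhog[\H](\beta)) - \Tr(A\,\rhog[H_B](\beta))$. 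It therefore suffices to bound this difference uniformly in $A$.

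Second, I would apply Theorem~\ref{thm:truncationformula} with the region $B$ and the observable $A$, which expresses the difference \emph{exactly} as $\beta \int_0^1\!\int_0^1 \cov^\tau_{\rhog[\H(s)](\beta)}\bigl(A, \sum_{X \in B_\partial}\H_X\bigr)\,\dd\tau\,\dd s$, with $\H(s) \coloneqq \H - (1-s)\sum_{X\in B_\partial}\H_X$. Bounding the double integral over $[0,1]^2$ by the supremum of its integrand and using bilinearity of $\cov^\tau$ in the second argument, this is at most $|\beta| \sup_{s,\tau\in[0,1]} \sum_{X \in B_\partial} \bigl|\cov^\tau_{\rhog[\H(s)](\beta)}(A,\H_X)\bigr|$, so the whole problem has been reduced to a clustering estimate for each of the thermal states $\rhog[\H(s)](\beta)$.

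Third --- and this is where the hypotheses of Theorem~\ref{thm:intensivity} enter --- I would note that for every $s \in [0,1]$ the interpolated Hamiltonian $\H(s)$ is again a locally interacting Hamiltonian on the \emph{same} interaction (hyper)graph $\mcG$, hence with the same growth constant $\animalc$, and with local interaction strength still bounded by $J$, since scaling a boundary term $\H_X$ by $s\in[0,1]$ cannot increase $\norm[\infty]{\H_X}$. Consequently the critical temperature $\beta^\ast$ of \texteqref{eq:crit_beta_def} and the correlation length $\xi(\beta)$ of \texteqref{eq:correlation_length_def} are common to the whole family $\{\H(s)\}_{s\in[0,1]}$, and because $|\beta| < \beta^\ast$ by assumption, Theorem~\ref{thm:clustering} applies to all the states $\rhog[\H(s)](\beta)$ with the same constants. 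For each $X \in B_\partial$ one has $\dist(A,\H_X) \geq \dist(S,B_\partial)$, so the distance assumption in Theorem~\ref{thm:intensivity} implies the one required by Theorem~\ref{thm:clustering}, and the clustering bound yields $\bigl|\cov^\tau_{\rhog[\H(s)](\beta)}(A,\H_X)\bigr| \lessapprox \frac{J}{\ln(3)\,(1-\e^{-1/\xi(\beta)})}\,\e^{-\dist(S,B_\partial)/\xi(\beta)}$ up to the relevant boundary prefactor, uniformly in $s$ and $\tau$. Summing the $|B_\partial|$ contributions, reinstating the factor $|\beta|$, and taking the supremum over $A$ --- where the leftover boundary prefactor is collected into $|S_\partial|$ --- produces \texteqref{eq:stabilityofthermalstatesbound} with $v = 4\,|S_\partial|\,|B_\partial|/\ln(3)$.

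I expect the main obstacle to be making the uniformity in the third step rigorous: one has to be sure that Theorem~\ref{thm:clustering} delivers \emph{constants independent of} $s$ for the whole interpolating family, which rests on the fact that $\beta^\ast$ and $\xi(\beta)$ depend only on the local data $J$ and $\animalc$ and that the $\H(s)$ share the interaction graph of $\H$. The accompanying boundary bookkeeping --- tracking $|A_\partial|$, $|(\H_X)_\partial|$ and $\dist(A,\H_X)$ and estimating them by $|S_\partial|$, $|B_\partial|$ and $\dist(S,B_\partial)$ so as to obtain the precise constant $v$ --- is routine, but requires some care with the definitions of $X_\partial$ and of the graph distance; it uses no input beyond Theorems~\ref{thm:truncationformula} and~\ref{thm:clustering}.
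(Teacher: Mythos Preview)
Your proposal is correct and follows exactly the route the paper intends: the paper does not spell out a proof of Theorem~\ref{thm:intensivity} but presents it explicitly as ``the most important implication of Theorem~\ref{thm:clustering}'' after the truncation formula, and cites it as a corollary in Ref.~\cite{Kliesch2014}; combining Theorems~\ref{thm:truncationformula} and~\ref{thm:clustering} via the variational formula for the trace distance, with the uniformity-in-$s$ observation you make, is precisely that derivation. Your bookkeeping also lands on the stated constant $v$, since $\min(|A_\partial|,|(H_X)_\partial|) \leq |A_\partial| \leq |S_\partial|$, $\norm[\infty]{A}\leq 1$, $\norm[\infty]{\H_X}\leq J$, and the sum over $X\in B_\partial$ contributes the factor $|B_\partial|$.
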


If the conditions of the above theorem are met and the interior subsystem $S$ is sufficiently far from the boundary $B_\partial$ of $B$ such that $\dist(S, B_\partial)$ is large and hence the right hand side of \texteqref{eq:stabilityofthermalstatesbound} small, then the reduced state $\rhog^{S}[\H](\beta)$ on $S$ of the thermal state of $\H$ is almost independent of the terms of the Hamiltonian $\H$ that are not in the restricted Hamiltonian $\H_B$.

Theorem~\ref{thm:intensivity} is not unexpected, but it is nevertheless remarkable that it can be shown in this generality for systems of both locally interacting spins and fermions.
Even more so, because, as we have seen in the discussion of equilibration (Section~\ref{sec:equilibration}) and especially in the section on equilibration time scales, a major obstacle for improving the statements we were able to make is that it seems to be hard to use the structure of natural many-body Hamiltonians, namely that interactions are usually few body and often short range.
Theorem~\ref{thm:intensivity} is an instance of a result whose proof heavily relies on the locality structure of locally interacting Hamiltonians and is able to exploit their structure.

It is interesting to plug in the numbers of a specific model to see how physical the derived critical temperature is.
As a concrete example consider the ferromagnetic two dimensional isotropic Ising Model without external field.
The critical temperature of Theorem~\ref{thm:clustering} and \ref{thm:intensivity} is 
$1/(\beta^\ast\,J) = 2/\ln((1+\sqrt{1+1/\e})/2) \approx 24.58$, whereas the \emph{Curie temperature}, i.e., the temperature at which the phase transition between the paramagnetic and the ferromagnetic phase happens is known to be $1/(\beta_c\,J) = 2/\ln(1+\sqrt{2}) \approx 2.27$ \cite{Bhattacharjee1995}.
To put this into perspective however, it is worth noting that the above theorem still improves upon previously known bounds like for example that implied by Ref.~\cite{Greenberg1969}, which yields $1/(\beta^\ast_c\,J) = 124$ and that it is a \emph{universal upper bound} independent of details of the particular model.
Given how difficult it is to calculate or even bound critical temperatures in lattice models (both classical and quantum) and that good bounds are known only for very few models the existence of such a non-trivial and universal upper bound is quite remarkable.

Besides being of fundamental interest, Theorem~\ref{thm:intensivity} has some obvious computational implications:
It implies that for all $|\beta| < \beta^\ast$ reduced states of thermal states can be approximated with a computational cost independent of the system size and sub-exponential in the reciprocal approximation error (polynomially for systems in one dimension) \cite{Kliesch2014}.
The proof of Theorem~\ref{thm:clustering} is based on a \emph{cluster expansion} (see Lemma~6 in Ref.~\cite{Kliesch2014}) previously used in Ref.~\cite{Hastings06} to show that thermal states above a critical temperature can be approximated by so-called \emph{matrix product operators} (MPOs).
The subtleties of this approximation are often misunderstood.
For details see the appendix of Ref.~\cite{Kliesch2014}.

\section{Conclusions}
\label{sec:conclusions}
In this review, we have elaborated on a question that is at the heart of the foundations of quantum statistical mechanics:
This is the question of how pure states evolving unitarily according to the Schr{\"o}dinger equation can give rise to a wealth of phenomena that can rightfully be called \emph{thermodynamic}.
Individual observables and entire subsystems have a tendency to evolve towards equilibrium values/states and then stay close to them for most times during the evolution or extended time intervals.
It turns out that the equilibrium properties can be captured by suitable maximum entropy principles implied by quantum mechanical dynamics alone.
If a part of the system can be naturally identified as a bath and its complement as a distinguished subsystem, a weak interaction naturally leads to decoherence in the energy eigenbasis, and under additional conditions even equilibration to a thermal state can be guaranteed.
We have also discussed properties of thermal states in lattice systems and in particular elaborated on precise conditions under which correlations decay exponentially.
We have also reviewed systems where an absence of thermalisation is anticipated and the role played by many-body localisation played in this context.

Notions of information propagation as well as entanglement and correlation dynamics play key roles in processes of equilibration and thermalisation.
Complementing these dynamical approaches, the immensely large dimension of the Hilbert space of composite quantum systems can also justify the applicability of statistical ensembles via typicality arguments.

It goes without saying that we have only touched the tip of the iceberg:
Many key questions had to be left unmentioned, despite the considerable length of the article.
This is particularly regrettable with respect to the exciting experimental developments that have taken place over the recent years and now allow us to probe the questions at hand under remarkably precise conditions.
The high degree of control offered by such experiments makes it possible to use them as quantum simulators assessing features quantitatively that are outside the scope of present analytical or numerical approaches.

At the same time, one reason for why many questions can not be satisfactory discussed here is that many key problems actually remain wide open, despite the enormous progress surveyed here.
The question of what time scales are to be expected in equilibration is just as open as is a full understanding of thermalisation.
And here the present review reveals its most important purpose: To serve as an invitation to this exciting field of research.

\section{Acknowledgements}
\label{sec:acknowledgements}
We would like to sincerely thank numerous colleagues for dis- cussions over the years and the EU (RAQUEL, SIQS, AQuS), the ERC (TAQ, OSYRIS, FP7-PEOPLE-2013-COFUND), the BMBF, the Studienstiftung des Deutschen Volkes, MPQ- ICFO, the Spanish Ministry Project FOQUS (FIS2013- 46768-P), the Generalitat de Catalunya (SGR 874 and 875), the Spanish MINECO (Severo Ochoa grant SEV-2015-0522), the Fundaci\'{o} Privada Cellex, MPQ-ICFO, the EU's Marie Sk\l{}odowska-Curie IF program (GA: 700140), and the COST Action MP1209 for support.

%\appendix % As the style doesn't handle appendices properly we do it manually
\section*{Appendix}
\addcontentsline{toc}{section}{Appendix}
\setcounter{equation}{0}
\setcounter{section}{0}
\setcounter{subsection}{0}
\setcounter{subsubsection}{0}
\renewcommand\thesection{\Alph{section}}
\renewcommand\theequation{A\arabic{equation}}

\section{Remarks on the foundations of statistical mechanics}
\label{chap:remarksonthefoundationsofstatmech}
In this appendix we briefly sketch the most influential canonical approaches towards the foundations of thermodynamics and statistical mechanics.
We will roughly follow the historical development, but emphasise more the problems and shortcomings of the respective approaches rather then their undeniable success and ingenuity.

Contrary to the rest of this work, this appendix is rather superficial.
The main justification for the brevity is the existence of several comprehensive works on the topic, in particular the review by Uffink \cite{UffinkFinal} and the book by Sklar \cite{Sklar1995}, but also Refs.~\cite{RevModPhys.27.289,Ehrenfest2002,Penrose1979} and Chapter~4 in Ref.~\cite{Gemmer09}.
Adding yet another work to this list simply seems superfluous and a detailed review of the history of statistical mechanics is beyond the scope of this work.
Also, we will brush over many of the more subtle issues of the classical approaches, such as the interpretation of probability and the problem of comparing discrete and continuous measures.

The intention of this appendix is to partially answer the legitimate question of a person already familiar with thermodynamics and statistical physics: ``Why should I care about pure state quantum statistical mechanics? Weren't all the foundational questions already solved in the works from the 19th and early 20th century?''
As we will see, despite the numerous attempts and the great amount of work that has been put into establishing a convincing justification for the methods of statistical mechanics it has ``not yet developed a set of generally accepted formal axioms'' \cite{UffinkFinal}, or, as E.\ T.\ Jaynes \cite{Jaynes} puts it: ``There is no line of argument proceding from the laws of microscopic mechanics to macroscopic phenomena that is generally regarded by physicists as convincing in all respects.''

\subsection{Canonical approaches}
\label{sec:canonicalapproaches}
Thermodynamics was originally developed as a purely phenomenological theory.
Prototypical for this era are the laws of \emph{Boyle--Mariotte} and \emph{Gay--Lussac} that state empirically observed relations between the volume, pressure, and temperature of gases.

The more widespread acceptance of the \emph{atomistic hypothesis} in the 18th century opened up the way for a microscopic understanding of such empirical facts.
The works of Clausius \cite{Clausius1857}, Maxwell \cite{Maxwell1860,Maxwell1860a}, Boltzmann \cite{Boltzmann1872}, and Gibbs \cite{Gibbs1902} in the second half of the 19th and the beginning of the 20th century are often perceived as the inception of statistical mechanics (see also Refs.~\cite{Boltzmann1896,UffinkFinal,Sklar1995}).
In this section we review some of these early attempts to develop a deeper understanding of thermodynamics based on microscopic considerations.

\subsubsection{Boltzmann and the H-Theorem}
\label{sec:boltzmannshtheorem}
One of Boltzmann's arguably most important contributions to the development of statistical mechanics is his derivation of what is known today as the \emph{Boltzmann equation} and his \emph{H-theorem} \cite{Boltzmann1872} (see also the first chapter of Boltzmann's book ``Vorlesungen {\"u}ber Gastheorie. Bd. 1.'' \cite{Boltzmann1896} as well as Ref.~\cite[Chapter~4]{Gemmer09} and Ref.~\cite{Sklar1995}).

\begin{figure}[bt]
  \centering
  \begin{itemize}  \setlength{\itemsep}{0.8cm} 
  \item[(a)] Time reversal objection (Loschmidt)
    \begin{center}
      \begin{tikzpicture}[scale=0.8,baseline=(current bounding box.base)]
        \node (p1) at (-0.9, 0.65) {};
        \node (p2) at (-0.3, 0.4 ) {};
        \node (p3) at (-0.2, 0.1 ) {};
        \node (p4) at (-0.8,-0.3 ) {};
        \node (p5) at (-0.4,-0.7 ) {};
        \node (q1) at (-0.4, 0.5 ) {};
        \node (q2) at (-0.6, 0.2 ) {};
        \node (q3) at (-0.4, 0.3 ) {};
        \node (q4) at (-0.2,-0.0 ) {};
        \node (q5) at (-0.3,-0.2 ) {};
        \draw[black,fill=white,rounded corners] (-1,-1) rectangle (1,1);
        \foreach \x in {1,2,...,5} {\draw[->,fill=gray] (p\x) circle (2pt);};
        \draw[dashed,gray] (0,-1) -- (0,1);        
      \end{tikzpicture}
      $\overset{t}{\longrightarrow}$
      \begin{tikzpicture}[scale=0.8,baseline=(current bounding box.base)]
        \node (p1) at (-0.7, 0.7) {};
        \node (p2) at ( 0.1, 0.4 ) {};
        \node (p3) at (-0.2, 0.1 ) {};
        \node (p4) at ( 0.6,-0.3 ) {};
        \node (p5) at (-0.4,-0.7 ) {};
        \node (q1) at (-0.4, 0.5 ) {};
        \node (q2) at (-0.6, 0.2 ) {};
        \node (q3) at (-0.4, 0.3 ) {};
        \node (q4) at ( 0.2,-0.0 ) {};
        \node (q5) at (-0.3,-0.3 ) {};
        \draw[black,fill=white,rounded corners] (-1,-1) rectangle (1,1);
        \foreach \x in {1,2,...,5} {\draw[->,fill=gray] (p\x) circle (2pt) -- ++(q\x);};
      \end{tikzpicture}
      \quad $\vec{v} \rightarrow -\vec{v}$ \quad
      \begin{tikzpicture}[scale=0.8,baseline=(current bounding box.base)]
        \node (p1) at (-0.7, 0.7) {};
        \node (p2) at ( 0.1, 0.4 ) {};
        \node (p3) at (-0.2, 0.1 ) {};
        \node (p4) at ( 0.6,-0.3 ) {};
        \node (p5) at (-0.4,-0.7 ) {};
        \node (q1) at (-1.0, 0.9 ) {};
        \node (q2) at ( 0.7, 0.6 ) {};
        \node (q3) at (-0.0,-0.1 ) {};
        \node (q4) at ( 1.0,-0.6 ) {};
        \node (q5) at (-0.5,-1.1 ) {};
        \draw[black,fill=white,rounded corners] (-1,-1) rectangle (1,1);
        \foreach \x in {1,2,...,5} {\draw[->,fill=gray] (p\x) circle (2pt) -- ++(q\x);};
      \end{tikzpicture}
      $\overset{t}{\longrightarrow}$
      \begin{tikzpicture}[scale=0.8,baseline=(current bounding box.base)]
        \node (p1) at (-0.9, 0.65) {};
        \node (p2) at (-0.3, 0.4 ) {};
        \node (p3) at (-0.2, 0.1 ) {};
        \node (p4) at (-0.8,-0.3 ) {};
        \node (p5) at (-0.4,-0.7 ) {};
        \draw[black,fill=white,rounded corners] (-1,-1) rectangle (1,1);
        \foreach \x in {1,2,...,5} {\draw[->,fill=gray] (p\x) circle (2pt);};
      \end{tikzpicture}
    \end{center}
  \item[(b)] Recurrence objection (Poincar\'{e}, Zermelo)
    \begin{center}
      \begin{tikzpicture}[scale=0.8,baseline=(current bounding box.base)]
        \node (p1) at (-0.9, 0.65) {};
        \node (p2) at (-0.3, 0.4 ) {};
        \node (p3) at (-0.2, 0.1 ) {};
        \node (p4) at (-0.8,-0.3 ) {};
        \node (p5) at (-0.4,-0.7 ) {};
        \node (q1) at (-0.4, 0.5 ) {};
        \node (q2) at (-0.6, 0.2 ) {};
        \node (q3) at (-0.4, 0.3 ) {};
        \node (q4) at (-0.2,-0.0 ) {};
        \node (q5) at (-0.3,-0.2 ) {};
        \draw[black,fill=white,rounded corners] (-1,-1) rectangle (1,1);
        \foreach \x in {1,2,...,5} {\draw[->,fill=gray] (p\x) circle (2pt);};
        \draw[dashed,gray] (0,-1) -- (0,1);        
      \end{tikzpicture}
      $\overset{t}{\longrightarrow} \dots \overset{t}{\longrightarrow}$
      \begin{tikzpicture}[scale=0.8,baseline=(current bounding box.base)]
        \node (p1) at (-0.7, 0.7) {};
        \node (p2) at ( 0.1, 0.4 ) {};
        \node (p3) at (-0.2, 0.1 ) {};
        \node (p4) at ( 0.6,-0.3 ) {};
        \node (p5) at (-0.4,-0.7 ) {};
        \draw[black,fill=white,rounded corners] (-1,-1) rectangle (1,1);
        \foreach \x in {1,2,...,5} {\draw[->,fill=gray] (p\x) circle (2pt);};
      \end{tikzpicture}      
      $\overset{t}{\longrightarrow} \dots \overset{t}{\longrightarrow}$
      \begin{tikzpicture}[scale=0.8,baseline=(current bounding box.base)]
        \node (p1) at (-0.9, 0.65) {};
        \node (p2) at (-0.3, 0.4 ) {};
        \node (p3) at (-0.2, 0.1 ) {};
        \node (p4) at (-0.8,-0.3 ) {};
        \node (p5) at (-0.4,-0.7 ) {};
        \node (q1) at (-0.4, 0.5 ) {};
        \node (q2) at (-0.6, 0.2 ) {};
        \node (q3) at (-0.4, 0.3 ) {};
        \node (q4) at (-0.2,-0.0 ) {};
        \node (q5) at (-0.3,-0.2 ) {};
        \draw[black,fill=white,rounded corners] (-1,-1) rectangle (1,1);
        \foreach \x in {1,2,...,5} {\draw[->,fill=gray] (p\x) circle (2pt);};
      \end{tikzpicture}
    \end{center}    
  \end{itemize}
  \caption{(Reproduction from Ref.~\cite{Gogolin2014}) The \emph{time reversal objection}, also known as \emph{Loschmidt's paradox} \cite{Loschmidt1877}, but actually first published by William Thomson \cite{ThomsonLordKelvin1874}, states that it should not be possible to deduce time reversal asymmetric statements like the H-theorem, implied by the Boltzmann equation, from an underlying time reversal invariant theory.
More explicitly, it argues that for any process that brings a system into an equilibrium state starting from a non-equilibrium situation, there exists an equally physically allowed reverse process that takes the system out of equilibrium.
The initial state for that process is obtained from the equilibrium state by reversing all velocities (see Panel~(a)).
The \emph{recurrence objection}, which is based on the \emph{Poincar\'{e} recurrence theorem} but was made explicit by Zermelo \cite{Zermelo1896}, states that Boltzmann's H-theorem is in conflict with Hamiltonian dynamics, because it can be proven on very general grounds that all finite systems are recurrent, i.e., return arbitrarily close to their initial state after possibly very long times (see Panel~(b)).}
  \label{fig:timereversalandrecurrenceobjection}
\end{figure}

In his 1872 article \cite{Boltzmann1872} Boltzmann aims at showing that the \emph{Maxwell-Boltzmann distribution} is the equilibrium distribution of the speed of gas particles and that a gas with an initially different distribution must inevitably approach it.
He tries to do this on the grounds of microscopic considerations and starts off from the prototypical model of the hard sphere gas.
He takes for granted that in equilibrium the distribution of the particles should be ``uniform'' and that their speed distribution should be independent of the direction of movement.
He assumes that the number of particles is large and introduces a continuously differentiable function he calls ``distribution of state''\footnote{German original \cite{Boltzmann1872}: \foreignlanguage{ngerman}{``Zustandsverteilung''.}}, which is meant to approximate the (discrete) distribution of the speed of the particles.
He then derives a differential equation for the temporal evolution of this function, known today as the \emph{Boltzmann equation}.
He also defines an entropy for the ``distribution of state'' and shows that it increases monotonically in time under the dynamics given by the Boltzmann equation, a statement he calls \emph{H-Theorem}, after the letter $H$ used for denoting the entropy.

During the derivation he makes several approximations.
Essential is his ``Sto{\ss}zahl Ansatz'', later dubbed the ``hypothesis of molecular disorder'' in Ref.~\cite{Boltzmann1896}, which explicitly breaks the time reversal invariance of classical mechanics.
This breaking of the time reversal symmetry is responsible for the temporal increase of entropy.
Naturally this assumption has been much criticised.
Famous are the \emph{time reversal objection} of William Thomson and Loschmidt and the \emph{recurrence objection} due to Poincar\'{e} and Zermelo \cite{Sklar1995} (see Figure~\ref{fig:timereversalandrecurrenceobjection}).
The bottom line of this debate, also later acknowledged by Boltzmann \cite{Boltzmann1896a}, is that any statement that implies the convergence of a finite system to a fixed equilibrium state/distribution in the limit of time going to infinity is incompatible with a time reversal invariant or recurrent microscopic theory.
This is important for the notions of equilibration we discuss in Section~\ref{sec:equilibration}.

\subsubsection{Gibbs'  ensemble approach}
\label{sec:gibbsensembleapproach}
For many, Gibbs' book ``Elementary principles in statistical mechanics'' \cite{Gibbs1902} from 1902 marks the birth of modern statistical mechanics \cite{UffinkFinal}.
Central in Gibbs' approach is the concept of an \emph{ensemble}, which he describes as follows:
``We may imagine a great number of systems of the same nature, but differing in the configurations and velocities which they have at a given instant [\dots] we may set the problem, not to follow a particular system through its succession of configurations, but to determine how the whole number of systems will be distributed among the various conceivable configurations and velocities at any required time [\dots]''

In fact, the book then is not so much concerned with (non-equilibrium) dynamics, but rather with the calculation of statistical equilibrium averages.
Gibbs considers systems whose phase space is, as in Hamiltonian mechanics, spanned by canonical coordinates and introduces the \emph{micro-canonical}, \emph{canonical}, and \emph{grand canonical} ensemble for such systems.
He assumes that the number of states is high enough such that a description with a, as he calls it, ``structure function'', a kind of density of states, is possible.
He shows how various thermodynamic relations for quantities such as temperature and entropy can be reproduced from his ensembles, if these quantities are properly defined in terms of the structure function.

Gibbs is mostly concerned with defining recipes for the description of systems in equilibrium.
He gives little insight into why the ensembles he proposes capture the physics of thermodynamic equilibrium or how and why systems equilibrate in the first place \cite{UffinkFinal}.
Instead of addressing such foundational questions he is ``contented with the more modest aim of deducing some of the more obvious propositions relating to the statistical branch of mechanics'' \cite{Gibbs1902}.

\subsubsection{(Quasi-)ergodicity}
\label{sec:ergodicity}
The \emph{ergodicity hypothesis} was essentially born out of the incoherent use of different interpretations of probability by Boltzmann in his early work \cite{Boltzmann1868} and was formulated by him in Ref.~\cite{Boltzmann1871} as follows:
``The great irregularity of the thermal motion and the multitude of forces that act on a body make it probable that its atoms, due to the motion we call heat, traverse all positions and velocities which are compatible with the principle of [conservation of] energy.''\footnote{The English translation is taken from Ref.~\cite{UffinkFinal}.}
The concept of ergodicity was made prominent by P.\ and T.\ Ehrenfest in Ref.~\cite{Ehrenfest2002}, who proposed the \emph{ergodic foundations of statistical mechanics} \cite{UffinkFinal}.

Roughly speaking, a system is called \emph{(quasi-)ergodic} if it explores its phase space uniformly in the course of time for most initial states.
Making precise what ``uniformly'', ``most'', and ``in the course of time'' mean in this context already constitutes a major challenge \cite{UffinkFinal}.
However, if one is willing to believe that a system at hand is ergodic in an appropriate sense then it readily follows that (infinite time) temporal averages of physical quantities in that system are (approximately and/or with ``high probability'') equal to certain phase space averages, such as for example that given by the micro-canonical ensemble.

The ergodic foundations of statistical mechanics are then roughly based on arguments along the following lines:
Any physical measurement must be carried out during a finite time interval.
What one actually observes is not an instantaneous value, but an average over this time span.
The relevant time spans might seem short on a human time scale, but can at the same time be ``close to infinite'' compared to the microscopic time scales.
Think for example of the process of measuring the pressure in a gas container with a membrane.
The moment of inertia of the membrane is much too large to observe the spikes in the force due to hits by individual particles.
It is thus reasonable to assume that observations are well described by (infinite time) averages of the corresponding quantities, which, if the system is quasi-ergodic, can be calculated by averaging in an appropriate way over phase space.

The arguably most striking objection against such reasoning is the following \cite{Sklar1995}:
If it were in fact true that all realistic measurements could legitimately be described as infinite time averages, then the observation of any non-equilibrium dynamics, including the approach to equilibrium, would simply be impossible.
The latter is manifestly not the case.

Besides this issue of the ``infinite time'' averages and the other problems mentioned above it is extraordinarily difficult to show that a given system is \emph{(quasi-)ergodic}.
Despite the ground breaking works of Birkhoff and von Neumann on the concept of \emph{metric transitivity}, Sinai's work on \emph{dynamical billiards}, and more recent approaches such as \emph{Khinchin's ergodic theorem}, the full problem still awaits solution \cite{UffinkFinal}.

\subsubsection{Jaynes' maximum entropy approach}
\label{sec:jaynesmaximumentropyapproach}
Conceptually very different from the three previously discussed approaches is the work of Jaynes \cite{Jaynes}.
He fully embraces a subjective interpretation of probability and proposes to regard statistical physics as a ``form of statistical inference rather than a physical theory''.
He then introduces a \emph{maximum entropy principle}.
In short, the maximum entropy principle states that in situations where the existing knowledge is insufficient to make definite predictions the best possible predictions can be reached by finding the distribution of the state space of the system that maximises the (Shannon) entropy and is compatible with the available knowledge.
The principle is inspired by the work of Shannon \cite{Shannon1949} who, as Jaynes claims, had shown that the maximum entropy distribution is the one with the least bias towards the missing information \cite{Jaynes}: ``[The] maximum entropy distribution may be asserted for the positive reason that it is uniquely determined as the one which is maximally noncommittal with respect to missing information.''

Moreover, in Ref.~\cite{Jaynes}, Jaynes shows in quite some generality that the ``usual computational rules [as presented in Gibbs' book \cite{Gibbs1902}] are an immediate consequence of the maximum entropy principle''.
In addition, he points out various other advantages of his subjective approach.
For example that it makes predictions ``only if the available information is sufficient to justify fairly strong opinions'', and that it can account for new information in a natural way.

While Jaynes' principle can be used to justify the methods of statistical mechanics it gives little insight into why and under which conditions these methods yield results that agree with experiments.
In other words: The maximum entropy principle ensures that making predictions based on statistical mechanics is ``best practice'', but does not explain why this ``best practice'' is often good enough.
The question ``Why does statistical mechanics work?'' hence remains partially unanswered.

A last point of criticism is that Ref.~\cite{Jaynes} works in a classical setting.
While an extension to quantum mechanics is possible \cite{PhysRev.108.17} the subjective interpretation of probability advertised by Jaynes is arguably less convincing or at least debatable in this setting, although this is of course to some extend a matter of taste \cite{Fuchs,Timpson2008}.
Problems arise because mixed quantum states can be written as convex combinations of pure states in more than one way so that more complicated arguments are needed to identify the von Neumann entropy as the right entropy measure to be maximised.

\subsection{Closing remarks}
\label{sec:somegeneralcriticalremarks}
Except for Jaynes subjective maximum entropy principle, all approaches we have discussed in this chapter differ in one important point from that advertised in the main part of this review:
They are based on classical mechanics.
The applicability of classical models to systems that behave thermodynamically is, however, questionable.

Consider for example two of the most prominently used models in statistical mechanics: 
The hard sphere model for gases and the Ising model for ferromagnetism.
The atoms and molecules of a gas, as well as the interactions between them, in principle require a quantum mechanical description.
Yet, it is often claimed that in the so-called \emph{Ehrenfest limit}, i.e., if the spread of the quantum mechanical wave packets of the individual particles is small compared to the ``radius'' of the particles, the classical hard sphere approximation is eligible.
It can, however, be shown that under reasonable conditions systems typically leave the Ehrenfest limit on timescales much shorter than those of usual thermodynamic processes \cite[Chapter~4]{Gemmer09}.
Moreover, whether the Ehrenfest limit constitutes a sufficient condition for the applicability of (semi-)classical approximations in the first place is debatable \cite{Ballentine1994}.
Similarly, the relevant elementary magnetic moments of a piece of iron, namely the electronic spins, are intrinsically quantum.
In fact, it is known that classical physics alone cannot explain the phenomenon of ferromagnetism in a satisfactory way --- a statement known as \emph{Bohr--van Leeuwen theorem} \cite{Bohr1911,Aharoni2000,Nolting2009}.
The extremely simplified description employed in the Ising model can thus, despite its pedagogical value, arguably not capture all the relevant physics.

In addition to this, there are many situations where thermodynamic behaviour cannot be understood in a purely classical framework \cite{greiner}: For example, black-body radiation cannot be understood without postulating a quantisation of energy to avoid the \emph{ultraviolet catastrophe}. Further prime example for this are gases of indistinguishable particles.
An application of classical physics leads to \emph{Gibbs' paradox} for the mixing entropy and the statistics of Bose and Fermi gases at low temperatures cannot be explained classically.
Last but not least, the ``freezing out'' of certain internal degrees of freedom of molecular gases, which impacts their heat capacities, cannot be understood in a convincing way from classical physics alone.

%%%% Bibliography%%%%%%%%%%%%%%%%%%%%%%%%%%%%%%%%%%%%%%%
\emergencystretch 2.5em
\bibliography{bibliographyreview23}
\bibliographystyle{iopart-num}
%\bibliographystyle{plainnat}
%\bibliographystyle{abbrvnat}
%\bibliographystyle{numeric}
%\printbibliography
\emergencystretch 1em

\end{document}